\def\doi{9(2:11)2013}
\newcommand{\act}{\mbox{\it Act}}
\newcommand{\cero}{\mathbf{0}}
\newcommand{\lsem}{[\![}
\newcommand{\rsem}{]\!]}
\newcommand{\nat}{\mathrm{I}\!\mathrm{N}}
\newcommand{\barra}{\;|\;}
\newcommand{\init}{\mbox{\textit{I}}} 
\newcommand{\proc}{{\mathcal P}} 
\newcommand{\flee}{\longrightarrow}
\newcommand{\Flee}{\Longrightarrow}
\newcommand{\tran}[1]{\stackrel{#1}{\flee}}
\newcommand{\Tran}[1]{\stackrel{#1}{\Flee}}
\newcommand{\ol}{\overline}
\newcommand{\hnfx}[1]{\textit{hnf\/}^Z(#1)}
\newcommand{\tehnfx}[1]{\textit{tehnf\/}_N^Z(#1)}
\newcommand{\gtp}{\sqsupseteq} 
\newcommand{\ltp}{\sqsubseteq} 
\newcommand{\eip}{\equiv} 
\newcommand{\ltpF}{\ltp_{F}} 
\newcommand{\gtlbi}{\mathrel{\lower 0.9ex\hbox{$\stackrel{\displaystyle\sqsupset}{\sim}$}}} 
\newcommand{\ltlbi}{\mathrel{\lower 0.9ex\hbox{$\stackrel{\displaystyle\sqsubset}{\sim}$}}} 
\newcommand{\deter}[1]{\textit{Det}(#1)} 
\newcommand{\comment}[1]{}
\newcommand{\calB}{\mathcal{B}}
\newcommand{\calP}{\mathcal{P}}
\newcommand{\calT}{\mathcal{T}}
\newcommand{\formyacc}[1]{
if $\varphi \in \mathcal{L}^{\prime}_{#1}$ and $a \in \emph{Act}$ then $a\varphi \in \mathcal{L}^{\prime}_{#1}$}
\newcommand{\conjU}[2]{
if $\sigma, \sigma_j \in \mathcal{L}^{\prime}_#2$ for all $j\in J$ then $(\sigma\wedge\bigwedge_{j \in J}\neg \sigma_j\top) \in \mathcal{L}_{#1}$}
\newcommand{\conjsim}[2]{
If $\sigma \in \mathcal{L}^{\prime}_#2$ then $\sigma\in \mathcal{L}^{\prime}_{#1}$; \\ \item \vspace{-0.4cm}if $\sigma \in \mathcal{L}^{\prime}_#2$ then $\neg \sigma\in \mathcal{L}^{\prime}_{#1}$}
\newcommand{\conjbis}[2]{
if $\sigma \in \mathcal{L}^{\equiv}_#2$ then $\sigma \in \mathcal{L}^{\prime}_{#1}$}
\newcommand{\conjparticular}[2]{
if $X_1 \subseteq \mathcal{L}^{\prime}_#2$ and $X_2 \subseteq \mathcal{L}^{\prime}_#2$ then $(\bigwedge_{a \in X_1} a\top \wedge \bigwedge_{b \in X_2}\neg b\top) \in \mathcal{L}^{\prime}_{#1}$}
\newcommand{\conjneg}[2]{
if $\sigma \in \mathcal{L}^{\neg}_#2$ then $\sigma \in \mathcal{L}^{\prime}_{#1}$}
\newcommand{\conjmo}[2]{
if $\sigma \in \mathcal{L}^{\surd}_#2$ then $\sigma \in \mathcal{L}^{\prime}_{#1}$}
\newcommand{\conjnegparticular}[2]{
if $X_1 \subseteq \mathcal{L}^{\prime}_#2$ then $(\bigwedge_{a \in X_1}\neg a\top) \in \mathcal{L}^{\prime}_{#1}$}
\newcommand{\conjyform}[2]{
if $\varphi \in \mathcal{L}^{\prime}_{#1}$ and $\sigma \in \mathcal{L}^{\equiv}_#2$ then $\sigma \wedge \varphi \in \mathcal{L}^{\prime}_{#1}$}
\newcommand{\conjyformparticular}[2]{
if $\varphi \in \mathcal{L}^{\prime}_{#1}$ and $X_1, X_2 \subseteq \mathcal{L}^{\prime}_#2$ then $(\bigwedge_{a \in X_1} a\top \wedge \bigwedge_{b \in X_2}\neg b\top) \wedge \varphi\in \mathcal{L}^{\prime}_{#1}$}
\newcommand{\conjyformneg}[2]{
if $\varphi \in \mathcal{L}^{\prime}_{#1}$ and $\sigma \in \mathcal{L}^{\neg}_#2$ then $\sigma \wedge \varphi \in \mathcal{L}^{\prime}_{#1}$}
\newcommand{\conjyformmo}[2]{
if $\varphi \in \mathcal{L}^{\prime}_{#1}$ and $\sigma \in \mathcal{L}^{\surd}_#2$ then $\sigma \wedge \varphi \in \mathcal{L}^{\prime}_{#1}$}
\newcommand{\conjyformnegparticular}[2]{
if $\varphi \in \mathcal{L}^{\prime}_{#1}$ and $X_1 \subseteq \mathcal{L}^{\prime}_#2$ then $(\bigwedge_{a \in X_1}\neg a\top)\wedge \varphi \in \mathcal{L}^{\prime}_{#1}$}
\newcommand{\bran}[2]{\begin{iteMize}{$\bullet$}\item #1;\item if $\varphi_i \in \mathcal{L}^{\prime}_{#2}$ for all $i \in I$ then $\bigwedge_{i\in I} \varphi_i \in \mathcal{L}^{\prime}_{#2}$; \item\formyacc{#2}.\end{iteMize}}
\newcommand{\brannew}[2]{\begin{iteMize}{$\bullet$}\item#1; \item if $\varphi_i \in \mathcal{L}^{\prime}_{#2}$ for all $i \in I$ then $\bigwedge_{i\in I} \varphi_i \in \mathcal{L}^{\prime}_{#2}$; \item\formyacc{#2}.\end{iteMize}}
\newcommand{\dbran}[2]{\begin{iteMize}{$\bullet$}\item$\top \in \mathcal{L}^{\prime}_{#2}$; \item #1; \item if $X\subseteq \emph{Act}$ and $\varphi_a \in \mathcal{L}^{\prime}_{#2}$ for all $a \in X$ then $\bigwedge_{a\in X} a\varphi_a \in \mathcal{L}^{\prime}_{#2}.$ \end{iteMize}}
\newcommand{\lin}[2]{\begin{iteMize}{$\bullet$}\item$\top \in \mathcal{L}^{\prime}_{#2}$; \item#1; \item\formyacc{#2}.\end{iteMize}}
\newcommand{\linf}[2]{\begin{iteMize}{$\bullet$}\item$\top \in \mathcal{L}^{\prime}_{#2}$; \item\conjyformneg{#2}{#1}; \item\formyacc{#2}.\end{iteMize}}
\newcommand{\linfmo}[2]{\begin{iteMize}{$\bullet$}\item$\top \in \mathcal{L}^{\prime}_{#2}$; \item\conjyformmo{#2}{#1}; \item\formyacc{#2}.\end{iteMize}}
\newcommand{\linfparticular}[2]{\begin{iteMize}{$\bullet$}\item$\top \in \mathcal{L}^{\prime}_{#2}$; \item\conjyformnegparticular{#2}{#1}; \item\formyacc{#2}.\end{iteMize}}
\newcommand{\linc}[2]{\begin{iteMize}{$\bullet$}\item$\top \in \mathcal{L}^{\prime}_{#2}$; \item\conjbis{#2}{#1}; \item\formyacc{#2}.\end{iteMize}}
\newcommand{\lincparticular}[2]{\begin{iteMize}{$\bullet$}\item$\top \in \mathcal{L}^{\prime}_{#2}$; \item\conjparticular{#2}{#1}; \item\formyacc{#2}.\end{iteMize}}
\newcommand{\linfc}[2]{\begin{iteMize}{$\bullet$}\item$\top \in \mathcal{L}^{\prime}_{#2}$; \item\conjneg{#2}{#1}; \item\formyacc{#2}.\end{iteMize}}
\newcommand{\linfcmo}[2]{\begin{iteMize}{$\bullet$}\item$\top \in \mathcal{L}^{\prime}_{#2}$; \item\conjmo{#2}{#1}; \item\formyacc{#2}.\end{iteMize}}
\newcommand{\linfcparticular}[2]{\begin{iteMize}{$\bullet$}\item$\top \in \mathcal{L}^{\prime}_{#2}$; \item\conjnegparticular{#2}{#1}; \item\formyacc{#2}.\end{iteMize}}
\newcommand{\conjsimold}[2]{
If $\sigma \in \mathcal{L}_#2$ then $\sigma\in \mathcal{L}^{\prime}_{#1}$; \\ \item \vspace{-0.4cm} if $\sigma \in \mathcal{L}_#2$ then $\neg \sigma\in \mathcal{L}^{\prime}_{#1}$}
\begin{document}

\title[Unifying the ltbt spectrum]{Unifying the Linear time-Branching time spectrum of strong process semantics}

\author[D.~de Frutos Escrig]{David de Frutos Escrig\rsuper a}	
\address{{\lsuper{a,b,c,d}}Departamento de Sistemas Inform\'aticos y Computaci\'on
  \\ Universidad Complutense de Madrid}	
\email{\{defrutos, cgr, miguelpt\}@sip.ucm.es, dromeroh@pdi.ucm.es}  
\thanks{{\lsuper{a,c,d}}David de Frutos Escrig, Miguel Palomino and David Romero Hernández were partially supported by the Spanish
MEC project  DESAFIOS10 TIN2009-14599-C03-01 and the project
PROMETIDOS S2009/TIC-1465.}	

\author[C.~Gregorio Rodr\'iguez]{Carlos Gregorio Rodr\'iguez\rsuper b}	
\thanks{{\lsuper b}Carlos Gregorio Rodríguez was partially supported by the Spanish MEC project ESTuDIo TIN2012-36812-C02-01.}	

\author[M.~Palomino]{Miguel Palomino\rsuper c}	

\author[D.~Romero Hern\'andez]{David Romero Hern\'andez\rsuper d}	

\keywords{process semantics, linear time-branching time spectrum, algebraic languages, simulation semantics, linear semantics, constrained simulation, axiomatizations, unification}
\subjclass{F.3.2 Semantics of Programming Languages, F.4.3 Formal Languages}

\ACMCCS{[{\bf Theory of computation}]: Formal languages and automata
  theory---Formalisms---Algebraic language theory; Formal languages
  and automata theory---Semantics and reasoning---Program semantics;
  Logic; Models of computation---Concurrency}


\begin{abstract}
  Van Glabbeek's linear time-branching time spectrum is one of the most
  relevant work on comparative study on process semantics, in which semantics
  are partially ordered by their discrimination power. In this paper we bring
  forward a refinement of this classification and show how the process
  semantics can be dealt with in a uniform way: based on the very natural
  concept of constrained simulation we show how we can classify the spectrum in
  layers; for the families lying in the same layer we show how to obtain in a
  generic way equational, observational, logical and operational
  characterizations; relations among layers are also very natural and
  differences just stem from the constraint imposed on the simulations that
  rule the layers. Our methodology also shows how to achieve a uniform
  treatment of semantic preorders and equivalences.
\end{abstract}

\maketitle

\newpage

\tableofcontents

\section{Introduction}
\label{sec:intro}
  Since the foundational work by Robin Milner~\cite{Mil80CCS,Mil89CC} 
    and Tony Hoare~\cite{Hoa85CSP} on process semantics, there has been a 
    multitude of proposals to endow processes with meaning and to define
    equivalence and preorder relations over them.
    Among the most relevant work are those of Matthew Hennessy~\cite{Hen88ATP},
    who introduced the testing methodology defining process semantics from
    test cases, and those of Jan Bergstra and Jan Willen Klop~\cite{BK84}, later
    continued by Jos Baeten and Peter Weijland~\cite{Bae90PA}, which were
    based on an axiomatic approach.    

    These proposals define algebraic languages for the specification of 
    processes, diverging in subtle details concerning the treatment of
    non-determinism and parallelism.
    These aspects are captured by means of certain operators which may 
    (strongly) vary in each particular language.

    Focusing on equivalences, it is interesting to note how the pioneering
    work in this area already established two fundamental notions,
    bisimulation and traces/failures, that constitute an upper and a lower
    bound on the natural framework in which other process equivalences
    can be studied.
    Hoare---with his characteristic clarity---summarizes the situation
    in the following paragraph.

    \begin{quote}
      \emph{CCS makes many distinctions between processes which would be 
        regarded as identical in this book. The reason for this is that CCS is
        intended to serve as a framework for a family of models, each of which 
        may make more identifications than CCS but cannot make less. To avoid 
        restricting the range of models, CCS makes only those identifications 
        which seem absolutely essential. In the mathematical model of this 
        book [CSP] we have pursued exactly the opposite goal ---we have made as 
        many identifications as possible, preserving only the most essential 
        distinctions.}~\cite{Hoa85CSP}
    \end{quote}

    In between these two fundamental notions of equivalence---bisimulation and
    traces---the last two decades of the 20th century witnessed the surge of
    a large variety of new equivalences associated to new calculi and process 
    algebras, whose aim was to explore the different needs for expressivity
    and distinction capabilities in many applications.

    The most important taxonomic work on process semantics was carried out
    by Rob van Glabbeek as part of his doctoral dissertation~\cite{Gla90tesis}.
    In two papers, titled~\emph{Linear time-branching time 
    spectrum}~\cite{Gla90,Gla93}, he collected the most important of these
    equivalences establishing, among other results that we will comment on,
    a classification based on their capability to distinguish processes.
    The first of the papers concentrate on \emph{strong} semantics, in 
    the sense that they consider each action processes
    perform as being observable by their environment. The second paper
    consider the inclusion in the language of a new and invisible action $\tau$;
    process semantics considering this internal action are
    usually called \emph{weak} semantics.
    Figure~\ref{fig:ltbtsAmpliado} shows a slightly expanded version of the
    spectrum proposed by van Glabbeek for the case of strong semantics in \cite{Gla90}.
    These strong semantics, that do not consider at all the special role of 
    internal actions $\tau$, are the only ones that we consider in this paper.

    This array of semantics is supported by many authors who claim that
    there is no single ``good'' definition.
    Process theory can be applied in a wide spectrum of contexts and situations
    and the concrete uses will have a decisive influence in the election of 
    what a suitable semantics should be.

    \begin{quote}
      \emph{The choice of a suitable semantics may depend on the tools an
      environment has, to distinguish between certain processes. It is
      conceivable that a concurrency theory is equipped with different
      semantics, and has the capacity to express equality on different 
      levels.}~\cite{Gla97}
    \end{quote}

    The possibility to define several and varied semantics can then be
    considered to be an advantage of the theory, since it allows for the
    necessary flexibility to reflect different notions of processes and
    equivalence and preorder relations over them.

    Nevertheless, this multiplicity has gone hand in hand in the literature
    with an individual study of each of the semantics that somehow 
    makes the whole theory less appealing because such a cornucopia can
    become a handicap both for its study and its practical application.
    For instance, although most of the semantic notions defined for 
    processes simultaneously
    induce both a (pre)order and an equivalence,~\footnote{A remarkable
    exception, however, is the bisimulation notion, for which no non-trivial
    order relation is known.}  
    the literature has frequently overlooked the fact that
    these two notions are mutually intertwined, as we will show later.
    Likewise, the study of the concrete models has been usually undertaken
    paying little attention to the other semantics or to the relations among 
    them, even though it is well-known that there exist ``families'' of 
    semantics---such as the linear semantics---which are undoubtedly related.

    A unified study of semantics has both methodological and practical 
    implications that have been explored along the last years by the authors
    of this work, for example 
    in~\cite{FG05concur,FG08ifiptcs,FG09ic,FGP09sofsem,DeFrutosEtAl08b}, 
    and also in work by important researchers in the 
    area~\cite{AFIL05,AFI07,CFG08,LV10}.
    This research shows that a unified view of process semantics is indeed
    possible.

  This is precisely the main goal we set to reach with our work: we
    aim to study process semantics in a generic way, making the equivalence and
    (pre)order relations our object of study in order to find patterns,
    to identify families, to search for properties among these relations,
    so that we obtain generic results that need not be proved again and again 
    for each of the semantics.

    We aim, in a nutshell, at a unifying view of process semantics that
    can be used to understand them both jointly and individually and that
    allows to continue with their theoretical study in a more focused manner,
    helping to identify those properties a semantics should have for a
    particular application.

  \begin{figure}[tbp]
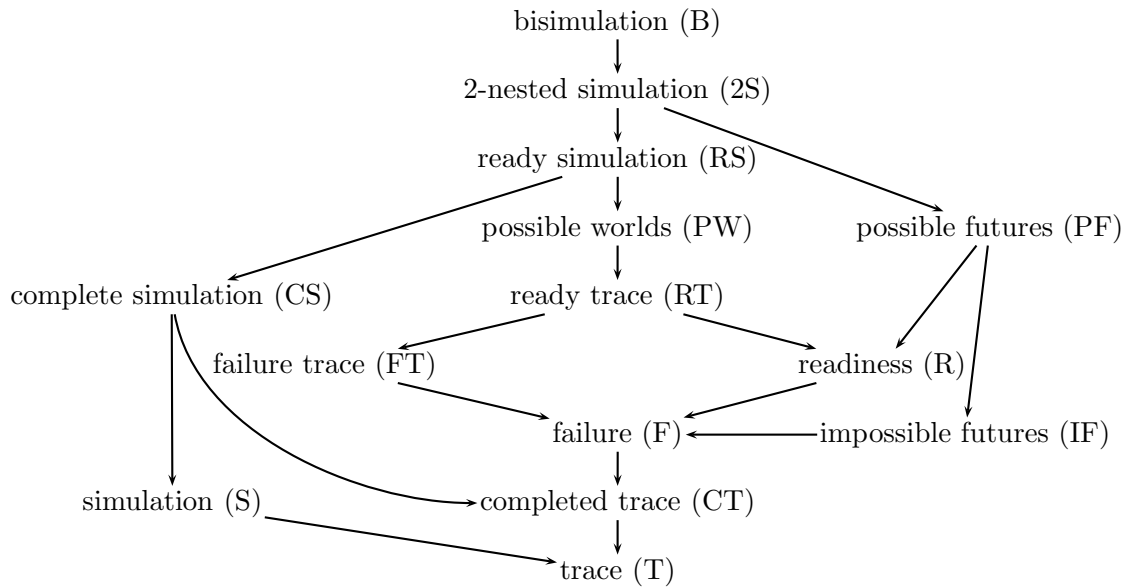

    \centering  
  \psset{nodesep=1pt}
  \begin{tabular}{lcl}
    & \rnode{b}{bisimulation (B)} & \\
    & & \\  & \rnode{ns}{2-nested simulation (2S)} & \\
    & & \\  & \rnode{rs}{ready simulation (RS)} & \\
    & & \\  & \rnode{pw}{possible worlds (PW)} & \hspace*{2em}\rnode{pf}{possible futures (PF)}\\
    & & \\  \rnode{cs}{complete simulation (CS)}&  \rnode{rt}{ready trace (RT)} & \\
    & & \\  \hspace*{7em}\rnode{ft}{failure trace (FT)} & & \rnode{r}{readiness (R)}\\
    & & \\  & \rnode{f}{failure (F)} & \hspace*{.75em}\rnode{if}{impossible futures (IF)}\\
    & & \\ \hspace*{2.5em}\rnode{s}{simulation (S)}& \rnode{ct}{completed trace (CT)}  \\
    & & \\  & \rnode{t}{trace (T)} & \\
    \ncline{->}{b}{ns}
    \ncline{->}{ns}{rs}
    \ncline{->}{ns}{pf}
    \ncline{->}{rs}{pw}
    \ncline{->}{rs}{cs}
    \ncline{->}{cs}{s}
    \nccurve[angleA=-80,angleB=180]{->}{cs}{ct}
    \ncline{->}{pw}{rt}
    \ncline{->}{pf}{r}
    \ncline{->}{pf}{if}
    \ncline{->}{rt}{ft}
    \ncline{->}{rt}{r}
    \ncline{->}{ft}{f}
    \ncline{->}{r}{f}
    \ncline{->}{if}{f}
    \ncline{->}{f}{ct}
    \ncline{->}{ct}{t}
    \ncline{->}{s}{t}
  \end{tabular} 
  \caption{Linear time-branching time spectrum.}
  \label{fig:ltbtsAmpliado}
\end{figure}

\subsection{Overview of results}

This paper contains a consolidated and extended presentation
of the unification of observational, equational and the logic process semantics published in
\cite{FGP09sofsem,FGP09mfps,RF11} for strong behavioral semantics.\footnote{We comment in Section~\ref{sec:conclusions} the work of some of the authors on unification of weak semantics.} We take advantage of the joint and larger presentation of the subject to tighten the
connections between the different views. Besides, we make the paper mostly self-contained providing proofs for 
all the results; we also complete the study with new results not included in \cite{FGP09sofsem,FGP09mfps,RF11}.
We have also completed the revision of the unification of strong process semantics with a section
devoted to the unified presentation of the operational semantics.

Next we describe the main results we have obtained. They can be used 
as a roadmap for reading the paper and to understand 
the technical details in the following sections.
  \begin{iteMize}{$\bullet$}
  \item One of the most generic results we have proved is the existence of
    two essential families of semantics:
    branching semantics and linear semantics. Certainly, 
    this was already hinted by van Glabbeek when he named its spectrum 
    of semantics 
    ``linear time-branching time''.
    
  Our results show that the most representative branching semantics have 
  characterizations as simulation semantics. Moreover, every simulation 
  has a natural family of coarser linear 
  semantics associated to it that inherit some of its properties.

  In Figure~\ref{fig:extended-ltbts} (page~\pageref{fig:extended-ltbts}),
  branching semantics are located to the left and each of them defines a
  layer of ``induced'' linear semantics, to their right.
  For example, ready simulation is the branching semantics from which
  the classic diamond of linear semantics composed of failures, readiness, 
  failure trace, and ready trace semantics is generated.
  These semantics, as we will later see in detail, inherit some of their
  axiomatic characterizations directly from the ready simulation.
  In addition, the same layer also contains the possible worlds semantics,
  which is a deterministic branching semantics.

  Even though the axiomatic characterizations contained in 
  Section~\ref{sec:EquationalSemantics} already show this dependency between
  linear and branching semantics---see Figure~\ref{fig:ReadySlice2} (page~\pageref{fig:ReadySlice2})---it
  is in Section~\ref{observational-sem-sec} where, using techniques from
  denotational semantics, the relations between the original branching
  semantics and the induced linear semantics can be fully appreciated.
  The relationships among the different linear semantics in the same branching
  layer are also completely specified in that section.
    
  \item Equational characterizations reveal in a very concise manner the
    basic properties of the different semantics.
    As a result of our research, we have been able to derive a generic
    axiomatic characterization of the semantics in the spectrum which shows
    clearly the relationships among them: the uniformity in the definitions
    of the branching semantics and the different families of linear
    semantics becomes apparent, as well as the tight relation between each
    branching semantics and its associated linear semantics.
    In Section~\ref{sec:EquationalSemantics} we present all the details
    related to these axiomatic characterizations.

  \item Another result that we consider important is that it is indeed
    possible to establish a clear relationship between the preorder and the
    equivalence associated to a given semantics.
    From an axiomatic point of view, in Section~\ref{sec:EquationalSemantics}
    we show how these characterizations are closely related. 
    In fact, there are algorithms that allow to  easily 
    obtain the axioms for the equivalence from those of the 
    preorder~\cite{AFI07,DeFrutosEtAl08b}, and also the other way around,
    the axioms for the preorder from those of 
    the equivalence~\cite{FG07sos,FG08ifiptcs}.

  \item We also offer a unifying view of the process semantics based on 
    observational (denotational) semantics, according to which we have
    classified the process semantics in four categories:
    \begin{iteMize}{$-$}
    \item  bisimulation semantics, which is the finest semantics in the spectrum and the only one that cannot be
      defined by means of a non-trivial preorder;
    \item the simulation semantics (simulation, complete simulation,
      ready simulation, nested simulation, \dots) which are 
      characterized by means
      of branching observations, that is, labeled trees;
    \item the linear semantics (traces, failures, readiness, \dots),
      characterized by linear observations,
      a degenerated case of branching observations;
    \item the deterministic branching semantics corresponding to an
      intermediate class between branching and linear,
      where observations are deterministic trees. Possible worlds semantics
      is the only semantics in the original van Glabbeek's 
      spectrum in this class.
    \end{iteMize}

Besides their linear or branching nature, semantics are characterized by a local
observation function that generates the local observations at the states.
For the linear case there is also the possibility of observing this
local information in a partial way and this is how for each local
observer, in principle, up to four different semantics can be obtained.
In particular, this gives rise to the classic diamond below the ready simulation
semantics formed by failures, failure trace, readiness, and ready trace
semantics.

The uniform presentation of the process semantics that we 
offer in Section~\ref{observational-sem-sec} 
clarifies the relationships  and hierarchies among all the semantics; 
moreover, it will make possible the development of generic proofs of their 
common properties.

 \item
 We also present a unified view of the logical semantics. Again, the
 bisimulation semantics, which is characterized by the Hennessy-Milner logic
 HML \cite{hm85}, is our starting point, and then we aim for the
 sublogics that characterize each of the semantics in the spectrum. Guided by
 our main unification goal, we have not tried to obtain the smallest possible
 sets of formulas, but have veered for the largest sublogics that characterize
 each of the semantics. 
  
 Hence, the finest semantics are characterized by the largest sublogics and in
 fact we obtained a uniform characterization that informs us about
 the hierarchy of semantics, by proving that a semantics $S_1$ is  finer than
 another $S_2$ if and only if the corresponding logics satisfy
 $\mathcal{L}_{S_2} \subseteq \mathcal{L}_{S_1}$. Moreover, the classification
 into branching and linear time semantics is also reflected in the structural
 definition of each logic. In particular, the branching semantics are
 characterized by the free use of negation over the formulas that define the
 corresponding constraint, while the linear semantics at each layer of the
 spectrum introduce ever more limitations in the subformulas.

\item Finally, we also discuss an operational-like presentation of the
  semantics in the spectrum; more precisely, we consider an evaluation
  semantics to derive the appropriate data which characterize them.
  Those data are quite similar to the ones employed
  for our observational semantics so that it is not them, but the way in which
  they are derived, that enhances our understanding of the features of
  each of the semantics and the relationships between them. These presentations
  somehow generalize the work by Cleaveland and Hennessy \cite{CH93} on the
  characterization of the Testing semantics by means of bisimulation. There is also a clear
  connection with the work by two of the authors of this paper on (Bi)simulation up-to \cite{FG09ic}.

  \item A concomitant, but still important result of our work, is of methodological    nature: the semantics are amenable to a working methodology that allows
    for general results that can be applied to families of semantics as well 
    as to yet to be defined semantics.
    The requirements we impose on these new semantics are relatively mild.
    An example of this is shown seen in Section~\ref{sec:RealDiamond}, 
    where some new process semantics---indeed two new families---smoothly
    integrate into our general theory. In fact, it was nice to discover that
    one of them is just the revivals    
    semantics, which has been recently developed by Bill Roscoe~\cite{Roscoe09}.

    \item Each of the characterization frameworks---equational, observational,
      logical or operational---sheds light on the spectrum in different and
      complementary ways. This has provided us with different ways to study all
      the previously known semantics and the relationships between them. That
      complementary nature also sprang up along our unification work when we
      discovered one by one all the factors that contribute to the structure of
      the extended spectrum. In particular, when considering simple and natural
      combinations of axioms we found out the new meet semantics in
      Section~\ref{sec:RealDiamond}, while their dual join semantics was
      discovered in a natural way when considering the observational
      characterizations. Finally, the semantics of minimal readies (in
      Section~\ref{logica_lineales}) appeared when investigating the logical
      framework. While not too important on its own, our unification work has
      also revealed a mistake in the classic logical characterization of one of
      the semantics in the original spectrum (see
      Section~\ref{logica_ramificada}): it was the general and systematic
      approach that guides our uniform characterization that allowed it.
  \end{iteMize}

\subsection{Some related work} 

Naturally, the goal of defining a global or general theory of process
semantics has been around for a long time and 
several relevant authors in the field have already paved the way that
we now tread.

Despite the methodological differences between Milner's work, based
  on bisimulation, and Hoare's, on denotational semantics, 
  both of them had in common the search for characterizations---logical, 
  axiomatic, observational---that could shed light from different angles on 
  the world of process semantics.

  Hennessy introduced the testing methodology to endow processes with
  semantics, making the notion of equivalence to spring 
  from the application of the interaction principles for processes 
  expressed within the model.
  Perhaps one of the most important contributions of his work was what
  he called the ``trinity'': processes can be seen as syntactic terms
  in an algebra, as operational descriptions in labeled transition systems,
  or as denotational objects in a mathematical model.
  With our work we have somehow extended this trinity in a generic manner to 
  all the semantics in the extended spectrum.

  Van Glabbeek's work, \emph{the linear time-branching time spectrum} 
  aimed at the comparison of most known semantics---at the time he developed 
  his seminal work---by  presenting them within common frameworks that 
  would allow a comparative study of their properties.
  Besides providing uniform definitions over transition systems, 
  van Glabbeek also proposed to characterize the semantics in terms of 
  logical formulas.
  The set of modal formulas whose satisfaction equivalence
  identifies the same processes as the corresponding semantics is 
  defined.
  Because of the compositional definition of the corresponding sets of formulas,
  this characterization can be considered to be denotational semantics.

  Another characterization provided by van Glabbeek is the
  axiomatic one, for which he defines the BCCSP language that is used in this
  work (see Definition~\ref{def:BasicProcessAlgebra}).
  Twelve of the semantics in the spectrum are characterized by means of
  sets of axioms over syntactic terms for this language.
  For most of them---except for bisimulation, that has no associated 
  order---their characterizations are actually twofold: on the one hand,
  the natural order relation that defines each 
  semantics---Table~\ref{tab:AxiomatisationsPreordersLtbts}---and, on the other
  hand, the 
  induced equivalence---Table~\ref{tab:AxiomatisationsEquivalencesLtbts}. 
  Many of these characterizations were previously known but, again, their
  uniform presentation is one of its main merits.

  A deep study---individual as well as comparative---of these
  axiomatizations and the quest for answers to the new questions that 
  arise from this study has been one of the leading forces behind 
  our research.
  Actually, some of our most relevant results can be combined into a new
  way of presenting the spectrum---Figure~\ref{fig:extended-ltbts} 
  (page~\pageref{fig:extended-ltbts})---that allows for a better
  comprehension of the semantics since it clarifies their relative 
  positions within it and shows the existence of ``gaps'' that correspond
  to new semantics whose addition to the graph reflects a desirable regularity
  that makes it clearer.
  Hoare's work on the unification of the study of process algebras
\cite{Hoare05} was also an important influence.
Specially, the relationship between similarity and trace refinement, which we
have generalized by establishing the connection between branching time and
linear time semantics, and the connection between the denotational, the
algebraic, and the operational styles proposed by him and He Jifeng in
\cite{HJ98UTP}. 
  
As already mentioned, Roscoe has contributed in an independent research
effort in parallel with ours to the study of new process semantics by proposing
his stable revivals model \cite{Roscoe09}.  
He relates his new semantics with other well known linear time semantics
and the rediscovery of that semantics in our expanded spectrum gave us
the opportunity to present those relationships with a unified and generic
light. 
  
  There is other relevant work in the area of process theory that
  has inspired us. The number of contributions is too large to cite all of them here.
  Anyone interested on finding a more exhaustive list of relevant references may collect them, for instance, from~\cite{OriginsConcurrentProgramming02,HistoryProcessAlgebra-Baeten-2005,Aceto03,Gla97,S09}.
  There the historic evolution of the area and many of the most important contributions to it are reviewed. 
  To them we can add four recent books on process algebras and related
  subjects \cite{BBR09PAETCP,R10UCS,AILS07RSMSV,S12IBC}, presenting different
  points of view and some of the semantics studied in this paper. 
  Finally, in our Conclusions, we will discuss a bit the work on the generic study and classification of the weak semantics.


\subsection{Paper structure}
We have structured this paper as follows.
Section~\ref{sec:Preliminaries} introduces all the basic definitions and 
notation to properly follow the developments in the following sections.

In Section~\ref{sec:EquationalSemantics}
we propose alternative characterizations for the axiomatizations of 
the semantics in the spectrum, both for orders and equivalences. 
All these axiomatizations are based on just 
two parametrical axiom skeletons that clearly highlight 
the relations among the different semantics. 

Section~\ref{observational-sem-sec} presents a unified observational
characterization for process semantics.  One of the key ideas is that
constrained simulations are uniformly characterized by a branching observation
plus a local observation function. From the observations of a given constrained
simulation, the linear semantics in its layer are uniformly derived.

 In Section~\ref{rnoef:sec} we prove that the equations we presented in Section~\ref{sec:EquationalSemantics}
are deduced from the observations defined in Section~\ref{observational-sem-sec} in a general way, 
without using at all the already known axiomatizations for the semantics.

Section~\ref{sec:LogicalCharacterization} follows the trails of 
Sections~\ref{sec:EquationalSemantics} and \ref{observational-sem-sec} by
introducing a unified logical characterization of process semantics.

In Section~\ref{logic_observational_framework} we prove that the observational
characterizations developed in Section~\ref{observational-sem-sec} allow for
generic proofs for the logical characterizations presented in
Section~\ref{sec:LogicalCharacterization}. Therefore the ``trinity'' of
equations, observations, and logical formulas is established in a generic way
for large families of process semantics.

Section~\ref{sec:RealDiamond} is a practical proof of the applicability of our unification proposal. Some new process
semantics, that were not listed in the original linear time-branching time spectrum, are easily accommodated in our
framework thus getting the corresponding semantic characterizations that we
have presented in previous sections. 

In Section~\ref{sec:OperationalSemantics} we conclude the unified presentation
of the semantics in the spectrum by developing an operational characterization
which mainly produces the information provided by the observational semantics,
but inferred in an operational way, using a (unified) set of SOS-like rules.

Finally, in Section~\ref{sec:conclusions} we offer some conclusions and lines 
for future work.

\paragraph{\textbf{Acknowledgments}.}
We gratefully acknowledge three anonymous referees for their very thoughtful
and detailed comments on a previous version of this work, that have greatly
helped us to improve the presentation of this material.

\section{Preliminaries}
\label{sec:Preliminaries}

Although the main results in this paper are also valid for infinite 
processes---as we showed in~\cite{FG05concur,FG09ic}---in order to 
simplify the presentation of the concepts,
we will mainly consider finite processes generated by the basic process 
algebra BCCSP which contains only the basic process algebraic operators from 
CCS \cite{Mil89CC}, and CSP \cite{Hoa85CSP}, but is sufficiently
powerful to express all finite synchronization trees \cite{Mil80CCS}.
This language has repeatedly been used in unification work, e.g. \cite{AFI07,Gla01}.


\begin{defi}
  \label{def:BasicProcessAlgebra}
  Given a set of actions $\act$,
  the set BCCSP$(\act)$ of processes is
  defined by the following BNF-grammar:
  \[
    p::=\; \cero\barra ap \barra p + q
  \]
  where $a\in \act$; $\cero$ represents the process that performs no action;
  for every action in $\act$, there is a prefix operator; and  $+$ is
  a choice operator.
\end{defi}

\begin{figure}[t]%
\[%
\begin{array}{ccccc}
 ap\tran{a}p &\mbox{\qquad} & \frac{\displaystyle p\tran{a}p'}{\displaystyle
 p+q\tran{a}p'} &\mbox{\qquad} & \frac{\displaystyle q\tran{a}q'}{\displaystyle
  p+q\tran{a}q'}\\
\end{array}
\]
\caption{Operational semantics for BCCSP terms.}
\label{fig:OperationalSemanticsBCCSP}
\end{figure}

The operational semantics for BCCSP terms is defined in 
Figure~\ref{fig:OperationalSemanticsBCCSP}. 
As usual, we write $p\tran{a}\:$ if
there exists a process $q$ such that $p\tran{a}q$, and $p\Tran{\alpha} q$
if $\alpha = a_1\dots a_n$ and $p\tran{a_1}p'\tran{a_2}\dots\tran{a_n}q$.
The initial offer of a process is the set  
$\init(p)=\{a\barra a \in \act\;\mbox{and}\; p\tran{a}\}$. This is 
a simple, but quite important observation function that plays a central
role in the definition of the most popular semantics in the 
linear time-branching time (ltbt) spectrum. We will also denote by $I$
the relation expressing the fact that two processes have the same
initial offers: $pIq\Leftrightarrow \init(p)=\init(q)$.

One way to capture semantics is by means of the
equivalence relation induced by it: given a formal semantics
$\lsem\cdot\rsem_Z$, we say that processes $p$ and $q$ are
equivalent iff they have the same semantics, that is, $p\equiv_Z q
\Leftrightarrow \lsem p\rsem_Z=\lsem q\rsem_Z$.
These semantics can be defined by means of adequate observational 
scenarios, or by logical characterizations that induce natural 
preorders $\sqsubseteq_Z$ whose kernels are the semantic equivalences.
We refer to \cite{Gla01} for the original definition and usual 
notation for all the semantics in the ltbt spectrum that will be discussed
throughout the paper.

To properly express equations or inequations within the process language, we introduce variables from any adequate set $V$\!\!, and consider the extended set BCCCSP($\act,V$) of terms including variables in $V$.

\begin{figure}[tbp]
\[
  \begin{array}{lll@{\qquad}lll}
    (B_1) & & x+y \simeq y+x   &       (B_3) & & x + x \simeq x \\
    (B_2) & & (x+y)+z \simeq x+(y+z) & (B_4) & & x+\cero \simeq  x 
  \end{array}
\]
  \caption{The axiomatization for the (strong) bisimulation equivalence.}
  \label{fig:axioms4bisimulation}
\end{figure}

Some of the semantics in the spectrum are concrete examples of the 
general notion of constrained simulation semantics
that can be defined in a parameterized way.

\begin{defi}
 \label{def:constrainedsimulation}
 Given a relation $N$ over BCCSP processes, a relation $S_N$ is
 an \emph{$N$-constrained simulation} if $pS_N q$ implies:
 \begin{iteMize}{$\bullet$}
 \item for every $a\in\act$, if $p\tran{a}p'$ then there exists some $q'$ such that
  $q\tran{a} q'$ and $p' S_N q'$, and
 \item $pNq$.
 \end{iteMize}
  We say that process $p$ is $N$-simulated by process $q$, or
   that $q$ $N$-simulates $p$, written
   $p\sqsubseteq_\mathit{NS} q$, whenever there exists an $N$-constrained
   simulation $S_N$ such that $p S_N q$.
\end{defi}

We have already studied the constrained simulation semantics in detail in
\cite{FG08ifiptcs}, stressing their general properties.
In particular, the following constraints were considered: 
\begin{iteMize}{$\bullet$}
\item the universal relation $U$ relating all processes, which gives rise to 
 the simulation semantics;
\item the relation $C$, which holds for processes $p$ and $q$ when both,
 or none, are isomorphic to $\cero$, and that gives rise to the
 complete simulation semantics;
\item the relation $I$ relating processes with the same initial offer, which is 
 the constraint for ready simulation;
\item the relation $T$, that holds for processes having the same set of 
  traces and gives rise to the trace simulation semantics;
\item the relation $S$, the inverse of the simulation equivalence relation, whose associated
 constrained simulation is the 2-nested simulation.
\end{iteMize}

\noindent Throughout this paper there appear different order relations.
We use $\sqsubseteq$ to denote semantic preorders and,
for the sake of simplicity, we use the symbol $\sqsupseteq$ to represent the 
preorder relation $\sqsubseteq^{-1}$. With $\equiv$ we 
denote the induced equivalence (that is, $\sqsubseteq\cap\sqsupseteq$).
To refer to a specific preorder 
we shall append the initials of its name as
subscripts to the symbol $\sqsubseteq$ ($\sqsubseteq_{RS}$ for ready
simulation, $\sqsubseteq_{F}$ for failures, and so on). A similar
convention applies to the kernels of the preorders ($\equiv_{RS}$,
$\equiv_{F},$ \ldots) and to the bisimulation equivalence
$\equiv_B$.
 

\newcommand{\ale}{\preceq}
\newcommand{\aeq}{\simeq} %

An {\em inequation} (respectively, an {\em equation}) over the
language BCCSP is a formula of the form $t \ale u$ (respectively, $t
\aeq u$), where $t,u\in$ BCCSP($\act,V$). An {\em (in)equational
axiom system} is a set of (in)equations over the language BCCSP.  An
equation $t \aeq u$ is derivable from an equational axiom system $E$,
written $E\vdash t \aeq u$, if it can be proven from the axioms in $E$
using the rules of equational logic (viz.~reflexivity, symmetry,
transitivity, substitution and closure under BCCSP contexts):
\[
t \aeq t \quad \frac{t \aeq u}{u \aeq t} \quad \frac{{t \aeq u \quad u \aeq 
v}}{{t \aeq v}} \quad
\frac{{t \aeq u}}{{\sigma(t) \aeq \sigma(u)}} \quad \frac{t \aeq
  u}{at \aeq au}
\quad
\frac{t \aeq
  u\quad t' \aeq u'}{t+t' \aeq u+u'} 
\] 
where substitutions $\sigma$ are defined and applied as usual.

For the derivation of an inequation $t \ale u$ from an inequational
axiom system $E$, the rule for symmetry---that is, the second rule
above---is omitted. We write $E\vdash t \ale u$ if the inequation $t
\ale u$ can be derived from $E$.

It is well-known that, without loss of generality, one may assume that
substitutions happen first in (in)equational proofs, i.e., that the
fourth rule may only be used when its premise is one of the (in)equations in
$E$.  Moreover, by postulating that for each equation in $E$ its
symmetric counterpart is also present in $E$, one may assume that
applications of symmetry happen first in equational proofs, i.e., that
the second rule is never used.
In the remainder of this paper, we
shall always tacitly assume that equational axiom systems are closed
with respect to symmetry.  Note that, with this assumption, there is
no difference between the rules of inference of equational and
inequational logic. In what follows, we shall consider an equation
$t\aeq u$ as a shorthand for the pair of inequations $t\ale u$ and $u\ale
t$. 

An inequation $t \ale u$ is {\em sound} with respect to a given
preorder relation $\sqsubseteq$, if $t\sqsubseteq u$ holds true.
An (in)equational axiom system $E$ is sound
with respect to $\sqsubseteq$ if so is each (in)equation in $E$.
An (in)equational axiomatization is
called ground-complete if it can prove all the valid (in)equivalences relating
terms with no occurrences of variables. As in~\cite{Gla01},
we abbreviate ground-completeness for completeness because this is the only
kind we use along the paper.

Bisimilarity, the strongest of the semantics in the 
spectrum, can be axiomatized by means of the four simple 
axioms in Figure~\ref{fig:axioms4bisimulation}. These axioms state that
the choice operator is commutative, associative and idempotent, having the
empty process as identity element. These axioms also justify the use of
the notation $\sum_a\sum_i ap^i_a$ for processes, where the
commutativity and associativity of the choice operator is used to
group together the summands whose initial action is $a$.
We will also write $p|_a$ for the (sub)process we get by projecting all the
$a$-summands of $p$; that is, if $p = \sum_a\sum_i ap^i_a$, then 
$p|_a = \sum_i ap^i_a$.

Besides the semantics in the spectrum, we are interested in a general study
that can be applied to any ``reasonable'' semantics coarser than 
bisimilarity. Since we will use preorders to characterize these semantics
we introduce the following definitions that state the desired properties of 
those reasonable preorders.

\begin{defi}
  \label{def:ProcessPreorder}
  A preorder relation $\ltp$ over processes is a
  \emph{behavior preorder} if
  \begin{iteMize}{$\bullet$}
  \item
  it is weaker than bisimilarity, i.e., $p\eip_B q\Rightarrow
  p\ltp q$, and
  \item it is
  a precongruence with respect to the prefix and choice operators,
  i.e., if $p\ltp q$ then $ap\ltp aq$ and $p+r\ltp q+r$.
\end{iteMize}
If $\sqsubseteq$ is actually an equivalence, it is said to be a \emph{behavior
equivalence}.
\end{defi}

Another way of presenting a semantics is by means of a logical characterization. The Hennessy-Milner logic \cite{hm85}, characterizing the bisimulation semantics is the most popular one. 
\begin{defi}[\textbf{Hennessy-Milner logic, HML}]\label{HM_logic}
The set $\mathcal{L}_{HM}$ of Hennessy-Milner logical formulas is defined by:
if  $\varphi$, $\varphi_i \in \mathcal{L}_{HM}$ for all $i \in I$ and $a \in Act$, then $\bigwedge_{i \in I} \varphi_i$, $a\varphi$, $\neg \varphi$ $\in \mathcal{L}_{HM}$.

The satisfaction relation $\models$ is defined by: 
\begin{iteMize}{$\bullet$}
\item $p\models a\varphi$ if there exists $q$ such that
  $p\stackrel{a}{\rightarrow}q$ and $q \models \varphi$;  
\item $p\models \bigwedge_{i\in I} \varphi_i$ if for all $i\in I: p \models \varphi_i$.
\item $p\models \neg \varphi$ if $p\not\models \varphi$.
\end{iteMize}
\end{defi}

Note that $\bigwedge_{i\in\emptyset}\varphi_i \in \mathcal{L}_{HM}$, and we have $p\models \bigwedge_{i\in\emptyset}\varphi_i$ for all \textit{p}. Therefore, in the following we will consider that $\top \in \mathcal{L}_{HM}$, where $\top$ is syntactic sugar for $\bigwedge_{i\in\emptyset}\varphi_i$. The finite version of this logic, $\mathcal{L}^{f}_{HM}$, uses binary conjunction $\wedge$ instead of the general conjunction $\bigwedge_{i \in I}$. It is well-known that $\mathcal{L}^{f}_{HM}$ characterizes the bisimulation semantics between image-finite processes, that are those that do not allow infinite branching for any action $a \in Act$ at any state.
Van Glabbeek uses $\mathcal{L}_{B}$ to refer to $\mathcal{L}_{HM}$ in \cite{Gla01}.

\section{Equational semantics}
\label{sec:EquationalSemantics}

On Tables~\ref{tab:AxiomatisationsPreordersLtbts} and~\ref{tab:AxiomatisationsEquivalencesLtbts} appear the axiomatic characterizations
for the preorders and equivalences in van Glabbeek's spectrum~\cite{Gla01}. For each column, the set of axioms marked with ``+'' are sound and complete with respect to the preorder or equivalence in the head of that column; axioms marked with ``v'' are valid but not needed.
When studying these tables there are several questions that naturally arise: for 
every semantics, is there any connection between the axioms defining the preorder 
and those for the equivalence? 
Can the axiomatizations of some of these semantics be jointly tackled? 

In this section we will develop new axiomatizations for
all the semantics in the ltbt spectrum that offer a clear 
answer to the previous questions: even if there was not a systematic procedure that led to
produce the axiomatizations of those tables, we can obtain equivalent axiomatizations that do follow a given procedure.

These new axiomatizations are obtained after noticing that 
every process semantics can be understood as the product of
two ``design decisions'', decisions that define what we have called the ``dynamic'' and the 
``static'' basis of the semantics.
    We will show that, besides $B_1$--$B_4$, we only need a
    generic simulation axiom $(NS)$---Proposition~\ref{prop:ns}---which characterizes the family of constrained simulation semantics, 
    to axiomatize the whole class of pure branching semantics. Moreover, to characterize 
    the linear time semantics, we only need to add
    to the corresponding simulation axiom the adequate instantiation of 
    a generic axiom $(ND)$---see page~\pageref{eq:nd}---for reducing the observability of non-determinism 
    in   processes, by means of which we introduce the additional identifications 
    induced by each of the linear semantics.

    Also the axiomatizations between orders and equivalences are closely related;
    in fact, in the case of the linear semantics we could just use an equivalence $(ND_\equiv)$ axiom,
    leaving the order or equivalence aspect to be determined by the use of the order or equivalence
    axiom of the corresponding branching semantics, see Figure~\ref{fig:ReadySlice2}.
    
    In order to justify the form of our axiomatizations without leaving the 
    axiomatic framework, in this section we prove our results with 
    separate and ad-hoc proofs for each semantics just comparing the
    new characterizations with those previously known. This allows us
    to quickly get the taste of the underlying relations 
    of the process semantics. Once the unified observational characterization 
    of semantics is presented in Section~\ref{observational-sem-sec},
    we will provide generic proofs
    for these results in Section~\ref{rnoef:sec} that show the
    suitability of the new axiomatizations with respect to the
    observational characterizations of the semantics.
    
\begin{table}
\centering
\begin{tabular}{|l
      |>{\centering}p{.20cm}
      |>{\centering}p{.25cm}
      |>{\centering}p{.33cm}
      |>{\centering}p{.3cm}
      |>{\centering}p{.3cm}
      |>{\centering}p{.24cm}
      |>{\centering}p{.24cm}
      |>{\centering}p{.27cm}
      |>{\centering}p{.27cm}
      |>{\centering}p{.20cm}
      |c|}
    \cline{2-12}
\multicolumn{1}{l|}{}&B&$\!\!\mbox{RS}$&$\!\!\mbox{PW}$&
  $\!\mbox{RT}$&$\!\mbox{FT}$&R&F&$\!\!\mbox{CS}$&$\!\!\mbox{CT}$&S&T\\ \hline
$(x+y)+z\simeq x+(y+z)$ &+&+&+&+&+&+&+&+&+&+&+\\
$x+y\simeq y+x$ &+&+&+&+&+&+&+&+&+&+&+\\
$x+0\simeq x$ &+&+&+&+&+&+&+&+&+&+&+\\
$x+x\simeq x$ &+&+&+&+&+&+&+&+&+&+&+\\
$ax\preceq ax+ay$ & &+&+&+&+&+&+&v&v&v&v\\
$a(bx+by+z)\simeq a(bx+z)+a(by+z)$ & & &+&v&v&v&v& &v& &v\\
$I(x)=I(y)\Rightarrow ax+ay \simeq  a(x+y)$ & & & &+&v&v&v& &v& &v\\
$ax+ay\succeq a(x+y)$ & & & & &+& &v& &v& &v\\
$a(bx+u)+a(by+v)\succeq a(bx+by+u)$ & & & & & &+&v& &v& &v\\
$ax+a(y+z)\succeq a(x+y)$ & & & & & & &+& &v& &v\\
$ax\preceq ax+y$ & & & & & & & &+&+&v&v\\
$a(bx+u)+a(cy+v)\simeq a(bx+cy+u+v)$ & & & & & & & & &+& &v\\
$x\preceq x+y$ & & & & & & & & & &+&+\\
$ax+ay\simeq a(x+y)$ & & & & & & & & & & &+\\ \hline
  \end{tabular}
  \caption{Axiomatization for the
    preorders in the linear time-branching  time spectrum.}
  \label{tab:AxiomatisationsPreordersLtbts}
\end{table}

\subsection{A new axiomatization of the most popular semantics}
\label{anamps:sec}

We start our study with a very representative and well-known group of semantics 
in the spectrum, each of which 
has been developed and used in important work in the area:
ready simulation~\cite{LS89,BIM95}, 
failures~\cite{BHR84,Hoa85CSP}, 
readiness \cite{OH86}, 
ready trace~\cite{BBK87} and failure traces \cite{Phi87}.

\subsubsection{Semantic preorders}
\label{subsub:preorders}
As already hinted above, the dynamic part of the 
semantics is inherited from a simulation preorder.
As stated in our Introduction, bisimilarity can be axiomatized by the set of axioms $B1-B4$. All the other semantics in the spectrum are coarser than it, and therefore also satisfy these axioms. But due to the fact that bisimulations define equivalence relations and not just preorders, we cannot base on them the characterization of any other interesting semantics. But, plain simulations are somehow defined as half-bisimulations, and can indeed be used as support for the characterizations of some interesting semantics, such as trace semantics. Nevertheless, plain similarity becomes too weak, and some other finer class of simulations is needed to support the characterization of the interesting semantics listed above. Next we recall the axiomatizations of plain, ready and general constrained similarity.

\begin{prop}[\cite{Gla01,FG08ifiptcs}]
 \label{prop:ns}
 \hfill
\begin{enumerate}[\em(1)]
\item 
Plain similarity can be axiomatically defined by means of the axiom
$(S)\;x \preceq x+y$, together with the axioms $B_1$--$B_4$ that define
bisimilarity.
\item 
Ready similarity can be axiomatically defined by means of the
conditional axiom $(RS)\; xIy\Rightarrow x\preceq x+y$, together with
$B_1$--$B_4$.
It can also be axiomatized by means of the axiom scheme
$ax \preceq ax + ay$, where $a$ represents any arbitrary action.
\item Whenever $N$ is a behavior preorder, $N$-similarity can be 
axiomatically defined by means of the 
conditional axiom $(\textit{NS})\; N(x,y)\Rightarrow x\preceq x + y$, together 
with $B_1$--$B_4$.
\end{enumerate}
\end{prop}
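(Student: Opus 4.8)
The plan is to establish part~(3) in general and then read off parts~(1) and~(2) as particular cases. First I would check that the two relations involved are behavior preorders in the sense of Definition~\ref{def:ProcessPreorder}: the universal relation $U$ satisfies all clauses trivially, and the relation $I$ does too, since bisimilar processes have equal initial offers, $\init(ap)=\{a\}=\init(aq)$, and $\init(p)=\init(q)$ implies $\init(p+r)=\init(p)\cup\init(r)=\init(q)\cup\init(r)=\init(q+r)$. Then $(S)$ is exactly $(NS)$ for $N=U$ (the premise being vacuous), and the conditional form $xIy\Rightarrow x\preceq x+y$ of part~(2) is $(NS)$ for $N=I$; the axiom-scheme form $ax\preceq ax+ay$ of part~(2) is the one already recorded for ready simulation in Table~\ref{tab:AxiomatisationsPreordersLtbts} (see~\cite{Gla01}), so for part~(2) it only remains to show that the two presentations are interderivable over $B_1$--$B_4$. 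The bulk of the work is thus the soundness and completeness of $B_1$--$B_4$ together with $(NS)$ with respect to $\sqsubseteq_{NS}$.

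For soundness, since $N$ is weaker than bisimilarity, $\equiv_B$ is itself an $N$-constrained simulation, so $\equiv_B\subseteq\sqsubseteq_{NS}$; this makes $B_1$--$B_4$ sound, and together with the fact that $\sqsubseteq_{NS}$ is a behavior preorder when $N$ is one (routine, or see~\cite{FG08ifiptcs}) it makes $\sqsubseteq_{NS}$ respect all the rules of inequational logic. For $(NS)$ I would take ground terms $t,u$ with $N(t,u)$ and exhibit the relation $\mathrm{Id}\cup\{(t,t+u)\}$ as an $N$-constrained simulation (Definition~\ref{def:constrainedsimulation}): on the identity pairs the transfer condition is immediate and $N$ holds by reflexivity (which follows from $\equiv_B\subseteq N$), and if $t\tran{a}t'$ then $t+u\tran{a}t'$ with $(t',t')$ in the relation, so the only genuine point is $N(t,t+u)$. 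This I would derive from $N(t,u)$ by precongruence, giving $N(t+t,u+t)$, combined with $t\equiv_B t+t$, $u+t\equiv_B t+u$, $\equiv_B\subseteq N$ and transitivity of $N$. Hence $t\sqsubseteq_{NS}t+u$.

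For completeness I would prove, by induction on the depth of $t$, that $t\sqsubseteq_{NS}u$ implies $B_1$--$B_4,(NS)\vdash t\preceq u$. Using $B_1$--$B_4$ I put $t$ in head normal form and group its summands by their initial action, $t=\sum_a t|_a$ with $t|_a=\sum_i a\,t_i^a$, and likewise for $u$. For each summand $a\,t_i^a$ the transfer condition yields $u\tran{a}u_k^a$ with $t_i^a\sqsubseteq_{NS}u_k^a$; the induction hypothesis gives $\vdash t_i^a\preceq u_k^a$, hence $\vdash a\,t_i^a\preceq a\,u_k^a$, where $a\,u_k^a$ is, up to $B_1,B_2$, a summand of $u$. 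Collecting these inequations by the context rule produces $\vdash t\preceq v$, where $v$ is the sum of all the matched summands $a\,u_k^a$, so that every summand of $v$ is already a summand of $u$, i.e. $u\equiv_B v+w$ for some $w$. It then remains to absorb $w$ by a single application of $(NS)$, that is, to conclude $\vdash v\preceq u$; for this one must discharge the premise $N(v,v+w)$, i.e. $N(v,u)$. Discharging this premise is the main obstacle. For $N=U$ it is vacuous; for $N=I$ it holds because $v$ contains an $a$-summand for every initial action of $t$, whence $\init(v)=\init(t)=\init(u)$ (the last equality because $t\sqsubseteq_{RS}u$). For a general behavior preorder $N$ one argues, using that $t_i^a\sqsubseteq_{NS}u_k^a$ entails $N(t_i^a,u_k^a)$ together with precongruence, that $N(t,v)$ holds, and then invokes the structural property of constrained simulations from~\cite{FG08ifiptcs} that $p\sqsubseteq_{NS}q$ is equivalent to $p+q\equiv_{NS}q$ in order to see that the matched skeleton $v$ is $N$-equivalent to $u$; this yields $N(v,u)$, hence $\vdash v\preceq v+w\equiv_B u$, and transitivity closes the induction. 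The degenerate case $t\equiv_B\cero$ is immediate, since then the transfer condition forces the relevant part of $u$ to be $\cero$ as well.

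Finally, for the equivalence of the two presentations of ready similarity: the scheme $ax\preceq ax+ay$ is an instance of the conditional axiom, since $\init(ax)=\{a\}=\init(ax+ay)$, so $(RS)$ gives $ax\preceq ax+(ax+ay)\simeq ax+ay$ using $B_1$--$B_3$. Conversely, given ground terms $t,u$ with $\init(t)=\init(u)$, I would group both by initial action as above, observe that the same actions occur on both sides, and for each such action $a$ derive $\vdash t|_a\preceq t|_a+u|_a$ by repeated use of the scheme $ax\preceq ax+ay$ together with the prefix and context rules to push each $a$-guarded summand of $u|_a$ into $t|_a$; summing the finitely many resulting inequations over $a\in\init(t)=\init(u)$ and reassociating with $B_1,B_2$ then gives $\vdash t\preceq t+u$, which is exactly the instance of $(RS)$ required. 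This completes the reduction of part~(2) to part~(3).
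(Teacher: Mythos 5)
Since the paper only cites Proposition~\ref{prop:ns} to \cite{Gla01,FG08ifiptcs} and gives no proof, your argument must be judged on its own. Its soundness half, the reduction of (1) and (2) to (3), and the interderivability of the two presentations of ready similarity are fine. The genuine gap is in the completeness step of (3) at the stated generality, i.e.\ for an \emph{arbitrary} behavior preorder $N$. Having derived $\vdash t\preceq v$ with $v$ the sum of the matched summands $a\,u^a_k$ of $u$, you want one application of $(\textit{NS})$ to absorb the rest of $u$; besides the slip that the instance $v\preceq v+w$ has premise $N(v,w)$ rather than $N(v,v+w)$ (harmless if one takes $w:=u$), everything rests on $N(v,u)$. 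What you actually have is $N(t,v)$ and $N(t,u)$, both pointing \emph{out of} $t$; for a non-symmetric $N$ these do not compose to $N(v,u)$, and the property ``$p\sqsubseteq_{NS}q$ iff $p+q\equiv_{NS}q$'' does not supply it either. Worse, $N(v,u)$ --- and with it $v\sqsubseteq_{NS}u$, hence by soundness any hope of deriving $v\preceq u$ --- can simply fail: let $N$ be the least behavior preorder with $\cero\mathrel{N}d$, and take $t=ab$, $u=a(b+d)+a(bd)$. Then $t\sqsubseteq_{NS}u$, but whichever $a$-derivative of $u$ matches $t\tran{a}b$, the resulting $v$ (either $a(b+d)$ or $a(bd)$) is not $N$-below $u$: the relation of all pairs $(p,q)$ such that $d\in I(p)\Rightarrow d\in I(q)$ and such that whenever $q$ has an $a$-derivative $q'$ with $d\notin I(q')$ then $p$ has one too, contains $(\cero,d)$ and $\equiv_B$ and is closed under all the rules generating $N$, hence contains $N$, yet it excludes $(a(b+d),u)$ because $u\tran{a}bd$ with $d\notin I(bd)$; the other match is excluded by the analogous invariant one level deeper. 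So the induction cannot be closed as you organize it; it is correct only when $N$ is an equivalence (then $N(u^a_k,t^a_i)$ gives $N(v,t)$ and transitivity gives $N(v,u)$), which covers (1), (2) and every constraint the paper uses, but not the full claim of (3). The base case also contains an error: $t\equiv_B\cero\sqsubseteq_{NS}u$ does not force $u\equiv_B\cero$ (the transfer condition is vacuous); it only gives $N(\cero,u)$, and $\cero\preceq u$ must itself come from $(\textit{NS})$.

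The statement is nevertheless true in this generality, and your ingredients prove it once the single use of $(\textit{NS})$ is moved to the top, where its premise is available: $t\sqsubseteq_{NS}u$ gives $N(t,u)$ directly, so $(\textit{NS})$ yields $\vdash t\preceq t+u$; the induction hypothesis and the prefix and sum rules yield $\vdash t\preceq v$, hence $\vdash t+u\preceq v+u$; and $v+u\simeq u$ follows from $B_1$--$B_4$ because every summand of $v$ is a summand of $u$. Transitivity then gives $\vdash t\preceq u$, with $N(v,u)$ never needed and the case $t\equiv_B\cero$ handled uniformly with $v=\cero$.
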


\noindent Let us now consider the diamond of semantics coarser than ready similarity 
in the ltbt spectrum.
It consists of the failures, readiness, failure trace, and ready trace
semantics.
None of them is a simulation semantics, so their classic axiomatizations 
(see Table~\ref{tab:AxiomatisationsPreordersLtbts}) contain an additional axiom:
\[
\begin{array}{l@{\ }l@{\quad}l}
\textrm{Failures:} & (F) & a(x+y) \preceq ax + a(y+w)\\
\textrm{Readiness:}& (R)& a(bx+by+u) \preceq a(bx+u) + a(by+v)\\
\textrm{failure trace:} &(FT)& a(x+y) \preceq ax + ay\\
\textrm{ready trace:}& (RT)& I(x)=I(y)\Rightarrow ax+ay\simeq a(x+y)
\end{array}
\]

Since we are interested in capturing the reduction of observability of non-determinism, our first
candidate for a general axiom covering all cases was $(FT)$, which captures
the fact that by delaying the choices we get ``smaller'' processes.
However, since this axiom characterizes the failure trace semantics and this
is finer than failure semantics, a more general axiom is needed: axiom $(F)$ 
became our next proposal because failure semantics is the coarsest of the four 
semantics.
More precisely, we expected to achieve the axiomatization of the four
semantics in the diamond by adding the adequate instance of 
the generic constrained conditional axiom
\[\label{eq:nd}
(\textit{ND\/})\quad M(x,y,w)\Rightarrow a(x+y) \preceq ax + a(y+w)\,.  
\]
This seemed reasonable since the other semantics in the group are finer than failures
and by adding a constraint to $(F)$ we certainly obtain a more restricted axiom
that produces a finer preorder. The conjecture turned out to be correct and we found 
that the semantics in the diamond can
be characterized by the following instances:
\[
\begin{array}{l@{\quad}l}
(\textit{ND\/}^F) & M_F(x,y,w) \iff \textrm{true}\\
(\textit{ND\/}^R)& M_R(x,y,w) \iff I(x)\supseteq I(y)\\
(\textit{ND\/}^\mathit{FT})& M_{FT}(x,y,w) \iff I(w) \subseteq I(y) \\
(\textit{ND\/}^\mathit{RT})& M_{RT}(x,y,w) \iff  I(x) = I(y) \ \textrm{and}\ 
I(w)\subseteq I(y) 
\end{array}
\]

Since $M_F$ is the universal relation containing all triples of processes,
the corresponding instance of the conditional axiom $(\textit{ND\/})$ is
clearly equivalent to $(F)$, and thus adding it to the set 
$\{\textrm{$B_1$--$B_4$}, (RS)\}$ we
obtain a ground-complete axiomatization of $\sqsubseteq_F$.
Let us now prove that the remaining three semantics are also axiomatized by
the corresponding instances of the axiom $(\textit{ND\/})$ together with
$(RS)$.

\begin{prop}\label{static-1:prop}
\hfill
\begin{enumerate}[\em(1)]
\item The readiness preorder $\sqsubseteq_R$ is axiomatized by 
 $\{\textrm{$B_1$--$B_4$}, (RS),$ $(\textit{ND\/}^R)\}$.
\item The failure trace preorder $\sqsubseteq_{FT}$ is axiomatized by
 $\{\textrm{$B_1$--$B_4$}, (RS), (\textit{ND\/}^\mathit{FT})\}$.
\item The ready trace preorder is axiomatized by the set
 $\{\textrm{$B_1$--$B_4$}, (RS), (\textit{ND\/}^\mathit{RT})\}$.
\end{enumerate}
\end{prop}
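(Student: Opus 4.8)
The plan is to validate each of the three axiomatizations by comparing it with van Glabbeek's classical one. Recall from Proposition~\ref{prop:ns}(2) that, modulo $B_1$--$B_4$, the scheme $ax\preceq ax+ay$ may be replaced by $(RS)$, so van Glabbeek's sound and complete axiomatization of $\sqsubseteq_R$ (Table~\ref{tab:AxiomatisationsPreordersLtbts}) amounts to $B_1$--$B_4$ together with $(RS)$ and $(R)$, and likewise for $\sqsubseteq_{FT}$ with $(FT)$ and for $\sqsubseteq_{RT}$ with $(RT)$. Hence it suffices to prove, for each of the three semantics: (i) the proposed system --- $B_1$--$B_4$, $(RS)$ and $(\mathit{ND}^{\bullet})$ --- is sound for the corresponding preorder; and (ii) it derives the classical diamond axiom $(\bullet)$. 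Indeed, (ii) ensures the new system proves every inequation van Glabbeek's does, hence (by completeness of the latter) every valid one; combined with soundness (i), the new system is then sound and complete.

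For (ii), each derivation is a single instantiation of the relevant conditional axiom followed by bookkeeping with $B_1$--$B_4$. For failure traces, take $w:=\cero$ in $(\mathit{ND}^{FT})$: the side condition $I(\cero)=\emptyset\subseteq I(y)$ holds trivially, yielding $a(x+y)\preceq ax+a(y+\cero)$, and $B_4$ rewrites $y+\cero$ to $y$, so $(FT)$ follows. For readiness, instantiate $(\mathit{ND}^{R})$ with $x:=bx+u$, $y:=by$, $w:=v$; the constraint $M_R$ reduces to $I(bx+u)\supseteq\{b\}=I(by)$, which holds, giving $a((bx+u)+by)\preceq a(bx+u)+a(by+v)$, i.e.\ $(R)$ after commuting and reassociating the left summand with $B_1,B_2$. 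For ready traces, the $\preceq$-part of $(RT)$ comes from $(\mathit{ND}^{RT})$ with $w:=\cero$ (both $I(x)=I(y)$ and $I(\cero)\subseteq I(y)$ hold, and $B_4$ cleans up), whereas the $\succeq$-part $ax+ay\preceq a(x+y)$ needs only $B_1$--$B_4$ and $(RS)$: from $I(x)=I(y)$ one has $I(x+y)=I(x)=I(y)$, so $(RS)$ gives $x\preceq x+y$ and $y\preceq x+y$, and prefixing and adding yields $ax+ay\preceq a(x+y)+a(x+y)$, which collapses by $B_3$.

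For (i), observe that $\sqsubseteq_R$, $\sqsubseteq_{FT}$ and $\sqsubseteq_{RT}$ are behavior preorders lying below ready similarity, so $B_1$--$B_4$ and $(RS)$ are automatically sound; it remains to check $(\mathit{ND}^{\bullet})$ on ground instances $a(p+q)\sqsubseteq ap+a(q+r)$ under its constraint, using the semantic data of \cite{Gla01} --- ready pairs $(w,I(p'))$ for $\sqsubseteq_R$, failure traces for $\sqsubseteq_{FT}$, ready traces for $\sqsubseteq_{RT}$. The point is to show every datum of $a(p+q)$ is a datum of $ap+a(q+r)$. All such data are trivially inherited except two sorts. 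First, the datum recording the offer reached immediately after the leading $a$, which is $I(p+q)=I(p)\cup I(q)$; on the right it must be matched by the summand $ap$, whose post-$a$ offer is $I(p)$, and this match needs $I(p)\cup I(q)=I(p)$ for $\sqsubseteq_R$ and $\sqsubseteq_{RT}$ --- exactly what $M_R$, resp.\ the conjunct $I(x)=I(y)$ of $M_{RT}$, provides (for $\sqsubseteq_{FT}$ only a refusal, not the exact offer, is recorded here, so no constraint is needed). Second, the data that descend into the $q$-part, handled by the summand $a(q+r)$: a refusal set encountered inside $q$ avoids $I(q)$ (a ready set inside $q$ equals $I(q)$), and since $M_{FT}$, resp.\ the conjunct $I(w)\subseteq I(y)$ of $M_{RT}$, gives $I(r)\subseteq I(q)$, we get $I(q+r)=I(q)\cup I(r)=I(q)$, so that set is still legal at $q+r$; the rest of $q$'s behavior is available from $q+r$ because $q\tran{b}q'$ implies $q+r\tran{b}q'$. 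This yields the required inclusions.

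The main obstacle is precisely the bookkeeping in (i): tracking how prefixing and $+$ transform ready-pair / failure-trace / ready-trace information, and isolating which conjuncts of $M_R$, $M_{FT}$, $M_{RT}$ are needed and where. I would organize this uniformly, splitting the data of $a(p+q)$ into the ``immediately-after-$a$'' offer and the ``inside $q$'' continuations rather than arguing semantics by semantics; this makes transparent why no weaker constraints would suffice and anticipates the generic observational argument of Section~\ref{rnoef:sec}.
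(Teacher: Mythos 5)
Your proposal is correct, but it takes a different route from the paper in one of the two directions. Your part (ii) coincides with the paper's argument: the same instantiations ($w:=\cero$ for $(\textit{FT})$ and $(\textit{RT})$, $x:=bx+u$, $y:=by$, $w:=v$ for $(R)$, and the $(RS)$-based derivation of $ax+ay\preceq a(x+y)$ under $I(x)=I(y)$) appear there. For the other direction, however, the paper stays entirely inside the axiomatic framework and proves the \emph{converse derivability}: it shows that each $(\textit{ND\/}^Z)$ is itself provable from the classical system, which for readiness requires a structural induction on $y$ to derive $a(x+y)\preceq ax+a(y+w)$ from $(R)$ and $(RS)$, and for $FT$/$RT$ needs short derivations again using $(RS)$; soundness of the new system is then inherited purely syntactically from the soundness of van Glabbeek's axioms, and one in fact obtains the stronger fact that the two axiom systems are logically equivalent. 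You instead discharge soundness semantically, checking the ground instances of $(\textit{ND\/}^Z)$ directly against readies, failure traces and ready traces via the split into the ``offer immediately after $a$'' and the ``continuation into $q$ via $a(q+r)$''; this check is correct (your wording ``refusal/ready set encountered inside $q$'' really refers to the set recorded at the state entered right after $a$ when the computation descends into $q$, and for $RT$ its identification with $I(q)$ already uses the conjunct $I(x)=I(y)$, but the constraints you invoke do suffice exactly as claimed). What each buys: your route avoids the induction in the readiness case and makes visible which conjunct of $M_Z$ is needed where, essentially anticipating the generic observational completeness/soundness proofs of Section~\ref{rnoef:sec}; the paper's route keeps the argument purely equational, independent of the semantic definitions, and yields the mutual derivability of the old and new axioms, which it exploits elsewhere (e.g., when transferring between inequational and equational forms).
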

\begin{proof}
\hfill
\begin{enumerate}[(1)]
\item Let us show that the set 
$\{\textrm{$B_1$--$B_4$}, (RS), (\textit{ND\/}^R)\}$ is logically equivalent to
$\{\textrm{$B_1$--$B_4$},$ $(RS), (R)\}$.
By taking $x = bx' + u$, $y= by'$, and $w=v$ we have that 
$(\textit{ND\/}^R)$ implies $(R)$.
In the other direction, let $x$ and $y$ be arbitrary closed BCCSP terms with
$I(y) \subseteq I(x)$: we will prove, by structural induction on $y$, that
$\{\textrm{$B_1$--$B_4$}, (RS), (R)\} \vdash a(x+y) \preceq ax + a(y+w)$, for
any term $w$.
\begin{iteMize}{$\bullet$}
\item For $y=\cero$, we have $a(x+y) \simeq ax\preceq ax+a(y+w)$, by application
of $(RS)$.
\item For $y= by'+ y''$, it must be $x=bx'+x''$ and taking $v= y''+w$ in 
$(R)$ we obtain $a(x+y) = a(bx'+by'+x''+y'') \preceq a(x+y'') + a(y+w)$.
Then we have $I(y'')\subseteq I(x)$ and we can apply the induction hypothesis
to get $\{\textrm{$B_1$--$B_4$}, (RS), (R)\} \vdash a(x+y)\preceq ax + a(y+w)$.
\end{iteMize}

\item Let us show that the sets 
$\{\textrm{$B_1$--$B_4$}, (RS),(\textit{ND\/}^\mathit{FT})\}$ and
$\{\textrm{$B_1$--$B_4$}, (RS),(FT)\}$
are logically equivalent.
The implication from left to right follows by taking $w=\cero$.
In the other direction, let $w$ and $y$ with $I(w)\subseteq I(y)$, so that
$a(x+y) \preceq ax+ay$ using $(FT)$ and, since $I(y) = I(y+w)$, we have 
$y\preceq y+w$ using $(RS)$: hence we conclude, $a(x+y) \preceq ax+a(y+w)$.

\item Let us show that the set 
$\{\textrm{$B_1$--$B_4$}, (RS), (\textit{ND\/}^\mathit{RT})\}$ 
is logically equivalent to
$\{\textrm{$B_1$--$B_4$}, (RS),$ $(RT)\}$.
We first note that $\{\textrm{$B_1$--$B_4$}, (RS), (RT)\}$ is equivalent
to $\{\textrm{$B_1$--$B_4$}, (RS), (RT_{\succeq})\}$, where $(RT_\succeq)$ is
the axiom $M_{RT}(x,y,w)\Rightarrow ax+ay \succeq a(x+y)$.
This follows from the fact that, whenever $I(x) = I(y)$, we can use $(RS)$
to get $x\preceq x+y$ and $y\preceq x+y$, and then $ax+ay\preceq a(x+y)$.
Now, the implication from left to right follows by taking $w=\cero$.
From right to left, as above, whenever $I(w)\subseteq I(y)$ we have 
$y\preceq y+w$ and then, if $I(x)=I(y)$ we have $a(x+y)\preceq ax+ay$, and
therefore $a(x+y)\preceq ax+a(y+z)$.%
\end{enumerate}%
\end{proof}

\begin{figure}[tbp]
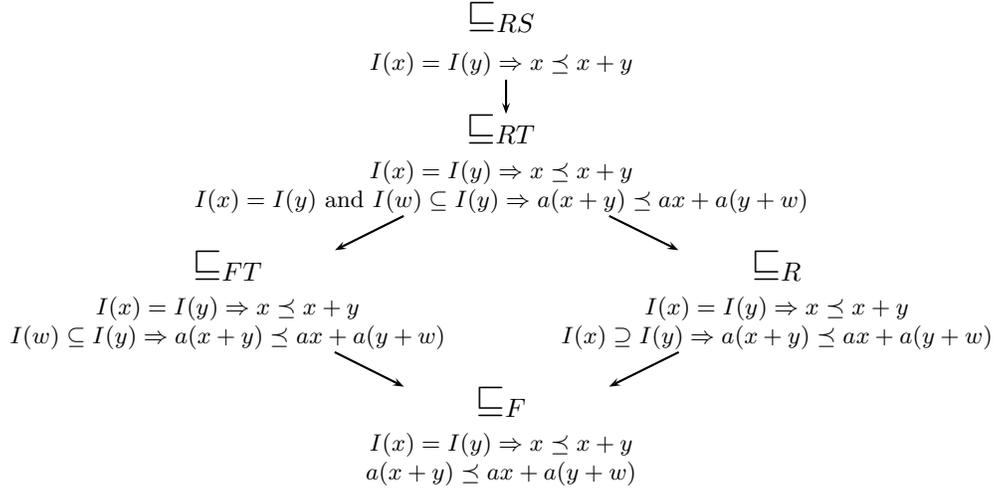

  \begin{center}
    \begin{tabular}{ccc} 
      \multicolumn{3}{c}{\rnode{ns}{%
          \begin{tabular}{c}    
            \Large{$\sqsubseteq_{RS}$}\\[.25em]
            \footnotesize{$\begin{array}{c}
                I(x)=I(y) \Rightarrow x \preceq x + y \end{array}$}
            \end{tabular}
          }} 
      \\ \\
      \multicolumn{3}{c}{\rnode{nl}{%
          \begin{tabular}{c}    
            \Large{$\sqsubseteq_{RT}$}\\[.25em]
            \footnotesize{$\begin{array}{c}
                I(x)=I(y) \Rightarrow x \preceq x + y\\ 
                I(x)=I(y)\mbox{ and } I(w)\subseteq I(y)\Rightarrow a(x+y) \preceq ax + a(y+w)\end{array}$} 
          \end{tabular}
        }}
      \\ \\
      \rnode{nlf}{%
        \begin{tabular}{c}    
          \Large{$\sqsubseteq_{FT}$}\\[.25em]  
          \footnotesize{$\begin{array}{c}
              I(x)=I(y) \Rightarrow x \preceq x + y\\ 
              I(w)\subseteq I(y)\Rightarrow a(x+y) \preceq ax + a(y+w)
            \end{array}$}
        \end{tabular}
      }
      &  &
      \rnode{nlr}{%
        \begin{tabular}{c}    
          \Large{$\sqsubseteq_{R}$}\\[.25em] 
          \footnotesize{$\begin{array}{c} 
              I(x)=I(y) \Rightarrow x \preceq x + y\\ 
              I(x) \supseteq I(y)\Rightarrow a(x+y) \preceq ax + a(y+w)
            \end{array}$}
          \end{tabular}
          }
        \\ \\
        \multicolumn{3}{c}{\rnode{nlfr}{%
          \begin{tabular}{c}    
            \Large{$\sqsubseteq_{F}$}\\[.25em]       
            \footnotesize{$\begin{array}{c}
                I(x)=I(y) \Rightarrow x \preceq x + y\\ 
                a(x+y) \preceq ax + a(y+w)
              \end{array}$}
          \end{tabular}
        }}\\
      \ncline{->}{ns}{nl}
      \ncline{->}{nl}{nlr}
      \ncline{->}{nl}{nlf}
      \ncline{->}{nlr}{nlfr}
      \ncline{->}{nlf}{nlfr}
      \\
    \end{tabular}

    \caption{Inclusion relation for the ready simulation preorder and its associated linear semantics.}
    \label{fig:ReadySlice}
  \end{center}
\end{figure}

Figure~\ref{fig:ReadySlice} shows the already known relations between the
semantics of the spectrum in the ready simulation layer. However, we want to
stress the fact that once the new axiomatizations are proved to be correct,
those relations became obvious since the four constraints defined above
trivially satisfy $M_{RT}(x,y,w) \Rightarrow M_{FT}(x,y,w) \wedge M_R(x,y,w) $
and $M_{FT}(x,y,w) \vee M_R(x,y,w) \Rightarrow M_F(x,y,w) $. It is even more
important that the tight relations and the subtle differences between these
semantics clearly stand out by just looking at their axiomatizations.

Certainly, if we compare our new axiomatizations and those in
Table~\ref{tab:AxiomatisationsPreordersLtbts}, the use of conditions in our
axioms could be on the grounds that complex conditions could be used to hide
the complexity of the semantics.
However, the conditions that we have introduced for the alternative
axiomatizations of the semantics in the spectrum are very simple.
In any case, our main interest was to obtain a uniform presentation of
the axiomatizations that could be used to simplify their generic algebraic
study.

\begin{cor}
\hfill
\begin{enumerate}[\em(1)]
\item $\sqsubseteq_{FT}$ is axiomatized by the set 
$\{\textrm{$B_1$--$B_4$}, (RS), (\textit{ND\/}^\mathit{FT}_0)\}$, where 
$(\textit{ND\/}^\mathit{FT}_0)$ is the instance of $(\textit{ND\/}^\mathit{FT})$
where $w$ is $\cero$.

\item $\sqsubseteq_{RT}$ is axiomatized by 
$\{\textrm{$B_1$--$B_4$}, (RS), (\textit{ND\/}^\mathit{RT}_0)\}$, where 
$(\textit{ND\/}^\mathit{RT}_0)$ is the instance of $(\textit{ND\/}^\mathit{RT})$ 
where $w$ is $\cero$.
\end{enumerate}
\end{cor}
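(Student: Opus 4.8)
The plan is to deduce both statements from Proposition~\ref{static-1:prop}(2) and (3) by showing that, once $B_1$--$B_4$ and $(RS)$ are present, the conditional axiom $(\textit{ND\/}^\mathit{FT})$ and its instance $(\textit{ND\/}^\mathit{FT}_0)$ prove exactly the same inequations (and likewise for ready trace). One inclusion is immediate, because $(\textit{ND\/}^\mathit{FT}_0)$ is by definition an instance of $(\textit{ND\/}^\mathit{FT})$, and similarly $(\textit{ND\/}^\mathit{RT}_0)$ is an instance of $(\textit{ND\/}^\mathit{RT})$; so all the work lies in recovering the general axioms from their $w=\cero$ instances. I would first unfold these instances: since $I(\cero)=\emptyset\subseteq I(y)$, the side condition of $(\textit{ND\/}^\mathit{FT})$ holds vacuously at $w=\cero$, so $(\textit{ND\/}^\mathit{FT}_0)$ is, modulo $B_4$, just $(FT)$, namely $a(x+y)\preceq ax+ay$; and $M_{RT}(x,y,\cero)$ reduces to $I(x)=I(y)$, so $(\textit{ND\/}^\mathit{RT}_0)$ is, modulo $B_4$, the axiom $I(x)=I(y)\Rightarrow a(x+y)\preceq ax+ay$, i.e.\ the $w=\cero$ instance of the auxiliary axiom $(RT_\succeq)$ used in the proof of Proposition~\ref{static-1:prop}(3).

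The derivation of the general axiom from the $w=\cero$ instance is then the same argument already carried out in the ``other direction'' part of the proof of Proposition~\ref{static-1:prop}(2). Suppose $I(w)\subseteq I(y)$ (and, in the ready trace case, also $I(x)=I(y)$). First one shows $(RS)$ proves $y\preceq y+w$: writing $w=\sum_{b\in I(w)}\sum_i bw_b^i$ and picking, for each such summand, a $b$-summand $by_b^j$ of $y$ (which exists because $b\in I(y)$), the scheme form $by_b^j\preceq by_b^j+bw_b^i$ of ready similarity --- equivalent to the conditional form by Proposition~\ref{prop:ns}(2) --- together with $B_3$ to duplicate those summands as needed and $B_1,B_2$ to reassemble, yields $y\preceq y+w$. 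From here the prefix rule gives $ay\preceq a(y+w)$, the inference rule for $+$ gives $ax+ay\preceq ax+a(y+w)$, and transitivity with $a(x+y)\preceq ax+ay$ --- which comes from $(\textit{ND\/}^\mathit{FT}_0)$ in case (1), and from $(\textit{ND\/}^\mathit{RT}_0)$ (using the hypothesis $I(x)=I(y)$) in case (2) --- gives $a(x+y)\preceq ax+a(y+w)$, i.e.\ $(\textit{ND\/}^\mathit{FT})$, resp.\ $(\textit{ND\/}^\mathit{RT})$. Hence each of the two smaller axiom systems proves the same inequations as the corresponding system in Proposition~\ref{static-1:prop}, and the soundness and completeness with respect to $\sqsubseteq_{FT}$ and $\sqsubseteq_{RT}$ transfer immediately.

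There is no real obstacle here: the statement is genuinely a corollary, amounting to the observation that the proofs of Proposition~\ref{static-1:prop}(2) and (3) already used the general axioms only through their $w=\cero$ instances. The single point I would take care to write out is the derivation of $y\preceq y+w$ from $I(w)\subseteq I(y)$, since it is the only step that argues summand-by-summand and appeals to the interderivability of the conditional and schematic forms of $(RS)$; all remaining steps are short, mechanical applications of transitivity and of the prefix and choice inference rules of (in)equational logic, and no completeness argument needs to be redone.
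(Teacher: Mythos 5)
Your proposal is correct and follows essentially the same route as the paper: the paper's proof is the one-line observation that the proof of Proposition~\ref{static-1:prop} only ever uses the axioms $(\textit{ND\/}^\mathit{FT})$ and $(\textit{ND\/}^\mathit{RT})$ through their $w=\cero$ instances, and your argument is precisely an unfolding of that remark, re-deriving the general axioms from $(\textit{ND\/}^Z_0)$ together with $(RS)$ exactly as in the right-to-left directions of Proposition~\ref{static-1:prop}(2) and (3). The only difference is that you spell out the step $I(w)\subseteq I(y)\Rightarrow y\preceq y+w$ summand by summand, which the paper treats as an immediate consequence of $(RS)$; this is fine but not needed beyond what Proposition~\ref{static-1:prop} already records.
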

\begin{proof}
Note that for the proof of Proposition~\ref{static-1:prop} only the 
case $w=\cero$ is needed.
\end{proof}

Even if the simplifications above are possible, we prefer to maintain the 
general forms of the axioms $(\textit{ND\/}^\mathit{FT})$ and
$(\textit{ND\/}^\mathit{RT})$ to keep all axiomatizations as similar as 
possible, which will come in handy when proving general properties of the 
semantics.

\begin{cor}
\hfill
\begin{enumerate}[\em(1)]
\item $\sqsubseteq_F$ can be axiomatized by the axioms 
$\{\textrm{$B_1$--$B_4$}, (\textit{ND\/}^F)\}$.
\item $\sqsubseteq_R$ can be axiomatized by the axioms 
$\{\textrm{$B_1$--$B_4$}, (\textit{ND\/}^R)\}$.
\end{enumerate}
\end{cor}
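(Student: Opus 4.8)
The plan is to show that, in both cases, the axiom $(RS)$ is \emph{redundant}, so it may be removed from the axiomatizations established in Proposition~\ref{static-1:prop} and in the discussion preceding it without affecting ground-completeness; soundness of the smaller systems is then immediate.

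First I would observe that both $(\textit{ND\/}^F)$ and $(\textit{ND\/}^R)$ have the axiom scheme $ax \preceq ax + ay$ (equivalently, $(RS)$) as a particular instance. For $(\textit{ND\/}^F)$, which is just $a(x+y)\preceq ax + a(y+w)$, take $y := \cero$ and rename $w$ to $y$; using $B_4$ to rewrite $x+\cero$ as $x$ and $B_1,B_4$ to rewrite $\cero+y$ as $y$ gives $ax\preceq ax+ay$. For $(\textit{ND\/}^R)$ the same instantiation works because its side condition becomes $I(x)\supseteq I(\cero)=\emptyset$, which holds vacuously, so again $ax\preceq ax+ay$ is derivable. (This is exactly why the corollary is stated only for $F$ and $R$: for $(\textit{ND\/}^{\mathit{FT}})$ and $(\textit{ND\/}^{\mathit{RT}})$ the side conditions $I(w)\subseteq I(y)$ would also force $w=\cero$ under $y:=\cero$, and the resulting instance collapses to the trivial $ax\preceq ax$.)

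Next, by Proposition~\ref{prop:ns}(2) the set $\{\textrm{$B_1$--$B_4$}, ax\preceq ax+ay\}$ is a ground-complete axiomatization of $\sqsubseteq_{RS}$, and the conditional axiom $(RS)$ is sound for $\sqsubseteq_{RS}$. Hence every ground instance $r\preceq r+s$ of $(RS)$ meeting its side condition $I(r)=I(s)$ is derivable from $\{\textrm{$B_1$--$B_4$}, ax\preceq ax+ay\}$, and therefore also from $\{\textrm{$B_1$--$B_4$}, (\textit{ND\/}^F)\}$ and from $\{\textrm{$B_1$--$B_4$}, (\textit{ND\/}^R)\}$ by the previous paragraph. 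Assuming substitutions first in (in)equational proofs, any ground derivation from $\{\textrm{$B_1$--$B_4$},(RS),(\textit{ND\/}^F)\}$ uses $(RS)$ only through such closed instances; replacing each of those steps by its derivation (derivability is transitive) shows that every ground inequation provable from $\{\textrm{$B_1$--$B_4$},(RS),(\textit{ND\/}^F)\}$ is already provable from $\{\textrm{$B_1$--$B_4$},(\textit{ND\/}^F)\}$, and likewise for the $R$-versions. Since $\{\textrm{$B_1$--$B_4$},(RS),(\textit{ND\/}^F)\}$ is ground-complete for $\sqsubseteq_F$ (the remark before Proposition~\ref{static-1:prop}) and $\{\textrm{$B_1$--$B_4$},(RS),(\textit{ND\/}^R)\}$ is ground-complete for $\sqsubseteq_R$ (Proposition~\ref{static-1:prop}(1)), so are the two reduced systems. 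Finally, soundness is clear: $B_1$--$B_4$ are sound for every behavior preorder, $(\textit{ND\/}^F)$ is sound for $\sqsubseteq_F$ (it is logically equivalent to $(F)$) and $(\textit{ND\/}^R)$ is sound for $\sqsubseteq_R$ (shown in the proof of Proposition~\ref{static-1:prop}). Combining soundness and completeness gives both claims.

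The main (and only) delicate point is the bookkeeping in the completeness step — making precise that a conditional axiom such as $(RS)$ is invoked only via closed instances satisfying its condition, each of which is separately derivable — after which the argument is a routine instantiation.
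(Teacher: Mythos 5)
Your proposal is correct and follows essentially the same route as the paper: the paper's proof is exactly the observation that $(\textit{ND\/}^F)$ and $(\textit{ND\/}^R)$ yield $(RS)$ (in its scheme form $ax\preceq ax+ay$, cf.\ Proposition~\ref{prop:ns}(2)) by instantiating $y=\cero$ and renaming $w$ to $y$, after which $(RS)$ can be dropped from the complete axiomatizations already established. Your version merely spells out the bookkeeping (ground instances of the conditional axiom, transfer of ground-completeness) that the paper leaves implicit.
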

\begin{proof}
Note that $(\textit{ND\/}^F)$ implies $(RS)$ and therefore 
$(\textit{ND\/}^R)$ implies $(RS)$, by taking $y=\cero$ and $w=y$.
\end{proof}

\subsubsection{Equivalences and their preorders}
\label{subsub:equivalences}

Let us now study the equivalences and first of all note that the axiom 
(\textit{ND}) controlling the reduction of non-determinism has been 
presented as an inequational axiom.
Certainly, it cannot simply be replaced by the corresponding equation since, in 
general, it is not true that $ax + ay\simeq a(x+y)$.
However, the two dimensions corresponding to $(RS)$ and $(\textit{ND}^Z)$ 
that control the ``growth'' of a process 
with respect to a preorder $\preceq$ are not orthogonal; for example, 
$a(x+y)\preceq a(x+y) + ax$ can be derived either by an application of 
$(\textit{ND}^\mathit{FT})$ or by one of $(RS)$.
As a consequence of the relation between these two axioms, once $(RS)$ is assumed
then the inequational axiom $(\textit{ND})$ can be substituted by its
(stronger) equational form 
\[
(\textit{ND}_\equiv)\quad M(x,y,w)\Rightarrow ax + a(y+w) + a(x+y)\simeq
                                            ax + a(y+w)\,.
\]
As above, we write $(\textit{ND}^Z_\equiv)$ for the concrete instances of this
axiom for $Z\in \{F, R,$ $\textit{FT},\textit{RT}\}$.

\begin{table}[t]
  \centering
  \begin{tabular}{|l
      |>{\centering}p{.20cm}
      |>{\centering}p{.25cm}
      |>{\centering}p{.25cm}
      |>{\centering}p{.25cm}
      |>{\centering}p{.25cm}
      |>{\centering}p{.25cm}
      |>{\centering}p{.25cm}
      |>{\centering}p{.25cm}
      |>{\centering}p{.25cm}
      |>{\centering}p{.20cm}
      |c|}
    \cline{2-12}
\multicolumn{1}{l|}{}&B&$\!\!\mbox{RS}$&$\!\!\!\mbox{PW}$&
  $\!\!\mbox{RT}$&$\!\!\mbox{FT}$&R&F&$\!\!\mbox{CS}$&$\!\!\mbox{CT}$&S&T\\
\hline
$(x+y)+z\simeq x+(y+z)$ &+&+&+&+&+&+&+&+&+&+&+\\
$x+y\simeq y+x$ &+&+&+&+&+&+&+&+&+&+&+\\
$x+0\simeq x$ &+&+&+&+&+&+&+&+&+&+&+\\
$x+x\simeq x$ &+&+&+&+&+&+&+&+&+&+&+\\
$I(x)=I(y)\Rightarrow a(x+y)\simeq a(x+y)+ay$ & &+&v&v&v&v&v&v&v&v&v\\
$a(bx+by+z) \simeq a(bx+z)+a(by+z)$ & & &+&v&v&v&v& &v& &v\\
$I(x)=I(y)\Rightarrow ax+ay \simeq a(x+y)$ & & & &+&+&v&v& &v& &v\\
$ax+ay\simeq ax+ay+a(x+y)$ & & & & &+& &v& &v& &v\\
$a(bx+u)+a(by+v)\simeq a(bx+by+u)+a(by+v)$ & & & & & &+&+& &v& &v\\
$ax+a(y+z)\simeq ax+a(x+y)+a(y+z)$ & & & & & & &+& &v& &v\\
$a(x+by+z)\simeq a(x+by+z)+a(by+z)$ & & & & & & & &+&v&v&v\\
$a(bx+u)+a(cy+v) \simeq a(bx+cy+u+v)$ & & & & & & & & &+& &v\\
$a(x+y)\simeq a(x+y)+ay$ & & & & & & & & & &+&v\\
$ax+ay \simeq a(x+y)$ & & & & & & & & & & &+\\ \hline
  \end{tabular}
  \caption{Axiomatization for the equivalences in the linear time-branching time spectrum.}
  \label{tab:AxiomatisationsEquivalencesLtbts}
\end{table}

\begin{prop}\label{rnd-eq:prop}
\hfill
\begin{enumerate}[\em(1)]
\item 
The set
$\{\textrm{$B_1$--$B_4$}, (RS), (\textit{ND})\}$ is logically equivalent to
$\{\textrm{$B_1$--$B_4$},(RS), (\textit{ND}_+)\}$, where 
$(\textit{ND}_+)$ is the axiom
\[
M(x,y,w)\Rightarrow ax + a(y+w) + a(x+y)\preceq ax + a(y+w)\,.
\]
\item $\{\textrm{$B_1$--$B_4$}, (RS), (\textit{ND}_+)\}$ is logically equivalent
to $\{\textrm{$B_1$--$B_4$}, (RS), (\textit{ND}_\equiv)\}$.
\end{enumerate}
\end{prop}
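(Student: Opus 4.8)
The plan is to prove each of the two logical equivalences by producing derivations in both directions within the (in)equational proof system of the preliminaries, the whole argument resting on two elementary observations. First, every BCCSP term that is a sum of $a$-prefixed summands has initial offer exactly $\{a\}$; hence, for such terms $u$ and $v$, the side condition of $(RS)$ in its conditional form $uIv\Rightarrow u\preceq u+v$ is met for every closed substitution, so $(RS)$ always allows us to append an $a$-prefixed summand to a sum of $a$-prefixed terms. Second, from a derived inequation $t\preceq u$ the precongruence rule for $+$ together with reflexivity gives $s+t\preceq s+u$ for any $s$, and when $s$ is a copy of a summand already present in $u$, the idempotency axiom $B_3$ (with $B_1$, $B_2$) collapses the right-hand side, so that a duplicated summand can be absorbed.

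For part~(1), I would first derive $(\textit{ND}_+)$ from $\{\textrm{$B_1$--$B_4$},(RS),(\textit{ND})\}$ using only $B_1$--$B_4$: assuming $M(x,y,w)$, apply $(\textit{ND})$ to get $a(x+y)\preceq ax+a(y+w)$; then add $ax+a(y+w)$ as a summand to both sides (by the precongruence rule together with reflexivity) to obtain $a(x+y)+ax+a(y+w)\preceq ax+a(y+w)+ax+a(y+w)$, and collapse the right-hand side to $ax+a(y+w)$ by $B_1$--$B_3$. For the converse, assuming $M(x,y,w)$ I would use $(RS)$ — legitimate since $I(a(x+y))=\{a\}=I(ax+a(y+w))$ — to obtain $a(x+y)\preceq a(x+y)+ax+a(y+w)$, and then compose this by transitivity, after a $B_1$--$B_2$ rearrangement, with the instance $ax+a(y+w)+a(x+y)\preceq ax+a(y+w)$ of $(\textit{ND}_+)$ to get $a(x+y)\preceq ax+a(y+w)$, i.e.\ $(\textit{ND})$.

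For part~(2), one direction is immediate: by our convention that an equation abbreviates two inequations, $(\textit{ND}_\equiv)$ is the conjunction of $(\textit{ND}_+)$ and its reverse, so it trivially proves $(\textit{ND}_+)$. For the other direction I would derive, from $(RS)$ alone, the reverse inequation $ax+a(y+w)\preceq ax+a(y+w)+a(x+y)$ — again a direct instance of the conditional $(RS)$ with $u=ax+a(y+w)$ and $v=a(x+y)$, since $I(u)=\{a\}=I(v)$ — and combine it with $(\textit{ND}_+)$ to recover the equation $(\textit{ND}_\equiv)$.

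I do not expect a real obstacle here, since the proposition is essentially bookkeeping with $B_1$--$B_4$, $(RS)$, monotonicity and transitivity. The one point that merits a word of justification is the repeated use of $(RS)$ to attach $a$-prefixed summands to sums of $a$-prefixed terms: this is sound because the side condition $I(u)=I(v)$ then holds structurally — both sides evaluate to $\{a\}$ under every closed instantiation of the variables — so the corresponding instances of $(RS)$ are admissible schematic axioms in the sense of the proof system described earlier, in the same spirit as $(RS)$ is already used in the proof of Proposition~\ref{static-1:prop}.
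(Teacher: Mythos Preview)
Your proof is correct and follows essentially the same route as the paper's own proof. For part~(1) the paper dismisses the forward direction in one line (``follows from $\preceq$ being a precongruence'') while you spell out the add-and-absorb manipulation explicitly; the backward direction via $(RS)$ then transitivity with $(\textit{ND}_+)$ is identical. For part~(2) both you and the paper observe that only the reverse inequation $ax+a(y+w)\preceq ax+a(y+w)+a(x+y)$ needs to be supplied, and both obtain it directly from $(RS)$.
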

\begin{proof}
\hfill
\begin{enumerate}[(1)]
\item We only need to prove the implication from right to left, since the
other follows from $\preceq$ being a precongruence.
For that, from $(RS)$ we get $a(x+y)\preceq a(x+y) + ax+ a(y+w)$ whence,
using $(\textit{ND}_+)$, $a(x+y)\preceq ax + a(y+w)$.
\item We only need to prove that, if $M(x,y,w)$, then
\[
\{\textrm{$B_1$--$B_4$}, (RS), (\textit{ND}_+)\}\vdash 
ax+ a(y+w) \preceq ax + a(y+w) + a(x+y)\,,
\]
which follows from $(RS)$.
\end{enumerate}
\end{proof}

\noindent This result can be interpreted as saying that the only way to
``enlarge'' a  
process is by extending its possible behaviors by means of the ``dynamic''
simulation axioms; the static rules, $(\textit{ND})$ and its variants, instead
generate new identifications among processes.

Actually, any complete axiomatization of a preorder that contains the axiom
$(RS)$ can be turned into an equivalent axiomatization by replacing every
inequality $u\preceq v$ by $u + v\simeq v$.

\begin{prop}
Let $Q = \{\textrm{$B_1$--$B_4$},(RS)\} \cup Q'$ be an axiomatization of 
an order $\sqsubseteq$ such that ${\sqsubseteq}\subseteq I$.
Then, the equational variant of $Q$, $Q^==\{\textrm{$B_1$--$B_4$},(RS)\} 
\cup \{M\Rightarrow u+v\simeq v\mid M \Rightarrow u\preceq v \in Q'\}$ is also
an axiomatization of $\sqsubseteq$.
\end{prop}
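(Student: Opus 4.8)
The plan is to establish that $Q^=$ is sound and ground-complete for $\sqsubseteq$, just as $Q$ is. As throughout the paper, by an ``axiomatization'' we mean a ground-complete one, so it suffices to deal with closed BCCSP terms, and I would freely use the normalization of (in)equational proofs recalled above: substitutions may be assumed to happen first, so a derivation of a closed (in)equation is built from \emph{closed} instances of the axioms using only reflexivity, transitivity and closure under the prefix and choice contexts. The one structural remark underlying everything is that $(RS)$ is retained verbatim in $Q^=$, and that its side condition $xIy$ is automatically met at every instance coming from a valid inequation: since ${\sqsubseteq}\subseteq I$, any closed $p,q$ with $p\sqsubseteq q$ satisfy $pIq$.

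For soundness I would check that every closed instance of every axiom of $Q^=$ is valid with respect to $\sqsubseteq$; since $\sqsubseteq$ is a reflexive, transitive precongruence, the inference rules preserve validity and hence $Q^=\vdash p\preceq q$ implies $p\sqsubseteq q$. The axioms $B_1$--$B_4$ and $(RS)$ are sound because they already belong to the sound system $Q$. For a converted axiom, read as the pair $M\Rightarrow u+v\preceq v$ and $M\Rightarrow v\preceq u+v$, fix a closed instance with $M$ holding; then $u\sqsubseteq v$ by soundness of $Q$, whence $u+v\sqsubseteq v+v\simeq v$ using that $\sqsubseteq$ is a precongruence and $B_3$ is valid, while $v\sqsubseteq v+u\simeq u+v$ follows by applying the sound axiom $(RS)$ with $x:=v$, $y:=u$, whose side condition $vIu$ holds because $uIv$ (as ${\sqsubseteq}\subseteq I$) and $I$ is symmetric.

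For completeness, let $p,q$ be closed with $p\sqsubseteq q$. Completeness of $Q$ gives $Q\vdash p\preceq q$, and I would show $Q^=\vdash p\preceq q$ by induction on a normalized such derivation. Reflexivity, transitivity and closure under prefix and choice are available verbatim in $Q^=$, and every instance of $B_1$--$B_4$ or $(RS)$ used is literally an axiom of $Q^=$, so the only case with content is an application of an axiom $M\Rightarrow u\preceq v$ of $Q'$ to a closed instance $u',v'$ for which $M$ holds. Then $u'\sqsubseteq v'$ by soundness of $Q$, hence $u'Iv'$ since ${\sqsubseteq}\subseteq I$, so $(RS)$ yields $Q^=\vdash u'\preceq u'+v'$; the corresponding converted axiom of $Q^=$ yields $Q^=\vdash u'+v'\simeq v'$; transitivity gives $Q^=\vdash u'\preceq v'$, completing the inductive step. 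Combining the two directions, $Q^=\vdash p\preceq q$ iff $p\sqsubseteq q$.

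The only real obstacle is exactly this last step — recovering the inequation $u'\preceq v'$ from the seemingly weaker equation $u'+v'\simeq v'$ — and it is overcome precisely because $(RS)$ survives in $Q^=$ and the hypothesis ${\sqsubseteq}\subseteq I$ guarantees the side condition of $(RS)$ at every such instance; dropping either hypothesis breaks the argument, which is why both appear in the statement. Everything else is routine bookkeeping with the proof-normalization facts already in place.
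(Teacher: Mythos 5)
Your proof is correct and takes essentially the same route as the paper, which simply declares this general case ``analogous'' to the preceding particular proposition about $(\textit{ND}_\equiv)$: in both, precongruence plus idempotence turns $u\preceq v$ into $u+v\preceq v$, while $(RS)$ --- whose side condition is discharged exactly because soundness and ${\sqsubseteq}\subseteq I$ give $uIv$ --- recovers $u\preceq v$ from $u+v\simeq v$. Your soundness/completeness packaging with induction on normalized ground derivations merely spells out the bookkeeping the paper leaves implicit, and it correctly isolates the one point where the general case differs from the particular one (there the $(RS)$ side condition was syntactically automatic, both sides having initials $\{a\}$).
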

\begin{proof}
Analogous to the particular case considered in Proposition~\ref{rnd-eq:prop} 
above. For the sake of clarity we have preferred to present the particular case 
before, because it is easily stated and it corresponds 
to the most important instance of the general result. 
\end{proof}

Finally, to conclude this section we gather in
Table~\ref{tab:AxiomatisationsEquivalencesLtbts} 
axiomatic characterizations for the semantic equivalences that are an
alternative to the classic axioms appearing in~\cite{Gla01}.

\begin{figure}[ht]
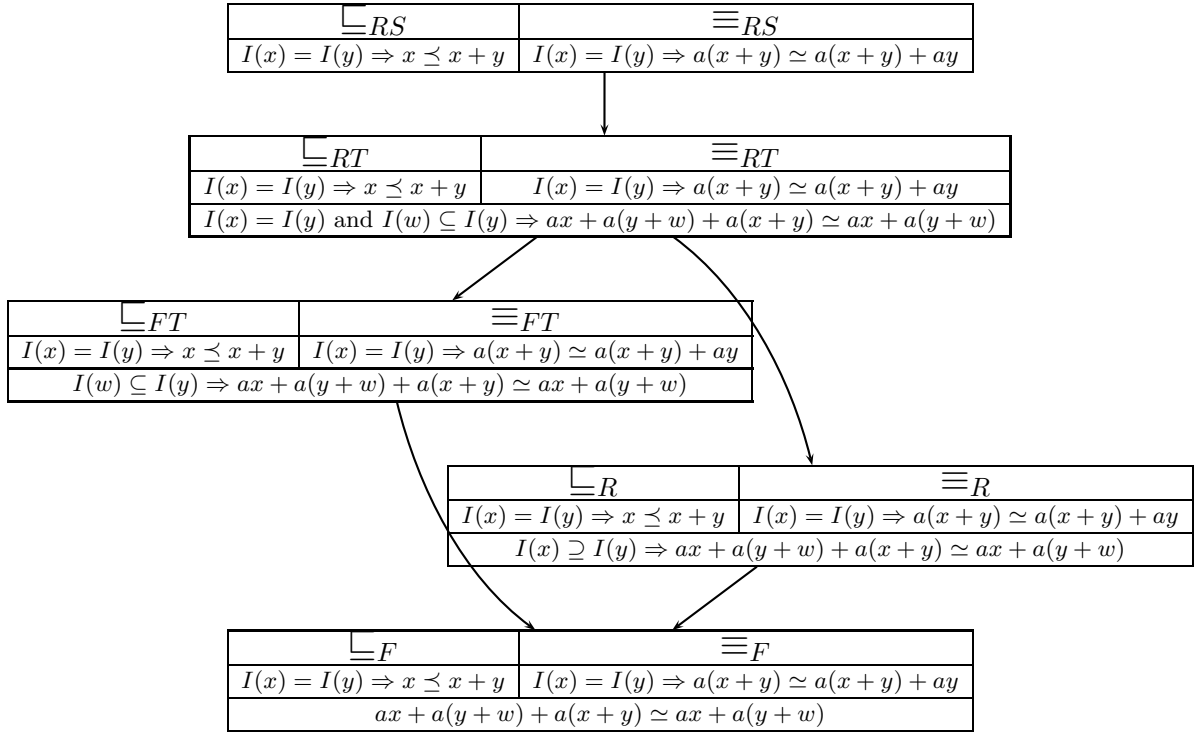
\small
  \begin{center}
    \begin{tabular}{ccc} 
      \multicolumn{3}{c}{\rnode{ns}{%
          \begin{tabular}{|c|c|}    
            \hline
            \Large{$\sqsubseteq_{RS}$}& \Large{$\equiv_{RS}$}\\      
            \hline
            \footnotesize{$I(x)=I(y) \Rightarrow x \preceq x + y$}
            & 
            \footnotesize{$I(x)=I(y) \Rightarrow a(x+y) \simeq a(x + y)+ay$}\\ \hline            
          \end{tabular}   
        }}
      \\ \\ \\ 
      \multicolumn{3}{c}{\rnode{nl}{%
          \begin{tabular}{|c|c|}    
            \hline
            \Large{$\sqsubseteq_{RT}$}& \Large{$\equiv_{RT}$}\\      
            \hline
            \footnotesize{$I(x)=I(y) \Rightarrow x \preceq x + y$}
            & 
            \footnotesize{$I(x)=I(y) \Rightarrow a(x+y) \simeq a(x + y)+ay$}\\ \hline            
            \multicolumn{2}{|c|}{\footnotesize{$  I(x)=I(y)\mbox{ and } I(w)\subseteq I(y)\Rightarrow ax + a(y+w) + a(x+y)\simeq ax + a(y+w)$}}\\             
            \hline
          \end{tabular}               
        }}
      \\ \\ \\
      \multicolumn{2}{l}{
      \rnode{nlf}{%
        \begin{tabular}{|c|c|}    
          \hline
          \Large{$\sqsubseteq_{FT}$}& \Large{$\equiv_{FT}$}\\      
          \hline
            \footnotesize{$I(x)=I(y) \Rightarrow x \preceq x + y$}
            & 
            \footnotesize{$I(x)=I(y) \Rightarrow a(x+y) \simeq a(x + y)+ay$}\\ \hline            
            \multicolumn{2}{|c|}{\footnotesize{$I(w)\subseteq I(y)\Rightarrow ax + a(y+w) + a(x+y)\simeq ax + a(y+w)$}}\\             
            \hline
          \end{tabular}
      }} & \hspace*{5.5cm}
      \\ \\ \\
      \hspace*{5.5cm} & 
      \multicolumn{2}{r}{\rnode{nlr}{%
        \begin{tabular}{|c|c|}    
            \hline
            \Large{$\sqsubseteq_{R}$}& \Large{$\equiv_{R}$}\\      
            \hline
            \footnotesize{$I(x)=I(y) \Rightarrow x \preceq x + y$}
            & 
            \footnotesize{$I(x)=I(y) \Rightarrow a(x+y) \simeq a(x + y)+ay$}\\ \hline            
            \multicolumn{2}{|c|}{\footnotesize{$I(x)\supseteq I(y)\Rightarrow ax + a(y+w) + a(x+y)\simeq ax + a(y+w)$}}\\             
            \hline
          \end{tabular}
          }}
        \\ \\ \\ 
        \multicolumn{3}{c}{\rnode{nlfr}{%
          \begin{tabular}{|c|c|}    
            \hline
            \Large{$\sqsubseteq_{F}$}& \Large{$\equiv_{F}$}\\      
            \hline
            \footnotesize{$I(x)=I(y) \Rightarrow x \preceq x + y$}
            & 
            \footnotesize{$I(x)=I(y) \Rightarrow a(x+y) \simeq a(x + y)+ay$}\\ \hline            
            \multicolumn{2}{|c|}{\footnotesize{$ax + a(y+w) + a(x+y)\simeq ax + a(y+w)$}}\\             
            \hline
          \end{tabular}
        }}\\
      \ncline{->}{ns}{nl}
      \ncarc[arcangle=20]{->}{nl}{nlr}
      \ncline{->}{nl}{nlf}
      \ncline{->}{nlr}{nlfr}
      \ncarc[arcangle=20]{<-}{nlfr}{nlf}
    \end{tabular}

    \caption{Axioms for the ready simulation layer of semantics.}
    \label{fig:ReadySlice2}
  \end{center}
\end{figure}

Following the same ideas that we have already discussed for the preorders,
a key point is to find the equations that characterize the simulation
equivalence that governs each layer.
As showed in~\cite{FG08ifiptcs}, there 
is a generic axiom that we can use:
$$(NS_\equiv)\quad N(x,y)\Rightarrow a(x+y) \simeq a(x+y) + ay.$$

We consider the instantiated equation that 
characterizes the ready simulation equivalence:
\[
(\textit{RS}_\equiv)\quad I(x)=I(y) \Rightarrow a(x + y ) \simeq a(x + y)+ay,
\]
and the rest of the characterization follows by using the equation 
$(ND_\equiv)$ presented above.
\begin{prop}
\label{prop:AltAxEquiv}
\hfill
\begin{enumerate}[\em(1)]
\item The failure equivalence  $\equiv_F$ is axiomatized by 
 $\{\textrm{$B_1$--$B_4$}, (RS_\equiv),$ $(\textit{ND\/}_\equiv^F)\}$.
\item The readiness equivalence $\equiv_R$ is axiomatized by 
 $\{\textrm{$B_1$--$B_4$}, (RS_\equiv),$ $(\textit{ND\/}_\equiv^R)\}$.
\item The failure trace equivalence $\equiv_{FT}$ is axiomatized by
 $\{\textrm{$B_1$--$B_4$}, (RS_\equiv), (\textit{ND\/}_\equiv^\mathit{FT})\}$.
\item The ready trace equivalence $\equiv_{RT}$ is axiomatized by the set
 $\{\textrm{$B_1$--$B_4$}, (RS_\equiv), (\textit{ND\/}_\equiv^\mathit{RT})\}$.
\end{enumerate}
\end{prop}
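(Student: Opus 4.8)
The plan is to show, for each $Z\in\{F,R,\textit{FT},\textit{RT}\}$, that $\{\textrm{$B_1$--$B_4$},(RS_\equiv),(\textit{ND\/}_\equiv^Z)\}$ is sound and ground-complete for $\equiv_Z$, reusing the preorder characterizations of Proposition~\ref{static-1:prop} and the corollaries following it, together with the known fact (see \cite{FG08ifiptcs} and the column RS of Table~\ref{tab:AxiomatisationsEquivalencesLtbts}) that $\{\textrm{$B_1$--$B_4$},(RS_\equiv)\}$ axiomatizes $\equiv_{RS}$. Soundness is the easy half: $B_1$--$B_4$ are sound for bisimilarity and hence for every coarser equivalence; $(RS_\equiv)$ is sound for $\equiv_{RS}$, which lies above all four layer semantics; and $(\textit{ND\/}_\equiv^Z)$ is sound because both $(RS)$ and $(\textit{ND\/}^Z)$ are sound for $\sqsubseteq_Z$, and applying them---the former with the two $a$-prefixed summands playing the roles of $x$ and $y$---gives the chain $ax+a(y+w)\sqsubseteq_Z ax+a(y+w)+a(x+y)\sqsubseteq_Z ax+a(y+w)$ whenever $M_Z(x,y,w)$, so the two sides are $\equiv_Z$-equal.

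For completeness I would first note that $(\textit{ND\/}_\equiv^Z)$ is precisely the image of $(\textit{ND\/}^Z)$ under the translation $u\preceq v\mapsto u+v\simeq v$, so by Proposition~\ref{static-1:prop} together with Proposition~\ref{rnd-eq:prop} (and the general transformation stated right after it) the mixed system $\{\textrm{$B_1$--$B_4$},(RS),(\textit{ND\/}_\equiv^Z)\}$ axiomatizes the preorder $\sqsubseteq_Z$. Thus $t\equiv_Z u$ for closed $t,u$ already gives derivations of $t\preceq u$ and $u\preceq t$ in that system. The substance of the proof is then to show that, in the presence of $B_1$--$B_4$ and $(\textit{ND\/}_\equiv^Z)$, the inequational axiom $(RS)$ may be traded for its equational form $(RS_\equiv)$ when the goal is an equation, i.e.\ that $\{\textrm{$B_1$--$B_4$},(RS_\equiv),(\textit{ND\/}_\equiv^Z)\}\vdash t\simeq u$. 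I would do this through a ``$Z$-saturated'' normal form: a term is $Z$-saturated if, for every pair of summands $ax$ and $a(y+w)$ with $M_Z(x,y,w)$, it already contains the summand $a(x+y)$, and likewise hereditarily inside subterms. Such a form is reachable from any term using only $(\textit{ND\/}_\equiv^Z)$ and $B_1$--$B_4$, and the key lemma is that two $Z$-saturated closed terms that are $\equiv_Z$-equivalent are already $\equiv_{RS}$-equivalent. Granting this, one normalizes $t$ and $u$ to $\hat t$ and $\hat u$, uses $\hat t\equiv_Z\hat u$ to obtain $\hat t\equiv_{RS}\hat u$, and finishes by completeness of $\{\textrm{$B_1$--$B_4$},(RS_\equiv)\}$ for $\equiv_{RS}$.

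The hard part is that key lemma: making precise that $Z$-saturation collapses $\equiv_Z$ onto $\equiv_{RS}$ calls for a direct analysis of failures, readies, failure traces, and ready traces, and one also has to check confluence of the saturation rewriting modulo $B_1$--$B_4$; pleasantly, since the four constraints differ only in $M_Z$, the four verifications should run in a single uniform pass, exactly as in Proposition~\ref{static-1:prop}. A lighter, more ad-hoc alternative---closer to the style used elsewhere in this section---is to bypass saturation and instead prove directly that $\{\textrm{$B_1$--$B_4$},(RS_\equiv),(\textit{ND\/}_\equiv^Z)\}$ is logically equivalent to the classical axiomatization of $\equiv_Z$ in Table~\ref{tab:AxiomatisationsEquivalencesLtbts}; deriving $(RS_\equiv)$ and $(\textit{ND\/}_\equiv^Z)$ from the classical axioms is routine, and the only delicate direction is recovering the classical ``merging'' identities (such as $a(bx+u)+a(by+v)\simeq a(bx+by+u)+a(by+v)$ for $R$ and $F$) from $(RS_\equiv)$ and $(\textit{ND\/}_\equiv^Z)$. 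In either case one may also simply invoke the generic preorder-to-equivalence constructions of \cite{AFI07,DeFrutosEtAl08b,FG08ifiptcs}, which apply because $\sqsubseteq_Z$ is a behavior preorder contained in $I$ whose axiomatization includes $(RS)$.
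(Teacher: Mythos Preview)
Your proposal is correct and in fact subsumes the paper's own proof: the paper merely sketches two routes, namely comparing the new axiomatizations with the classical ones in Table~\ref{tab:AxiomatisationsEquivalencesLtbts} (your ``lighter alternative'') or invoking the ready-to-preorder machinery of \cite{AFI07,FGP08fics,DeFrutosEtAl08b} (your final remark), both of which you list explicitly. Your additional $Z$-saturation idea is not in the paper's proof of this proposition, but it anticipates the head-normal-form development carried out later in Section~\ref{rnoef:sec}, where essentially the same mechanism---repeated expansion via $(\textit{ND}^Z_\equiv)$ until no further summands can be added, followed by a ready-simulation argument---is used to prove completeness directly from the observational semantics; so while heavier than needed here, it is sound and foreshadows the paper's self-contained treatment.
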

\begin{proof}
  To prove these results we can compare the new and old axiomatizations
  similarly as we did in the proof of Proposition~\ref{rnd-eq:prop} or
  alternatively  
make use of the ``ready to preorder'' algorithm
thoroughly studied in~\cite{AFI07,FGP08fics,DeFrutosEtAl08b}.
\end{proof}

The results in this section clarify the entanglement
between axiomatizations for preorders and equivalences. For example: for 
the ready simulation and its associated linear semantics, we just need 
three axioms $(RS)$, $(RS_\equiv)$ and 
$(ND_\equiv)$---conveniently instantiated---to characterize the 10 
relations (orders and equivalences) involved, as summarized in
 Figure~\ref{fig:ReadySlice2}.

\subsection{The coarsest semantics in the spectrum}
\label{tcss:sec}

The results in Section~\ref{anamps:sec} show the relations between
the ready simulation and the linear semantics naturally associated to
it. The same phenomenon occurs for other simulations. In this
section we  focus on the bottom part of the spectrum where lie the simulation semantics coarser than ready simulation: 
plain and complete simulation, and the 
semantics coarser than these.
For the simulation semantics we obtain the corresponding axiomatizations
simply by considering the universal
constraint for the case of plain simulations and the complete constraint
for complete simulations:
\[
\begin{array}{l@{\quad}l}
\textrm{Simulation}& U(x,y) \iff \textit{true}\\
\textrm{Complete simulations} & C(x,y) \iff (x=\cero \textrm{ iff } y=\cero)
\end{array}
\]
Trace and completed trace semantics can be
defined by simply adding our axiom $(\textit{ND\/}^F)$ to the appropriate 
instance of
\[
(\textit{NS})\quad N(x,y)\Rightarrow x\preceq x + y.
\]

\begin{prop}
\hfill
\begin{enumerate}[\em(1)]
\item $\sqsubseteq_T$ is axiomatized by the axioms\footnote{Note that
$(S)$ is equivalent to $(\textit{US\/})$, the instantation of
$(\textit{NS\/})$ with $U$ as $N$.}
$\{\textrm{$B_1$--$B_4$}, (S), (\textit{ND\/}^F)\}$.
\item $\sqsubseteq_{CT}$ is axiomatized by the axioms 
$\{\textrm{$B_1$--$B_4$}, (CS), (\textit{ND\/}^F)\}$,
where $(CS)$ is the instantiation of $(\textit{NS})$ taking $C(x,y)$ as $N(x,y)$.
\end{enumerate}
\end{prop}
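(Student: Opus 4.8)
The plan is to mimic the proof of Proposition~\ref{static-1:prop}: for each of the two semantics, show that the proposed axiom set is equivalent to the classical ground-complete one of Table~\ref{tab:AxiomatisationsPreordersLtbts} --- for $\sqsubseteq_T$ this is $\{\textrm{$B_1$--$B_4$},(S),ax+ay\simeq a(x+y)\}$, and for $\sqsubseteq_{CT}$ it is $\{\textrm{$B_1$--$B_4$},ax\preceq ax+y,a(bx+u)+a(cy+v)\simeq a(bx+cy+u+v)\}$ --- and conclude that soundness and completeness transfer. Soundness of the new axioms is the routine half: $(S)$ is $(\textit{US\/})$ and hence, by Proposition~\ref{prop:ns}(1), sound for plain similarity $\sqsubseteq_S$; $(CS)$ is the $C$-instance of $(\textit{NS\/})$, and once one checks that $C$ is a behavior preorder (it is weaker than bisimilarity, and a precongruence for prefix and choice, because $ap$ is never isomorphic to $\cero$ and $p+r$ is iff both $p$ and $r$ are), Proposition~\ref{prop:ns}(3) makes it sound for complete similarity $\sqsubseteq_{CS}$; and $(\textit{ND\/}^F)$ is just the classical failures axiom $(F)$. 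Since plain similarity and failures are finer than trace semantics, and complete similarity and failures are finer than completed-trace semantics, all the new axioms are sound for $\sqsubseteq_T$ and $\sqsubseteq_{CT}$ respectively. What remains for completeness is to derive the classical axioms from the new ones.

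For part~(1), only $ax+ay\simeq a(x+y)$ needs to be derived, since $(S)$ is already present. One inequality, $a(x+y)\preceq ax+ay$, is the instance of $(\textit{ND\/}^F)$ with $w=\cero$ together with $B_4$; the other, $ax+ay\preceq a(x+y)$, follows from $x\preceq x+y$ and $y\preceq x+y$ (both by $(S)$) via precongruence and then $B_3$. The reverse derivation is equally short --- from $a(x+y)\simeq ax+ay$ and $ay\preceq a(y+w)$ (by $(S)$) one recovers $(\textit{ND\/}^F)$ --- so here the two axiom sets are genuinely logically equivalent.

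For part~(2), to get $ax\preceq ax+y$ I would split on whether $y$ reduces to $\cero$ by $B_1$--$B_4$: if so, $ax+y\simeq ax$ by $B_4$; otherwise $C(ax,y)$ holds and $(CS)$ gives the inequality directly. For the merge equation, the $\preceq$ direction comes from $(CS)$ alone: since $bx+u$ and $cy+v$ each carry a prefix summand, $C(bx+u,cy+v)$ holds, so $(CS)$ yields $bx+u\preceq bx+cy+u+v$ and $cy+v\preceq bx+cy+u+v$, and precongruence plus $B_3$ give $a(bx+u)+a(cy+v)\preceq a(bx+cy+u+v)$; the $\succeq$ direction is exactly the instance of $(\textit{ND\/}^F)$ with $bx+u$, $cy+v$, $\cero$ in place of $x$, $y$, $w$ (after reassociating with $B_1$, $B_2$ and simplifying with $B_4$). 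This proves both classical axioms, hence completeness.

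All of these calculations are short; the only delicate points are the check that $C$ is a behavior preorder --- needed to invoke Proposition~\ref{prop:ns}(3) --- and the bookkeeping of the degenerate cases in which a subterm is provably $\cero$. This is why for part~(2) I would argue via ``soundness plus derivation of the classical axioms'' rather than aiming for a symmetric logical equivalence: the genuinely laborious step, namely the converse syntactic derivation of $(\textit{ND\/}^F)$ from the classical completed-trace axioms, essentially re-runs the normal-form argument underlying ground-completeness of $\sqsubseteq_{CT}$, and it is not required for the statement.
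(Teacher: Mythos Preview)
Your proposal is correct, and for part~(1) it is essentially identical to the paper's argument: both show that $\{\textrm{$B_1$--$B_4$},(S),(\textit{ND\/}^F)\}$ is logically equivalent to $\{\textrm{$B_1$--$B_4$},(S),(T)\}$ by deriving each direction of $(T)$ and then recovering $(\textit{ND\/}^F)$ from $(T_\sqsubseteq)$ and $(S)$.

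For part~(2) the route diverges slightly. The paper does not handle the two classical axioms $ax\preceq ax+y$ and $a(bx+u)+a(cy+v)\simeq a(bx+cy+u+v)$ separately; instead it first observes that the merge axiom $(CT)$ is \emph{equivalent} to the conditional form $C(x,y)\Rightarrow ax+ay\simeq a(x+y)$ (the patterns $bx+u$ and $cy+v$ are exactly two independent schemata for ``non-null process'', and the null case is trivial). After this reformulation the situation is literally the $C$-analogue of part~(1), and the full logical equivalence follows by the same two-line calculation. Your approach---establish soundness of the new axioms via Proposition~\ref{prop:ns}(3) and then derive the classical axioms by case-splitting on whether subterms are $\cero$---is perfectly valid and proves the statement, but it is asymmetric: you argue only one direction syntactically and import soundness separately. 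The paper's conditional reformulation buys a cleaner, symmetric proof of logical equivalence and avoids the case analysis you flag as ``delicate'', at the modest cost of one extra observation about what $(CT)$ really says.
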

\begin{proof}
\hfill
\begin{enumerate}[(1)]
\item The classic axiomatization of trace semantics is given by
$\{\textrm{$B_1$--$B_4$}, (S),$ $(T)\}$, where $(T)$ is the axiom 
$ax + ay \simeq a(x+y)$.
Note that $\{\textrm{$B_1$--$B_4$}, (S), (T)\}$ is logically equivalent
to $\{\textrm{$B_1$--$B_4$}, (S), (T_\sqsubseteq)\}$, where $(T_\sqsubseteq)$ is
the axiom $a(x+y)\preceq ax+ay$, because $(S)$ can be used to obtain
$ax \preceq a(x+y)$ and $ay\preceq a(x+y)$.
And it is immediate that $(\textit{ND\/}^F)$ implies $(T_\sqsubseteq)$.
Also, $\{(S), (T_\sqsubseteq)\}\vdash a(x+y)\preceq ax+ a(y+w)$, since 
$a(x+y)\preceq ax + ay$ by $(T_\sqsubseteq)$ and $ax + ay \preceq ax+a(y+w)$ 
by $(S)$.
\item Analogous to the previous case once we realize that the 
classic axiom for completed trace, $(CT)\ a(bx+u)+ a(cy+v)\simeq
a(bx+cy+u+v)$, is equivalent to the conditional axiom 
$C(x,y) \Rightarrow ax+ay \simeq a(x+y)$.
This follows because
$bx+u$ and $cy+v$ are two independent patterns describing non-null processes
and when the condition is instantiated with $x$ and $y$ equal
to $\cero$ the identity is trivial: $a\cero+a\cero \simeq a\cero$.\qedhere
\end{enumerate}
\end{proof}

\noindent By an argument analogous to that in Proposition~\ref{rnd-eq:prop}, we
can obtain for $\sqsubseteq_T$ the axiomatization 
$\{\textrm{$B_1$--$B_4$}, (S), (\textit{ND}^F_\equiv)\}$.
Note that although $(\textit{ND}^F_\equiv)$ is an equation, this axiomatization is
not the classic one; obviously, $(T)\ ax +ay = a(x+y)$ 
implies $(\textit{ND}^F_\equiv)$ but the converse is false.

It is easy to check that in the case of trace semantics, the particular
instance $(\textit{ND}_0)$ of the axiom $(\textit{ND})$ with $w$ equal to
$\cero$ is powerful enough to generate the trace preorder.
This was certainly not the case when we were under ready simulation, where
$(\textit{ND}_0)$ just generates the failure trace preorder instead of
the coarser failures preorder.

It is also interesting to note that for the trace semantics the symmetric
version of $(\textit{ND})$,
\[
(\textit{ND}_{vw})\quad a(x+y) \preceq a(x+v) + a(y+w),
\]
is also valid, so we can take both 
$\{\textrm{$B_1$--$B_4$}, (S), (\textit{ND}_{vw})\}$ and
$\{\textrm{$B_1$--$B_4$}, (S),$ $(\textit{ND}^\equiv_{vw})\}$, where
\[
(\textit{ND}^\equiv_{vw})\quad a(x+v) + a(y+w)+a(x+y)\simeq a(x+v) + a(y+w),
\]
as alternative axiomatizations of the trace preorder.

Should we expect another diamond of ``reasonable'' semantics under plain
simulation in the spectrum?
Were that to be the case, why have we only found the trace semantics?

In order to answer these questions, note that the diamond of semantics
under ready simulation was completely governed by the function $I$, which
appears in the constraints of the different instantiations of the axiom
$(\textit{ND})$.
For plain simulations, however, the trivially true predicate $U(x,y)$ 
corresponds to the observation function that can see nothing. 
As a consequence, if we substitute $U$ for $I$ in each of the four constraints
of the diamond they all collapse into a single one: trace semantics.
Nevertheless, an alternative path can be explored to obtain new semantics: 
let us keep the different axioms $(\textit{ND\/}^Z)$ the way they stand and
simply replace $(RS)$ by $(S)$. Then we obtain the following results:

\begin{prop}
$\{\textrm{$B_1$--$B_4$}, (S), (\textit{ND\/}^\mathit{FT})\}$ is another 
axiomatization of trace semantics.
Hence, under $(S)$ the failures and the failure trace axioms generate the
same preorder, namely the trace preorder.
\end{prop}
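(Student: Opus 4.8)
The plan is to show that the axiom system $E_{FT}=\{\textrm{$B_1$--$B_4$},(S),(\textit{ND\/}^\mathit{FT})\}$ proves exactly the same inequations $t\preceq u$ over $\mathrm{BCCSP}(\act,V)$ as the system $E_F=\{\textrm{$B_1$--$B_4$},(S),(\textit{ND\/}^F)\}$; since the preceding proposition establishes that $E_F$ is a sound and ground-complete axiomatization of $\sqsubseteq_T$, the claim follows at once. So the whole task reduces to two derivability inclusions between $E_F$ and $E_{FT}$.

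One inclusion is immediate. The axiom $(\textit{ND\/}^\mathit{FT})$ is just the conditional restriction of $(\textit{ND\/}^F)$ obtained by replacing the universally true predicate $M_F$ by $M_{FT}(x,y,w)\iff I(w)\subseteq I(y)$; hence every instance of $(\textit{ND\/}^\mathit{FT})$ is already an instance of $(\textit{ND\/}^F)$, so $E_{FT}\vdash t\preceq u$ implies $E_F\vdash t\preceq u$. In particular $E_{FT}$ is sound with respect to $\sqsubseteq_T$.

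For the converse I would show that $E_{FT}$ derives the unrestricted axiom $(\textit{ND\/}^F)$, that is, $a(x+y)\preceq ax+a(y+w)$ for arbitrary $x,y,w$. First, the instance of $(\textit{ND\/}^\mathit{FT})$ with $w$ taken to be $\cero$ is unconditional, because $I(\cero)=\emptyset\subseteq I(y)$ always holds; combining it with $B_4$ and closure under prefixing and choice yields the failure-trace axiom $(FT)\colon a(x+y)\preceq ax+ay$. Second, instantiating $(S)$ as $y\preceq y+w$ and closing first under prefixing and then under choice with $ax$ gives $ax+ay\preceq ax+a(y+w)$. Transitivity of these two inequations produces $a(x+y)\preceq ax+a(y+w)$, which is $(\textit{ND\/}^F)$. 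Consequently every derivation in $E_F$ can be mimicked in $E_{FT}$ by expanding each application of $(\textit{ND\/}^F)$ into this short derivation, so $E_F\vdash t\preceq u$ implies $E_{FT}\vdash t\preceq u$. Together with the first inclusion and the completeness of $E_F$, this shows that $E_{FT}$ is sound and complete for $\sqsubseteq_T$; the concluding ``hence'' sentence is then merely a reformulation, since we have now exhibited two systems that, on top of $B_1$--$B_4$ and $(S)$, both axiomatize exactly $\sqsubseteq_T$---one using $(\textit{ND\/}^F)$ and one using $(\textit{ND\/}^\mathit{FT})$.

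I do not expect a deep obstacle here; the only step that deserves a moment's attention is the derivation of $(\textit{ND\/}^F)$ inside $E_{FT}$, where one must notice that weakening $(RS)$ to $(S)$ renders the side condition $I(w)\subseteq I(y)$ of $(\textit{ND\/}^\mathit{FT})$ inert: $(S)$ lets us absorb an arbitrary $w$ into $y$ with no constraint on initial offers, which is precisely the collapse of the ready-simulation diamond remarked on just above the statement.
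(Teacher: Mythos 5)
Your proof is correct and follows essentially the paper's own route: the paper argues that the instance of $(\textit{ND\/}^\mathit{FT})$ with $w=\cero$ is unconditional (since $I(\cero)\subseteq I(y)$) and already yields, together with $(S)$, a complete axiomatization of the trace preorder, which is exactly the pair of observations you use to derive $(\textit{ND\/}^F)$ inside $\{\textrm{$B_1$--$B_4$},(S),(\textit{ND\/}^\mathit{FT})\}$. Your presentation merely makes explicit the soundness direction (every instance of $(\textit{ND\/}^\mathit{FT})$ is an instance of $(\textit{ND\/}^F)$) and the derivation $a(x+y)\preceq ax+ay\preceq ax+a(y+w)$, which the paper had already carried out in the proof of the preceding proposition.
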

\begin{proof}
$\{\textrm{$B_1$--$B_4$}, (S), (\textit{ND}_0)\}$ is a complete axiomatization
of trace preorder, and $(\textit{ND}_0)$ is a particular case of
$(\textit{ND\/}^\mathit{FT})$.
\end{proof}

The axioms corresponding to readiness and ready trace, however, give rise
to two new semantics that we shall name \emph{extended ready} and 
\emph{extended ready trace} semantics.
They are defined by the order obtained by inclusion of
the offers of the processes, either just at the end of a trace, or
after each action within it: in order to have $p\sqsubseteq_\mathit{ER} q$, for 
each $p\Tran{\alpha}p'$ with $I(p') = R$ we need some
$q\Tran{\alpha} q'$ with $I(q')\supseteq R$; the extended ready trace 
preorder $\sqsubseteq_{ERT}$ is defined analogously, but using ready traces.

\begin{prop}
\hfill
\begin{enumerate}[\em(1)]
\item The set $\{\textrm{$B_1$--$B_4$}, (S), (\textit{ND\/}^R)\}$ is an 
axiomatization of $\sqsubseteq_{ER}$.
\item The set
$\{\textrm{$B_1$--$B_4$}, (S), (\textit{ND\/}^\mathit{RT})\}$ is an 
axiomatization of $\sqsubseteq_{ERT}$.
\end{enumerate}
\end{prop}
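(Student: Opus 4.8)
The plan is to prove soundness and ground-completeness separately, and to reduce completeness to two axiomatizations already established: Proposition~\ref{prop:ns}(1) for plain similarity (axioms $\textrm{$B_1$--$B_4$},(S)$) and Proposition~\ref{static-1:prop} for the readiness preorder (axioms $\textrm{$B_1$--$B_4$},(RS),(\textit{ND\/}^R)$) and the ready trace preorder (axioms $\textrm{$B_1$--$B_4$},(RS),(\textit{ND\/}^\mathit{RT})$). For soundness, one first notes that $\sqsubseteq_{ER}$ and $\sqsubseteq_{ERT}$ are behavior preorders---coarser than bisimilarity and precongruences for prefix and choice, both immediate from their trace-and-offer definitions---so $B_1$--$B_4$ are sound and soundness is preserved by the inference rules. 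Axiom $(S)$ is sound because every run of $p$ is a run of $p+q$. For $(\textit{ND\/}^R)$ one checks that $\init(x)\supseteq\init(y)$ implies $a(x+y)\sqsubseteq_{ER} ax+a(y+w)$: a run $a\beta$ of $a(x+y)$ is matched in $ax+a(y+w)$ by entering the $x$-branch or the $y$-branch according to whether $\beta$ proceeds through $x$ or through $y$, and copying the run afterwards; here $\init(x)=\init(x+y)$ is used, and only the final offer must be matched, with $\supseteq$. The case of $(\textit{ND\/}^\mathit{RT})$ is identical, now using $\init(x)=\init(y)\supseteq\init(w)$ so that the offers recorded all along the common trace agree. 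Hence $E\vdash p\preceq q$ implies $p\sqsubseteq q$ in both cases.

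For completeness in the $ER$ case, write $\mathrm{ER}(p)=\{(\alpha,R)\mid p\Tran{\alpha}p'\text{ for some }p'\text{ with }R\subseteq\init(p')\}$; unfolding the definition gives $p\sqsubseteq_{ER}q$ iff $\mathrm{ER}(p)\subseteq\mathrm{ER}(q)$, and $\mathrm{ER}(p)$ is finite. For $(\alpha,R)\in\mathrm{ER}(p)$ with $\alpha=a_1\cdots a_n$ put $t_{\alpha,R}=a_1(a_2(\cdots a_n(\sum_{b\in R}b\cero)\cdots))$ and let $\mathrm{nf}(p)=\sum_{(\alpha,R)\in\mathrm{ER}(p)}t_{\alpha,R}$. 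Two facts carry the proof. \emph{(i)} Whenever $(\alpha,R)\in\mathrm{ER}(r)$ one has $t_{\alpha,R}\sqsubseteq_S r$ (follow the witnessing run $r\Tran{\alpha}r'$, mapping the chain of $t_{\alpha,R}$ onto it and the $R$-labelled leaves onto $r'$-successors), and since a finite sum of processes each $\sqsubseteq_S r$ is itself $\sqsubseteq_S r$, it follows that $\mathrm{nf}(p)\sqsubseteq_S r$ as soon as $\mathrm{ER}(p)\subseteq\mathrm{ER}(r)$. \emph{(ii)} $p\sqsubseteq_R\mathrm{nf}(p)$: an exact ready pair $(\alpha,\init(p'))$ of a run $p\Tran{\alpha}p'$ is realized \emph{exactly} by the summand $t_{\alpha,\init(p')}$ (running $\alpha$ through it lands in $\sum_{b\in\init(p')}b\cero$, whose offer is exactly $\init(p')$), and $\init(\mathrm{nf}(p))=\init(p)$ handles the empty trace. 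Now assume $p\sqsubseteq_{ER}q$, so $\mathrm{ER}(p)\subseteq\mathrm{ER}(q)$. By \emph{(ii)} and Proposition~\ref{static-1:prop}(1), $\textrm{$B_1$--$B_4$},(RS),(\textit{ND\/}^R)\vdash p\preceq\mathrm{nf}(p)$; by \emph{(i)} with $r=q$ and Proposition~\ref{prop:ns}(1), $\textrm{$B_1$--$B_4$},(S)\vdash\mathrm{nf}(p)\preceq q$. As $(S)$ subsumes $(RS)$, both derivations live in $E=\{\textrm{$B_1$--$B_4$},(S),(\textit{ND\/}^R)\}$, whence $E\vdash p\preceq q$.

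The $ERT$ case is structurally identical. One uses the set $\mathrm{RT}^{\supseteq}(p)$ of tuples $(R_0,a_1,R_1,\dots,a_n,R_n)$ admitting a run $p=p_0\tran{a_1}\cdots\tran{a_n}p_n$ with $R_i\subseteq\init(p_i)$ for all $i$, so that $p\sqsubseteq_{ERT}q$ iff $\mathrm{RT}^{\supseteq}(p)\subseteq\mathrm{RT}^{\supseteq}(q)$. The summand for such $\rho$ is the chain $u_\rho$ whose state after $a_1\cdots a_i$ is $(\sum_{b\in R_i}b\cero)+a_{i+1}(\cdots)$, so it can both display the offers $R_i$ and proceed with $a_{i+1}$; set $\mathrm{nf}_{RT}(p)=\sum_\rho u_\rho$. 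Exactly as before, $u_\rho\sqsubseteq_S r$ whenever $\rho\in\mathrm{RT}^{\supseteq}(r)$, hence $\mathrm{nf}_{RT}(p)\sqsubseteq_S q$ when $\mathrm{RT}^{\supseteq}(p)\subseteq\mathrm{RT}^{\supseteq}(q)$; and $p\sqsubseteq_{RT}\mathrm{nf}_{RT}(p)$, because for an \emph{exact} ready trace of a run of $p$ one has $a_{i+1}\in\init(p_i)=R_i$, so the $i$-th state of the corresponding $u_\rho$ has offer exactly $R_i$ and the whole ready trace is realized exactly. Proposition~\ref{static-1:prop}(3) and Proposition~\ref{prop:ns}(1) then give $E=\{\textrm{$B_1$--$B_4$},(S),(\textit{ND\/}^\mathit{RT})\}\vdash p\preceq q$.

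The main obstacle is designing the intermediate normal forms $t_{\alpha,R}$ and $u_\rho$ so that they simultaneously sit \emph{exactly} under $p$ in the finer ($R$, resp. $RT$) semantics---needed to invoke Proposition~\ref{static-1:prop}---and \emph{loosely} over $q$, only up to simulation; the point is that running the recorded trace through such a chain lands in a state whose offer is precisely the recorded set, which is exactly what makes the two halves fit together. The remaining work (the two semantic characterizations via $\mathrm{ER}$ and $\mathrm{RT}^{\supseteq}$, the explicit simulations witnessing $t_{\alpha,R}\sqsubseteq_S r$ and $u_\rho\sqsubseteq_S r$, the closure of $\sqsubseteq_S$ under finite sums on the left, and $\init(\mathrm{nf}(p))=\init(p)$) is routine bookkeeping.
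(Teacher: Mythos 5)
Your proof is correct. For the record, the paper states this proposition without any proof, so there is no argument of its own to compare against; the nearest relatives are the generic completeness proofs of Sections~\ref{anamps:sec} and~\ref{rnoef:sec}, which expand $q$ into a head normal form and proceed by structural induction on $p$ inside an observational characterization. Your route is genuinely different: you characterize $\sqsubseteq_{ER}$ and $\sqsubseteq_{ERT}$ by inclusion of (downward-closed) decorated-trace sets, saturate $p$ into a sum $\mathrm{nf}(p)$ (resp.\ $\mathrm{nf}_{RT}(p)$) of chain-shaped terms recording those decorated traces, and split the derivation as $p\preceq \mathrm{nf}(p)\preceq q$, where the first leg is an instance of the already established completeness of $\{\textrm{$B_1$--$B_4$},(RS),(\textit{ND}^R)\}$ (resp.\ $(\textit{ND}^{RT})$) for $\sqsubseteq_R$ (resp.\ $\sqsubseteq_{RT}$), because $\mathrm{nf}(p)$ realizes the exact ready (ready-trace) data of $p$, and the second leg is an instance of the completeness of $\{\textrm{$B_1$--$B_4$},(S)\}$ for $\sqsubseteq_S$, because each summand is plainly simulated by $q$; since every instance of $(RS)$ is an instance of $(S)$, both legs indeed live in the stated axiom sets. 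This buys modularity (no fresh normal-form induction, full reuse of Propositions~\ref{prop:ns} and~\ref{static-1:prop}) at the price of being tailored to the decorated-trace definitions of $ER$/$ERT$, whereas the paper's hnf machinery is what generalizes to arbitrary constraints. Two items you dismiss as bookkeeping deserve to be written out: the soundness check of $(\textit{ND}^R)$ and $(\textit{ND}^{RT})$, including the length-one case where the run stops at $x+y$ and the match goes to the $x$-branch using $I(x)=I(x+y)$ (for $RT$ also $I(y+w)=I(y)=I(x+y)$); and, in the $ERT$ half, the $0$-th offer of the realized ready trace is $I(\mathrm{nf}_{RT}(p))$, not $I(u_\rho)$, so you need $I(\mathrm{nf}_{RT}(p))=I(p)$ --- the same observation you did make explicitly for $\mathrm{nf}(p)$ in the $ER$ half, and it holds for the same reason. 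With those spelled out, the argument is complete.
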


\noindent Let us now consider the versions of the axioms
$(\textit{ND\/}^R)$, $(\textit{ND\/}^\mathit{FT})$, $(\textit{ND\/}^\mathit{RT})$
where the constraint $I$ has been replaced by the completeness condition $C$
defined by $C(x) \Longleftrightarrow x = \cero$:
\[
\begin{array}{l@{\quad}l}
(\textit{C-ND\/}^R)& M_\mathit{CR}(x,y,w) \iff (\textrm{$C(x)$ implies $C(y)$})\\
(\textit{C-ND\/}^\mathit{FT}\/)& M_\mathit{CFT}(x,y,w) \iff 
 (\textrm{$C(y)$ implies $C(w)$})\\
(\textit{C-ND\/}^\mathit{RT})& M_\mathit{CRT}(x,y,w) \iff 
 \big(\textrm{($C(x)$ iff $C(y)$) and ($C(y)$ implies $C(w)$)}\big)
\end{array}
\]
Once again, we simply obtain three alternative axiomatizations of the completed
trace semantics.

\begin{prop}
The following axiomatizations are equivalent:
\begin{enumerate}[\em(1)]
\item $\{\textrm{$B_1$--$B_4$}, (CS),$ $(\textit{ND\/}^F)\}$.
\item $\{\textrm{$B_1$--$B_4$}, (CS), (\textit{C-ND\/}^R)\}$.
\item $\{\textrm{$B_1$--$B_4$}, (CS), (\textit{C-ND\/}^\mathit{FT})\}$.
\item $\{\textrm{$B_1$--$B_4$},$ $(CS), (\textit{C-ND\/}^\mathit{RT})\}$.
\end{enumerate}
\end{prop}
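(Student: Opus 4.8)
The plan is to prove that each of the systems (2), (3) and (4) is logically equivalent to (1); since (1) has just been shown to axiomatize $\sqsubseteq_{CT}$, this gives the statement. As all four systems contain $B_1$--$B_4$ and $(CS)$ and differ only in the last axiom, the task reduces to showing that, over $\{B_1$--$B_4, (CS)\}$, the axioms $(\textit{ND}^F)$, $(\textit{C-ND}^R)$, $(\textit{C-ND}^{\mathit{FT}})$ and $(\textit{C-ND}^{\mathit{RT}})$ prove the same closed inequations.

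One inclusion is immediate: $M_F$ is the universal relation, so every instance of $(\textit{C-ND}^R)$, $(\textit{C-ND}^{\mathit{FT}})$ or $(\textit{C-ND}^{\mathit{RT}})$ is an instance of the unconditional axiom $(\textit{ND}^F)$; hence (1) proves everything provable in (2), (3) or (4). For the converse I would first record the auxiliary fact that $B_1$--$B_4$ together with $(CS)$ already prove $a(x+y)\preceq ax + a(y+w)$ whenever $x\equiv_B\cero$ or $y\equiv_B\cero$, for closed terms $x,y,w$. Indeed, if $y\equiv_B\cero$ then by $B_1$--$B_4$ we have $a(x+y)\simeq ax$ and $a(y+w)\simeq aw$, so it suffices to prove $ax\preceq ax+aw$; since neither $ax$ nor $aw$ is $\equiv_B\cero$ we have $C(ax,aw)$ and conclude by $(CS)$. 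If $x\equiv_B\cero$ but $y\not\equiv_B\cero$, the goal reduces to $ay\preceq a\cero + a(y+w)$, which follows by chaining $ay\preceq a\cero + ay$ (from $(CS)$ on $C(ay,a\cero)$), the precongruence of $+$, and $ay\preceq a(y+w)$ (reflexivity when $w\equiv_B\cero$, else $(CS)$ on $C(y,w)$, both summands being $\not\equiv_B\cero$). The remaining case $x\equiv_B\cero$ and $y\equiv_B\cero$ is covered by the first case.

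With this auxiliary fact in hand, each of (2), (3), (4) derives the full axiom $(\textit{ND}^F)$ on closed terms by a short case distinction on whether the side condition $M_{\mathit{C}\bullet}(x,y,w)$ holds. If it holds, the corresponding instance of $(\textit{C-ND}^\bullet)$ delivers $a(x+y)\preceq ax + a(y+w)$ outright; if it fails, an inspection of the conditions shows $x\equiv_B\cero$ or $y\equiv_B\cero$ --- $M_{\mathit{CR}}$ fails only if $C(x)$ and $\neg C(y)$, $M_{\mathit{CFT}}$ fails only if $C(y)$, and $M_{\mathit{CRT}}$ fails only if $C(x)\not\Leftrightarrow C(y)$ or $C(y)$ --- so the auxiliary fact applies and the inequation is provable from $B_1$--$B_4$ and $(CS)$ alone. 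Hence each of (2), (3), (4) proves $(\textit{ND}^F)$, and therefore proves everything provable in (1); combined with the first inclusion, all four axiomatizations are logically equivalent, so all of them axiomatize $\sqsubseteq_{CT}$.

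I expect no genuine obstacle here: the argument is in the spirit of Propositions~\ref{static-1:prop} and~\ref{rnd-eq:prop}, and the only care needed is the bookkeeping of the $B_1$--$B_4$ reductions and checking that the invocations of $(CS)$ are legitimate, which always comes down to the elementary observation that a prefixed term $at$ is never $\equiv_B\cero$.
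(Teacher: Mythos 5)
Your proposal is correct and takes essentially the same route as the paper: the paper dispatches the easy inclusions via the chain $(1)\Rightarrow(2)\Rightarrow(3)\Rightarrow(4)$ and then proves $(4)\Rightarrow(1)$ by exactly your case analysis, using $(CS)$ together with $B_1$--$B_4$ to handle the instances where $x$ or $y$ is $\cero$. Your only departures are organizational (you prove each of (2), (3), (4) equivalent to (1) directly via the auxiliary fact rather than chaining), and you are in fact slightly more careful than the paper at the step $y\preceq y+w$, where $(CS)$ only applies when $w\not\equiv_B\cero$ and the case $w\equiv_B\cero$ needs $B_4$.
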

\begin{proof}
Clearly, $(1) \Rightarrow (2) \Rightarrow (3) \Rightarrow (4)$ and therefore
it is enough to prove that $(4) \Rightarrow (1)$.
If $x$ and $y$ are not $\cero$ we can apply $(\textit{C-ND\/}^\mathit{RT})$ 
to obtain the inequality in $(\textit{ND\/}^F)$.
If $x$ is $\cero$ but $y$ is not, we need to obtain $ay\preceq a\cero + a(y+w)$.
By $(CS)$ we have $y\preceq y+w$ and then $ay\preceq a(y+w)$; applying $(CS)$
again, $a(y+w)\preceq a(y+w) + a\cero$ and thus $ay\preceq a\cero+a(y+w)$.
If $y$ is $\cero$ but $x$ is not, we need to obtain $ax\preceq ax+ aw$, which
results from an immediate application of $(CS)$.
Finally, if both $x$ and $y$ are $\cero$, $a\cero\preceq a\cero+aw$.
\end{proof}

As before, if we consider the original axioms 
$(\textit{ND\/}^R)$, $(\textit{ND\/}^\mathit{FT})$, and 
$(\textit{ND\/}^\mathit{RT})$ we obtain,
together with an alternative axiomatization of the completed trace semantics,
two new semantics.

\begin{prop}
\label{pro:cs_failures}
The set
$\{\textrm{$B_1$--$B_4$}, (CS), (\textit{ND\/}^\mathit{FT})\}$ is logically 
equivalent to $\{\textrm{$B_1$--$B_4$},$ $(CS), (\textit{ND\/}^F)\}$.
Hence, under $(CS)$, the failures and the failure trace axioms generate the
same semantics.
\end{prop}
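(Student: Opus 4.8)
The plan is to establish the two inclusions between the axiom systems. One direction is immediate: since $M_F$ is the universal relation, $M_{FT}(x,y,w)$ implies $M_F(x,y,w)$, so every instance of $(\textit{ND\/}^\mathit{FT})$ is already an instance of $(\textit{ND\/}^F)$ and hence $\{\textrm{$B_1$--$B_4$},(CS),(\textit{ND\/}^F)\}\vdash(\textit{ND\/}^\mathit{FT})$. The content of the proposition is the converse: deriving the \emph{unconditional} inequation $(\textit{ND\/}^F)$, namely $a(x+y)\preceq ax+a(y+w)$ for arbitrary (closed) $x,y,w$, from $B_1$--$B_4$, $(CS)$ and the \emph{conditional} axiom $(\textit{ND\/}^\mathit{FT})$.

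First I would recover the axiom $(FT)\ a(x+y)\preceq ax+ay$ from $(\textit{ND\/}^\mathit{FT})$ exactly as in the proof of Proposition~\ref{static-1:prop}(2): instantiate $w:=\cero$, observe that the side condition $I(\cero)=\emptyset\subseteq I(y)$ holds trivially, and simplify the summand $a(y+\cero)$ to $ay$ using $B_4$. With $(FT)$ in hand, I would argue by cases on whether $w$ and $y$ reduce to $\cero$. If $w\simeq\cero$, then $a(y+w)\simeq ay$ by $B_1$ and $B_4$, and $(FT)$ already gives $a(x+y)\preceq ax+ay\simeq ax+a(y+w)$. If $w\not\simeq\cero$ and $y\not\simeq\cero$, then both are non-null, so the constraint $C(y,w)$ holds and $(CS)$ yields $y\preceq y+w$; applying the prefix precongruence gives $ay\preceq a(y+w)$, and composing with $(FT)$ and transitivity finishes this case.

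The delicate case, which I expect to be the only genuine obstacle, is $y\simeq\cero$ and $w\not\simeq\cero$: here $C(y,w)$ \emph{fails}, so $(CS)$ cannot be used to compare $y$ and $w$ directly (indeed $\cero\preceq w$ is not even sound for completed trace semantics). The trick is to apply $(CS)$ one level higher up: using $B_1$ and $B_4$ the goal reduces to $ax\preceq ax+aw$, and now the \emph{prefixed} terms $ax$ and $aw$ are both non-null, so $C(ax,aw)$ holds and $(CS)$ delivers $ax\preceq ax+aw$ directly. Since these three cases exhaust all possibilities, $(\textit{ND\/}^F)$ is derivable and the two axiom systems are logically equivalent; the ``hence'' of the statement then follows, as the two systems differ only in whether the failures or the failure-trace instance of the non-determinism axiom is added to $\{\textrm{$B_1$--$B_4$},(CS)\}$, and both therefore axiomatize completed trace semantics.
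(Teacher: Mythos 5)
Your proof is correct, and all of its ingredients appear in the paper, but the organization differs: the paper's own proof of this proposition does not derive $(\textit{ND\/}^F)$ directly; instead it derives the weaker conditional axiom $(\textit{C-ND\/}^\mathit{FT})$ from $\{\textrm{$B_1$--$B_4$},(CS),(\textit{ND\/}^\mathit{FT})\}$ (two cases, according to whether $y$ is $\cero$) and then appeals to the immediately preceding proposition, which already established that $\{\textrm{$B_1$--$B_4$},(CS),(\textit{C-ND\/}^\mathit{FT})\}$ is equivalent to $\{\textrm{$B_1$--$B_4$},(CS),(\textit{ND\/}^F)\}$. You instead give a self-contained three-case derivation of the unconditional axiom, and your delicate case ($y\simeq\cero$, $w\not\simeq\cero$), resolved by applying $(CS)$ to the prefixed terms $ax$ and $aw$ (for which the constraint $C$ holds trivially), is exactly the step the paper delegates to the proof of that preceding proposition (``if $y$ is $\cero$ but $x$ is not \dots immediate application of $(CS)$''). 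Both arguments rest on the same two facts — instantiating $w:=\cero$ in $(\textit{ND\/}^\mathit{FT})$ yields $(FT)$, and $(CS)$ lets you absorb $w$ into $y$ when both are non-null — and both proceed, legitimately, by case analysis on closed instances, which is all that is needed for (ground) logical equivalence of the systems. What your route buys is independence from the preceding proposition; what the paper's buys is brevity by reuse. One cosmetic remark: in the paper's second case, when $w\simeq\cero$ the appeal to $(CS)$ for $y\preceq y+w$ is not available (the constraint $C(y,w)$ fails), but there $a(y+w)\simeq ay$ anyway; your explicit splitting on $w\simeq\cero$ handles this subcase cleanly.
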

\begin{proof}
It is enough to prove that $(\textit{C-ND\/}^\mathit{FT})$
can be derived from $\{\textrm{$B_1$--$B_4$}, (CS),$ $(\textit{ND\/}^F)\}$.
\begin{iteMize}{$\bullet$}
\item If $y$ is $\cero$ we then have $w$ equal to $\cero$ and can apply
$(\textit{ND\/}^\mathit{FT})$.
\item If $y$ is not $\cero$ we can apply $(\textit{ND\/}^\mathit{FT}_0)$ to 
obtain 
$a(x+y)\preceq ax+ay$ and then $(CS)$ to conclude that 
$a(x+y)\preceq ax+a(y+w)$.\qedhere
\end{iteMize}
\end{proof}

\noindent By contrast, as happened for plain simulations, under $(CS)$
the axioms of the ready semantics generate two slightly different versions of
the extended ready and extended ready trace semantics introduced before, that 
we call \emph{extended complete ready} and \emph{extended complete ready trace}
semantics.
In order to have $p\sqsubseteq_\mathit{ECR}
q$, whenever $p\Tran{\alpha} p'$ with $I(p')\neq \emptyset$
we require some $q\Tran{\alpha} q'$ with $I(q')\supseteq I(p')$, but if
$I(p') = \emptyset$ then the corresponding $q'$ also has to 
satisfy $I(q') = \emptyset$.
The extended complete ready trace preorder $\sqsubseteq_\mathit{ECRT}$ is 
defined in an analogous way, starting from the ready traces of the
processes.

As we did in Section~\ref{subsub:equivalences}, we can prove that the axioms that characterize
trace and completed trace preorders reflect the fact that the order relation is
inherited from simulation and complete simulation, respectively, and that the role of
the static rules is to introduce identifications. As stated in Proposition~\ref{pro:cs_failures} above, the only 
inequation that we use to axiomatize the trace and completed trace orders is $(S)$, the remaining axioms being equational axioms.

\begin{prop}
\hfill
  \begin{enumerate}[\em(1)]
  \item $\{\textrm{$B_1$--$B_4$}, (S), (\textit{ND}^F)\}$ is logically equivalent
to $\{\textrm{$B_1$--$B_4$}, (S), (\textit{ND}^F_\equiv)\}$.
\item $\{\textrm{$B_1$--$B_4$}, (CS), (\textit{ND}^F)\}$ is logically equivalent
to $\{\textrm{$B_1$--$B_4$}, (CS), (\textit{ND}^F_\equiv)\}$.
\end{enumerate}
\end{prop}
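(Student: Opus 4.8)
The plan is to transfer the argument of Proposition~\ref{rnd-eq:prop} from the ready-simulation setting to the two coarser constraints $(S)$ and $(CS)$; for item~(1) this was in fact already announced in the remark preceding the statement, and item~(2) is an entirely parallel adaptation. The one observation that makes both cases go through uniformly is the following: in every place where Proposition~\ref{rnd-eq:prop} used $(RS)$ to produce an instance of the shape $u \preceq u + v$, the term $u$ involved is a nonempty sum of prefixed terms and is therefore syntactically different from $\cero$. Hence such an instance is available both from $(S)$, where $u \preceq u + v$ holds with no side condition, and from $(CS)$, where it holds because the constraint $C(u, u+v)$ is met (neither $u$ nor $u + v$ equals $\cero$), exactly the kind of reasoning already used in Proposition~\ref{pro:cs_failures}.

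For each of the two items I would prove the two inclusions between the axiom systems, the common part being $B_1$--$B_4$ together with $(S)$, respectively $(CS)$. First, to derive $(\textit{ND}^F_\equiv)$: from the $(\textit{ND}^F)$-instance $a(x+y) \preceq ax + a(y+w)$ I add $ax + a(y+w)$ to both sides with the precongruence rule for $+$ and collapse the right-hand side by $B_1$, $B_2$ and idempotency $B_3$, obtaining $ax + a(y+w) + a(x+y) \preceq ax + a(y+w)$; the reverse inequality $ax + a(y+w) \preceq ax + a(y+w) + a(x+y)$ is an instance of $(S)$, respectively of $(CS)$ applied to the non-$\cero$ term $ax + a(y+w)$, and the two together are precisely $(\textit{ND}^F_\equiv)$. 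Conversely, to recover $(\textit{ND}^F)$: by $(S)$, respectively $(CS)$ applied to the non-$\cero$ term $a(x+y)$, we have $a(x+y) \preceq a(x+y) + ax + a(y+w)$, and rewriting the right-hand side by $(\textit{ND}^F_\equiv)$ gives $a(x+y) \preceq ax + a(y+w)$.

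I do not expect a genuine obstacle: the statement is a routine descent of Proposition~\ref{rnd-eq:prop} to the bottom of the spectrum. The only bookkeeping worth spelling out is, for item~(2), checking the constraint $C$ at each use of $(CS)$, and this is uniform---all the terms manipulated ($ax + a(y+w)$ and $a(x+y)$) are headed by a prefix, and adding summands never creates $\cero$, so ``$u = \cero$ iff $u + v = \cero$'' always holds. One could additionally note, mirroring the corollary after Proposition~\ref{static-1:prop}, that only the instance $w = \cero$ is ever needed.
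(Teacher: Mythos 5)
Your proof is correct and is essentially the paper's intended argument: the paper gives no separate proof for this proposition, relying on the observation that the reasoning of Proposition~\ref{rnd-eq:prop} transfers once $(RS)$ is replaced by $(S)$ or $(CS)$, and that is exactly the adaptation you carry out, including the only bookkeeping that matters, namely that the completeness constraint is satisfied at every use of $(CS)$. One small nitpick: the axiom $(NS)$ constrains the two summands, so the instance $u\preceq u+v$ requires $C(u,v)$ rather than $C(u,u+v)$; since every term you add ($a(x+y)$, $ax+a(y+w)$) is prefix-headed and hence non-$\cero$, this does not affect your verification.
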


\noindent A similar discussion could have been carried out for trace and
completed trace equivalences, and  
indeed a very natural axiomatization for these relations can 
be obtained  based on the corresponding instantiation of the $(NS_\equiv)$
equation: 
$$
\begin{array}{l}
(S_\equiv)\quad a(x + y ) \simeq a(x + y)+ay\\
(CS_\equiv)\quad C(x,y) \Rightarrow a(x + y ) \simeq a(x + y)+ay\, .
\end{array}
$$

\begin{prop}
\hfill
\begin{enumerate}[\em(1)]
\item The trace equivalence  $\equiv_T$ is axiomatized by 
 $\{\textrm{$B_1$--$B_4$}, (S_\equiv),$ $(\textit{ND\/}_\equiv^F)\}$.
\item The completed trace equivalence $\equiv_{CT}$ is axiomatized by 
 $\{\textrm{$B_1$--$B_4$}, (CS_\equiv),$ $(\textit{ND\/}_\equiv^F)\}$.
\end{enumerate}
\end{prop}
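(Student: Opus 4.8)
The plan is to prove each item by showing that the proposed equational axiom system is logically equivalent to the classic complete axiomatization of the corresponding equivalence recorded in Table~\ref{tab:AxiomatisationsEquivalencesLtbts}; soundness and completeness then follow at once from the results of~\cite{Gla01}. (Alternatively, one could obtain both axiomatizations mechanically from the preorder axiomatizations $\{\textrm{$B_1$--$B_4$}, (S), (\textit{ND\/}^F_\equiv)\}$ and $\{\textrm{$B_1$--$B_4$}, (CS), (\textit{ND\/}^F_\equiv)\}$ just established, via the ``ready to preorder'' translation of~\cite{AFI07,DeFrutosEtAl08b}, exactly as in Proposition~\ref{prop:AltAxEquiv}; here I will instead give the direct comparison, which is short.)

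For item (1) the classic axiomatization of $\equiv_T$ is $\{\textrm{$B_1$--$B_4$}, (T)\}$ with $(T)\ ax+ay\simeq a(x+y)$. From right to left, $(T)$ together with $B_3$ immediately yields $(S_\equiv)$ (since $a(x+y)+ay\simeq a(x+y+y)\simeq a(x+y)$) and $(\textit{ND\/}^F_\equiv)$ (collapsing all three summands of $ax+a(y+w)+a(x+y)$ down to $a(x+y+w)\simeq ax+a(y+w)$ by repeated use of $(T)$ and $B_3$). From left to right I would derive $(T)$ thus: repeated use of $(S_\equiv)$ (with $B_1$ to swap the roles of $x$ and $y$) gives $a(x+y)\simeq a(x+y)+ax+ay$, while the instance of $(\textit{ND\/}^F_\equiv)$ with $w=\cero$ gives $ax+ay+a(x+y)\simeq ax+ay$; chaining these (and $B_1$) produces $a(x+y)\simeq ax+ay$. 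Hence the two systems prove exactly the same equations and $\{\textrm{$B_1$--$B_4$}, (S_\equiv), (\textit{ND\/}^F_\equiv)\}$ is sound and complete for $\equiv_T$.

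For item (2) the classic axiomatization of $\equiv_{CT}$ is $\{\textrm{$B_1$--$B_4$}, (CT)\}$, and we already know from the completed-trace discussion in Section~\ref{tcss:sec} that, modulo $B_1$--$B_4$, $(CT)$ is interderivable with the conditional axiom $(CT_\equiv)\ C(x,y)\Rightarrow ax+ay\simeq a(x+y)$; so it suffices to compare with $\{\textrm{$B_1$--$B_4$}, (CT_\equiv)\}$. Left to right repeats the computation of item (1): when $C(x,y)$ holds, either both $x,y$ equal $\cero$ and then $a\cero+a\cero\simeq a\cero$ by $B_3$, or both are non-$\cero$ and then every application of $(S_\equiv)$ in the item-(1) argument is a legal application of $(CS_\equiv)$, so $a(x+y)\simeq ax+ay$ follows as before. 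For the converse I would derive $(CS_\equiv)$ from $(CT_\equiv)$ directly (if $C(x,y)$: either both $\cero$, trivial, or both non-$\cero$, and then $a(x+y)\simeq ax+ay$ by $(CT_\equiv)$, whence $a(x+y)+ay\simeq ax+ay+ay\simeq ax+ay\simeq a(x+y)$), and derive the \emph{unconditional} axiom $(\textit{ND\/}^F_\equiv)$ from $(CT_\equiv)$ by a short case analysis on which of $x,y,w$ equal $\cero$: when $y=\cero$ one reduces by $B_4$ and $B_3$ with no appeal to $(CT_\equiv)$, and otherwise one splits $a(x+y)$ and (if $w\ne\cero$) $a(y+w)$ with $(CT_\equiv)$ — legal because the relevant components are then non-$\cero$ — and closes with $B_3$.

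The only genuinely delicate point is this last derivation of $(\textit{ND\/}^F_\equiv)$ from $(CT_\equiv)$: one cannot split $a(x+y)$ via $(CT_\equiv)$ unless both $x$ and $y$ are non-$\cero$, so the case distinction cannot be avoided — this is precisely why $(CS_\equiv)$, unlike $(S_\equiv)$, must carry the constraint $C(x,y)$. Everything else is a routine chain of applications of $B_1$--$B_4$ together with one of the two new axioms, and soundness of the new axioms for the respective equivalences is either inherited from the logical equivalence with the classic systems or checked directly on traces/completed traces in one line each.
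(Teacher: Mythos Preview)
Your approach is correct and is exactly what the paper indicates: the paper gives no explicit proof here, merely remarking that ``a similar discussion could have been carried out'' and pointing (via Proposition~\ref{prop:AltAxEquiv}) either to a direct comparison with the classic axiomatizations of Table~\ref{tab:AxiomatisationsEquivalencesLtbts} or to the ready-to-preorder algorithm of~\cite{AFI07,DeFrutosEtAl08b}; you carry out the former option and in fact supply more detail than the paper does.

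One small imprecision in your item~(2) sketch: in the subcase $y\neq\cero$ but $x=\cero$, the constraint $C(x,y)$ fails, so you cannot ``split $a(x+y)$ with $(CT_\equiv)$'' as you claim. This is harmless, though, since then $a(x+y)=ay$ by~$B_4$ and the remaining absorption of $ay$ into $a(y+w)$ (splitting the latter via $(CT_\equiv)$ when $w\neq\cero$, using $B_3$ otherwise) goes through exactly as in your other subcases; the full case analysis on $x,y,w$ that you announce does close.
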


To conclude this section devoted to the unification of the equational
characterizations of process semantics, we present in
Figure~\ref{fig:new-ltbts} a condensed view of our new spectrum.  This
presentation exploits in an expressive way the two
dimensions of the picture, which in fact reflects a tridimensional structure.
On the lefthand side the constrained simulations and bisimulations appear,
totally ordered from top to bottom. Each constrained simulation generates a
layer of semantics. Here, we have only detailed the layers corresponding to
ready simulation and that of plain simulation. As a matter of fact, the latter
degenerates to a single point due to the simplicity of the constraint $U$
governing plain simulations. The naturality of the semantics appearing in this
part of the spectrum is illustrated by our generic axiomatization, where a
single (constrained) simulation axiom governs all the constrained simulation
semantics, whereas adding a single axiom we complete the axiomatizations of
each of the linear semantics at the righthand side of the picture.

\begin{figure}[th]
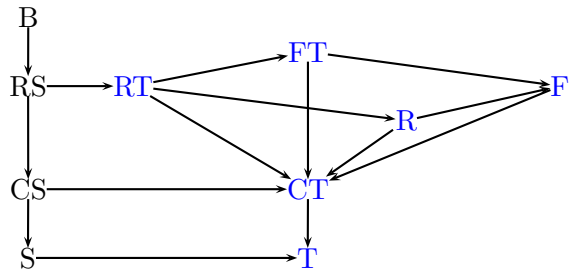

 \begin{center}
   \begin{tabular*}{.5\textwidth}[h]{@{\extracolsep{\fill}}ccccccc}

     \rnode{bs}{B} \\
      & & &  \rnode{ft}{\textcolor{blue}{FT}} &  \\
     \rnode{rs}{RS}&
       \rnode{rt}{\textcolor{blue}{RT}} & & & &&\rnode{f}{\textcolor{blue}{F}}\\
     & & & &\rnode{r}{\textcolor{blue}{R}} \\
     \ncline[linestyle=solid]{->}{bs}{rs}
     \ncline{->}{rs}{rt}
     \ncline{->}{rt}{ft}
     \ncline{->}{rt}{r}
     \ncline{->}{ft}{f}
     \ncline{->}{r}{f}

     \ncline[linestyle=solid]{->}{ts}{rs}
     \ncline[linestyle=solid]{->}{pw2}{pw}
     \ncline[linestyle=solid]{->}{n1}{rt}
     \ncline[linestyle=solid]{->}{pf}{ft}
     \ncline[linestyle=solid]{->}{n3}{r}
     \ncline[linestyle=solid]{->}{n2}{f}
     \\

     \rnode{cs}{CS}& &&
       \rnode{ct}{\textcolor{blue}{CT}} & \\
     \ncline{->}{cs}{ct}

     \ncline[linestyle=solid]{->}{rs}{cs}
     \ncline[linestyle=solid]{->}{f}{ct}
     \ncline[nodesep=1pt,linestyle=solid]{->}{r}{ct}
     \ncline[linestyle=solid]{->}{ft}{ct}
     \ncline[linestyle=solid]{->}{rt}{ct}
     \\
     \rnode{s}{S}&  &&
       \rnode{t}{\textcolor{blue}{T}} & 
     \ncline{->}{s}{t}

     \ncline[linestyle=solid]{->}{cs}{s}

     \ncline[linestyle=solid]{->}{ct}{t}
   \end{tabular*}
 \end{center}
 \caption{New view of the linear time-branching time spectrum.}
 \label{fig:new-ltbts}
\end{figure}

\section{Observational semantics}\label{observational-sem-sec}

Along Section~\ref{sec:EquationalSemantics}
we have presented some views of the axiomatizations for process 
semantics that highlight the common properties and the subtle differences
between them; likewise these views of the axiomatic characterizations 
point out the similarities between the preorder and the equivalence
of a given semantics.

In this section we focus on the characterizations of process semantics 
based on observations. Indeed, this idea of determining the semantics by 
means of observations lies deep inside the foundations of process theory.
 
\begin{quote}
  \emph{Our calculus is founded in two central ideas. The first is 
  observation; [\ldots] two systems are indistinguishable if we cannot tell
them apart without pulling them apart. We therefore give a formal definition
of observation equivalence and investigate its properties.}~\cite{Mil80CCS}
\end{quote}

\begin{quote}
  \emph{Imagine there is an observer with a notebook who watches
  the process and writes down the name of each event as it occurs.}~\cite{Hoa85CSP}
\end{quote}

Besides the classical references to Milner and Hoare, this idea of observation 
pervades the  Hennessy's testing methodology~\cite{Hen88ATP} and most of 
the work on linear semantics. Observations, in spite of the variations in
different proposals, constitute a denotational space closely related to 
the classical developments of semantics based on  denotations for programming
languages~\cite{SemanticsForComputer-Scott-Strachey-1971}.

In this section we will show how most of the semantics can be characterized
with one of the two main families of observations:

\begin{iteMize}{$\bullet$}
\item  Branching general observations, Section~\ref{subsec:bgo}, 
  that are essentially labeled trees, that characterize 
  the simulation semantics: simulation, complete simulation,
  ready simulation, nested simulation, \dots
\item Linear observations, a simplified case of branching observations,
  Section~\ref{linear:sec}, that characterize  the 
  linear semantics: traces, failures, readiness, ready trace, \dots
\end{iteMize}

\noindent We consider also in Section~\ref{deterministic:sec} a more
exotic kind of observations, deterministic branching observations,
which are essentially deterministic trees. Possible worlds semantics
is the only semantics appearing in the classical spectrum in this
class, although, our general approach will show how this kind of
observations define new full families of process semantics.
 
To develop this observational characterization for process semantics allows
us to deepen into the ultimate nature of the similarities and differences 
between them. Along this section we present a thorough study of the local
observation functions that generate the local observations of the states,
Figure~\ref{fig:Slice}. For the linear case, there is also the possibility of
observing this 
local information in a partial way and this is how for each local
observer, in principle, up to four different semantics can be obtained.
This fact explains the classic diamond below the ready simulation
semantics formed by the failures, failure trace, readiness, and ready trace
semantics. Again, the generality of our study makes it exportable to 
other simulation layers enriching and completing the spectrum of semantics,
Figure~\ref{fig:extended-ltbts}.

Finally, from a methodological point of view, the unification of observational
semantics that we present in this section 
introduces all the technical machinery needed to rewrite the 
proofs of Section~\ref{sec:EquationalSemantics} in a generic way,
proving that the two unification procedures produce characterizations of 
the same semantics.
We will address this topic in Section~\ref{rnoef:sec}. Let us now 
concentrate on the observational semantics.
  
\subsection{Branching general observations}
\label{subsec:bgo}

In order to characterize the simulation semantics in an extensional way
we need local and branching general observations.

\begin{defi}\label{local-observations:def}
The sets $L_N$ of \emph{local observations} corresponding to each of the
constrained simulations in the spectrum, and $L_N(p)$ of observations associated
to a process $p$, are defined as follows:
\begin{iteMize}{$\bullet$}
\item Universal (or Plain) simulation: $L_U = \{\cdot\}$; $L_U(p) = \cdot$.
\item Ready simulation: $L_I = \calP(\textit{Act})$; $L_I(p) = I(p)$.
\item Complete simulation: $L_C = \textit{Bool}$; $L_C(p)$ is
 \textit{true} if $I(p)=\emptyset$ and \textit{false} otherwise.
\item Trace simulation\footnote{Trace simulations are the only ones in this
    list that do not appear in~\cite{Gla01}. They can be defined as 
 $T$-simulations, with $T(x,y) ::= T(x) = T(y)$, and the general theory about
 constrained simulations in~\cite{FG08ifiptcs} applies to them. In particular,
 they can be axiomatized as stated in Proposition~\ref{prop:ns}(3),
 page~\pageref{prop:ns}.}: $L_T = \calP(\textit{Act}^*)$;
 $L_T(p)=T(p)$, the set of traces of $p$.
\item 2-nested simulation: $L_S = \{\lsem p \rsem_S\mid 
 p\in\textit{BCCSP}\}$; $L_{S}(p) =\lsem p\rsem_S$, where $\lsem p\rsem_S$ 
 represents the equivalence class of $p$ with respect to the simulation equivalence.
\end{iteMize}
\end{defi}

\begin{defi}\label{bgo:def}
\hfill
\begin{enumerate}[(1)]
\item
A \emph{branching general observation} (bgo for short)
of a process is a finite, non-empty tree 
whose arcs are labeled with actions in \textit{Act} and whose nodes are labeled
with local observations from $L_N$, for $N$ a constraint; the corresponding set
$\textit{BGO}_N$ is recursively defined as:
 \begin{iteMize}{$\bullet$}
 \item
 $\langle l,\emptyset\rangle \in \textit{BGO}_N$ for $l\in L_N$.
 \item
 $\langle l, \{(a_i, \textit{bgo}_i)\mid i\in 1..n\}\rangle \in 
 \textit{BGO}_N$ for every $n\in\nat$, 
 $a_i\in\textit{Act}$ and $\textit{bgo}_i\in
 \textit{BGO}_N$.
 \end{iteMize}

\item
The set $\textit{BGO}_N(p)$ of branching general observations of $p$ 
corresponding to the constraint $N$ is
\[
\textit{BGO}_N(p) = 
 \{\langle L_N(p), S\rangle \mid S\subseteq\{(a,\textit{bgo})\mid
  \textit{bgo}\in \textit{BGO}_N(p'), p\tran{a}p'\}\}\,.
\]

\item We write $p\leq^b_N q$ if $\textit{BGO}_N(p)\subseteq\textit{BGO}_N(q)$.
\end{enumerate}\smallskip
\end{defi}

\noindent In Figure~\ref{bgos2:fig} some simple examples of bgo's for
$N=I$ are shown.  We represent $\textit{bgo}_1$ as
\[
\langle \{a\}, \{(a,\langle \{b\}, \{(b, \langle \{c\}, \emptyset\rangle)\}
\rangle), (a, \langle \{b\}, \{(b, \langle \{d\}, \emptyset\rangle)\}\rangle)\}
\rangle
\]
and
$\textit{bgo}_2$ as  
\[
\langle \{a\}, \{(a,\langle \{b\}, \{(b, \langle \{c\}, \emptyset\rangle), 
(b, \langle\{d\},\emptyset\rangle)\}\rangle)\}\rangle.
\]
We use braces for the set of children of a node, parentheses to represent a 
branch of the tree as
a pair (initial arc, subtree below), and angular brackets to represent each tree
as a pair $\langle\textrm{root},\textrm{children}\rangle$.

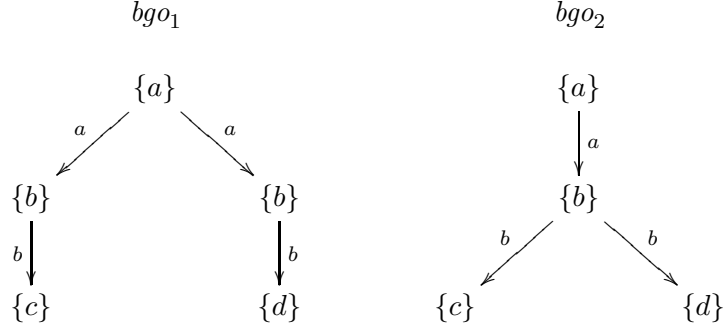
\begin{figure}[t]
\[
\begin{array}{c@{\qquad\qquad}c}
\textit{bgo}_1&\textit{bgo}_2\\ \\
\xymatrix{
&\{a\}\ar[dl]_a\ar[dr]^a\\
\{b\}\ar[d]_b&& \{b\}\ar[d]^b\\
\{c\}&& \{d\}}
&
\xymatrix{
&\{a\}\ar[d]^a\\
&\{b\}\ar[dl]_b\ar[dr]^b\\
\{c\}&& \{d\}
}
\end{array}
\]
\caption{Two branching observations.}
\label{bgos2:fig}
\end{figure}

Note that the bgo's of a process $p$ described by its transition system 
can be generated by inductively applying the clauses defining the set 
$\textit{BGO}_N(p)$, even when $p$ is infinite.
For instance, if $N=I$ and we consider the process $p :: = c . p$ defining a 
clock, since $\emptyset \subseteq\{(c,\textit{bgo})\mid
  \textit{bgo}\in \textit{BGO}_I(p), p\tran{c}p\}$,
it follows that $\langle \{c\}, \emptyset\rangle\in\textit{BGO}_I(p)$.
But now $\{(c,\langle \{c\}, \emptyset\rangle)\} \subseteq\{(c,\textit{bgo})\mid
  \textit{bgo}\in \textit{BGO}_I(p), p\tran{c}p\}$
and therefore $\langle\{c\},\{(c,\langle \{c\}, \emptyset\rangle)\}\rangle\in
\textit{BGO}_I(p)$, and so on.

It is clear that the bgo's of a process have an operational flavor.
The nodes of the observations correspond to its states and the arcs to
its transitions; this is why we will be able to define the orders associated 
to the different simulation semantics simply by set inclusion over the sets of
bgo's.

Let us also comment on the fact that in all five cases that we have considered
in Definition~\ref{local-observations:def}, 
which correspond to the five constrained simulation semantics
in the spectrum, the local observation functions $L_N$ 
define a representation of the equivalence relation $N$
used to define the constrained simulation relations. 
This means that we have $L_N(p) = L_N(q) \Longleftrightarrow p N q$.

\begin{exa}
For $N = I$, if $x = b(c+d)$ and $y= bc + bd$, then for $p = a(x+y)$
we have $\textit{bgo}_k \in \textit{BGO}_I(p)$ for $k\in \{1,2,3\}$, where the
bgo's are depicted in Figure~\ref{bgos:fig}.
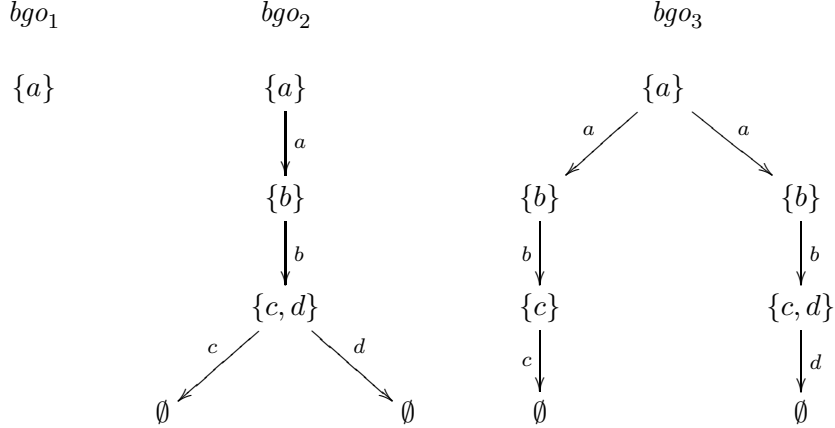
\begin{figure}[t]
\[
\begin{array}{c@{\qquad\quad}c@{\qquad\quad}c}
\textit{bgo}_1&\textit{bgo}_2&\textit{bgo}_3\\ \\
\{a\}
&
\xymatrix{
&\{a\}\ar[d]^a\\
&\{b\}\ar[d]^b\\
&\{c,d\}\ar[dr]^d\ar[dl]_c\\
\emptyset&&\emptyset}
&
\xymatrix{
&\{a\}\ar[dl]_a\ar[dr]^a\\
\{b\}\ar[d]_b&&\{b\}\ar[d]^b\\
\{c\}\ar[d]_c&&\{c,d\}\ar[d]^d\\
\emptyset&&\emptyset
}
\end{array}
\]
\caption{Three branching observations.}
\label{bgos:fig}
\end{figure}
It is easy to check that all of them are also branching observations of
$q = a(x+y)+ ax$.
As a matter of fact, we have $\textit{BGO}_I(p) = \textit{BGO}_I(q)$.
Note that in order to obtain $\textit{bgo}_3\in \textit{BGO}_I(p)$ we need to
combine two different observations of the process $x+y$, which is the only
$p'$ such that $a(x+y)\tran{a} p'$.

In contrast, for $p= a(bc+bd)$ and $q=abc + abd$, 
$\textit{BGO}_I(q) \not\subseteq \textit{BGO}_I(p)$, since for the branching
observation $\textit{bgo}_1$ in Figure~\ref{bgos2:fig} we have 
$\textit{bgo}_1\in\textit{BGO}_I(q)$ and
$\textit{bgo}_1\not\in \textit{BGO}_I(p)$.
And also, we have $\textit{BGO}_I(p)\not\subseteq\textit{BGO}_I(q)$, since
for $\textit{bgo}_2$ as in Figure~\ref{bgos2:fig} we have
$\textit{bgo}_2\in \textit{BGO}_I(p)$, but
$\textit{bgo}_2\notin \textit{BGO}_I(q)$.
The key idea is that we can indeed include in a 
single bgo two separated computations but we cannot ``mix'' two different ones, 
even if the labels both in their initial transitions and in the local 
observations of the reached nodes were the same.
This is why $\textit{bgo}_2\notin\textit{BGO}_I(q)$.
\end{exa}

The following simple properties will be immediate consequences of
Theorem~\ref{denotational-main:thm} below; we use them here to illustrate the 
expressive power of each kind of bgo.

\begin{defi}
An axiom $t\preceq u$, respectively $t\simeq u$, is satisfied in a model
$\textit{BGO}_N$ if $\textit{BGO}_N(t')\subseteq\textit{BGO}_N(u')$, 
respectively $\textit{BGO}_N(t')=\textit{BGO}_N(u')$, for every possible 
ground instantiation $t'\preceq u'$ or $t'\simeq u'$ of the axiom.
\end{defi}

\begin{prop}
\hfill
\begin{enumerate}[\em(1)]
\item \label{auxiliar-1:prop}
 The axiom $(S)\ x\preceq x + y$ is satisfied in the model $\textit{BGO}_U$.
\item The axiom $(S_\equiv)\ a(x+y)\simeq a(x+y) + ax$ 
 is satisfied in the model $\textit{BGO}_U$.
\end{enumerate}
\end{prop}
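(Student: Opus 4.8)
The plan is to argue directly from Definition~\ref{bgo:def}, exploiting the fact that for the universal constraint $L_U(p)=\cdot$ for \emph{every} process $p$; hence the root label never obstructs anything and a bgo is essentially just its underlying finite tree of action-labelled arcs. I would not invoke Theorem~\ref{denotational-main:thm} here, since a self-contained argument is very short.

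For part (1), fix a ground instantiation, i.e.\ processes $p$ and $q$, and aim for $\textit{BGO}_U(p)\subseteq\textit{BGO}_U(p+q)$. Take any $\textit{bgo}=\langle\cdot,S\rangle\in\textit{BGO}_U(p)$, so $S\subseteq\{(a,\textit{bgo}')\mid \textit{bgo}'\in\textit{BGO}_U(p'),\ p\tran{a}p'\}$. Since every transition $p\tran{a}p'$ is also a transition $p+q\tran{a}p'$ by the rules in Figure~\ref{fig:OperationalSemanticsBCCSP}, the set on the right is contained in $\{(a,\textit{bgo}')\mid \textit{bgo}'\in\textit{BGO}_U(p'),\ p+q\tran{a}p'\}$; as $L_U(p)=L_U(p+q)=\cdot$, this yields $\langle\cdot,S\rangle\in\textit{BGO}_U(p+q)$ at once. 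Note that no induction on tree height is needed: the very same bgo literally belongs to both sets, because the subtrees hang below derivatives $p'$ that the two processes share.

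For part (2), fix $p,q$. The inclusion $\textit{BGO}_U(a(p+q))\subseteq\textit{BGO}_U(a(p+q)+ap)$ is just the instance of part (1) with $a(p+q)$ for $x$ and $ap$ for $y$. For the converse, take $\textit{bgo}=\langle\cdot,S\rangle\in\textit{BGO}_U(a(p+q)+ap)$. The only transitions of $a(p+q)+ap$ are $a(p+q)+ap\tran{a}p+q$ and $a(p+q)+ap\tran{a}p$, so $S\subseteq\{(a,\textit{bgo}')\mid \textit{bgo}'\in\textit{BGO}_U(p+q)\cup\textit{BGO}_U(p)\}$; by part (1), $\textit{BGO}_U(p)\subseteq\textit{BGO}_U(p+q)$, hence $S\subseteq\{(a,\textit{bgo}')\mid \textit{bgo}'\in\textit{BGO}_U(p+q)\}$. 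Since $a(p+q)\tran{a}p+q$ is the only transition of $a(p+q)$, this is exactly the defining condition for $\langle\cdot,S\rangle\in\textit{BGO}_U(a(p+q))$, and the reverse inclusion follows.

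The step I expect to need the most care is this converse inclusion in (2): one must observe that the two $a$-summands of $a(p+q)+ap$ lead to $p+q$ and to $p$ respectively, and that because $L_U$ carries no information at the root, any choice of children mixing bgo's of $p$ with bgo's of $p+q$ can be re-read as a choice of children taken entirely from $\textit{BGO}_U(p+q)$ — which is precisely where part (1) gets reused. Everything else is routine bookkeeping with the definitions.
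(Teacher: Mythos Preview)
Your proof is correct. Part~(1) matches the paper's argument essentially verbatim. For part~(2), the paper takes a slightly more modular route: it first records two general facts about $\textit{BGO}_U$, namely that $\textit{BGO}_U(p)\subseteq\textit{BGO}_U(q)$ implies $\textit{BGO}_U(ap)\subseteq\textit{BGO}_U(aq)$, and that $\textit{BGO}_U(p),\textit{BGO}_U(q)\subseteq\textit{BGO}_U(r)$ implies $\textit{BGO}_U(p+q)\subseteq\textit{BGO}_U(r)$, and then combines these with part~(1) to get both inclusions. You instead unpack the definition of $\textit{BGO}_U(a(p+q)+ap)$ directly, list its two $a$-derivatives, and absorb the $p$-branch into the $p+q$-branch using part~(1). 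The underlying content is the same --- part~(1) is the engine in both --- but your argument is a direct computation specific to this instance, while the paper's isolates reusable precongruence-style lemmas. Either is perfectly adequate here.
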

\begin{proof}
\hfill
\begin{enumerate}[(1)]
\item It is an immediate consequence of the fact that if $p\tran{a}p'$ then 
 $p+q\tran{a}p'$, and therefore 
 $\{a\mid p\tran{a}\}\subseteq\{a\mid p+q\tran{a}\}$.
\item Again, it is a simple exercise to check that $\textit{BGO}_U(p)\subseteq
 \textit{BGO}_U(q)$ implies $\textit{BGO}_U(ap)\subseteq \textit{BGO}_U(aq)$,
 and that if $\textit{BGO}_U(p), \textit{BGO}_U(q)\subseteq \textit{BGO}_U(r)$,
 then $\textit{BGO}_U(p + q) \subseteq \textit{BGO}_U(r)$;
 in combination with (\ref{auxiliar-1:prop}), this produces the result.\qedhere
\end{enumerate}
\end{proof}

\begin{prop}
$\textit{BGO}_I(p)\subseteq \textit{BGO}_I(p+q)$ iff $I(q) \subseteq I(p)$.
\end{prop}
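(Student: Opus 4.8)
The plan rests on a single structural observation: by Definition~\ref{bgo:def}, every bgo in $\textit{BGO}_I(r)$ has its root labelled with $L_I(r)=I(r)$, and the defining clause of $\textit{BGO}_I$ is monotone in the transition relation. Combined with the identity $I(p+q)=I(p)\cup I(q)$ (immediate from the operational rules for $+$ in Figure~\ref{fig:OperationalSemanticsBCCSP}), these two facts deliver both implications directly, with no induction on tree structure required.

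For the ``only if'' direction I would use the trivial observation $\langle I(p),\emptyset\rangle$, which belongs to $\textit{BGO}_I(p)$ by taking $S=\emptyset$ in the defining clause of Definition~\ref{bgo:def}(2). If $\textit{BGO}_I(p)\subseteq\textit{BGO}_I(p+q)$, then $\langle I(p),\emptyset\rangle\in\textit{BGO}_I(p+q)$; but every element of $\textit{BGO}_I(p+q)$ carries the root label $L_I(p+q)=I(p+q)=I(p)\cup I(q)$, so $I(p)=I(p)\cup I(q)$, that is, $I(q)\subseteq I(p)$.

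For the ``if'' direction, assume $I(q)\subseteq I(p)$, hence $I(p+q)=I(p)$. Given a $\textit{bgo}\in\textit{BGO}_I(p)$, write it as $\langle I(p),S\rangle$ with $S\subseteq\{(a,\textit{bgo}')\mid \textit{bgo}'\in\textit{BGO}_I(p'),\ p\tran{a}p'\}$. Since $p\tran{a}p'$ implies $p+q\tran{a}p'$, the right-hand set only grows when $p$ is replaced by $p+q$, so $S$ is still an admissible set of children for a bgo rooted at $p+q$; and because $I(p+q)=I(p)$, the pair $\langle I(p),S\rangle=\langle I(p+q),S\rangle$ matches exactly the form required for membership in $\textit{BGO}_I(p+q)$. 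Note that no recursion on the subtrees is needed: the subtrees occurring in $S$ live in the sets $\textit{BGO}_I(p')$, which are the same whether $p'$ is reached from $p$ or from $p+q$.

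The only point that needs care --- and the conceptual heart of the statement --- is the interplay of root labels: it is precisely the fact that the root of a bgo of $p+q$ records $I(p)\cup I(q)$, rather than just $I(p)$, that makes $I(q)\subseteq I(p)$ not merely sufficient but also necessary for the inclusion of bgo sets. This is a small instance of the general pattern later captured by Theorem~\ref{denotational-main:thm}.
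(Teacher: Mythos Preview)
Your proof is correct and follows essentially the same approach as the paper: both directions hinge on the fact that the root label of any bgo in $\textit{BGO}_I(r)$ is forced to be $I(r)$, together with the monotonicity of the transition relation under $+$. The only cosmetic difference is that for the forward implication the paper argues by contrapositive (if $I(q)\not\subseteq I(p)$ then the root labels differ and \emph{no} bgo transfers), whereas you argue directly using the witness $\langle I(p),\emptyset\rangle$; the underlying idea is identical.
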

\begin{proof}
$(\Leftarrow)$ Since $I(p+q) = I(p)$, the root of the bgo's is the same for
both processes and obviously $p+q$ has all the observations of $p$.

$(\Rightarrow)$ If $I(q) \not\subseteq I(p)$, then $I(p)\neq I(p+q)$ and
then no bgo of $p$ is a bgo of $p+q$ because the roots of the
observations of both processes are different.
\end{proof}

The fact, that we now prove, that the observational semantics 
$\textit{BGO}_N(p)$ can be defined in a compositional way, is an important
property that will simplify the proofs of many of their properties.

\begin{thm}\label{denotational-aux:thm}
Let $L$ be a function used as a local observation function and let us also
denote by $L$ the range of $L$, as done in
Definition~\ref{local-observations:def}. 
If there exist semantic functions $+^L: L\times L\to L$ and $a^L: L \to L$
satisfying
$L(a p) = a^L L(p)$ and $L(p+q) = L(p) +^L L(q)$, then:
\begin{iteMize}{$\bullet$}
\item $\textit{BGO}_N(ap) = \{\langle a^LL(p), \{(a,\textit{bgo})\mid
 \textit{bgo}\in B\}\rangle \mid B\subseteq \textit{BGO}_N(p)\}$.
\item $\textit{BGO}_N(p+q) = \{\langle L(p) +^L L(q), S_1\cup S_2\rangle\mid
 \langle L(p), S_1\rangle\in \textit{BGO}_N(p), 
 \langle L(p), S_2\rangle\in \textit{BGO}_N(q)\}$.
\end{iteMize}
\end{thm}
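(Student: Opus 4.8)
The plan is to unfold the recursive definition of $\textit{BGO}_N(\cdot)$ from Definition~\ref{bgo:def}(2) and substitute the hypotheses $L(ap)=a^L L(p)$ and $L(p+q)=L(p)+^L L(q)$. For the prefix case, note that if $p\tran{a}p'$ then necessarily $p' = p$ only when $p$ is of the form $ap$; more precisely, the only transitions of $ap$ are $ap\tran{a}p$, and there are no others. Hence in the clause
\[
\textit{BGO}_N(ap) = \{\langle L_N(ap), S\rangle \mid S\subseteq\{(b,\textit{bgo})\mid \textit{bgo}\in\textit{BGO}_N(p''),\ ap\tran{b}p''\}\}
\]
the inner set ranges exactly over pairs $(a,\textit{bgo})$ with $\textit{bgo}\in\textit{BGO}_N(p)$, so every admissible $S$ is of the form $\{(a,\textit{bgo})\mid\textit{bgo}\in B\}$ for some $B\subseteq\textit{BGO}_N(p)$. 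Combining this with $L_N(ap)=a^L L(p)$ (writing $L$ for $L_N$ as in the statement) gives the first displayed identity.

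For the choice case, I would use the operational rules in Figure~\ref{fig:OperationalSemanticsBCCSP}: the transitions of $p+q$ are exactly $p+q\tran{a}r$ whenever $p\tran{a}r$ or $q\tran{a}r$. Therefore
\[
\{(a,\textit{bgo})\mid\textit{bgo}\in\textit{BGO}_N(r),\ p+q\tran{a}r\}
= \{(a,\textit{bgo})\mid\textit{bgo}\in\textit{BGO}_N(r),\ p\tran{a}r\}\cup\{(a,\textit{bgo})\mid\textit{bgo}\in\textit{BGO}_N(r),\ q\tran{a}r\},
\]
and a subset $S$ of this union splits (non-uniquely) as $S_1\cup S_2$ with $S_1$ a subset of the first set and $S_2$ of the second. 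Wrapping $S_1$ with the common root yields an element $\langle L(p),S_1\rangle\in\textit{BGO}_N(p)$, and similarly for $S_2$; conversely any such pair $(S_1,S_2)$ produces an admissible $S=S_1\cup S_2$. Using $L_N(p+q)=L(p)+^L L(q)$ for the root then gives the second identity. (Note the statement writes $\langle L(p),S_2\rangle\in\textit{BGO}_N(q)$, which is consistent because $L(p)=L(q)$ whenever the inclusion-based ordering is to be well behaved; more precisely, one reads $S_2$ as ranging over second components of bgo's of $q$, whose root is $L(q)$, and the displayed formula keeps the same symbol only by a harmless abuse — I would either note $L(q)$ explicitly or remark that the root is irrelevant to the set $S_2$ of children.)

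The proof is essentially a direct computation, so there is no deep obstacle; the one point requiring a little care is the \emph{non-uniqueness} of the decomposition $S=S_1\cup S_2$ in the choice case — one must check that the right-hand side of the claimed identity is a set (so repetitions collapse) and that every $\langle L(p)+^L L(q),S\rangle$ on the left is hit by \emph{some} valid pair, not that the pair is unique. A second, even smaller, subtlety is that $B$, $S_1$, $S_2$ are allowed to be empty (covering the leaf clause $\langle l,\emptyset\rangle$), which is exactly what makes both formulas reduce correctly at the base case. I would dispatch both identities by a single routine double-inclusion argument, citing only Definition~\ref{bgo:def}, Figure~\ref{fig:OperationalSemanticsBCCSP}, and the hypothesized equations on $L$.
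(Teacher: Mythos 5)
Your proposal is correct and follows essentially the same route as the paper's own proof: the prefix case is read off directly from the definition of $\textit{BGO}_N(ap)$, and the choice case uses exactly the observation that $p+q\tran{a}r$ iff $p\tran{a}r$ or $q\tran{a}r$, splitting the children of the root as $S_1\cup S_2$. Your side remarks (the harmless abuse $\langle L(p),S_2\rangle$ versus $\langle L(q),S_2\rangle$, the non-uniqueness of the decomposition, and the empty-set base case) are accurate but do not change the argument.
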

\begin{proof}
The first equality  is immediate by definition of $\textit{BGO}_N(ap)$.
As for the second, we only need to realize that $p+q\tran{a}r$ iff
$p\tran{a} r$ or $q\tran{a}r$: then, the set of children of the root
labeled by $L_N(p+q)$ at any bgo $\in\textit{BGO}_N(ap)$ correspond to
the union of the two sets of children that contain some bgo's of processes
$p_i$ such as $p\tran{a}p_i$ (and then  $p+q\tran{a}p_i$) or $q_i$
such that  $q\tran{a}q_i$ (and then  $p+q\tran{a}q_i$).
Note that from the equalities above it follows that $\textit{BGO}_N(p)$ can
be computed compositionally.
\end{proof}

In particular, $\textit{BGO}_N(p)$ is compositional for any of the constraints
considered in Definition~\ref{local-observations:def}.

\begin{prop}
For $N\in\{U, I, C, T, S\}$, $L_N$ can be defined in a compositional
way over the terms in BCCSP.
\end{prop}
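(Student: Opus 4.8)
The plan is to reduce the statement to Theorem~\ref{denotational-aux:thm}: for each constraint $N$ it suffices to exhibit functions $a^L\colon L_N\to L_N$ (one for every $a\in\act$) and $+^L\colon L_N\times L_N\to L_N$ with $L_N(ap)=a^L L_N(p)$ and $L_N(p+q)=L_N(p)+^L L_N(q)$ for all BCCSP processes $p,q$; the theorem then also supplies the compositional clauses for $\textit{BGO}_N$. I would treat the five constraints $U$, $I$, $C$, $T$, $S$ one at a time, the first four being direct computations from the operational rules of Figure~\ref{fig:OperationalSemanticsBCCSP} and only $N=S$ requiring an actual argument.

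For $N=U$ the range $L_U=\{\cdot\}$ is a singleton, so $a^L(\cdot)=\cdot$ and $\cdot+^L\cdot=\cdot$ are forced and trivially correct. For $N=I$, since $\init(ap)=\{a\}$ and $\init(p+q)=\init(p)\cup\init(q)$, I take $a^L(X)=\{a\}$ (a constant map) and $X+^LY=X\cup Y$. For $N=C$, from $\init(ap)\neq\emptyset$ we get $L_C(ap)=\textit{false}$, so $a^L$ is the constant $\textit{false}$, while $\init(p+q)=\emptyset$ iff $\init(p)=\init(q)=\emptyset$, whence $L_C(p+q)=L_C(p)\wedge L_C(q)$ and I put $b_1+^Lb_2=b_1\wedge b_2$. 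For $N=T$, using $T(ap)=\{\varepsilon\}\cup\{a\sigma\mid\sigma\in T(p)\}$ and $T(p+q)=T(p)\cup T(q)$, I take $a^L(X)=\{\varepsilon\}\cup\{a\sigma\mid\sigma\in X\}$ and $X+^LY=X\cup Y$. In each of these four cases the two required identities hold by unfolding the definitions of $\init$, $T$ and $\tran{a}$, and the operations visibly map into $L_N$.

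The case $N=S$ is the delicate one: here $L_S(p)=\lsem p\rsem_S$ is a $\equiv_S$-equivalence class, and there is no syntactic formula for the class of $ap$ or of $p+q$ purely in terms of those of $p$ and $q$. I would simply define $a^L(\lsem p\rsem_S)=\lsem ap\rsem_S$ and $\lsem p\rsem_S+^L\lsem q\rsem_S=\lsem p+q\rsem_S$, so that the content of the case is to verify that these are well defined, i.e.\ independent of the chosen representatives. This is precisely the statement that $\equiv_S$ is a congruence for prefixing and choice, which follows because the simulation preorder $\sqsubseteq_S$ is a behavior preorder in the sense of Definition~\ref{def:ProcessPreorder} (coarser than bisimilarity and a precongruence for $a\cdot$ and $+$): hence $p\equiv_S p'$ and $q\equiv_S q'$ give $ap\equiv_S ap'$ and $p+q\equiv_S p'+q'$. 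Once well-definedness is secured, $L_S(ap)=\lsem ap\rsem_S=a^L L_S(p)$ and $L_S(p+q)=\lsem p+q\rsem_S=L_S(p)+^L L_S(q)$ hold by construction, and both operations have range in $L_S=\{\lsem r\rsem_S\mid r\in\textit{BCCSP}\}$. Applying Theorem~\ref{denotational-aux:thm} to each of $U,I,C,T,S$ then yields the proposition, and with it the compositionality of $\textit{BGO}_N$ for all five.

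I expect the well-definedness step for $N=S$ to be the main obstacle; the other four constraints are routine. The one point to nail down there is that $\sqsubseteq_S$ is indeed a precongruence for $+$ and $a\cdot$, but this is a standard fact about similarity on BCCSP and is used implicitly elsewhere in the paper, so no real difficulty is anticipated.
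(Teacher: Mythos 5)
Your proof is correct, and for $N\in\{U,I,C,T\}$ it is essentially the paper's argument made explicit: the paper simply notes that $I(ap)=\{a\}$, $I(p+q)=I(p)\cup I(q)$, that the $C$ and $T$ cases are analogous/well-known, and that $U$ is degenerate, which is exactly the content of your concrete definitions of $a^L$ and $+^L$. The one place where you genuinely diverge is $N=S$. The paper handles similarity by appealing to Theorem~\ref{denotational-aux:thm} together with Theorem~\ref{denotational-main:thm}: since $\textit{BGO}_U$ is computed compositionally and, by the characterization theorem, $\textit{BGO}_U(p)=\textit{BGO}_U(q)$ iff $p\equiv_S q$, the set $\textit{BGO}_U(p)$ itself serves as a concrete, compositionally computable representative of $\lsem p\rsem_S$. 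You instead define the operations directly on the quotient, $a^L(\lsem p\rsem_S)=\lsem ap\rsem_S$ and $\lsem p\rsem_S+^L\lsem q\rsem_S=\lsem p+q\rsem_S$, and discharge well-definedness via the fact that $\sqsubseteq_S$ is a precongruence for prefix and choice (so $\equiv_S$ is a congruence). Both routes rest on the same underlying fact; yours is more elementary and self-contained (no detour through branching observations), while the paper's has the advantage of producing an explicit denotational representation of the equivalence class rather than the abstract quotient, which is what is actually manipulated later when $L_S$ is fed into the $\textit{BGO}_S$ machinery. The only point you should make fully explicit is the precongruence of $\sqsubseteq_S$ for $+$ and $a\cdot$ on BCCSP, which is standard and is indeed assumed elsewhere in the paper (it is what makes $S$ an admissible constraint in the sense of Definition~\ref{def:ProcessPreorder}), so there is no gap.
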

\begin{proof}
The result for $U$ is obvious since it is a degenerate semantics that 
identifies all processes.
By Theorem~\ref{denotational-aux:thm} and 
Theorem~\ref{denotational-main:thm} below we can conclude that the simulation 
semantics can indeed be denotationally defined.
The result for traces is well-known, while $I$ and $C$ can be easily
defined denotationally since $I(ap) = \{a\}$ and $I(p+q) = I(p)\cup I(q)$.
\end{proof}

Now we show that bgo's characterize $N$-simulation semantics in all cases.

\begin{thm}\label{denotational-main:thm}
For all $N\in\{U, I, C, T, S\}$ and any two processes $p$ and $q$,
$p\sqsubseteq_\mathit{NS} q$ iff $p\leq^b_N q$.
\end{thm}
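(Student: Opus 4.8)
The plan is to prove the two implications separately. In both of them the only feature of the constraint $N$ that is used is the one emphasized just before the statement: the local observation function faithfully represents $N$, i.e.\ $L_N(p)=L_N(q)\iff pNq$, which holds for every $N\in\{U,I,C,T,S\}$. Everything else is generic.

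First I would do the easy direction, $p\sqsubseteq_\mathit{NS}q\Rightarrow p\leq^b_N q$. Let $S_N$ be an $N$-constrained simulation with $pS_Nq$; I claim that $p'S_Nq'$ implies $\textit{BGO}_N(p')\subseteq\textit{BGO}_N(q')$, which I prove by induction on the height of a tree $\textit{bgo}\in\textit{BGO}_N(p')$. Write $\textit{bgo}=\langle L_N(p'),S\rangle$. Since $S_N$ is $N$-constrained, $p'Nq'$, hence $L_N(p')=L_N(q')$, so the roots coincide. For each branch $(a,\textit{bgo}')\in S$ pick a witness $p'\tran{a}p''$ with $\textit{bgo}'\in\textit{BGO}_N(p'')$; the simulation clause gives $q'\tran{a}q''$ with $p''S_Nq''$, and the induction hypothesis gives $\textit{bgo}'\in\textit{BGO}_N(q'')$. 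Thus $S$ is a legal set of children for $q'$ and $\textit{bgo}=\langle L_N(q'),S\rangle\in\textit{BGO}_N(q')$. Applying this to $pS_Nq$ yields $p\leq^b_N q$.

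For the converse I would introduce, for each process $r$, its \emph{canonical observation} $\widehat r$, defined recursively by $\widehat r=\langle L_N(r),\{(a,\widehat{r'})\mid r\tran{a}r'\}\rangle$; since every BCCSP process has finite depth and finitely many transitions, $\widehat r$ is a well-defined finite tree, and a trivial induction on $r$ shows $\widehat r\in\textit{BGO}_N(r)$. The key point is that $R=\{(p',q')\mid\widehat{p'}\in\textit{BGO}_N(q')\}$ is an $N$-constrained simulation: if $\widehat{p'}\in\textit{BGO}_N(q')$ then, as every tree in $\textit{BGO}_N(q')$ has root $L_N(q')$, we get $L_N(p')=L_N(q')$ and hence $p'Nq'$; and whenever $p'\tran{a}p''$, the branch $(a,\widehat{p''})$ occurs among the children of the root of $\widehat{p'}$, so membership of $\widehat{p'}$ in $\textit{BGO}_N(q')$ forces some $q'\tran{a}q''$ with $\widehat{p''}\in\textit{BGO}_N(q'')$, i.e.\ $(p'',q'')\in R$. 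Now if $p\leq^b_N q$ then $\widehat p\in\textit{BGO}_N(p)\subseteq\textit{BGO}_N(q)$, so $(p,q)\in R$ and therefore $p\sqsubseteq_\mathit{NS}q$.

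The routine parts are the two inductions (on tree height, and on process structure); the only step requiring genuine attention is the use of the representation property $L_N(p)=L_N(q)\iff pNq$ — this is exactly what makes the root labels of the bgo's transport the constraint $N$ in both directions — together with the remark that $\widehat r$ already encodes the full branching behaviour of $r$, so realizing $\widehat r$ inside $\textit{BGO}_N(q)$ is as strong as having $q$ $N$-simulate $r$. For genuinely infinite processes $\widehat r$ would no longer be finite and one would instead work with the family of finite truncations of $\widehat r$ and a compactness (König) argument, but for BCCSP this subtlety does not arise.
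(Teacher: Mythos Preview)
Your proof is correct. The forward direction is the same as the paper's (structural/height induction using $L_N(p)=L_N(q)\iff pNq$).

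For the converse you and the paper pick different candidate simulations. The paper takes $R=\{(p,q)\mid \textit{BGO}_N(p)\subseteq\textit{BGO}_N(q)\}$ and argues that, given $p\tran{a}p'$, from $\{\langle L_N(p),\{(a,\textit{bgo})\}\rangle\mid \textit{bgo}\in\textit{BGO}_N(p')\}\subseteq\textit{BGO}_N(q)$ one obtains a single $q\tran{a}q'$ with $\textit{BGO}_N(p')\subseteq\textit{BGO}_N(q')$; this ``therefore'' step is compressed and, if one unfolds it, one is led precisely to singling out a maximal observation of $p'$. Your relation $R'=\{(p',q')\mid \widehat{p'}\in\textit{BGO}_N(q')\}$ builds that choice in from the start, so the simulation clause becomes a one-line reading of a child $(a,\widehat{p''})$ inside $\textit{BGO}_N(q')$. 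A posteriori $R=R'$, but your route makes the transfer step transparent without any appeal to finiteness of the successor set; the paper's route has the advantage of naming directly the relation one ultimately cares about. Your closing remark on infinite processes (truncations plus K\H{o}nig) is the right diagnosis of where the finiteness of BCCSP is used.
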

\begin{proof}
$(\Rightarrow)$
Let $p=\sum\sum a p_a^i$ and $q= \sum\sum a q_a^j$; if 
$p\sqsubseteq_\mathit{NS} q$, then $pNq$ and therefore $L_N(p) = L_N(q)$.
Now we proceed by induction on $p$.
If $p\equiv \cero$ the result is trivial.
Otherwise, for every $a\in I(p)$ such that $p\tran{a}p'$ there exists 
$q\tran{a}q'$ such that $p'\sqsubseteq_\mathit{NS} q'$.
By induction hypothesis $\textit{BGO}_N(p')\subseteq \textit{BGO}_N(q')$
from where, by the definition of $\textit{BGO}_N(p)$, it follows that
$\textit{BGO}_N(p)\subseteq\textit{BGO}_N(q)$.

$(\Leftarrow)$
Let us show that the relation 
$R= \{(p,q)\mid \textit{BGO}_N(p)\subseteq \textit{BGO}_N(q)\}$ is an 
$N$-simulation.
If $(p,q)\in R$, then 
$L_N(p)=L_N(q)$ because $\langle L_N(p),\emptyset\rangle\in \textit{BGO}_N(q)$
and thus $pNq$.
Now, for each $p\tran{a} p'$ we have
$\{\langle L_N(p), \{(a,\textit{bgo})\}\rangle\mid 
 \textit{bgo}\in\textit{BGO}_N(p')\}\subseteq\textit{BGO}(q)$
and therefore there must exist some $q\tran{a} q'$ such that
$\textit{BGO}_N(p')\subseteq \textit{BGO}_N(q')$, so that $(p',q')\in R$.
\end{proof}

Note that for this result to hold it is only required that the local observation 
function $L_N$ satisfies $pNq$ iff $L_N(p) = L_N(q)$.
That is, $L_N$ must compute a concrete representative of the equivalence 
class defined by $N$ and this stresses again the interest of using behavior 
equivalences $N$ as constraints for the definition of constrained simulations.
Let us recall that, in principle, any behavior preorder could be used as
such a constraint.
For instance, the predicate $I_\supseteq$ defined by $pI_\supseteq q$ iff
$I(q)\subseteq I(p)$ could be used to define $I_\supseteq$-simulations (which in fact 
coincide with $I$-simulations).
But from $I(q)\subseteq I(p)$ we cannot conclude that $L_N(p)=L_N(q)$ and,
hence, either a more complicated characterization of $\sqsubseteq_\mathit{NS}$ 
in terms of bgo's or an additional argument to show that
$p\sqsubseteq_{I_\subseteq} q$ implies $I(p)\subseteq I(q)$ would be needed.
And although this is obvious for a constraint as simple as $I$, or even
$T$ or $S$, it could be far from trivial for other, more complex constraints: 
therefore, it is always advisable to consider equivalence behaviors as 
constraints.

\begin{cor}
For any constraint $N$ that is a behavior equivalence, whenever we have as
local observation function $L_N$ the quotient function 
$L_N(p) = \lsem p\rsem_N$ or any concrete representation of it satisfying
$L_N(p) = L_N(q)\iff pNq$, then 
$p\sqsubseteq_\mathit{NS} q$ iff $\textit{BGO}_N(p)\subseteq \textit{BGO}_N(q)$.
\end{cor}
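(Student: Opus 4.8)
The plan is to observe that the proof of Theorem~\ref{denotational-main:thm} used the local observation functions only through the single property $L_N(p)=L_N(q)\iff p\,N\,q$ --- which is exactly what the remark following that theorem isolates --- and that by hypothesis our $L_N$ (the quotient, or any faithful representation of it) satisfies precisely this. Hence I would simply replay that proof, checking at each appearance of $L_N$ that nothing more than representability is invoked: in particular no compositionality of $L_N$, and nothing specific to the five constraints $U,I,C,T,S$.

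For the implication $p\sqsubseteq_\mathit{NS} q\Rightarrow\textit{BGO}_N(p)\subseteq\textit{BGO}_N(q)$ I would argue by structural induction on the finite term $p$. From $p\sqsubseteq_\mathit{NS} q$ we get $p\,N\,q$ by Definition~\ref{def:constrainedsimulation}, hence $L_N(p)=L_N(q)$, so every bgo of $p$ and every bgo of $q$ carry the same root label. Given $\textit{bgo}\in\textit{BGO}_N(p)$, each of its top-level branches $(a,\textit{bgo}')$ arises from some $p\tran{a}p'$ with $\textit{bgo}'\in\textit{BGO}_N(p')$; the simulation clause supplies $q\tran{a}q'$ with $p'\sqsubseteq_\mathit{NS}q'$, and the induction hypothesis gives $\textit{BGO}_N(p')\subseteq\textit{BGO}_N(q')$, so $\textit{bgo}'\in\textit{BGO}_N(q')$. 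Collecting these branches under the common root and applying the clause defining $\textit{BGO}_N(q)$ yields $\textit{bgo}\in\textit{BGO}_N(q)$.

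For the converse I would show that $R=\{(p,q)\mid\textit{BGO}_N(p)\subseteq\textit{BGO}_N(q)\}$ is an $N$-constrained simulation. If $(p,q)\in R$, then $\langle L_N(p),\emptyset\rangle\in\textit{BGO}_N(p)\subseteq\textit{BGO}_N(q)$ forces $L_N(p)=L_N(q)$, and representability turns this into $p\,N\,q$. For the transfer condition, each $p\tran{a}p'$ and each $\textit{bgo}'\in\textit{BGO}_N(p')$ produce $\langle L_N(p),\{(a,\textit{bgo}')\}\rangle\in\textit{BGO}_N(p)\subseteq\textit{BGO}_N(q)$, so some $a$-successor of $q$ realizes $\textit{bgo}'$; exactly as in Theorem~\ref{denotational-main:thm} one then extracts a single $q\tran{a}q'$ with $\textit{BGO}_N(p')\subseteq\textit{BGO}_N(q')$, i.e.\ $(p',q')\in R$. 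Hence $p\sqsubseteq_\mathit{NS}q$ by Definition~\ref{def:constrainedsimulation}.

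The only point needing attention is bookkeeping: one must confirm that neither direction appeals to compositionality of $L_N$ (that was the concern of Theorem~\ref{denotational-aux:thm}, irrelevant here) nor to the precongruence half of ``behavior equivalence''; the argument works for any equivalence $N$ equipped with a faithful representation $L_N$, which is precisely the corollary's hypothesis. I do not expect a genuine obstacle beyond this verification.
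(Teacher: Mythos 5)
Your proposal is correct and matches the paper's intent exactly: the paper gives no separate proof for this corollary, deriving it from the observation (stated just before it) that the proof of Theorem~\ref{denotational-main:thm} uses $L_N$ only through the property $L_N(p)=L_N(q)\iff pNq$, and your replay of both directions of that proof under this sole hypothesis is precisely that argument made explicit.
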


The results above bring forward the fact that despite the resemblance between
the bgo's of a process and its computation tree, the possibility of
mixing several computations in a single branching observation makes it
possible to identify non-bisimilar processes by their sets of branching
observations.

\subsection{Linear observations and linear time semantics}\label{linear:sec}

We introduce the linear observations of a process as a particular (degenerate)
case of branching observations: those with a linear structure.

\begin{defi}\label{lbo:def}
\hfill
\begin{enumerate}[(1)]
\item
The set $\textit{LGO}_N$ of \emph{linear general observations} 
(lgo for short)
for a local observer $L_N$ is the subset of $\textit{BGO}_N$ defined as:
\begin{iteMize}{$\bullet$}
\item $\langle l,\emptyset\rangle\in \textit{LGO}_N$ for each $l\in L_N$.
\item $\langle l, \{(a, \textit{lgo})\}\rangle$, whenever 
 $a\in A$ and $\textit{lgo}\in\textit{LGO}_N$.
\end{iteMize}

\item
The set of \emph{linear general observations} 
of a process $p$ with respect to the
local observer $L_N$ is $\textit{LGO}_N(p) = \textit{BGO}_N(p)\cap 
\textit{LGO}_N$.
\end{enumerate}
\end{defi}

\noindent Since lgo's are linear they can be presented as traces, avoiding the sets
of descendants in the bgo's.
Therefore, we will consider them as elements of the set
$L_N\times (\textit{Act}\times L_N)^*$.

It is also clear that the set of linear observations can be defined 
recursively without resorting to branching observations.

\begin{defi}
The set $\textit{LGO}_N(p)$ of linear general observations of a 
process $p$ is recursively defined by
\[
\textit{LGO}_N(p) = \{\langle L_N(p)\rangle\}\cup 
               \{\langle L_N(p),a\rangle\circ\textit{lgo}\mid p\tran{a} p',\,
                 \textit{lgo}\in\textit{LGO}_N(p')\}.
\]
\end{defi}

We can also compute $\textit{LGO}_N(p)$ in a compositional way.

\begin{prop}
Let $L$ be a local observation function such that there exist
semantic functions $+^L: L_N\times L_N\to L_N$ and $a^L: L_N \to L_N$
satisfying
$L(a p) = a^L L(p)$ and $L(p+q) = L(p) +^L L(q)$.
Then:
\begin{iteMize}{$\bullet$}
\item $\textit{LGO}_N(ap) = \{\langle a^L L(p) \rangle\} \cup 
 \{\langle a^LL(p), a\rangle \circ \textit{LGO}_N(p)\}$.
\item $\textit{LGO}_N(p+q) = \{\langle L(p) +^L L(q)\rangle\circ t\mid
       \langle L(p)\rangle\circ t\in \textit{LGO}_N(p) \textrm{ or } \\
       \hphantom{\textit{LGO}_N(p+q) = \{\langle L(p)+^LL(q)\rangle\circ t\mid{}}
 \langle L(p)\rangle\circ t\in \textit{LGO}_N(q)\}$.
\end{iteMize}
\end{prop}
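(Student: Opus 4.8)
The plan is to argue directly from the recursive definition of $\textit{LGO}_N(p)$ stated just above, following exactly the pattern of the proof of Theorem~\ref{denotational-aux:thm} but specialised to the linear setting, in which every observation has at most one child at each node. No new idea is required beyond that theorem; the whole content is a careful account of how the root label of an observation is transformed by the prefix and choice operators, using the hypotheses $L(ap)=a^L L(p)$ and $L(p+q)=L(p)+^L L(q)$ (I write $L_N$ for the given $L$, as in the recursive definition).

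For the first equality I would use that $ap$ has exactly one outgoing transition, $ap\tran{a}p$, and no transition on any other action; substituting this into the recursive definition of $\textit{LGO}_N$ gives $\textit{LGO}_N(ap)=\{\langle L_N(ap)\rangle\}\cup\{\langle L_N(ap),a\rangle\circ\textit{lgo}\mid\textit{lgo}\in\textit{LGO}_N(p)\}$, and rewriting $L_N(ap)=a^L L_N(p)$ yields the claimed description.

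For the second equality the key observation is the usual one, $p+q\tran{a}r$ iff $p\tran{a}r$ or $q\tran{a}r$. Unfolding the recursive definition of $\textit{LGO}_N(p+q)$ and splitting the transition quantifier accordingly presents $\textit{LGO}_N(p+q)$ as the singleton $\{\langle L_N(p+q)\rangle\}$ together with all $\langle L_N(p+q),a\rangle\circ\textit{lgo}$ coming from a transition of $p$ and all those coming from a transition of $q$. On the other hand, the recursive definition applied to $p$ alone shows that the possible tails $t$ with $\langle L_N(p)\rangle\circ t\in\textit{LGO}_N(p)$ are precisely the empty tail and the strings $\langle a\rangle\circ\textit{lgo}$ with $p\tran{a}p'$ and $\textit{lgo}\in\textit{LGO}_N(p')$; likewise for $q$. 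Relabelling the root of each of these two families by $L_N(p)+^L L_N(q)=L_N(p+q)$ and taking the union then reproduces exactly the previous description of $\textit{LGO}_N(p+q)$ --- the empty-tail case being covered (redundantly, which is harmless for sets) and each nonempty case being matched according to whether its first transition stems from $p$ or from $q$. This gives the equality and, in passing, shows that $\textit{LGO}_N$ is computed compositionally from $L$, $a^L$ and $+^L$.

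The only point demanding care --- and it is the same minor point as in Theorem~\ref{denotational-aux:thm} --- is the root relabelling: an lgo of $p+q$ is never literally an lgo of $p$ or of $q$, since its root carries $L_N(p+q)$ rather than $L_N(p)$ or $L_N(q)$, so the equality must be phrased as \emph{same tail, rewritten root}, which is exactly what $\langle L_N(p)+^L L_N(q)\rangle\circ t$ with $\langle L_N(p)\rangle\circ t\in\textit{LGO}_N(p)$ encodes. Once this bookkeeping is kept straight there is no genuine obstacle; the argument is a direct unfolding of the recursive definitions together with the two hypothesised semantic equations.
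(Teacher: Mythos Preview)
Your proposal is correct and follows essentially the same approach as the paper, which simply says ``Just like that of Theorem~\ref{denotational-aux:thm}.'' You have in fact spelled out the details more carefully than the paper does, including the root-relabelling point, but the underlying argument---unfold the recursive definition of $\textit{LGO}_N$, use $ap\tran{a}p$ for the prefix case and $p+q\tran{a}r$ iff $p\tran{a}r$ or $q\tran{a}r$ for the choice case, and rewrite the root label via the hypothesised semantic equations---is exactly what the paper intends.
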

\begin{proof}
Just like that of Theorem~\ref{denotational-aux:thm}.
\end{proof}

Obviously, for $N= U$ we have that $\textit{LGO}_U$ is isomorphic to $\textit{Act}^*$ 
and thus $\textit{LGO}_U(p) = \textit{Traces}(p)$.
By contrast, for $N = I$, $\textit{LGO}_I(p)$ is the set of ready traces of
$p$, $\textit{ReadyTraces}(p)$.

Set inclusion of linear observations with respect to a local
observer $L_N$ gives us the preorder defining the corresponding semantics.

\begin{defi} \label{orden_lN}
A process $p$ is less than or equal to $q$ with respect to the linear
observations generated by $L_N$, denoted $p\leq^l_N q$, if
$\textit{LGO}_N(p)\subseteq \textit{LGO}_N(q)$.
We will denote the corresponding equivalence by $=^l_N$.
\end{defi}

\begin{prop}\label{traces-rtraces:prop}
(1) ${\leq^l_U} = {\sqsubseteq_T}$; (2) ${\leq^l_I} = {\sqsubseteq_{RT}}$;
(3) ${\leq^l_C} = {\sqsubseteq_{CT}}$.
\end{prop}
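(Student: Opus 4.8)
The plan is to reduce, in each of the three cases, the identity $\leq^l_N = {\sqsubseteq_X}$ to identifying the set $\textit{LGO}_N(p)$ with the ``linear data'' on which the van Glabbeek preorder $\sqsubseteq_X$ of~\cite{Gla01} is built. For (1), it was already observed above that $\textit{LGO}_U$ is isomorphic to $\textit{Act}^*$ and $\textit{LGO}_U(p)=\textit{Traces}(p)$; since $p\sqsubseteq_T q$ is by definition $\textit{Traces}(p)\subseteq\textit{Traces}(q)$, we get $\leq^l_U = {\sqsubseteq_T}$ at once. For (2), likewise $\textit{LGO}_I(p)=\textit{ReadyTraces}(p)$ (identifying each linear observation with the ready trace it records), and $p\sqsubseteq_{RT} q$ is ready-trace inclusion, so $\leq^l_I = {\sqsubseteq_{RT}}$.

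For (3) a short additional argument is needed, since $\textit{LGO}_C(p)$ is not literally the set of completed traces. First I would note that every $w=\langle l_0,a_1,l_1,\dots,a_n,l_n\rangle\in\textit{LGO}_C(p)$ comes from a computation $p\tran{a_1}p_1\cdots\tran{a_n}p_n$ with $l_i=L_C(p_i)$ and $l_0=L_C(p)$, and that any state possessing an outgoing transition has $L_C$-value $\textit{false}$; hence $l_0=\dots=l_{n-1}=\textit{false}$ is forced, and only $l_n$ is informative: $l_n=\textit{true}\iff I(p_n)=\emptyset$. Thus $\textit{LGO}_C(p)$ is interdefinable with the pair $(\textit{Traces}(p),\textit{CompletedTraces}(p))$, which is exactly the content of $\sqsubseteq_{CT}$. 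I would then prove both inclusions. For $\leq^l_C \,\subseteq\, {\sqsubseteq_{CT}}$: a trace (resp.\ completed trace) of $p$ lifts to some $w\in\textit{LGO}_C(p)$ (resp.\ one ending in $\textit{true}$); by hypothesis $w\in\textit{LGO}_C(q)$, which forces a matching computation of $q$ (resp.\ one ending in a deadlocked state), yielding trace- and completed-trace inclusion. For ${\sqsubseteq_{CT}}\,\subseteq\,\leq^l_C$: take $w\in\textit{LGO}_C(p)$ from a run ending at $p_n$; if $l_n=\textit{true}$ then $a_1\cdots a_n\in\textit{CompletedTraces}(p)\subseteq\textit{CompletedTraces}(q)$ yields a $q$-run ending in a deadlocked state that realises $w$; if $l_n=\textit{false}$ then $p_n\tran{b}$ for some $b\in\act$, so $a_1\cdots a_n b\in\textit{Traces}(p)\subseteq\textit{Traces}(q)$ yields a $q$-run $q\tran{a_1}q_1\cdots\tran{a_n}q_n\tran{b}q_{n+1}$ in which each of $q,q_1,\dots,q_n$ has non-empty offer, so $L_C$ reads $\textit{false}$ along it exactly as in $w$; the boundary case $n=0$ is handled separately, where $w=\langle l_0\rangle$ with $l_0=L_C(p)$ and the empty trace $\varepsilon$ — a completed trace of $p$ when $l_0=\textit{true}$, and a trace of $p$ always — transfers to $q$ by the relevant inclusion, forcing $L_C(q)=l_0$.

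The only genuinely non-routine point is the subcase $l_n=\textit{false}$ of the second inclusion in (3): one must recognise that the local observation ``the reached state is not deadlocked'' is witnessed only through a one-step continuation, so transferring it to $q$ needs the trace-inclusion half of $\sqsubseteq_{CT}$ and not merely completed-trace inclusion — which is precisely why the completed trace preorder is defined with both components. Everything else is the mechanical transfer of computations between the two transition systems together with the bookkeeping of $L_N$-labels, which for $N=U$ and $N=I$ are just $\cdot$ and the initial offers, and for $N=C$ the Boolean emptiness-of-offer flag discussed above.
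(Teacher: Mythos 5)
Your proof is correct and follows essentially the same route as the paper's: cases (1) and (2) are the direct identifications $\textit{LGO}_U(p)=\textit{Traces}(p)$ and $\textit{LGO}_I(p)=\textit{ReadyTraces}(p)$, and for (3) the paper makes the same observation you do---every $\textit{lgo}_C$ is a completed trace flagged \textit{true} at the end or a strict prefix of one flagged \textit{false}, with all intermediate labels forced to \textit{false}---stating it as an explicit description of $\textit{LGO}_C(p)$ in terms of $\textit{CompleteTraces}(p)$ and declaring the rest trivial. The only cosmetic difference is that the paper folds the trace component into the completed traces (every trace of a finite BCCSP term being a prefix of a completed trace), whereas you keep the pair $(\textit{Traces},\textit{CompletedTraces})$ and explicitly invoke trace inclusion for the \textit{false}-ended observations; both amount to the same bookkeeping.
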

\begin{proof}
It is trivial, since $\textit{LGO}_U(p) = \textit{Traces}(p)$,
$\textit{LGO}_I(p) = \textit{ReadyTraces}(p)$, and
$\textit{LGO}_C(p) =
\{
(\textit{false},a_1)\circ{}\dots{}\circ (\textit{false},a_n,\textit{true}),
(\textit{false},a_1)\circ{}\dots{}\circ (\textit{false},a_i,\textit{false})\mid$
$a_1\dots a_n\in \textit{CompleteTraces}(p), i< n
\}$.
\end{proof}

\begin{prop}
For $N\in \{U, C, I, T, S\}$, if $p\sqsubseteq_\mathit{NS} q$ then 
$p\leq_N^l q$, but the converse is false in general.
\end{prop}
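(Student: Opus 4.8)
The plan is to split the statement into the implication, which follows immediately from results already established, and the failure of its converse, which I will settle by a single explicit counterexample.

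For the implication, recall that by Definition~\ref{lbo:def}(2) the linear observations of a process are cut out of the branching ones by intersection, $\textit{LGO}_N(p)=\textit{BGO}_N(p)\cap\textit{LGO}_N$. Hence $\textit{BGO}_N(p)\subseteq\textit{BGO}_N(q)$ gives at once $\textit{LGO}_N(p)\subseteq\textit{LGO}_N(q)$, i.e.\ $p\leq^l_N q$. Since Theorem~\ref{denotational-main:thm} identifies $p\sqsubseteq_\mathit{NS} q$ with $\textit{BGO}_N(p)\subseteq\textit{BGO}_N(q)$ for every $N\in\{U,C,I,T,S\}$, the first half of the proposition follows with no case distinction.

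For the converse I would exhibit processes that agree on all their linear observations but are ordered strictly by the corresponding constrained simulation. The canonical choice is $p=a(bc+bd)$ and $q=abc+abd$. First I would check that $p$ and $q$ have the same traces, the same completed traces and the same ready traces: the only $a$-successor of $p$ is $bc+bd$, the $a$-successors of $q$ are $bc$ and $bd$, and $I(bc+bd)=I(bc)=I(bd)=\{b\}$, so the sequence of initial offers recorded along any linear run is the same for both processes; by Proposition~\ref{traces-rtraces:prop} this yields $p=^l_N q$, hence $p\leq^l_N q$, for $N\in\{U,I,C\}$. Next I would show $p\not\sqsubseteq_\mathit{NS} q$: the transition $p\tran{a}bc+bd$ would have to be matched by some $q\tran{a}q'$ with $q'\in\{bc,bd\}$ and $(bc+bd)\,S_N\,q'$, but from $bc+bd$ one can perform $b$ and reach both a state with offer $\{c\}$ and a state with offer $\{d\}$, whereas the unique $b$-successor of $bc$ (resp.\ of $bd$) has offer $\{c\}$ (resp.\ $\{d\}$); so neither choice of $q'$ admits an $N$-constrained simulation onto $bc+bd$. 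Equivalently, $\textit{bgo}_1$ of Figure~\ref{bgos2:fig} lies in $\textit{BGO}_N(p)\setminus\textit{BGO}_N(q)$.

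The point I would be most careful about is the scope of this counterexample. For the richer local observers $T$ and $S$ the same pair is already separated by the \emph{linear} observations — after the initial $a$ the recorded local value is $T(bc+bd)=\{\varepsilon,b,bc,bd\}$ (resp.\ $\lsem bc+bd\rsem_S$), which $q$ cannot reproduce, so $p\not\leq^l_T q$ and $p\not\leq^l_S q$ and the pair witnesses nothing there. Since the statement asserts only that the converse fails \emph{in general}, it suffices to record it for one constraint; I would present it for $N=U$, where $\leq^l_U$ is the trace preorder and $\sqsubseteq_\mathit{US}$ the simulation preorder, and remark in passing that the same $p,q$ serve for $N\in\{I,C\}$ as well.
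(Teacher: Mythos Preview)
Your argument is correct and follows the same line as the paper: the forward direction via Theorem~\ref{denotational-main:thm} together with $\textit{LGO}_N(p)=\textit{BGO}_N(p)\cap\textit{LGO}_N$, and a single concrete counterexample for $N=U$ to refute the converse. The paper's own counterexample is the shallower pair $a(b+c)$ versus $ab+ac$, but yours works just as well and your additional remarks about which constraints it covers are accurate.

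One slip to fix: in your aside ``Equivalently, $\textit{bgo}_1$ of Figure~\ref{bgos2:fig} lies in $\textit{BGO}_N(p)\setminus\textit{BGO}_N(q)$'' you have the wrong witness. The observation $\textit{bgo}_1$ (two $a$-branches from the root) \emph{does} belong to $\textit{BGO}_I(a(bc+bd))$, because both of its $a$-subtrees are observations of the single successor $bc+bd$ and Definition~\ref{bgo:def} allows any subset of such pairs. The separating observation is $\textit{bgo}_2$ (one $a$-branch, then two $b$-branches): it lies in $\textit{BGO}_I(p)$ but not in $\textit{BGO}_I(q)$, since neither $bc$ nor $bd$ alone supports both $b$-continuations. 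Also note that Figure~\ref{bgos2:fig} is drawn for $N=I$; for $N=U$ the node labels collapse to $\cdot$ and you need to extend the observation one more level (recording the $c$- and $d$-transitions) to get a genuine separator. None of this affects your simulation argument, which stands on its own.
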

\begin{proof}
The implication follows from Theorem~\ref{denotational-main:thm}
and the fact that lgo's are just a particular case of bgo's.
To see that the converse is false in general consider $N=U$; we have ${\sqsubseteq_{US}} = 
{\sqsubseteq_S}$ and ${\leq_U^l} = {\sqsubseteq_T}$, and it is well-known that 
${\sqsubseteq_S} \not\subseteq{\sqsubseteq_T}$ since,
for instance, $a(b+c) \not\sqsubseteq_S ab + ac$, but $a(b+c) =_T ab+ac$.
\end{proof}

Therefore, by means of linear observations and set inclusion we can
characterize the orders that define some of the semantics in the spectrum which
are not simulation semantics.
However, there are still some other semantics for which a different way of
treating the linear observations is needed. We need to introduce some identifications
in the corresponding domain $\textit{LGO}_N$ to obtain their characterizations.

\begin{defi}
For $\calT, \calT'\subseteq\textit{LGO}_I$ we define the orders
$\leq^{l\supseteq}_I$, $\leq_I^{lf}$, and $\leq_I^{lf\supseteq}$ by:
\begin{iteMize}{$\bullet$}
\item $\calT\leq_I^{l\supseteq}\calT' \iff 
       \begin{array}[t]{l}
       \textrm{for all $X_0a_1X_1\dots X_n\in\calT$}\\
       \textrm{there is some $Y_0a_1Y_1\dots Y_n\in\calT'$
               with $X_i\supseteq Y_i$, for all $i\in 0..n$.}
       \end{array}$
\item $\calT\leq_I^{lf} \calT'\iff 
       \begin{array}[t]{l}
       \textrm{for all $X_0a_1X_1\dots X_n\in\calT$}\\
       \textrm{there is some $Y_0a_1Y_1\dots Y_n\in\calT'$
               with $X_n= Y_n$.}
       \end{array}$
\item $\calT\leq_I^{lf\supseteq}\calT' \iff
       \begin{array}[t]{l}
       \textrm{for all $X_0a_1X_1\dots X_n\in\calT$}\\
       \textrm{there is some $Y_0a_1Y_1\dots Y_n\in\calT'$ with $X_n\supseteq
         Y_n$.}
       \end{array}$
\end{iteMize}
Then, for each $\delta\in\{\supseteq, f, f{\supseteq}\}$ we write $p\leq_I^{l\delta} q$ if $\textit{LGO}_I(p)\leq_I^{l\delta}\textit{LGO}_I(q)$.
\end{defi}

Since the definition of $\leq_I^{lf}$ ignores all the intermediate ready sets
$X_i$ with $i<n$ and requires the final ready sets to coincide, it is obvious
that it defines the readiness preorder.
Let us now prove that the two semantics based on failures are also characterized
by our preorders $\leq_I^{lf\supseteq}$ and $\leq_I^{l\supseteq}$.

\begin{prop}\label{failures-ftraces:prop}
The preorder $\leq_I^{lf\supseteq}$ generates the failures preorder and 
$\leq_I^{l\supseteq}$ generates the failure trace preorder.
\end{prop}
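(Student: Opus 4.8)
The plan is to recall the standard definitions of the failures and failure trace preorders over BCCSP and to match them, trace by trace, against the definitions of $\leq_I^{lf\supseteq}$ and $\leq_I^{l\supseteq}$. Recall that, by Proposition~\ref{traces-rtraces:prop} and the surrounding discussion, $\textit{LGO}_I(p)$ is precisely the set of ready traces of $p$, i.e.\ the set of sequences $I(p_0)\,a_1\,I(p_1)\,a_2\,\dots\,a_n\,I(p_n)$ arising from computations $p = p_0 \tran{a_1} p_1 \tran{a_2} \cdots \tran{a_n} p_n$. A \emph{failure pair} of $p$ is a pair $(\alpha, Z)$ with $\alpha = a_1\dots a_n$ such that $p \Tran{\alpha} p'$ for some $p'$ with $I(p') \cap Z = \emptyset$, and $p \fs q$ iff every failure pair of $p$ is a failure pair of $q$. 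A \emph{failure trace} is a sequence $Z_0\,a_1\,Z_1\,\dots\,a_n\,Z_n$ of refusal sets interleaved with actions such that there is a computation $p = p_0 \tran{a_1} \cdots \tran{a_n} p_n$ with $I(p_i) \cap Z_i = \emptyset$ for all $i$; and $p \fts q$ iff every failure trace of $p$ is a failure trace of $q$.

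First I would prove the failures case. Given a ready trace $X_0 a_1 X_1 \dots a_n X_n \in \textit{LGO}_I(p)$, the condition ``there is some $Y_0 a_1 Y_1 \dots a_n Y_n \in \textit{LGO}_I(q)$ with $X_n \supseteq Y_n$'' must be shown equivalent, over all such ready traces, to failures inclusion. The point is that a failure pair $(a_1\dots a_n, Z)$ of $p$ corresponds exactly to a ready trace of $p$ ending with a final ready set $X_n$ disjoint from $Z$; and $(a_1\dots a_n, Z)$ is also a failure pair of $q$ iff $q$ has a ready trace with the same action sequence whose final ready set $Y_n$ is disjoint from $Z$. Since we may take $Z$ to be the complement of $X_n$, the inclusion of failure pairs translates into: for every ready trace of $p$ with final set $X_n$ there is a ready trace of $q$ with the same actions and final set $Y_n \subseteq X_n$. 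That is precisely $\leq_I^{lf\supseteq}$. I would spell out both directions of this correspondence carefully, using the fact that the set of failure pairs is downward closed in the refusal component, so it suffices to consider maximal refusal sets, which are exactly the complements of attainable final ready sets.

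The failure trace case is analogous but works at every position rather than only the last one: a failure trace $Z_0 a_1 \dots a_n Z_n$ of $p$ corresponds to a ready trace $X_0 a_1 \dots a_n X_n$ of $p$ with $X_i \cap Z_i = \emptyset$ for all $i$, and again, taking each $Z_i$ to be the complement of $X_i$, the containment of failure traces of $p$ in those of $q$ unfolds to: for every ready trace of $p$ there is one of $q$ with the same action sequence and pointwise smaller (i.e.\ contained) ready sets at every position — which is exactly $\leq_I^{l\supseteq}$. Here I would note that the compositional recursive description of $\textit{LGO}_I$ given above, together with downward closure of refusal sets, is what makes the ``take complements'' step valid: every combination of ready sets realized along some computation of $p$ is realized by $\textit{LGO}_I(p)$, and conversely. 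The main obstacle I anticipate is purely a matter of bookkeeping: being careful that ``for all ready traces of $p$ there exists a matching one of $q$'' genuinely captures failure (trace) inclusion and not something strictly finer or coarser — in particular checking that no information about the \emph{intermediate} ready sets leaks into the failures preorder (it does not, because a single failure pair only constrains the final refusal), and symmetrically that the failure trace preorder really does constrain all intermediate sets. Once this dictionary between (maximal) refusal sets and (attainable) ready sets is set up, both equalities of preorders follow directly.
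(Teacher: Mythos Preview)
Your proposal is correct and follows essentially the same approach as the paper: both arguments hinge on the dictionary between refusal sets and attainable ready sets via complementation, showing that a failure pair $(a_1\dots a_n,Z)$ corresponds to a ready trace with final ready set disjoint from $Z$, and then taking $Z$ to be the complement of $X_n$ to obtain the required $Y_n\subseteq X_n$. The paper likewise treats only the failures case in detail and declares the failure trace case analogous, just as you do.
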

\begin{proof}
The proof is based on the definition of initial failures of a process:
we say that $p$ rejects $X$ if and only if $X\cap I(p)=\emptyset$.
Then, $\langle \alpha, X\rangle$ is a failure of $p$ if  and only
if $p\Tran{\alpha} p'$ and $p'$ rejects $X$.
Using lgo's, for $\alpha = a_1\dots a_n$, $\langle \alpha, X\rangle$ is a 
failure of $p$ iff there exists $X_0a_1\dots X_n\in \calT$ such that
$X_n\cap X=\emptyset$.
Thus: 
\[
\begin{array}{rcl}
p\sqsubseteq_F p'
&\iff&\textit{Failures}(p)\subseteq\textit{Failures}(p') \\
&\iff&\langle \alpha, X\rangle \in \textit{Failures}(p')
    \ \textrm{for all}\ \langle \alpha, X\rangle \in \textit{Failures}(p)\\
&\iff&X_0a_1\dots X_n\in \textit{LGO}_I(p) \ \textrm{with}\  
    X_n\cap X =\emptyset \ \textrm{implies that there exists}\\
&&\hphantom{\textrm{iff}}\ Y_0a_1\dots Y_n\in \textit{LGO}_I(p') 
    \ \textrm{with}\ Y_n\cap X=\emptyset\,,
\end{array}
\]
and then $p\leq_I^{lf\supseteq} p'$ implies $p\sqsubseteq_F p'$.

Conversely, assume that $p\sqsubseteq_F p'$ and recall that
$p\leq_I^{lf\supseteq} p'$ iff for all $t= X_0a_1\dots X_n\in \textit{LGO}_I(p)$
there exists $Y_0a_1\dots Y_n\in \textit{LGO}_I(p')$ such that 
$X_n\supseteq Y_n$. For each set $X$, let us denote by $X^c$ its complement.
If $t\in\textit{LGO}_I(p)$, we have
$\langle \alpha, X_n^c\rangle\in\textit{Failures}(p)$ and therefore
$\langle\alpha, X_n^c\rangle\in \textit{Failures}(p')$, which implies that
there exists $p'\Tran{\alpha}p''$ such that $I(p'')\cap X_n^c=\emptyset$.
This means that there is some
$t'=Y_0a_1\dots a_nI(p'')\in\textit{LGO}_I(p')$ with $I(p'')\subseteq X_n$, 
and therefore we can conclude that $p\leq_I^{lf\supseteq} p'$.

The proof for failure trace is very similar and we omit it.
\end{proof}

As a matter of fact, the characterization of failures by means of the reverse
inclusion of offerings is not a great discovery at all: for instance,
the same idea
can be found in the definition of acceptance trees \cite{Hennessy85}.
However, our sets of linear observations produce quite a nice 
characterization and allow us to forget about the notion of
failures and consider instead reverse inclusion of offerings.
But the most important property of our characterizations in terms of
different orders on the set $\textit{LGO}_I$ is that they can be generalized
to other local observation functions.

\begin{defi}\label{obs:def}
For $\calT, \calT'\subseteq\textit{LGO}_N$ we define the 
orders $\leq_N^{l\supseteq}$, $\leq_N^{lf}$, and $\leq_N^{lf\supseteq}$ by:
\begin{iteMize}{$\bullet$}
\item $\calT\leq_N^{l\supseteq}\calT'\iff
       \begin{array}[t]{l}
       \textrm{for all $X_0a_1X_1\dots X_n\in\calT$}\\
       \textrm{there is some $Y_0a_1Y_1\dots Y_n\in\calT'$
               with $X_i\supseteq Y_i$ for all $i\in 0..n$.}
       \end{array}$
\item $\calT\leq_N^{lf} \calT'\iff
       \begin{array}[t]{l}
       \textrm{for all $X_0a_1X_1\dots X_n\in\calT$}\\
       \textrm{there is some $Y_0a_1Y_1\dots Y_n\in\calT'$
               with $X_n= Y_n$.}
       \end{array}$
\item $\calT\leq_N^{lf\supseteq}\calT' \iff
       \begin{array}[t]{l}
       \textrm{for all $X_0a_1X_1\dots X_n\in\calT$}\\
       \textrm{there is some $Y_0a_1Y_1\dots Y_n\in\calT'$ 
               with $X_n\supseteq Y_n$.}
       \end{array}$
\end{iteMize}
Then, for each $\delta\in\{\supseteq, f, f{\supseteq}\}$ we write $p\leq_N^{l\delta} q$ if 
$\textit{LGO}_N(p)\leq_N^{l\delta} \textit{LGO}_N(q)$.
\end{defi}

By abuse of notation, we have used the superset inclusion symbol $\supseteq$
in the definitions above for any $N$.
That is indeed the right interpretation for the cases $N=I, T$; however, for 
$N =U,C$ the superset inclusions degenerate to equalities
while for $N=S$ it should be interpreted as 
$\lsem p\rsem_S \geq_S \lsem q\rsem_S$.
Then, with the right notation we could have used such an inequality 
$\lsem p\rsem_N \geq_N \lsem q\rsem_N$ in all the cases. 

When defining an observational semantics one expects the order between processes
to be plain set inclusion as is the case, for instance, for the classic
definition of failures semantics.
Fortunately, it is easy to obtain such a characterization for the three
semantics considered above by means of some suitable closure operators.

\begin{defi}\label{closures:def}
For $\calT\subseteq\textit{LGO}_N$, the following three closures are defined:
\begin{iteMize}{$\bullet$}
\item $\ol{\calT}^\supseteq = \{X_0a_1X_1\dots a_nX_n\mid 
 \textrm{there is some $Y_0a_1Y_1\dots a_nY_n\in \calT$
         with $X_i\supseteq Y_i$ for $i\in 0..n$}\}$.
\item $\ol{\calT}^f = \{X_0a_1X_1\dots a_nX_n\mid 
 \textrm{there is some $Y_0a_1Y_1\dots a_nX_n\in \calT$}\}$.
\item $\ol{\calT}^{f\supseteq} = \{X_0a_1X_1\dots a_nX_n\mid 
 \textrm{there is some $Y_0a_1Y_1\dots a_nY_n\in \calT$
         with $X_n\supseteq Y_n$}\}$.
\end{iteMize}
\end{defi}

\begin{prop}\label{closures:prop}
All the operators in Definition~\ref{closures:def} are indeed closures:
if $\delta\in\{\supseteq,f,f{\supseteq}\}$ and 
$\calT,\calT'\subseteq \textit{LGO}_N$, then
$\calT\subseteq \ol{\calT}^\delta$ and $\ol{\ol{\calT}^\delta}^\delta = \ol{\calT}^\delta$; also,
if $\calT\subseteq \calT'$ then $\ol{\calT}^\delta\subseteq \ol{\calT'}^\delta$.
\end{prop}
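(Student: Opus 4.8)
The plan is to verify the three defining properties of a closure operator --- extensivity, monotonicity, and idempotency --- separately for each $\delta\in\{\supseteq, f, f{\supseteq}\}$, noting that the three cases share a common shape and differ only in which relation the witness is required to satisfy: $\supseteq$ on every component $X_i$ (for $\delta={\supseteq}$), equality on the last component $X_n$ (for $\delta=f$), or $\supseteq$ on the last component $X_n$ (for $\delta=f{\supseteq}$). Throughout, the action components $a_1,\dots,a_n$ of an element of $\textit{LGO}_N$ are never touched by any of the closures, so all comparisons take place componentwise on the local observations relative to a fixed action sequence, which keeps the bookkeeping trivial.

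For extensivity, $\calT\subseteq\ol{\calT}^\delta$, given $X_0a_1X_1\dots a_nX_n\in\calT$ I would take the element itself as the witness in $\calT$: for $\delta={\supseteq}$ this works because $X_i\supseteq X_i$ for all $i$; for $\delta=f$ because the final set equals itself; for $\delta=f{\supseteq}$ because $X_n\supseteq X_n$. Monotonicity, $\calT\subseteq\calT'\Rightarrow\ol{\calT}^\delta\subseteq\ol{\calT'}^\delta$, is equally immediate: any witness in $\calT$ that certifies membership of a trace in $\ol{\calT}^\delta$ is, by hypothesis, also a witness in $\calT'$.

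The only step with any content is idempotency, $\ol{\ol{\calT}^\delta}^\delta=\ol{\calT}^\delta$. The inclusion $\ol{\calT}^\delta\subseteq\ol{\ol{\calT}^\delta}^\delta$ is just extensivity applied to the set $\ol{\calT}^\delta$. For the converse, take $X_0a_1X_1\dots a_nX_n\in\ol{\ol{\calT}^\delta}^\delta$; unfolding the definition twice produces an intermediate witness $Y_0a_1Y_1\dots a_nY_n\in\ol{\calT}^\delta$ and then a witness $Z_0a_1Z_1\dots a_nZ_n\in\calT$, where the $X_i$ relate to the $Y_i$ and the $Y_i$ relate to the $Z_i$ by the relation associated to $\delta$. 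That the $X_i$ then relate directly to the $Z_i$ --- so that the trace already lies in $\ol{\calT}^\delta$ --- follows from transitivity of that relation: $\supseteq$ is transitive (used for $\delta={\supseteq}$ through each chain $X_i\supseteq Y_i\supseteq Z_i$, and for $\delta=f{\supseteq}$ on the last component), and equality is transitive (used for $\delta=f$). This appeal to transitivity is really the whole point; there is no serious obstacle, the argument being a direct composition of witnesses, and the resulting closures are exactly what lets us restate $\leq_N^{l\delta}$ as ordinary set inclusion of closed sets.
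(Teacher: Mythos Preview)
Your proof is correct and follows essentially the same approach as the paper's. The paper treats extensivity and monotonicity as immediate, details idempotency only for $\delta=f$ (unfolding twice and noting the last component is preserved), and then says ``analogously for the other two operators''; you make the common structure explicit by phrasing the idempotency step uniformly as transitivity of the underlying relation on local observations, which is exactly the mechanism behind the paper's ``analogously''.
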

\begin{proof}
The first and third conditions are immediate from the definitions.
As for the second, let $X_0a_1X_1\dots a_nX_n\in \ol{\ol{\calT}^f}^f$.
Then, there exists $Y_0a_1Y_1\dots a_nX_n\in \ol{\calT}^f$ and thus
there exists $Z_0a_1Z_1\dots a_nX_n\in \calT$, which implies
 $X_0a_1X_1\dots a_nX_n\in \ol{\calT}^f$;
the inclusion in the other direction follows from monotonicity.
Analogously for the other two operators.
\end{proof}

\begin{prop}\label{prop_contenido}
For all $\delta\in\{ \supseteq, f, f{\supseteq}\}$,
$\calT\leq_N^{l\delta} \calT'$ iff $\;{\ol{\calT}^\delta}\!\!\subseteq {\ol{\calT'}^\delta}$.
\end{prop}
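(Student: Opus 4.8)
The plan is to recast both sides of the equivalence in terms of a single preorder on linear observations and then invoke the closure properties of Proposition~\ref{closures:prop}. I would fix $\delta\in\{\supseteq,f,f{\supseteq}\}$ and introduce the relation $\succeq_\delta$ on $\textit{LGO}_N$ that holds between two observations exactly when they carry the same action word $a_1\dots a_n$ and their local observations satisfy the condition attached to $\delta$ in Definition~\ref{obs:def} (componentwise $\supseteq$ for $\delta=\supseteq$; equality of the last component for $\delta=f$; $\supseteq$ on the last component for $\delta=f{\supseteq}$). Unwinding Definition~\ref{closures:def} then shows that $\ol{\calT}^\delta$ is precisely the $\succeq_\delta$-upward closure of $\calT$, namely $\ol{\calT}^\delta=\{t\mid \exists s\in\calT,\ t\succeq_\delta s\}$, and unwinding Definition~\ref{obs:def} shows that $\calT\leq_N^{l\delta}\calT'$ is equivalent to the statement that every element of $\calT$ is $\succeq_\delta$-above some element of $\calT'$, i.e.\ to $\calT\subseteq\ol{\calT'}^\delta$. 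So it will be enough to prove: $\calT\subseteq\ol{\calT'}^\delta$ iff $\ol{\calT}^\delta\subseteq\ol{\calT'}^\delta$.

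For the direction from right to left I would argue: from $\ol{\calT}^\delta\subseteq\ol{\calT'}^\delta$ together with $\calT\subseteq\ol{\calT}^\delta$ (the extensivity clause of Proposition~\ref{closures:prop}), chaining the two inclusions yields $\calT\subseteq\ol{\calT'}^\delta$.

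For the direction from left to right, starting from $\calT\subseteq\ol{\calT'}^\delta$, I would apply the monotonicity clause of Proposition~\ref{closures:prop} to get $\ol{\calT}^\delta\subseteq\ol{\ol{\calT'}^\delta}^\delta$, and then the idempotence clause to collapse the right-hand side to $\ol{\calT'}^\delta$, whence $\ol{\calT}^\delta\subseteq\ol{\calT'}^\delta$.

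The only work beyond citing Proposition~\ref{closures:prop} sits in the first paragraph: verifying that each operator of Definition~\ref{closures:def} really is the $\succeq_\delta$-upward closure and that $\leq_N^{l\delta}$ really is the associated lower (Hoare-style) preorder, both of which are a direct transcription of the definitions done uniformly in $\delta$. I do not expect a genuine obstacle; the statement is an instance of the general fact that, for a reflexive and transitive relation, ``$\calT$ lies below $\calT'$ in the induced lower preorder'' coincides with ``the closure of $\calT$ is contained in the closure of $\calT'$'', and reflexivity and transitivity of $\succeq_\delta$ are immediate from those of $\supseteq$ and $=$.
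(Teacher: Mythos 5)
Your proposal is correct and follows essentially the same route as the paper: both reduce $\calT\leq_N^{l\delta}\calT'$ to the inclusion $\calT\subseteq\ol{\calT'}^\delta$ by unwinding Definitions~\ref{obs:def} and \ref{closures:def}, and then obtain the equivalence with $\ol{\calT}^\delta\subseteq\ol{\calT'}^\delta$ from the extensivity, monotonicity and idempotence clauses of Proposition~\ref{closures:prop}. The only difference is presentational: you phrase the reduction uniformly in $\delta$ via the upward-closure/lower-preorder view, whereas the paper spells out only the case $\delta=f{\supseteq}$ and declares the rest analogous.
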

\begin{proof}
It is easy but tedious, so only the case $\delta = f{\supseteq}$ is 
presented in detail. 
Assume $\calT\leq_N^{lf\supseteq} \calT'$: for all 
$t = X_0a_1X_1\dots a_nX_n\in\calT$ there exists 
$Y_0a_1Y_1\dots a_nY_n\in\calT'$ with $X_n\supseteq Y_n$ and hence
$t\in \ol{\calT'}^{f\supseteq}$ and $\calT\subseteq \ol{\calT'}^{f\supseteq}$; 
$\ol{\calT}^{f\supseteq}\subseteq \ol{\calT'}^{f\supseteq}$ follows because of 
the properties of closures.

Conversely, from $\ol{\calT}^{f\supseteq}\subseteq \ol{\calT'}^{f\supseteq}$ 
it follows that
$\calT\subseteq \ol{\calT'}^{f\supseteq}$ and thus for all
$X_0a_1X_1\dots a_nX_n\in\calT$ there exists 
$Y_0a_1Y_1\dots a_nY_n\in\calT'$ with $X_n\supseteq Y_n$: therefore
$\calT\leq_N^{lf\supseteq}\calT'$.
\end{proof}

\begin{defi}
For each $\delta\in\{\supseteq, f, f{\supseteq}\}$, $p\in\textit{BCCSP}$, and $N$ 
a constraint, we define $$\textit{LGO}_N^\delta(p) = \ol{\textit{LGO}_N(p)}^\delta.$$
\end{defi}

Let us see which of the semantics in the spectrum are characterized by
the orders $\leq_N^{l\delta}$ defined above.

\begin{prop}
For $N=U$ we have ${\leq_U^l} = {\leq_U^{l\supseteq}} = {\leq_U^{lf}} =
{\leq_U^{lf\supseteq}} = {\sqsubseteq_T}$.
As a consequence, the only semantics coarser than plain simulation that can be
characterized by means of linear observations using $L_U$ is the 
trace semantics.
\end{prop}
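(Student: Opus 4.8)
The plan is to exploit the fact that the local observation function $L_U$ is constant: by Definition~\ref{local-observations:def} we have $L_U = \{\cdot\}$, so every node of a linear general observation in $\textit{LGO}_U$ carries the same label. Consequently an lgo in $\textit{LGO}_U$ is completely determined by its sequence of action labels, and $\textit{LGO}_U(p)$ is (isomorphic to) $\textit{Traces}(p)$, as already observed before Proposition~\ref{traces-rtraces:prop}. With this in hand, it suffices to show that each of the partial-observation variants collapses to plain inclusion and then invoke Proposition~\ref{traces-rtraces:prop}(1).

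Concretely, I would inspect each of the three relations $\leq_U^{l\supseteq}$, $\leq_U^{lf}$, $\leq_U^{lf\supseteq}$ of Definition~\ref{obs:def}. In every one of them the side condition relating the local observations of the matching lgo's---``$X_i\supseteq Y_i$ for all $i$'', ``$X_n = Y_n$'', or ``$X_n\supseteq Y_n$''---is a statement about elements of $L_U$; since $L_U$ is a one-point set, each such condition holds vacuously (all of the $X_i,Y_i$ equal $\cdot$). Hence, for $N=U$ and any $\delta\in\{\supseteq,f,f{\supseteq}\}$, $\calT\leq_U^{l\delta}\calT'$ reduces to the requirement that for each $X_0a_1X_1\dots a_nX_n\in\calT$ there is some $Y_0a_1Y_1\dots a_nY_n\in\calT'$; as the two matched lgo's are forced to share the action sequence $a_1\dots a_n$ and lgo's in $\textit{LGO}_U$ are determined by that sequence, this is exactly $\calT\subseteq\calT'$, i.e.\ the order $\leq_U^l$. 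Equivalently---the phrasing I would use if I routed the argument through Proposition~\ref{prop_contenido}---the three closure operators of Definition~\ref{closures:def} act as the identity on subsets of $\textit{LGO}_U$, so again $\leq_U^{l\delta}$ coincides with $\leq_U^l$. Either way, ${\leq_U^{l\supseteq}} = {\leq_U^{lf}} = {\leq_U^{lf\supseteq}} = {\leq_U^l}$, and the last relation equals $\sqsubseteq_T$ by Proposition~\ref{traces-rtraces:prop}(1).

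The ``as a consequence'' clause is then immediate: any semantics obtained from the linear observations generated by $L_U$---whether by plain inclusion or by any of the partial-observation variants---is precisely the trace semantics, so this construction yields no genuinely new semantics coarser than plain simulation. I do not expect any real obstacle here; the only point requiring a line of care is making explicit that the superset/equality side conditions are vacuous because they range over the singleton $L_U$, which is exactly what separates this degenerate layer from the ready-simulation layer treated in Proposition~\ref{failures-ftraces:prop}, where the same conditions are nontrivial and genuinely produce the failures/failure-trace diamond.
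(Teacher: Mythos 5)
Your proposal is correct and follows essentially the same route as the paper, which simply notes that the first three equalities are immediate because $L_U=\{\cdot\}$ provides no local information and then invokes Proposition~\ref{traces-rtraces:prop}(1) for the last equality. You merely spell out in detail the vacuousness of the side conditions that the paper dismisses as obvious.
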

\begin{proof}
The first three equalities are obvious since $U$ provides useless (empty)
local information $(L_U = \{\cdot\})$.
The last equality was proved in Proposition~\ref{traces-rtraces:prop}(1).
\end{proof}

\begin{prop}
For $N=C$ we have ${\leq_C^l} = {\leq_C^{l\supseteq}} = {\leq_C^{lf}} =
{\leq_C^{lf\supseteq}} = {\sqsubseteq_{CT}}$.
As a consequence, the only semantics coarser than complete simulation that can 
be characterized by means of linear observations using $L_C$ is the 
completed trace semantics.
\end{prop}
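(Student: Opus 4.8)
The plan is to reduce the statement to two facts already available for the constraint $C$. First, Proposition~\ref{traces-rtraces:prop}(3) gives $\leq_C^l = \sqsubseteq_{CT}$. Second, the remark following Definition~\ref{obs:def} records that for $N=C$ the superset inclusion degenerates to equality on the two–element set $L_C = \textit{Bool}$; applied to Definition~\ref{obs:def} this turns the requirement ``$X_i\supseteq Y_i$'' into ``$X_i = Y_i$'', so that $\leq_C^{l\supseteq}$ becomes ``for all $t\in\calT$ there is $Y_0a_1Y_1\dots Y_n\in\calT'$ with $X_i=Y_i$ and $a_i$ matching'', i.e.\ exactly $\calT\subseteq\calT'$, hence $\leq_C^{l\supseteq} = \leq_C^l$; likewise $\leq_C^{lf\supseteq} = \leq_C^{lf}$. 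Consequently the whole chain of equalities follows once we prove the single remaining identity $\leq_C^{lf} = \leq_C^l$.

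One inclusion, $p\leq_C^l q \Rightarrow p\leq_C^{lf} q$, is trivial (and holds for any $N$): if $\textit{LGO}_C(p)\subseteq\textit{LGO}_C(q)$ then for each $t = X_0a_1X_1\dots a_nX_n\in\textit{LGO}_C(p)$ the element $t$ itself serves as a witness in $\textit{LGO}_C(q)$, so in particular the final components agree. For the converse I would exploit the rigid shape of the linear observations generated by $L_C$. A non-trivial lgo of a process comes from a path $p = p_0\tran{a_1}p_1\tran{a_2}\cdots\tran{a_n}p_n$ and records $\langle L_C(p_0),a_1,L_C(p_1),\dots,a_n,L_C(p_n)\rangle$; since each $p_i$ with $i<n$ has the outgoing transition $a_{i+1}$, we get $I(p_i)\neq\emptyset$ and hence $L_C(p_i) = \textit{false}$ for all $i<n$. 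Thus in any member $X_0a_1X_1\dots a_nX_n$ of $\textit{LGO}_C(p)$ (or of $\textit{LGO}_C(q)$) with $n\geq 1$ the intermediate components $X_0,\dots,X_{n-1}$ are all forced to be $\textit{false}$. Now let $t = X_0a_1X_1\dots a_nX_n\in\textit{LGO}_C(p)$ with $p\leq_C^{lf} q$, and let $s = Y_0a_1Y_1\dots a_nY_n\in\textit{LGO}_C(q)$ be the witness with $X_n = Y_n$. If $n\geq 1$ then $X_i = \textit{false} = Y_i$ for $i<n$ and $X_n=Y_n$, so $s=t$ and $t\in\textit{LGO}_C(q)$. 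If $n=0$, then $\textit{LGO}_C(q)$ contains exactly one lgo of length zero, namely $\langle L_C(q)\rangle$, so the witness forces $L_C(p)=L_C(q)$ and therefore $t = \langle L_C(p)\rangle = \langle L_C(q)\rangle\in\textit{LGO}_C(q)$. Hence $\textit{LGO}_C(p)\subseteq\textit{LGO}_C(q)$, i.e.\ $p\leq_C^l q$.

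The stated consequence then follows exactly as in the plain-simulation case: all four orders produced from $L_C$ by the partial-observation constructions collapse to $\sqsubseteq_{CT}$, so no diamond of semantics arises below complete similarity, and completed trace semantics is the only semantics coarser than complete simulation that can be characterized through linear observations with local observer $L_C$. I do not anticipate any genuine difficulty here; the only point needing care is the bookkeeping of the forced $\textit{false}$ components, that is, checking that the ``$f$'' relaxation really recovers full set inclusion precisely because the intermediate local observations of $L_C$ carry no information.
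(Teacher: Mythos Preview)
Your proof is correct and follows essentially the same approach as the paper: both hinge on the fact that intermediate local observations in $\textit{LGO}_C$ are forced to be $\textit{false}$ (so only the final component carries information) together with the degeneration of $\supseteq$ to equality on $L_C=\textit{Bool}$, plus the already-proved identity $\leq_C^l=\sqsubseteq_{CT}$. The paper's proof is simply a more compressed statement of exactly the same two observations you unfold in detail.
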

\begin{proof}
Note that the local information at the intermediate steps of traces in 
$\textit{LGO}_C$ has to be \textit{false}, since it corresponds to 
non-terminated states; thus, only the final states provide real information.
Since in this case $\supseteq$ corresponds to Boolean equality, the first 
three equalities follow; the fourth was proved in 
Proposition~\ref{traces-rtraces:prop}(3).
\end{proof}

\begin{prop}
For $N=I$, $\leq_I^{lf\supseteq}$ characterizes the failures semantics,
$\leq_I^{lf}$ the readiness semantics, $\leq_I^{l\supseteq}$ the failure
trace semantics, and $\leq_I^l$ the ready trace semantics.
Therefore, the possible worlds semantics is the only semantics in the
ltbt spectrum coarser than ready simulation that cannot be characterized using
$\textrm{lgo}_I$'s.
\end{prop}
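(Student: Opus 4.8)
The plan is to collect the four claimed coincidences from results already in hand and to supply the one that is new. Two of them are immediate: $\leq_I^l = \sqsubseteq_{RT}$ is exactly Proposition~\ref{traces-rtraces:prop}(2), and $\leq_I^{lf\supseteq} = \sqsubseteq_F$ together with $\leq_I^{l\supseteq} = \sqsubseteq_{FT}$ is exactly Proposition~\ref{failures-ftraces:prop}. By Proposition~\ref{prop_contenido}, each $\leq_I^{l\delta}$ is just set inclusion of the closed sets $\ol{\textit{LGO}_I(p)}^\delta$, hence is reflexive and transitive; so each is a genuine preorder, and its kernel $=^{l\delta}_I$ is the induced equivalence. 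Therefore, once the four preorder coincidences are in place, the corresponding equivalence coincidences follow by passing to kernels, so ``characterizes the semantics'' holds in the intended sense (preorder together with equivalence).

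The only new case is $\leq_I^{lf}$ versus readiness. Recall that a \emph{ready pair} of a process $p$ is a pair $(a_1\cdots a_n, X)$ with $p\Tran{a_1\cdots a_n} p'$ and $X = I(p')$, and that $p\sqsubseteq_R q$ iff every ready pair of $p$ is a ready pair of $q$. An element $X_0 a_1 X_1 \cdots a_n X_n$ of $\textit{LGO}_I(p)$ is precisely the record of a path $p = p_0 \tran{a_1} p_1 \tran{a_2} \cdots \tran{a_n} p_n$ with $X_i = I(p_i)$ for all $i$, and it determines the ready pair $(a_1\cdots a_n, X_n)$; conversely every ready pair of $p$ arises in this way, and traces are subsumed since any $p\Tran{\alpha} p'$ already yields the ready pair $(\alpha, I(p'))$. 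Hence $p\leq_I^{lf} q$ --- that every $X_0 a_1 \cdots X_n \in \textit{LGO}_I(p)$ is matched by some $Y_0 a_1 \cdots Y_n \in \textit{LGO}_I(q)$ agreeing on the \emph{final} set, $X_n = Y_n$ --- unwinds verbatim to the inclusion of ready pairs, i.e. to $p\sqsubseteq_R q$. This settles the first sentence of the statement.

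For the ``therefore'', I would inspect the ready-simulation layer of the spectrum (Figure~\ref{fig:ltbtsAmpliado}): the semantics of that layer, i.e.\ those sitting between ready simulation and failures, are $\mathrm{RS}$, $\mathrm{PW}$, $\mathrm{RT}$, $\mathrm{FT}$, $\mathrm{R}$ and $\mathrm{F}$. Ready simulation is captured by $\textit{BGO}_I$'s (Theorem~\ref{denotational-main:thm}), and the diamond $\{\mathrm{RT},\mathrm{FT},\mathrm{R},\mathrm{F}\}$ by the four $\textrm{lgo}_I$-orders just analysed, so $\mathrm{PW}$ is the only one left. It cannot be characterized by any order built from $\textrm{lgo}_I$'s: the map $p\mapsto\textit{LGO}_I(p)$ identifies exactly the ready-trace-equivalent processes (that is $=^l_I$, which by Proposition~\ref{traces-rtraces:prop}(2) is $\equiv_{RT}$), while $\sqsubseteq_{PW}$ is strictly finer than $\sqsubseteq_{RT}$; a standard pair of ready-trace-equivalent but possible-worlds-inequivalent processes (see~\cite{Gla01}) therefore shows that $\sqsubseteq_{PW}$ does not factor through $\textit{LGO}_I$. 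As developed in Section~\ref{deterministic:sec}, $\mathrm{PW}$ is instead characterized by deterministic branching observations.

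The main obstacle is not any single computation but the bookkeeping in the last paragraph: one must make precise what ``coarser than ready simulation'' means here (the semantics of the ready-simulation layer, since e.g.\ $\mathrm{T}$, $\mathrm{CT}$, $\mathrm{S}$, $\mathrm{CS}$ are also coarser than $\mathrm{RS}$ yet manifestly not characterizable by $\textrm{lgo}_I$'s), and one must point to the separating example establishing that $\mathrm{PW}$ escapes the $\textrm{lgo}_I$ machinery. The three reused preorder coincidences and the readiness argument are routine.
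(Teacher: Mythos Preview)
Your proposal is correct and follows essentially the same approach as the paper. The paper's proof simply cites Propositions~\ref{traces-rtraces:prop} and~\ref{failures-ftraces:prop} for ``the four characterizations'' and then argues informally that $\sqsubseteq_{PW}$ escapes $\textrm{lgo}_I$'s because ready trace already exhausts the information in $\textrm{lgo}_I$'s and possible worlds is strictly finer; you do the same, but you are more careful in two respects. First, you correctly observe that the readiness case ${\leq_I^{lf}} = {\sqsubseteq_R}$ is not literally contained in those two propositions (the paper declared it ``obvious'' in the text just before Proposition~\ref{failures-ftraces:prop}), and you supply the short unwinding argument. Second, you make explicit the impossibility argument for $\mathrm{PW}$ via the factoring-through-$\textit{LGO}_I$ observation, and you flag the imprecision in ``coarser than ready simulation'' (which must be read as ``in the ready-simulation layer'', since $\mathrm{S}$, $\mathrm{CS}$, $\mathrm{T}$, $\mathrm{CT}$ are also coarser than $\mathrm{RS}$ yet not captured by the four $\textrm{lgo}_I$-orders either). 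These are genuine clarifications rather than a different route.
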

\begin{proof}
We have already proved (Propositions~\ref{traces-rtraces:prop} and 
\ref{failures-ftraces:prop}) the four characterizations, while $\sqsubseteq_{PW}$
cannot be characterized using $\textrm{lgo}_I$'s because all the 
information available in our $\textrm{lgo}_I$'s was needed to capture
the ready trace semantic and it is well-known that the possible worlds
semantics is strictly finer.
\end{proof}

As we will see in Section~\ref{deterministic:sec}, the possible worlds semantics 
is the only deterministic branching semantics in the spectrum and will require 
the use of the deterministic branching observations introduced there to be
characterized in an observational way.
This is not the case, however, for the possible futures semantics (already 
discussed in \cite{Gla01}), and the impossible futures semantics
\cite{VM01}.

\begin{defi}
\hfill
\begin{enumerate}[(1)]
\item The impossible futures semantics is defined as:
 $p\sqsubseteq_{IF} q$ if for all $S\subseteq \calP(\textit{Act}^*)$, if 
 $p\Tran{\alpha} p'$ with $T(p')\cap S=\emptyset$ then
 there exists $q\Tran{\alpha} q'$ with $T(q')\cap S = \emptyset$.
\item The possible futures semantics is defined as:
 $p\sqsubseteq_{PF} q$ if 
 $p\Tran{\alpha} p'$ then
 there exists $q\Tran{\alpha} q'$ with $T(q')=T(p')$.
\end{enumerate}
\end{defi}

\begin{prop}
\hfill
\begin{enumerate}[\em(1)]
\item $\leq_T^{lf}$ is the possible futures preorder.
\item $\leq_T^{lf\supseteq}$ is the impossible futures preorder.
\end{enumerate}
\end{prop}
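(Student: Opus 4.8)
The plan is to unfold both equalities to one elementary condition on $\Tran{\alpha}$-reachable states and their trace sets. First I would record the concrete shape of $\textit{LGO}_T$: by Definition~\ref{local-observations:def} the local observer is $L_T(p) = T(p)$, so a tuple $X_0a_1X_1\dots a_nX_n$ lies in $\textit{LGO}_T(p)$ precisely when there is a computation $p = p_0\tran{a_1}p_1\tran{a_2}\dots\tran{a_n}p_n$ with $X_i = T(p_i)$ for all $i$ (so in particular $X_0 = T(p)$). The orders $\leq_T^{lf}$ and $\leq_T^{lf\supseteq}$ constrain only the last component $X_n$, and for any path $p\Tran{\alpha}p'$ the tuple obtained by recording the trace sets along it is automatically a valid lgo; hence dropping the intermediate components $X_i$ with $i<n$ loses nothing, and the two orders reduce to: $p\leq_T^{lf}q$ iff for every $\alpha$ and every $p'$ with $p\Tran{\alpha}p'$ there is $q'$ with $q\Tran{\alpha}q'$ and $T(q') = T(p')$; and $p\leq_T^{lf\supseteq}q$ iff the same holds with $T(q') = T(p')$ weakened to $T(q')\subseteq T(p')$ --- here the generic $\supseteq$ of Definition~\ref{obs:def} is read, as the paper prescribes for $N = T$, as reverse inclusion of trace sets.

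For part (1) the first of these conditions is literally the definition of $p\sqsubseteq_{PF}q$, so nothing remains; I would only note that the degenerate case $n = 0$, which forces $T(p) = T(q)$, is subsumed by instantiating either side at the empty trace.

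For part (2) I would derive the two inclusions between $\leq_T^{lf\supseteq}$ and $\sqsubseteq_{IF}$ by complementation. For $\sqsubseteq_{IF}\Rightarrow\leq_T^{lf\supseteq}$: given $p\Tran{\alpha}p'$, set $S = \textit{Act}^*\setminus T(p')$, so $T(p')\cap S = \emptyset$; then $p\sqsubseteq_{IF}q$ yields $q\Tran{\alpha}q'$ with $T(q')\cap S = \emptyset$, i.e.\ $T(q')\subseteq T(p')$, which is the required match. Conversely, for $\leq_T^{lf\supseteq}\Rightarrow\sqsubseteq_{IF}$: given an impossible future $\langle\alpha,S\rangle$ of $p$, witnessed by $p\Tran{\alpha}p'$ with $T(p')\cap S = \emptyset$, the match supplied by $\leq_T^{lf\supseteq}$ is some $q\Tran{\alpha}q'$ with $T(q')\subseteq T(p')$, whence $T(q')\cap S = \emptyset$, so $\langle\alpha,S\rangle$ is an impossible future of $q$.

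I do not anticipate a genuine obstacle; the only point requiring care is the reduction in the first paragraph --- one must verify explicitly that, for the orders $\leq_T^{lf}$ and $\leq_T^{lf\supseteq}$, a single $\Tran{\alpha}$-path carries exactly the same information as a full lgo, because the intermediate trace sets are unconstrained and always realizable. Once that translation is in place, both items are immediate rephrasings of the definitions of $\sqsubseteq_{PF}$ and $\sqsubseteq_{IF}$.
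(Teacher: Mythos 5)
Your proof is correct and follows essentially the same route as the paper: part (1) is immediate once lgo's in $\textit{LGO}_T$ are identified with $\Tran{\alpha}$-paths decorated by trace sets (the paper simply calls this ``obvious''), and part (2) is proved exactly as in the paper, taking $S = \textit{Act}^*\setminus T(p')$ for one direction and using monotonicity of disjointness under $T(q')\subseteq T(p')$ for the other. Your explicit justification that only the final component of an lgo matters for $\leq_T^{lf}$ and $\leq_T^{lf\supseteq}$ is a welcome spelling-out of what the paper leaves implicit.
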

\begin{proof}
\hfill
\begin{enumerate}[(1)]
\item Obvious.
\item Assume that $p\leq_T^{lf\supseteq} q$.
 Then $p\Tran{\alpha}p'$, with $\alpha= a_1\dots a_n$, implies $q\Tran{\alpha}q'$
 with $T(q')\subseteq T(p')$.
 Therefore, if $p\Tran{\alpha} p'$ with $T(p')\cap X=\emptyset$
 then $q\Tran{\alpha} q'$ with $T(q')\cap X=\emptyset$
 which implies $p\sqsubseteq_{IF} p'$.

 Conversely, if $p\sqsubseteq_{IF} q$,
 $t=X_0a_0X_1\dots X_n\in\textit{LGO}_T(p)$ and
 $p\Tran{\alpha}p'$ with $\alpha=a_1\dots a_n$, obviously we have $T(p')\cap T(p')^c=\emptyset$, 
 where $T(p')^c$ just represent the complement of the set $T(p')$.
 Now applying the definition of $\sqsubseteq_{IF}$, we have some $q\Tran{\alpha}q'$ with $T(q')\cap T(p')^c=\emptyset$.
 Hence, there exists $t'=X_0'a_0X_1'\dots X_n'\in\textit{LGO}_T(q)$ with
 $T(q')\subseteq T(p')$, which implies $p\leq_T^{lf\supseteq} q$.\qedhere
\end{enumerate}
\end{proof}

\noindent As a matter of fact, the possible futures semantics is just below the 
2-nested simulation semantics in the spectrum only because
the trace simulation semantics is missing there.

At this point we are ready to present our first two ``missing links'', which
arise through the remaining two orders: $\leq_T^l$ and $\leq_T^{l\supseteq}$.

\begin{defi}\label{df:PF_IF}
The \emph{possible futures trace semantics} is defined by $\textrm{lgo}_T$'s
related by $\leq_T^l$ and the \emph{impossible futures trace semantics} is
defined by $\leq_T^{l\supseteq}$. 
\end{defi}

Let us complete this part of the new extended spectrum by introducing
the diamond generated by $\textrm{lgo}_{S}$'s.
This produces four new semantics coarser than 2-nested semantics.
For instance, for the case of failures we obtain the following
definition.

\begin{defi}
The \emph{extended simulation failures} of a process $p$ are defined as
\[\textit{ExtSimFailures}(p) = \{\langle \alpha, p''\rangle\mid \alpha\in A^*,
p\Tran{\alpha}p', p'\sqsubseteq_S p''\}.\]
The \emph{simulation failures} of a process $p$ are defined as 
$\textit{SimFailures}(p)=\{ \langle \alpha, B\rangle\mid
p\Tran{\alpha}p', B\cap \textit{BGO}_U(p')=\emptyset \}$.
We write $p\sqsubseteq_{SF} q$ iff
$\textit{SimFailures}(p)\subseteq \textit{SimFailures}(q)$.
\end{defi}

It can be proved that the inclusion 
$\textit{SimFailures}(p)\subseteq \textit{SimFailures}(q)$ holds
if and only if $\textit{ExtSimFailures}(p)\subseteq \textit{ExtSimFailures}(q)$.
Thus, simulation failures are essentially defined by translating the 
characterization of ordinary failures with the closure of readiness.

\begin{prop}
${\leq_S^{lf\supseteq}} = {\sqsubseteq_{SF}}$.
\end{prop}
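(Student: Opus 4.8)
The plan is to reduce both sides of the claimed equality to the same elementary condition on matching computations. First I would unfold $p\leq_S^{lf\supseteq}q$. By Definition~\ref{lbo:def}, every element of $\textit{LGO}_S(p)$ has the form $\langle\lsem p\rsem_S,a_1,\lsem p_1\rsem_S,\dots,a_n,\lsem p_n\rsem_S\rangle$ for some computation $p\Tran{\alpha}p_n$ with $\alpha=a_1\cdots a_n$, and recalling that for $N=S$ the symbol $\supseteq$ between two final observations $\lsem p'\rsem_S$ and $\lsem q'\rsem_S$ is to be read as $\lsem q'\rsem_S\sqsubseteq_S\lsem p'\rsem_S$, i.e.\ $q'\sqsubseteq_S p'$, the condition $p\leq_S^{lf\supseteq}q$ says exactly: for every $p\Tran{\alpha}p'$ there is some $q\Tran{\alpha}q'$ with $q'\sqsubseteq_S p'$. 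Applying Theorem~\ref{denotational-main:thm} with $N=U$ (so that $\sqsubseteq_{US}={\sqsubseteq_S}$), the relation $q'\sqsubseteq_S p'$ is in turn equivalent to $\textit{BGO}_U(q')\subseteq\textit{BGO}_U(p')$.

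Next I would show that this condition is equivalent to $\textit{SimFailures}(p)\subseteq\textit{SimFailures}(q)$, i.e.\ to $p\sqsubseteq_{SF}q$. For one direction, assume the condition and take $\langle\alpha,B\rangle\in\textit{SimFailures}(p)$, so $p\Tran{\alpha}p'$ with $B\cap\textit{BGO}_U(p')=\emptyset$; picking $q\Tran{\alpha}q'$ with $\textit{BGO}_U(q')\subseteq\textit{BGO}_U(p')$ yields $B\cap\textit{BGO}_U(q')=\emptyset$, so $\langle\alpha,B\rangle\in\textit{SimFailures}(q)$. For the converse, given a computation $p\Tran{\alpha}p'$ I would feed in the canonical rejection set $B:=\textit{BGO}_U\setminus\textit{BGO}_U(p')$; then $B\cap\textit{BGO}_U(p')=\emptyset$, so $\langle\alpha,B\rangle\in\textit{SimFailures}(p)\subseteq\textit{SimFailures}(q)$, which produces $q\Tran{\alpha}q'$ with $B\cap\textit{BGO}_U(q')=\emptyset$, that is $\textit{BGO}_U(q')\subseteq\textit{BGO}_U(p')$, as needed. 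Combining the two reductions gives ${\leq_S^{lf\supseteq}}={\sqsubseteq_{SF}}$.

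I do not expect a genuine obstacle; the argument is essentially bookkeeping, and the same conclusion can alternatively be reached slightly faster through the equivalence, noted just before the statement, between $\textit{SimFailures}$-inclusion and $\textit{ExtSimFailures}$-inclusion: unfolding $\textit{ExtSimFailures}(p)\subseteq\textit{ExtSimFailures}(q)$ and using reflexivity (take $p''=p'$) and transitivity of $\sqsubseteq_S$ shows it amounts precisely to ``for every $p\Tran{\alpha}p'$ there is $q\Tran{\alpha}q'$ with $q'\sqsubseteq_S p'$''. The only points that need care are the interpretation of $\supseteq$ in Definition~\ref{obs:def} when $N=S$ (it is the reverse simulation preorder on $\lsem\cdot\rsem_S$-classes, not literal set inclusion) and the observation that the rejection set $B$ in $\textit{SimFailures}$ may be taken as large as the complement of $\textit{BGO}_U(p')$, which is the genuine analogue of the ``$X\cap I(p')=\emptyset$'' clause used for ordinary failures.
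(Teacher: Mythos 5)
Your proof is correct and follows essentially the route the paper intends: its own proof is just the remark ``analogous to the characterization of $\sqsubseteq_F$ in terms of $\leq_I^{lf\supseteq}$'', and your argument carries out exactly that analogy, unfolding $\leq_S^{lf\supseteq}$ to ``for every $p\Tran{\alpha}p'$ there is $q\Tran{\alpha}q'$ with $\textit{BGO}_U(q')\subseteq\textit{BGO}_U(p')$'' and using the complement of $\textit{BGO}_U(p')$ as the rejection set, just as the complement of the ready set is used for ordinary failures. The care you take with the reading of $\supseteq$ for $N=S$ and the fact that the main argument does not depend on the (unproved in the paper) equivalence with $\textit{ExtSimFailures}$ are both appropriate.
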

\begin{proof}
Analogous to the characterization of $\sqsubseteq_F$ in terms of
$\leq_I^{lf\supseteq}$.
\end{proof}

\subsection{Deterministic branching observations}\label{deterministic:sec}

\begin{defi}\label{branch_obs_def}
\hfill
\begin{enumerate}[(1)]
\item
We say that a bgo is \emph{deterministic} if the set of children 
$\{(a_i, \textit{bgo}_i)\}$ of every node satisfies $a_i \neq a_j$ whenever
$i\neq j$.
We denote with $\textit{dBGO}_N$ the set of deterministic observations in
$\textit{BGO}_N$.
\item
The set of deterministic branching observations (dbgo for short) of a process 
$p$ is $\textit{dBGO}_N(p) = \textit{BGO}_N(p) \cap \textit{dBGO}_N$.
\item
We write $p\leq_N^\mathit{db} q$ if 
$\textit{dBGO}_N(p) \subseteq \textit{dBGO}_N(q)$. 
\end{enumerate}
\end{defi}

\noindent Like the linear observations, the set $\textit{dBGO}_N(p)$ can be defined
recursively and the corresponding semantics, compositionally.

\begin{exa}
For the two processes $p=a(bc+bd)$ and $q=abc+abd$ we have that both
deterministic observations in Figure~\ref{bopw1:fig} belong to
$\textit{dBGO}_I(p)$ and $\textit{dBGO}_I(q)$.
Indeed, that must be the case since it is easy to check that 
$\textit{dBGO}_I(p)=\textit{dBGO}_I(q)$.
\begin{figure}[t]
\[
\begin{array}{c@{\qquad\qquad\quad}c}
\xymatrix{
\{a\}\ar[d]^a\\
\{b\}\ar[d]^b\\
\{c\}
}
&
\xymatrix{
\{a\}\ar[d]^a\\
\{b\}\ar[d]^b\\
\{d\}
}
\end{array}
\]
\caption{Deterministic branching observations.}
\label{bopw1:fig}
\end{figure}
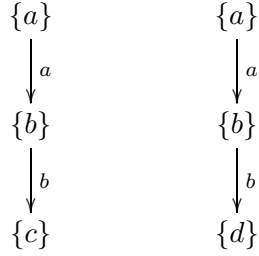
\end{exa}

In order to prove that $\textrm{dbgo}$'s for the constraint $I$ 
characterize the possible worlds
semantics we first recall the definition of that semantics in 
\cite{Gla01}.

\begin{defi}
A deterministic process $p$ is a possible world of a process $q$ if 
$p\sqsubseteq_{RS} q$.
The set of possible worlds of $p$ is denoted by $PW(p)$.
We define the order $p\sqsubseteq_{PW} q$ iff $PW(p)\subseteq PW(q)$.
\end{defi}

When defining the possible worlds of a process we have to solve all the
non-deterministic choices in it, each choice leading to one of its
possible worlds.
The same idea supports the selection of dbgo's to characterize this
semantics: the non-deterministic branching observations in
$\textit{BGO}_N(p)$ are not present in $\textit{dBGO}_N(p)$, where we have 
instead all the possible deterministic subtrees of every branching observation.

In our proof below we will relate the $\textrm{dbgo}$'s in 
$\textit{dBGO}_I(p)$ and the possible worlds in $PW(p)$. 
When necessary, we will consider observations in $\textit{dBGO}_I(p)$ as 
processes in BCCSP by removing the information from their nodes; 
by abuse of notation we will also denote with \textit{dbgo} the process obtained 
after such a removal.
Also, we call \emph{complete} those
observations that, for every node labeled by an offering $A$, have a branch 
labeled by each of the actions in $A$.

\begin{defi}
The set of \emph{complete deterministic} branching observations for the local 
observation function $L_I$ is the set $\textit{cdBGO}_I\subseteq
\textit{dBGO}_I$ recursively defined as:
\begin{iteMize}{$\bullet$}
\item $\langle \emptyset, \emptyset\rangle\in \textit{cdBGO}_I$.
\item $\langle A, \{(a, \textit{cdbgo}_a)\mid a\in A\}\rangle\in
 \textit{cdBGO}_I$ for every $a\in A$ and $\textit{cdbgo}_a\in\textit{cdBGO}_I$.
\end{iteMize}
For each $p\in\textrm{BCCSP}$ we define its set of complete deterministic 
branching observations
$\textit{cdBGO}_I(p) = \textit{dBGO}_I(p)\cap \textit{cdBGO}_I$.
\end{defi}

We also associate to a deterministic process $q$ its universal (complete
deterministic) branching observation.

\begin{defi}
For a deterministic process $p$, its \emph{universal deterministic} branching
observation $\textit{cdbgo}(p)$ is:
\begin{iteMize}{$\bullet$}
\item $\textit{cdbgo}(\cero) = \langle\emptyset, \emptyset\rangle$.
\item $\textit{cdbgo}(\sum_{a\in A} ap_a) = 
  \langle A,\{(a,\textit{cdbgo}(p_a))\mid a\in A\}\rangle$.
\end{iteMize}
\end{defi}

The following result is now immediate.

\begin{prop}
For every $p\in \textrm{BCCSP}$, $\textit{cdbgo}(p)\in \textit{cdBGO}_I(p)$.
\end{prop}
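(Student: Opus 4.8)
The plan is to prove the statement by structural induction on the (deterministic) process $p$, unfolding in parallel the recursive definition of $\textit{cdbgo}(p)$ and the definition $\textit{cdBGO}_I(p)=\textit{dBGO}_I(p)\cap\textit{cdBGO}_I$. So at each step I would establish two facts simultaneously: that $\textit{cdbgo}(p)$ belongs to the abstract set $\textit{cdBGO}_I$ (which in particular makes it a deterministic bgo), and that it is an actual branching observation of $p$, i.e.\ lies in $\textit{BGO}_I(p)$. Note that $\textit{cdbgo}$ is only defined on deterministic processes, so the quantification is really over deterministic $p$.

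For the base case $p=\cero$ I would observe that $\textit{cdbgo}(\cero)=\langle\emptyset,\emptyset\rangle$; this is in $\textit{cdBGO}_I$ by its first defining clause, and since $\cero$ has no transitions and $L_I(\cero)=I(\cero)=\emptyset$ we get $\textit{BGO}_I(\cero)=\{\langle\emptyset,\emptyset\rangle\}$, so $\textit{cdbgo}(\cero)\in\textit{dBGO}_I(\cero)\cap\textit{cdBGO}_I=\textit{cdBGO}_I(\cero)$. For the inductive case, since $p$ is deterministic it can be written $p=\sum_{a\in A}ap_a$ with $A=I(p)$ and each $p_a$ deterministic, so $\textit{cdbgo}(p)=\langle A,\{(a,\textit{cdbgo}(p_a))\mid a\in A\}\rangle$. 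Applying the induction hypothesis to each $p_a$ gives $\textit{cdbgo}(p_a)\in\textit{cdBGO}_I(p_a)=\textit{dBGO}_I(p_a)\cap\textit{cdBGO}_I$. From the $\textit{cdBGO}_I$-memberships and the second defining clause of $\textit{cdBGO}_I$ I would conclude $\textit{cdbgo}(p)\in\textit{cdBGO}_I$; and from $\textit{cdbgo}(p_a)\in\textit{dBGO}_I(p_a)\subseteq\textit{BGO}_I(p_a)$ together with $L_I(p)=A$ and $p\tran{a}p_a$ for every $a\in A$, the set of children $\{(a,\textit{cdbgo}(p_a))\mid a\in A\}$ has exactly the form required by the definition of $\textit{BGO}_I(p)$, hence $\textit{cdbgo}(p)\in\textit{BGO}_I(p)$, and it is deterministic because distinct $a\in A$ carry distinct labels. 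Intersecting the two conclusions yields $\textit{cdbgo}(p)\in\textit{cdBGO}_I(p)$.

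Since the whole argument is just a disciplined unfolding of two recursive definitions against each other, there is no genuine obstacle; the only point that warrants an explicit line is that ``$p$ deterministic'' does license the normal form $\sum_{a\in A}ap_a$ with each $p_a$ again deterministic — this is the closure of determinism of an LTS under reachability, which is also precisely what makes the recursion defining $\textit{cdbgo}$ well founded.
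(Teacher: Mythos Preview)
Your proof is correct. The paper simply declares the result ``immediate'' and gives no explicit argument; your structural induction is precisely the natural unfolding of the two recursive definitions that justifies this claim, so the approaches coincide.
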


\begin{lem}\label{pw-aux-1:lem}
For every $q\in PW(p)$, $\textit{cdbgo}(q)\in\textit{cdBGO}_I(p)$.
\end{lem}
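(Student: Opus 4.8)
The plan is to prove the statement by induction on the structure of the finite deterministic process $q$, unfolding the definitions of $\sqsubseteq_{RS}$, of $\textit{cdbgo}$, and of $\textit{BGO}_I(\cdot)$. The heart of the argument is to show $\textit{cdbgo}(q)\in\textit{BGO}_I(p)$; this suffices, because $\textit{cdbgo}(q)\in\textit{cdBGO}_I$ holds unconditionally (it is built to be complete and deterministic; indeed, by the preceding proposition $\textit{cdbgo}(q)\in\textit{cdBGO}_I(q)$ for the deterministic process $q$), and $\textit{cdBGO}_I(p)=\textit{dBGO}_I(p)\cap\textit{cdBGO}_I=\textit{BGO}_I(p)\cap\textit{dBGO}_I\cap\textit{cdBGO}_I$, so membership of $\textit{cdbgo}(q)$ in $\textit{BGO}_I(p)$ closes the gap.

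First I would treat the base case $q=\cero$. Then $\textit{cdbgo}(q)=\langle\emptyset,\emptyset\rangle$. Since $q\sqsubseteq_{RS} p$ forces $qIp$, i.e.\ $I(p)=I(\cero)=\emptyset$, we have $L_I(p)=\emptyset$, and thus $\langle L_I(p),\emptyset\rangle=\langle\emptyset,\emptyset\rangle\in\textit{BGO}_I(p)$, as the empty set of children is always an admissible choice.

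For the inductive step, write the deterministic process $q$ as $q=\sum_{a\in A}aq_a$ with $A=I(q)$ and all the $a$'s pairwise distinct (so $\textit{cdbgo}(q)=\langle A,\{(a,\textit{cdbgo}(q_a))\mid a\in A\}\rangle$), noting that each $q_a$ is again deterministic. From $q\sqsubseteq_{RS} p$ we get $qIp$, hence $I(p)=A$ and $L_I(p)=A$. Moreover, for every $a\in A$, since $q\tran{a}q_a$, the ready-simulation clause yields some $p_a$ with $p\tran{a}p_a$ and $q_a\sqsubseteq_{RS} p_a$; as $q_a$ is deterministic this means $q_a\in PW(p_a)$, so the induction hypothesis gives $\textit{cdbgo}(q_a)\in\textit{cdBGO}_I(p_a)\subseteq\textit{BGO}_I(p_a)$. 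Consequently each pair $(a,\textit{cdbgo}(q_a))$ belongs to $\{(a,\textit{bgo})\mid \textit{bgo}\in\textit{BGO}_I(p'),\,p\tran{a}p'\}$, so $\{(a,\textit{cdbgo}(q_a))\mid a\in A\}$ is an admissible set of children for the root labelled $L_I(p)=A$; by the defining clause of $\textit{BGO}_I(p)$ this shows $\textit{cdbgo}(q)\in\textit{BGO}_I(p)$, completing the induction.

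I do not expect any serious obstacle: the whole argument is a routine unfolding of definitions. The only points requiring a little care are the observations that ready simulation transmits equality of initial offers (needed to obtain the correct root label) and that determinism is inherited by subprocesses (needed to pass to $PW(p_a)$ and invoke the induction hypothesis). No ``mixing'' of computations occurs, precisely because $\textit{cdbgo}(q)$ is deterministic, so a single $a$-successor of $p$ is enough to account for each branch.
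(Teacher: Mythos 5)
Your proof is correct and follows essentially the same route as the paper: structural induction on the deterministic process $q$, using that $q\sqsubseteq_{RS} p$ forces $I(q)=I(p)$ for the root label and that the ready-simulation clause supplies, for each $a$-branch, a successor $p_a$ with $q_a\in PW(p_a)$ to which the induction hypothesis applies. The only (harmless) difference is cosmetic: you argue membership in $\textit{BGO}_I(p)$ directly and note that completeness/determinism of $\textit{cdbgo}(q)$ come for free, whereas the paper's base case phrases this via $p\equiv\cero$.
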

\begin{proof}
By structural induction on $q$:
\begin{iteMize}{$\bullet$}
\item If $q$ is $\cero$, then $p\equiv \cero$ and 
 $\langle \emptyset, \emptyset\rangle\in\textit{cdBGO}_I(\cero)$.
\item If $q$ is $\sum a q_a$, since $q\in PW(p) $ we have
  $q\sqsubseteq_{RS} p$.  This implies $I(q) = I(p)$ and that, for all
  $a\in A$, there exists $p\tran{a} p_a$, $q_a\sqsubseteq_{RS} p_a$,
  so that $q_a\in PW(p_a)$.  By induction hypothesis,
  $\textit{cdbgo}(q_a)\in \textit{cdBGO}_I(p)$.  Now, by definition,
  $\textit{cdbgo}(q) = \langle A, \{(a,\textit{cdbgo}(q_a))\mid a\in
  A)\}\rangle$ and, from $p\tran{a} p_a$ and $I(p) = I(q)$, we
  conclude $\textit{cdbgo}(q)\in \textit{dBGO}_I(p)$ and therefore
  $\textit{cdbgo}(q)\in \textit{cdBGO}_I(p)$.\qedhere
\end{iteMize}
\end{proof}

\begin{lem}\label{pw-aux-2:lem}
For every process $q$ such that $\textit{cdbgo}(q)\in\textit{cdBGO}_I(p)$
we have $q\sqsubseteq_{RS} p$ and therefore $q\in PW(p)$.
\end{lem}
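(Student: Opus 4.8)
The plan is to exhibit a ready simulation relating $q$ to $p$. Concretely, I would show that
\[
R=\{(q',p')\mid q'\text{ deterministic and }\textit{cdbgo}(q')\in\textit{cdBGO}_I(p')\}
\]
is an $I$-constrained simulation in the sense of Definition~\ref{def:constrainedsimulation}. Since $\textit{cdbgo}(q)$ is defined only for deterministic $q$, the hypothesis $\textit{cdbgo}(q)\in\textit{cdBGO}_I(p)$ gives $(q,p)\in R$, hence $q\sqsubseteq_{RS} p$, and then $q\in PW(p)$ by the very definition of $PW$. Throughout I use that a deterministic process can, modulo $B_1$--$B_4$, be written $q=\sum_{b\in I(q)}bq_b$ with the $b$'s pairwise distinct and every $q_b$ deterministic, so that $\textit{cdbgo}(q)=\langle I(q),\{(b,\textit{cdbgo}(q_b))\mid b\in I(q)\}\rangle$.

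For the verification, fix $(q',p')\in R$. From $\textit{cdbgo}(q')\in\textit{cdBGO}_I(p')\subseteq\textit{BGO}_I(p')$ and the shape of the sets $\textit{BGO}_I(p')$ in Definition~\ref{bgo:def} (their members have root label $L_I(p')=I(p')$), the root of $\textit{cdbgo}(q')$ is labelled $I(p')$; but by construction it is labelled $I(q')$, so $I(q')=I(p')$, that is, $q'Ip'$. For the transfer condition, let $q'\tran{a}q''$. By determinism $a\in I(q')$ and $q''=q'_a$, so $(a,\textit{cdbgo}(q'_a))$ is a child of the root of $\textit{cdbgo}(q')$; the defining clause of $\textit{BGO}_I(p')$ then yields a transition $p'\tran{a}p''$ with $\textit{cdbgo}(q'_a)\in\textit{BGO}_I(p'')$. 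Now $\textit{cdbgo}(q'_a)$ is a complete deterministic observation (immediate from the definition of $\textit{cdbgo}$) and a deterministic member of $\textit{BGO}_I(p'')$, hence it lies in $\textit{dBGO}_I(p'')\cap\textit{cdBGO}_I=\textit{cdBGO}_I(p'')$; as $q'_a$ is deterministic, $(q'',p'')=(q'_a,p'')\in R$. Thus $R$ is an $I$-constrained simulation, which completes the argument.

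The proof involves no real difficulty and, in particular, no induction is needed: the matching transition at each step is produced directly from the membership $\textit{cdbgo}(q')\in\textit{BGO}_I(p')$, and the corresponding child of the root, being again a complete deterministic tree, lands back in $\textit{cdBGO}_I(p'')$, so the pair stays in $R$. The only thing requiring a little care is the bookkeeping between the four recursive definitions in play --- those of $\textit{cdbgo}(\cdot)$, $\textit{BGO}_I(\cdot)$, $\textit{cdBGO}_I$, and of an $I$-constrained simulation --- together with the base case $q'\equiv\cero$, where $I(p')=\emptyset$ forces $p'$ to have no transitions and $\cero\sqsubseteq_{RS}p'$ holds trivially. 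One could equally well recast this same reasoning as a straightforward structural induction on $q$.
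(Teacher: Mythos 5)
Your proof is correct and follows essentially the same route as the paper: it exhibits the relation $\{(q',p')\mid \textit{cdbgo}(q')\in\textit{cdBGO}_I(p')\}$ and checks directly that it is a ready ($I$-constrained) simulation, with the root label giving $I(q')=I(p')$ and the children of the root supplying the matching transitions. You merely spell out the bookkeeping (membership in $\textit{BGO}_I$, completeness/determinism of the child observations) that the paper leaves implicit.
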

\begin{proof}
We will prove that the set $S=\{(q,p)\mid \textit{cdbgo}(q)\in 
\textit{cdBGO}_I(p)\}$ is a ready simulation.
Obviously, for $(q,p)\in S$ it is $I(q) = I(p)$ and, if $q\tran{a} q_a$, there
exists $p\tran{a}p_a$ with $\textit{cdbgo}(q_a)\in\textit{cdBGO}_I(p_a)$,
which shows that $(q_a,p_a)\in S$ and that $S$ is a ready simulation.
\end{proof}

\begin{thm}\label{possible-worlds:thm}
For all processes $p_1, p_2\in \textrm{BCCSP}$, $p_1\sqsubseteq_{PW} p_2$ iff
$p_1\leq_I^{db} p_2$.
\end{thm}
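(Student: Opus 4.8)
The plan is to connect $\sqsubseteq_{PW}$ with $\leq_I^{db}$ through the intermediate set $\textit{cdBGO}_I$ of complete deterministic observations, for which Lemmas~\ref{pw-aux-1:lem} and~\ref{pw-aux-2:lem} already provide a tight correspondence with $PW(\cdot)$. I first record two auxiliary facts about $\textit{dBGO}_I$. \emph{(Pruning.)} If $t\in\textit{dBGO}_I(p)$ and $s$ is obtained from $t$ by deleting, recursively, some of the children of some of its nodes, then $s\in\textit{dBGO}_I(p)$; this is immediate from the clause defining $\textit{BGO}_I(p)$, where the set of children of the root may be \emph{any} subset of the available pairs $(a,\textit{bgo})$ with $p\tran{a}p'$ and $\textit{bgo}\in\textit{BGO}_I(p')$, and a formal argument is a routine structural induction on $t$ (determinism is preserved because branches are only removed). \emph{(Completion.)} Every $t\in\textit{dBGO}_I(p)$ can be enlarged to some $t^{*}\in\textit{cdBGO}_I(p)$ that prunes back to $t$: writing $t=\langle I(p),\{(a_i,t_i)\mid i\in 1..n\}\rangle$ with the $a_i$ pairwise distinct and $t_i\in\textit{dBGO}_I(p_i)$ for chosen $p_i$ with $p\tran{a_i}p_i$, the induction hypothesis enlarges each $t_i$ to some $t_i^{*}\in\textit{cdBGO}_I(p_i)$; for every $b\in I(p)\setminus\{a_1,\dots,a_n\}$ one fixes some $p\tran{b}p_b$ and some $u_b\in\textit{cdBGO}_I(p_b)$ (such a complete deterministic observation exists for any finite BCCSP process, built by the obvious recursion), and then $t^{*}=\langle I(p),\{(a_i,t_i^{*})\mid i\in 1..n\}\cup\{(b,u_b)\mid b\in I(p)\setminus\{a_1,\dots,a_n\}\}\rangle$ does the job.

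I also use the observation that a complete deterministic observation, viewed as a process $q$ by forgetting the node labels, is deterministic and satisfies $\textit{cdbgo}(q)=t$: the root offer of $t$ equals $I(q)$, and the identity follows by an immediate induction comparing the recursive definitions of $t$ and of $\textit{cdbgo}(q)$. With all this, the theorem follows. For $(\Rightarrow)$, assume $p_1\sqsubseteq_{PW}p_2$, i.e.\ $PW(p_1)\subseteq PW(p_2)$, and take $t\in\textit{dBGO}_I(p_1)$; complete it to $t^{*}\in\textit{cdBGO}_I(p_1)$ and let $q$ be its underlying (deterministic) process, so $t^{*}=\textit{cdbgo}(q)$. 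By Lemma~\ref{pw-aux-2:lem}, $q\in PW(p_1)\subseteq PW(p_2)$, whence $t^{*}=\textit{cdbgo}(q)\in\textit{cdBGO}_I(p_2)\subseteq\textit{dBGO}_I(p_2)$ by Lemma~\ref{pw-aux-1:lem}, and pruning yields $t\in\textit{dBGO}_I(p_2)$. For $(\Leftarrow)$, assume $\textit{dBGO}_I(p_1)\subseteq\textit{dBGO}_I(p_2)$ and take $q\in PW(p_1)$; by Lemma~\ref{pw-aux-1:lem}, $\textit{cdbgo}(q)\in\textit{cdBGO}_I(p_1)\subseteq\textit{dBGO}_I(p_1)\subseteq\textit{dBGO}_I(p_2)$, and since $\textit{cdbgo}(q)$ is a complete deterministic observation it lies in $\textit{dBGO}_I(p_2)\cap\textit{cdBGO}_I=\textit{cdBGO}_I(p_2)$, so $q\in PW(p_2)$ by Lemma~\ref{pw-aux-2:lem}.

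The one delicate point is the completion lemma: one has to check that completing $t$ does not leave $\textit{dBGO}_I(p)$, which is precisely why the construction is carried out by induction on the finite tree $t$ and uses that $a\in I(p)$ forces an $a$-successor together with the existence of a complete deterministic observation for each process. Everything else is bookkeeping: the pruning lemma and the identification of a complete deterministic observation with the $\textit{cdbgo}$ of its underlying process are both straightforward.
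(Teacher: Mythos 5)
Your proof is correct and follows essentially the same route as the paper's: both directions hinge on Lemmas~\ref{pw-aux-1:lem} and~\ref{pw-aux-2:lem} via the complete deterministic observations, with the completion of a \textit{dbgo} to a \textit{cdbgo}, the pruning back, and the identification $\textit{cdbgo}(q)=t$ being exactly the steps the paper leaves implicit ("it is clear that we can extend...") and you merely spell out. No gaps; the extra care on completion and pruning is sound bookkeeping.
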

\begin{proof}
$(\Leftarrow)$
For $q\in PW(p_1)$, by Lemma~\ref{pw-aux-1:lem} we have 
$\textit{cdbgo}(q)\in\textit{cdBGO}_I(p_1)$ and therefore 
$\textit{cdbgo}(q)\in\textit{cdBGO}_I(p_2)$.
Now, by Lemma~\ref{pw-aux-2:lem}, $q\sqsubseteq_{RS} p_2$ and thus 
$q\in PW(p_2)$.

$(\Rightarrow)$
Let $\textit{dbgo}\in dBGO_I(p_1)$: by definition of $\textit{dBGO}_I(p_1)$ it is
clear that we can extend \textit{dbgo} into some 
$\textit{dbgo}'\in\textit{cdBGO}_I(p_1)$.
Now, by Lemma~\ref{pw-aux-2:lem}, $\textit{dbgo}'\sqsubseteq_{RS} p_1$ 
(taking $\textit{dbgo}'$ as a deterministic process).
Therefore, $\textit{dbgo}'\in PW(p_1)$ and thus $\textit{dbgo}'\in PW(p_2)$ and,
by Lemma~\ref{pw-aux-1:lem}, $\textit{cdbgo}(\textit{dbgo}') = \textit{dbgo}'\in
\textit{cdBGO}_I(p_2)$: hence $\textit{dbgo}\in \textit{dBGO}_I(p_2)$ as 
required.
\end{proof}

\begin{rem}
If we consider infinite processes, then our characterization of $\sqsubseteq_{PW}$ by means of $\leq_I^{db}$ only works if we restrict ourselves to image-finite processes. We will continue the discussion on this part when studying the logical characterization of this semantics at Section \ref{sec:LogicalCharacterization}.
\end{rem}

Let us briefly consider the remaining new semantics definable
by means of deterministic branching observations.
It is clear that in all cases the corresponding orders verify
${\leq_N^b}\subseteq {\leq_N^{db}} \subseteq {\leq_N^l}$, so that the associated
semantics will be situated between the corresponding semantics defined by
branching observations in $\textit{BGO}_N$ and linear
observations in $\textit{LGO}_N$, as is the case for the possible worlds
semantics, located between the ready simulation semantics and the ready
trace semantics.

Admittedly, most of these semantics are rather strange and this is probably
the reason why, as far as we know, they have not been previously considered.
However, the simplest of them all, that corresponding to $N= U$, has 
properties similar to the possible worlds semantics and, in fact, can be
defined by simply removing from its definition the ``$R$''
in the condition $q\sqsubseteq_{RS} p$.
Hence, we can regard as possible worlds those deterministic implementations
where we offer just a part of the action offered by the given process.

\begin{defi}\label{PW_order}
The partial possible worlds of a process $p$ are those deterministic
processes that verify $q\sqsubseteq_S p$.
We denote with $PW_U(p)$ the set of partial possible worlds of a process
$p$ and define $p\sqsubseteq_\mathit{UPW} q$ if $PW_U(p)\subseteq PW_U(q)$.
\end{defi}

\begin{prop}
For all processes $p_1, p_2\in \textrm{BCCSP}$, $p_1\sqsubseteq_\mathit{UPW} p_2$
iff $p_1\leq_U^{db} p_2$.
\end{prop}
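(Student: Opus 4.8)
The plan is to mimic the proof of Theorem~\ref{possible-worlds:thm}, replacing the constraint $I$ by $U$, ready simulation $\sqsubseteq_{RS}$ by plain simulation $\sqsubseteq_S$, and the \emph{complete} deterministic observations by ordinary ones. Since $L_U$ carries no information, a $\textrm{dbgo}$ in $\textit{dBGO}_U$ is, after forgetting its (trivial) node labels, nothing but a finite deterministic synchronization tree, i.e.\ a deterministic process; write $\textit{dbgo}(q)$ for the element of $\textit{dBGO}_U$ associated in this way to a deterministic process $q$. Note that here no notion of completeness is needed: because $\sqsubseteq_S$ (unlike $\sqsubseteq_{RS}$) does not require matching initial offers, a partial possible world may offer fewer actions than the process it implements, which is exactly the freedom already present in an arbitrary $\textrm{dbgo}$. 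So the $U$-version is actually simpler than the $I$-version.

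First I would prove the analogue of Lemma~\ref{pw-aux-1:lem}: for every $q\in PW_U(p)$, $\textit{dbgo}(q)\in\textit{dBGO}_U(p)$. This goes by structural induction on the deterministic process $q=\sum_{a\in A}aq_a$: from $q\sqsubseteq_S p$ one obtains, for each $a\in A$, a transition $p\tran{a}p_a$ with $q_a\sqsubseteq_S p_a$; the induction hypothesis gives $\textit{dbgo}(q_a)\in\textit{dBGO}_U(p_a)$, and the recursive definition of $\textit{dBGO}_U(p)$ then yields $\textit{dbgo}(q)=\langle\cdot,\{(a,\textit{dbgo}(q_a))\mid a\in A\}\rangle\in\textit{dBGO}_U(p)$, which is deterministic because $q$ is. Second, the analogue of Lemma~\ref{pw-aux-2:lem}: for every deterministic $q$ with $\textit{dbgo}(q)\in\textit{dBGO}_U(p)$ we have $q\sqsubseteq_S p$, hence $q\in PW_U(p)$. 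For this it suffices to check that $S=\{(q,p)\mid\textit{dbgo}(q)\in\textit{dBGO}_U(p)\}$ is a plain ($U$-constrained) simulation, which is immediate from the recursive definition of $\textit{dBGO}_U(p)$, the constraint $U$ being vacuous.

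With these two lemmas the theorem follows exactly as Theorem~\ref{possible-worlds:thm} does. For $(\Leftarrow)$, if $p_1\leq_U^{db}p_2$ and $q\in PW_U(p_1)$, then $\textit{dbgo}(q)\in\textit{dBGO}_U(p_1)\subseteq\textit{dBGO}_U(p_2)$, so $q\sqsubseteq_S p_2$ by the second lemma, i.e.\ $q\in PW_U(p_2)$. For $(\Rightarrow)$, if $p_1\sqsubseteq_\mathit{UPW}p_2$ and $\textit{dbgo}\in\textit{dBGO}_U(p_1)$, view $\textit{dbgo}$ as a deterministic process $d$ with $\textit{dbgo}(d)=\textit{dbgo}$; the second lemma gives $d\sqsubseteq_S p_1$, hence $d\in PW_U(p_1)\subseteq PW_U(p_2)$, hence $d\sqsubseteq_S p_2$, and the first lemma gives $\textit{dbgo}=\textit{dbgo}(d)\in\textit{dBGO}_U(p_2)$.

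The only genuinely delicate point — and it is a minor one — is the identification of $\textit{dBGO}_U$ with deterministic processes used in the $(\Rightarrow)$ direction: one must observe that the determinism condition imposed on $\textrm{dbgo}$'s is precisely the syntactic determinism of the associated tree, and that, since we work up to bisimilarity (the finest of our semantics), deterministic BCCSP processes correspond bijectively to finite deterministic trees, so that the process $d$ above is well defined. Everything else is a verbatim transcription of the possible worlds argument with the vacuous constraint $U$ in place of $I$ and with the completeness machinery removed.
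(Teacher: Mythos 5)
Your proof is correct and follows exactly the route the paper intends: its own proof is just the remark ``similar to Theorem~\ref{possible-worlds:thm}, simplified by the fact that every \textit{dbgo} in $PW_U(p)$ satisfies $\textit{dbgo}\sqsubseteq_S p$'', and your two lemmas plus the observation that no completion step (no $\textit{cdBGO}$ machinery) is needed are precisely the fleshed-out version of that remark. The only delicate point you flag---identifying deterministic observations in $\textit{dBGO}_U$ with deterministic processes---is handled the same way the paper handles it in the $I$-case, so there is nothing to add.
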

\begin{proof}
Similar to Theorem~\ref{possible-worlds:thm}, simplified by the fact that all
\textit{dbgo} in $PW_U(p)$ satisfy $\textit{dbgo}\sqsubseteq_S p$.
\end{proof}

\begin{exa}
We have $a\sqsubseteq_\mathit{UPW} a+b$ since
$\langle \cdot, \{(a,\emptyset)\}\rangle\in\textit{dBGO}_U(a+b)$.
By contrast, for $p=ab+ac$ and $q=a(b+c)$ we have $p\sqsubseteq_\mathit{UPW}q$
but $q\not\sqsubseteq_\mathit{UPW} p$ because 
$\langle \cdot, \{(a,\langle \cdot,\{ (b,\langle \cdot,\emptyset\rangle),
(c,\langle \cdot,\emptyset\rangle) \}\rangle\}\rangle\in 
\textit{dBGO}_U(q) - \textit{dBGO}_U(p)$.
\end{exa}

Analogously, for any other constraint $N$ we could define the $N$-possible
worlds order $\sqsubseteq_\mathit{NPW}$ using $\sqsubseteq_\mathit{NS}$ instead 
of $\sqsubseteq_\mathit{S}$ at Definition \ref{PW_order}.
However, it is easy to see that when $N$ is fine enough, e.g. $N=T$, this order would become
totally wrong. Instead, we can still consider the observations in $\textit{dBGO}_N$ and by means
of them we define the ``reasonable'' deterministic branching semantics, for any layer in the spectrum.

The extended spectrum can now be depicted as in 
Figures~\ref{fig:extended-ltbts} and \ref{fig:Slice}.

\begin{figure}[tbp]
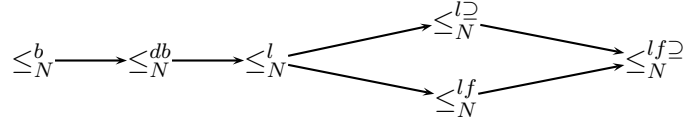

  \begin{center}
    \begin{tabular*}{.6\textwidth}[h]{@{\extracolsep{\fill}}ccccccc}

      &  & & &\rnode{nlr}{$\leq_N^{l\supseteq}$} &  \\
      \rnode{ns}{$\leq^b_N$}& \rnode{dbs}{$\leq^{db}_N$}& 
        \rnode{nl}{$\leq_N^{l}$} & & & &\rnode{nlfr}{$\leq_N^{lf\supseteq}$}\\
      & & & & \rnode{nlf}{$\leq_N^{lf}$} \\
      \ncline{->}{ns}{dbs}
      \ncline{->}{dbs}{nl}
      \ncline{->}{nl}{nlr}
      \ncline{->}{nl}{nlf}
      \ncline{->}{nlr}{nlfr}
      \ncline{->}{nlf}{nlfr}
      \\
    \end{tabular*}
    \caption{Basic layer in the linear time-branching time spectrum.}
    \label{fig:Slice}
  \end{center}
\end{figure}
\begin{figure}[tbp]
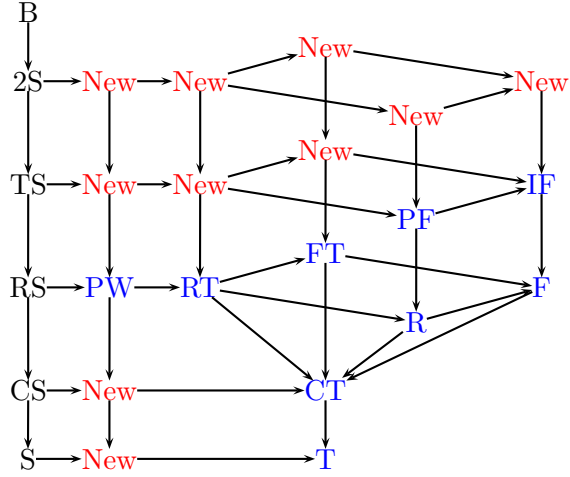

{
 \begin{center}
   \begin{tabular*}{.5\textwidth}[h]{@{\extracolsep{\fill}}cccccccc}

     \rnode{bs}{B} \\

     & & & & \rnode{n7}{\textcolor{red}{New}}  \\
     \rnode{n2s}{2S}& \rnode{pw1}{\textcolor{red}{New}} &
       \rnode{n4}{\textcolor{red}{New}} && &&&\rnode{n5}{\textcolor{red}{New}}\\
     & & & & & \rnode{n6}{\textcolor{red}{New}} \\ 
     \ncline{->}{n2s}{pw1}
     \ncline{->}{pw1}{n4}
     \ncline{->}{n4}{n6}
     \ncline{->}{n4}{n7}
     \ncline{->}{n6}{n5}
     \ncline{->}{n7}{n5}

     \ncline[linestyle=solid]{->}{bs}{n2s}
     & & & & \rnode{pf}{\textcolor{red}{New}} &  \\
     \rnode{ts}{TS}& \rnode{pw2}{\textcolor{red}{New}} &
       \rnode{n1}{\textcolor{red}{New}} && &&&\rnode{n2}{\textcolor{blue}{IF}}\\
     & & & & & \rnode{n3}{\textcolor{blue}{PF}} \\
     \ncline{->}{ts}{pw2}
     \ncline{->}{pw2}{n1}
     \ncline{->}{n1}{pf}
     \ncline{->}{n1}{n3}
     \ncline{->}{pf}{n2}
     \ncline{->}{n3}{n2}

     \ncline[linestyle=solid]{->}{n2s}{ts}
     \ncline[linestyle=solid]{->}{pw1}{pw2}
     \ncline[linestyle=solid]{->}{n4}{n1}
     \ncline[linestyle=solid]{->}{n5}{n2}
     \ncline[linestyle=solid]{->}{n6}{n3}
     \ncline[linestyle=solid]{->}{n7}{pf}
      & & & & \rnode{ft}{\textcolor{blue}{FT}} &  \\
     \rnode{rs}{RS}& \rnode{pw}{\textcolor{blue}{PW}} &
       \rnode{rt}{\textcolor{blue}{RT}} & & & &&\rnode{f}{\textcolor{blue}{F}}\\
     & & & & &\rnode{r}{\textcolor{blue}{R}} \\
     \ncline{->}{rs}{pw}
     \ncline{->}{pw}{rt}
     \ncline{->}{rt}{ft}
     \ncline{->}{rt}{r}
     \ncline{->}{ft}{f}
     \ncline{->}{r}{f}

     \ncline[linestyle=solid]{->}{ts}{rs}
     \ncline[linestyle=solid]{->}{pw2}{pw}
     \ncline[linestyle=solid]{->}{n1}{rt}
     \ncline[linestyle=solid]{->}{pf}{ft}
     \ncline[linestyle=solid]{->}{n3}{r}
     \ncline[linestyle=solid]{->}{n2}{f}
     \\

     \rnode{cs}{CS}& \rnode{pw3}{\textcolor{red}{New}} & &&
       \rnode{ct}{\textcolor{blue}{CT}} & \\
     \ncline{->}{cs}{pw3}
     \ncline{->}{pw3}{ct}

     \ncline[linestyle=solid]{->}{rs}{cs}
     \ncline[linestyle=solid]{->}{pw}{pw3}
     \ncline[linestyle=solid]{->}{f}{ct}
     \ncline[nodesep=1pt,linestyle=solid]{->}{r}{ct}
     \ncline[linestyle=solid]{->}{ft}{ct}
     \ncline[linestyle=solid]{->}{rt}{ct}
     \\
     \rnode{s}{S}& \rnode{pw4}{\textcolor{red}{New}} & &&
       \rnode{t}{\textcolor{blue}{T}} & 
     \ncline{->}{s}{pw4}
     \ncline{->}{pw4}{t}

     \ncline[linestyle=solid]{->}{cs}{s}
     \ncline[linestyle=solid]{->}{pw3}{pw4}
     \ncline[linestyle=solid]{->}{ct}{t}
   \end{tabular*} 
 \end{center}
}
 \caption{Semantics in the new linear time-branching time spectrum.}
 \label{fig:extended-ltbts}
\end{figure}

\subsection{Back to branching observations}

The orders $\leq^{l\delta}_N$ with $\delta\in\{\supseteq,f,f{\supseteq}\}$ that characterize
some of the linear semantics studied in Section~\ref{linear:sec}, restricted in several ways the use
of the local information, when characterizing those semantics.
The same scheme can be generalized to the branching observations.
This way, for each constraint $N$ we would obtain three new branching
semantics based on $\textrm{bgo}$'s in $\textit{BGO}_N$ which, together with
the original $N$-simulation semantics, would constitute a diamond of
branching semantics at a higher layer in our extended ltbt spectrum.
The introduction of these new semantics also offers a clearer
view of the spectrum, with two main levels of branching and linear
semantics and an intermediate one of deterministic branching semantics.
Although this provides the means for obtaining a host of new semantics, it
is also true that most of them are bizarre, 
in sharp contrast with the fact that the corresponding orders gave rise to 
interesting semantics when applied to linear observations.

To illustrate the comments above, next we consider in some
detail the case $N=I$, which corresponds to the most interesting semantics.

\begin{defi}
For $\textit{bgo}, \textit{bgo}'\in \textit{BGO}_I$ we define:
\begin{iteMize}{$\bullet$}
\item $ \textit{bgo}\leq_I^\supseteq\textit{bgo}'\iff
  \begin{array}[t]{l}
  \big(\textit{bgo}=\langle A_1, S_1\rangle \textrm{ and } 
  \textit{bgo}'=\langle A_2, S_2\rangle \textrm{ and } A_1 \supseteq A_2
  \textrm{ and } \\
  \hphantom{\big(}S_1 = \{(a_i,\textit{bgo}_i)\mid i\in I\} \textrm{ and }
  S_2 = \{(a_i,\textit{bgo}'_i)\mid i\in I\} \textrm{ and }\\
  \hphantom{\big(}\textrm{for all }i\in I\ 
     (\textit{bgo}_i\leq_I^\supseteq \textit{bgo}_i')\big)\,.
  \end{array}$
 \smallskip
\item $\textit{bgo}\leq_I^f\textit{bgo}'\iff
 \begin{array}[t]{l}
 \big(\textit{bgo}=\langle A_1, \emptyset\rangle \textrm{ and }
 \textit{bgo}'=\langle A_1, \emptyset\rangle\big)\textrm{ or }\\
 \big(\textit{bgo}=\langle A_1, S_1\rangle \textrm{ and } 
  \textit{bgo}'=\langle A_2, S_2\rangle
  \textrm{ and } \\
  \hphantom{\big(}S_1 = \{(a_i,\textit{bgo}_i)\mid i\in I\} \textrm{ and }
  S_2 = \{(a_i,\textit{bgo}'_i)\mid i\in I\} \textrm{ and }\\
  \hphantom{\big(}\textrm{for all } i\in I\ 
     (\textit{bgo}_i\leq_I^f \textit{bgo}_i')\big)\,.
 \end{array}$
 \smallskip
\item $\textit{bgo}\leq_I^{f\supseteq}\textit{bgo}'\iff
 \begin{array}[t]{l}
  \big(\textit{bgo}=\langle A_1, \emptyset\rangle \textrm{ and }
  \textit{bgo}'=\langle A_2, \emptyset\rangle \textrm{ and }
  A_1\supseteq A_2\big) \textrm{ or }\\
 \big(\textit{bgo}=\langle A_1, S_1\rangle \textrm{ and } 
  \textit{bgo}'=\langle A_2, S_2\rangle
  \textrm{ and } \\
  \hphantom{\big(}S_1 = \{(a_i,\textit{bgo}_i)\mid i\in I\} \textrm{ and }
  S_2 = \{(a_i,\textit{bgo}'_i)\mid i\in I\} \textrm{ and }\\
  \hphantom{\big(}\textrm{for all }i\in I\ 
        (\textit{bgo}_i\leq_I^{f\supseteq} \textit{bgo}_i')\big)\,.
 \end{array}$
\end{iteMize}
\end{defi}

\begin{defi}
For $\calB, \calB'\subseteq \textit{BGO}_I$ and 
$\delta\in\{\supseteq,f,f{\supseteq}\}$, we define the orders $\leq_I^{b\delta}$ by:
\begin{iteMize}{$\bullet$}
\item $\calB\leq_I^{b\delta} \calB' \iff 
 \textrm{for all}\ \textit{bgo}\in\calB
 \textrm{ there exists }\textit{bgo}'\in\calB'
 \textrm{ with } \textit{bgo}\leq_I^{\delta}\textit{bgo}'$.
\end{iteMize}
Then, we write $p\leq_I^{b\delta} q$ if 
$\textit{BGO}_I(p)\leq_I^{b\delta} \textit{BGO}_I(q)$.
\end{defi}

It is somewhat surprising to discover that ${\leq_I^{b\supseteq}} ={\leq_I^b}$,
since this was not the case for their linear ``projections''
$\leq_I^{l\supseteq}$ and $\leq_I^l$.

\begin{prop}
For all processes $p_1, p_2 \in \textrm{BCCSP}$, $p_1\leq_I^{b\supseteq} p_2$ iff
$p_1\leq_I^{b} p_2$.
\end{prop}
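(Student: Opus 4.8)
The plan is to establish the two inclusions ${\leq_I^b}\subseteq{\leq_I^{b\supseteq}}$ and ${\leq_I^{b\supseteq}}\subseteq{\leq_I^b}$ separately; the first is immediate and the second carries all the content. For the first I would note that $\leq_I^\supseteq$ is reflexive: by induction on the structure of a $\textit{bgo}\in\textit{BGO}_I$ one has $\textit{bgo}\leq_I^\supseteq\textit{bgo}$, since at every node $A\supseteq A$ holds and each child is compared with itself. Hence if $\textit{BGO}_I(p_1)\subseteq\textit{BGO}_I(p_2)$ then each $\textit{bgo}\in\textit{BGO}_I(p_1)$ already lies in $\textit{BGO}_I(p_2)$ and dominates itself, so $p_1\leq_I^{b\supseteq}p_2$.

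For the converse I would avoid arguing directly with the sets of observations and instead show that $R=\{(p,q)\mid p\leq_I^{b\supseteq}q\}$ is an $I$-constrained simulation in the sense of Definition~\ref{def:constrainedsimulation}; then, by Theorem~\ref{denotational-main:thm} with $N=I$, $p\leq_I^{b\supseteq}q$ implies $p\sqsubseteq_{RS}q$, hence $p\leq_I^b q$. That $pRq$ implies $I(q)\subseteq I(p)$ is clear: the trivial observation $\langle I(p),\emptyset\rangle\in\textit{BGO}_I(p)$ must be $\leq_I^\supseteq$-dominated by some $\textit{bgo}'\in\textit{BGO}_I(q)$, and since $\leq_I^\supseteq$ preserves the underlying tree shape this forces $\textit{bgo}'=\langle I(q),\emptyset\rangle$ with $I(p)\supseteq I(q)$. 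It remains to prove the transfer property: if $pRq$ and $p\tran{a}p'$ then there is $q\tran{a}q'$ with $p'Rq'$ (applying this to each $a\in I(p)$ also gives $I(p)\subseteq I(q)$, so that $I(p)=I(q)$ and $R$ is indeed a ready simulation).

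I expect the transfer property to be the only real obstacle, and I would argue it by contradiction. Since $q$ is a BCCSP term it has only finitely many $a$-derivatives $q_1,\dots,q_m$, and $m\ge1$ (otherwise $\langle I(p),\{(a,\langle I(p'),\emptyset\rangle)\}\rangle\in\textit{BGO}_I(p)$ would have no dominating observation in $\textit{BGO}_I(q)$). Assuming $p'\not\leq_I^{b\supseteq}q_k$ for every $k$, choose for each $k$ a witness $\textit{bgo}_k=\langle I(p'),S_k\rangle\in\textit{BGO}_I(p')$ that is not $\leq_I^\supseteq$-dominated by any observation of $q_k$. The crucial point is that $\textit{BGO}_I(p')$ is closed under taking the union of the children sets at the root (a consequence of the ``subset of available children'' clause in Definition~\ref{bgo:def}), so $\textit{bgo}^\cup=\langle I(p'),\bigcup_k S_k\rangle\in\textit{BGO}_I(p')$ and therefore $\langle I(p),\{(a,\textit{bgo}^\cup)\}\rangle\in\textit{BGO}_I(p)$. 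Applying $p\leq_I^{b\supseteq}q$ to this observation and using shape preservation, we obtain $\langle I(q),\{(a,t)\}\rangle\in\textit{BGO}_I(q)$ with $\textit{bgo}^\cup\leq_I^\supseteq t$ and $t\in\textit{BGO}_I(q_{k_0})$ for some $k_0$. Restricting $t$ to the children whose index set matches that of $S_{k_0}\subseteq\bigcup_k S_k$ then yields $t'\in\textit{BGO}_I(q_{k_0})$ with $\textit{bgo}_{k_0}\leq_I^\supseteq t'$, contradicting the choice of $\textit{bgo}_{k_0}$. The remaining verifications — that the combined tree $\textit{bgo}^\cup$ and the restricted tree $t'$ are genuine members of the relevant $\textit{BGO}_I(\cdot)$ sets, and that the index-set bookkeeping in the definition of $\leq_I^\supseteq$ works out — are routine from the definitions.
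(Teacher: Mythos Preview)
Your argument is correct but takes a genuinely different route from the paper's. The paper argues directly on observations: given $\textit{bgo}\in\textit{BGO}_I(p_1)$, it first extends $\textit{bgo}$ to a \emph{complete} observation $\textit{cbgo}\in\textit{BGO}_I(p_1)$ (one where at every node labelled $A$ each action of $A$ occurs among the outgoing arcs). Then $p_1\leq_I^{b\supseteq}p_2$ yields some $\textit{cbgo}'\in\textit{BGO}_I(p_2)$ with $\textit{cbgo}\leq_I^\supseteq\textit{cbgo}'$, and completeness forces $\textit{cbgo}=\textit{cbgo}'$: at each node the label $A_2$ of $\textit{cbgo}'$ must contain all child actions, which by completeness coincide with the full label $A_1$ of $\textit{cbgo}$, so $A_1\supseteq A_2\supseteq A_1$. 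Hence $\textit{cbgo}\in\textit{BGO}_I(p_2)$, and since $\textit{bgo}$ is a sub-observation of $\textit{cbgo}$, also $\textit{bgo}\in\textit{BGO}_I(p_2)$. This is a two-line argument once the notion of complete bgo is in hand. Your approach instead shows that $\leq_I^{b\supseteq}$ is itself a ready simulation, combining counter-witnesses at the root via the closure of $\textit{BGO}_I(p')$ under union of children, and then restricting on the $q$-side; this is longer but has the merit of making explicit how $\leq_I^{b\supseteq}$ interacts with the coinductive definition of $\sqsubseteq_{RS}$, and it reuses Theorem~\ref{denotational-main:thm} rather than introducing a new auxiliary notion. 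Both proofs rely on the finiteness of BCCSP processes (yours for the finite list $q_1,\dots,q_m$, the paper's for the existence of a complete extension).
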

\begin{proof}
Assume that $p_1\leq_I^{b\supseteq} p_2$ and let 
$\textit{bgo}\in\textit{BGO}_I(p_1)$: it is clear that it can be 
extended into a complete $\textit{cbgo}\in\textit{BGO}_I(p_1)$.
Then, there exists some $\textit{cbgo}'\in \textit{BGO}_I(p_2)$ with
$\textit{cbgo}\leq_I^{\supseteq} \textit{cbgo}'$ and, since $\textit{cbgo}$ is
complete, $\textit{cbgo}=\textit{cbgo}'$ and hence 
$\textit{bgo}\in\textit{BGO}_I(p_2)$.
The other implication is trivial.
\end{proof}

\begin{exa}
For $p_1 = a(b+c)$ and $p_2= ab+ac$, $p_1\leq_I^{l\supseteq}p_2$
but $p_1\not\leq_I^l p_2$.
However, $p_1\not\leq_I^{b\supseteq} p_2$ since for
$\textit{bgo} = 
\langle \{a\}, (a,\langle\{b,c\}, \{(b,\emptyset),(c,\emptyset)\}\rangle)\rangle
\in \textit{BGO}_I(p_1)$
there is no $\textit{bgo}'\in \textit{BGO}_I(p_2)$ with
$\textit{bgo}\leq_I^{l\supseteq} \textit{bgo}'$.
\end{exa}

By contrast, the branching semantics defined by $\leq_I^{bf}$ and
$\leq_I^{bf\supseteq}$ are indeed new.

\begin{exa}
For the processes $p$ and $q$ in Figure~\ref{weird-sem:fig},
$p\leq_I^{bf} q$ but $p\not\leq_I^b q$.
\begin{figure}[htbp]
\[
\begin{array}{c@{\quad}c@{\quad}c}
p & q & q'
\vspace{0.2cm}
\\
\xymatrix{
\cdot\ar[d]^a\\
\cdot\ar[d]^a\\
\cdot\ar[d]^a\\
\cdot\ar[d]^a\\
\cdot
}
&
\xymatrix{
&&\cdot\ar[dll]_a\ar[d]^a\ar[drr]^a\\
\cdot\ar[d]^a &&\cdot\ar[dl]_a\ar[d]^b &&\cdot\ar[dl]_a\ar[d]^b\\
\cdot&\cdot\ar[d]_a&\cdot&\cdot\ar[d]_a\ar[dr]^b&\cdot\\
&\cdot & &\cdot\ar[d]_a&\cdot\\
 &  &  & \cdot
}
&
\xymatrix{
&&\cdot\ar[dll]_a\ar[dl]^a\ar[d]^a\ar[drr]^a\\
\cdot &\cdot\ar[dl]_a\ar[d]_b&\cdot\ar[d]_a\ar[dr]^b &&\cdot\ar[d]_a\ar[dr]^b\\
\cdot&\cdot&\cdot\ar[d]_a\ar[dr]^b&\cdot&\cdot\ar[d]_a\ar[dr]^b&\cdot\\
& &\cdot &\cdot&\cdot\ar[d]_a\ar[dr]^b&\cdot\\
 &  &  & & \cdot&\cdot
}
\end{array}
\]
\caption{Three processes.}
\label{weird-sem:fig}
\end{figure}
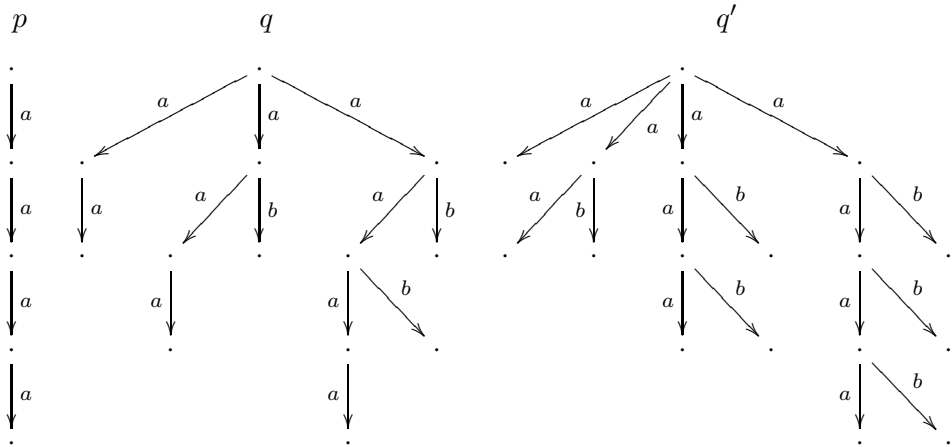
\end{exa}

This example shows that it is quite difficult to characterize this semantics
as a simulation one.
Furthermore, we conjecture that it is not finitely axiomatizable in the classic way (that means using only unconditional axioms).
As a matter of fact, we were also unable to find any conditional axiomatization, what we interpret as a ``proof'' of 
the fact that these new branching semantics are quite strange.

\begin{defi}
We say that $R\subseteq \textit{BGO}_I\times \textrm{BCCSP}$ is a 
\emph{final-ready simulation} when:
\begin{iteMize}{$\bullet$}
\item $(\langle A, \emptyset\rangle, q)\in R$ implies $I(q) = A$.
\item $(\langle A, \{(a_i, \textit{bgo}_i)\}\rangle, q) \in R$ implies that
 for all $i\in I$ there exists $q\tran{a_i} q_i$ such that 
 $(\textit{bgo}_i,q_i)\in R$.
\end{iteMize}
We say that $p$ is final-ready simulated by $q$ when for all 
$\textit{bgo}\in\textit{BGO}_I(p)$ there exists a final-ready simulation
$R$ with $(\textit{bgo},q)\in R$, and write $p \sqsubseteq_\mathit{fRS}q$.
\end{defi}

\begin{thm}
For all $p,q\in\textrm{BCCSP}$, $p\sqsubseteq_\mathit{fRS} q$ iff
$p\leq_I^{bf} q$.
\end{thm}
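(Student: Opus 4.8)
The plan is to establish the two implications separately, each by a straightforward structural induction on the (finite) branching observations involved. The key observation is that $\textit{bgo}\leq_I^f\textit{bgo}'$ forces $\textit{bgo}$ and $\textit{bgo}'$ to have the same underlying tree shape --- the same edge labels at every level --- with arbitrary ready sets at internal nodes but equal ready sets at the leaves. Reading the first clause of the definition of $\leq_I^f$ as the case ``$\textit{bgo}$ has no children'', the local (ready) information is therefore consulted only at the leaves, which is precisely the information recorded by the first clause of a final-ready simulation; this correspondence is the backbone of both directions. Throughout, the compositional clause $\textit{BGO}_I(q)=\{\langle L_I(q),S\rangle\mid S\subseteq\{(a,\textit{bgo})\mid\textit{bgo}\in\textit{BGO}_I(q'),\,q\tran{a}q'\}\}$ is what lets us read transitions off a bgo of $q$ and, conversely, reassemble a bgo of $q$ from bgos of its successors.

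For $p\sqsubseteq_\mathit{fRS} q\Rightarrow p\leq_I^{bf} q$, I would prove the auxiliary claim: for every $\textit{bgo}\in\textit{BGO}_I$ and every process $q$, if some final-ready simulation $R$ satisfies $(\textit{bgo},q)\in R$, then there is $\textit{bgo}'\in\textit{BGO}_I(q)$ with $\textit{bgo}\leq_I^f\textit{bgo}'$, by induction on $\textit{bgo}$. If $\textit{bgo}=\langle A,\emptyset\rangle$, the first clause of a final-ready simulation gives $I(q)=A$, so $\textit{bgo}'=\langle L_I(q),\emptyset\rangle=\textit{bgo}\in\textit{BGO}_I(q)$ works. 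If $\textit{bgo}=\langle A,\{(a_i,\textit{bgo}_i)\mid i\in I\}\rangle$, the second clause gives, for each $i$, a transition $q\tran{a_i}q_i$ with $(\textit{bgo}_i,q_i)\in R$; applying the induction hypothesis with the same $R$ yields $\textit{bgo}'_i\in\textit{BGO}_I(q_i)$ with $\textit{bgo}_i\leq_I^f\textit{bgo}'_i$, and then $\textit{bgo}'=\langle L_I(q),\{(a_i,\textit{bgo}'_i)\mid i\in I\}\rangle$ lies in $\textit{BGO}_I(q)$ and $\textit{bgo}\leq_I^f\textit{bgo}'$. Since $p\sqsubseteq_\mathit{fRS} q$ supplies, for each $\textit{bgo}\in\textit{BGO}_I(p)$, a witnessing final-ready simulation with $(\textit{bgo},q)\in R$, this yields $\textit{BGO}_I(p)\leq_I^{bf}\textit{BGO}_I(q)$, i.e.\ $p\leq_I^{bf} q$.

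For the converse $p\leq_I^{bf} q\Rightarrow p\sqsubseteq_\mathit{fRS} q$, fix $\textit{bgo}\in\textit{BGO}_I(p)$ and pick $\textit{bgo}'\in\textit{BGO}_I(q)$ with $\textit{bgo}\leq_I^f\textit{bgo}'$; I would then show, by induction on $\textit{bgo}$, that from $\textit{bgo}'\in\textit{BGO}_I(q)$ and $\textit{bgo}\leq_I^f\textit{bgo}'$ one can build a final-ready simulation $R$ with $(\textit{bgo},q)\in R$. In the leaf case $\textit{bgo}=\textit{bgo}'=\langle A,\emptyset\rangle$, and $\textit{bgo}'\in\textit{BGO}_I(q)$ forces $A=L_I(q)=I(q)$, so $R=\{(\langle A,\emptyset\rangle,q)\}$ is a final-ready simulation. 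In the node case $\textit{bgo}=\langle A,\{(a_i,\textit{bgo}_i)\mid i\in I\}\rangle$ and $\textit{bgo}'=\langle A',\{(a_i,\textit{bgo}'_i)\mid i\in I\}\rangle$ with $\textit{bgo}_i\leq_I^f\textit{bgo}'_i$; the compositional description of $\textit{BGO}_I(q)$ provides, for each $i$, a state $q_i$ with $q\tran{a_i}q_i$ and $\textit{bgo}'_i\in\textit{BGO}_I(q_i)$, so the induction hypothesis yields a final-ready simulation $R_i$ with $(\textit{bgo}_i,q_i)\in R_i$, and one checks that $R=\{(\textit{bgo},q)\}\cup\bigcup_{i\in I}R_i$ is again a final-ready simulation --- the new pair $(\textit{bgo},q)$ meets the second clause via the $q_i$ together with the inclusions $R_i\subseteq R$, while every pair inherited from some $R_i$ keeps exactly the obligations it already had there. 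Hence every $\textit{bgo}\in\textit{BGO}_I(p)$ is covered, i.e.\ $p\sqsubseteq_\mathit{fRS} q$.

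I do not expect a serious obstacle here: both directions are bookkeeping around the recursive form of $\textit{BGO}_I$. The one genuinely delicate point --- and the place where a careless reading of $\leq_I^f$ would make the statement fail (for instance $p=ab$ and $q=ac$ with $b\neq c$ would wrongly satisfy $p\leq_I^{bf} q$ while $p\not\sqsubseteq_\mathit{fRS} q$) --- is that a leaf of $\textit{bgo}$ must be matched in $\textit{bgo}'$ only by a leaf carrying the same ready set, which is what lines up with the requirement $I(q)=A$ in the definition of a final-ready simulation. Everything else is routine.
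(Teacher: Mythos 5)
Your argument is correct, and there is in fact nothing in the paper to compare it against: the theorem is stated there without proof (as is its companion for $\sqsubseteq_\mathit{fFS}$), so your two structural inductions supply exactly the missing content. The forward direction (unfolding a final-ready simulation into a $\leq_I^f$-dominating observation of $q$, by induction on the observation) and the backward direction (reassembling a final-ready simulation from a witnessing $\textit{bgo}'\in\textit{BGO}_I(q)$, using that unions of final-ready simulations remain final-ready simulations) are the natural proof, and each step checks out against Definitions~\ref{bgo:def} and the definitions of $\leq_I^f$, $\leq_I^{bf}$ and final-ready simulation. Your reading of the first clause of $\leq_I^f$ as the (only) clause governing leaves is indeed the intended one; otherwise $\leq_I^{bf}$ and $\leq_I^{bf\supseteq}$ would collapse, contradicting the example in the paper. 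One cosmetic point: $\textit{bgo}\leq_I^f\textit{bgo}'$ does not literally force equal tree shapes, since the shared index set allows two distinct children of $\textit{bgo}$ to be matched by one and the same child of $\textit{bgo}'$; your inductive steps never use shape equality, only the indexed correspondence, so this is harmless.

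The only genuine slip is in your closing illustration: $p=ab$, $q=ac$ with $b\neq c$ does not witness the failure of the ``careless'' reading, because the observation $\langle \{a\},\{(a,\langle \{b\},\{(b,\langle\emptyset,\emptyset\rangle)\}\rangle)\}\rangle\in\textit{BGO}_I(p)$ has no $\leq_I^f$-match in $\textit{BGO}_I(q)$ under either reading --- the depth-two labels $b$ and $c$ already disagree --- so $p\not\leq_I^{bf}q$ in any case. A correct witness is, e.g., $p=a$ and $q=a+b$ (or $p=ab$ and $q=a(b+c)$): under the lax reading every leaf matches every leaf, so $p\leq_I^{bf}q$ would hold, whereas $p\not\sqsubseteq_\mathit{fRS}q$ because the leaf observation $\langle\{a\},\emptyset\rangle$ forces $I(q)=\{a\}$ (respectively $I(b+c)=\{b\}$), which fails. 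This remark is only motivational and does not affect the proof itself.
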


\begin{exa}
It is easy to check that for $p$ and $q'$ as in Figure~\ref{weird-sem:fig} we
have $p\leq_I^{bf{\supseteq}} q'$ but $p\not\leq_I^{bf} q'$.
\end{exa}

Final failure simulations are defined exactly like final-ready simulations
but substituting $I(q)\subseteq \textit{Act}$ for $I(q)= A$ in the first
clause, giving rise to
the order $\sqsubseteq_\mathit{fFS}$ between processes.

\begin{thm}
For all $p,q\in\textrm{BCCSP}$, $p\sqsubseteq_\mathit{fFS} q$ iff
$p\leq_I^{bf\supseteq} q$.
\end{thm}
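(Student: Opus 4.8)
The plan is to prove the two inclusions separately, following closely the scheme already used for the preceding theorem relating final-ready simulations and $\leq_I^{bf}$ (and, before that, Theorem~\ref{possible-worlds:thm}); in effect the whole argument is that proof with equality of final offers relaxed to reverse inclusion. The basic device I would use throughout is the standard remark about bgo's: any $\textit{bgo}'\in\textit{BGO}_I(q)$ comes equipped with a canonical embedding into the transition graph of $q$, sending the root to $q$, sending a node reached from $v'$ by an arc $(a,\textit{bgo}'')$ to some state $q''$ with $q_{v'}\tran{a}q''$ and $\textit{bgo}''\in\textit{BGO}_I(q'')$, and labeling every node $v'$ precisely by $I(q_{v'})$. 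I would also record, straight from the definition of $\leq_I^{f\supseteq}$, that $\textit{bgo}\leq_I^{f\supseteq}\textit{bgo}'$ forces the two trees to have the same arc-labeled shape, makes leaves correspond to leaves, and at corresponding leaves $\langle A_1,\emptyset\rangle$, $\langle A_2,\emptyset\rangle$ gives $A_1\supseteq A_2$.

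For the implication $p\leq_I^{bf\supseteq} q \Rightarrow p\sqsubseteq_\mathit{fFS} q$, I would fix $\textit{bgo}\in\textit{BGO}_I(p)$, pick by hypothesis some $\textit{bgo}'\in\textit{BGO}_I(q)$ with $\textit{bgo}\leq_I^{f\supseteq}\textit{bgo}'$, let $\phi$ be the node correspondence witnessing $\leq_I^{f\supseteq}$ and $q_{(\cdot)}$ the embedding of $\textit{bgo}'$ into $q$ described above, and set $R=\{(\textit{bgo}_v, q_{\phi(v)})\mid v\ \textrm{a node of}\ \textit{bgo}\}$, with $\textit{bgo}_v$ the subtree of $\textit{bgo}$ rooted at $v$. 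Checking that $R$ is a final failure simulation is then direct: at a leaf $v$ (so $\phi(v)$ is a leaf of $\textit{bgo}'$, labeled $I(q_{\phi(v)})$), the leaf clause of $\leq_I^{f\supseteq}$ gives $A\supseteq I(q_{\phi(v)})$ for $\langle A,\emptyset\rangle=\textit{bgo}_v$, which is exactly the requirement $I(q_{\phi(v)})\subseteq A$ on the final offers; and at an internal node every arc $(a_i,\textit{bgo}_i)$ of $\textit{bgo}$ matches, through $\phi$ and the embedding, a transition $q_{\phi(v)}\tran{a_i} q_{\phi(v_i)}$ with $(\textit{bgo}_i, q_{\phi(v_i)})\in R$. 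Since $(\textit{bgo},q)\in R$, this yields $p\sqsubseteq_\mathit{fFS} q$.

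For the converse I would fix $\textit{bgo}\in\textit{BGO}_I(p)$ and a final failure simulation $R$ with $(\textit{bgo},q)\in R$, and unfold $R$ from $(\textit{bgo},q)$: this attaches to every node $v$ of $\textit{bgo}$ a state $q_v$ with $(\textit{bgo}_v,q_v)\in R$, with $q_v=q$ at the root and $q_v\tran{a_i} q_{v_i}$ for each arc $(a_i,\textit{bgo}_i)$ out of $v$ (second clause of a final failure simulation; the witness $q_{v_i}$ chosen arbitrarily). I would then take $\textit{bgo}'$ to be the tree with the same shape as $\textit{bgo}$ whose node $v$ is relabeled by $I(q_v)$. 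An induction on the tree, using the definition of $\textit{BGO}_I$ together with $\langle L_I(r),\emptyset\rangle\in\textit{BGO}_I(r)$ at the leaves, gives $\textit{bgo}'\in\textit{BGO}_I(q)$; a second induction gives $\textit{bgo}\leq_I^{f\supseteq}\textit{bgo}'$, the only point of substance being at the leaves, where the final-offer clause of a final failure simulation ($I(q_v)\subseteq A$) delivers exactly $A\supseteq I(q_v)$. Hence $p\leq_I^{bf\supseteq} q$.

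I do not expect a genuine obstacle here: the only care needed is the bookkeeping of the correspondence between the nodes of a bgo and the states of the process it observes, together with the observation that $\leq_I^{f\supseteq}$ pins leaves to leaves so that the final-offer clause of a final failure simulation gets invoked exactly where it is needed; everything else is the same routine induction-on-finite-trees argument already carried out for the $\sqsubseteq_\mathit{fRS}/\leq_I^{bf}$ case.
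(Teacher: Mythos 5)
Your proof is correct. There is nothing in the paper to compare it with: both this theorem and its final-ready companion are stated without proof, and your argument is exactly the intended one, transplanting the unfolding technique of Lemmas~\ref{pw-aux-1:lem} and~\ref{pw-aux-2:lem} --- in one direction composing the node correspondence witnessing $\textit{bgo}\leq_I^{f\supseteq}\textit{bgo}'$ with the canonical embedding of $\textit{bgo}'$ into the transitions of $q$ to read off a final failure simulation, and in the other unfolding a given simulation along $\textit{bgo}$ and relabelling its nodes by the offers of the attached states to produce the dominating element of $\textit{BGO}_I(q)$. Two remarks. First, you silently (and necessarily) repair a typo: the paper defines final failure simulations by ``substituting $I(q)\subseteq \textit{Act}$ for $I(q)=A$'', which is vacuous; the intended clause is $I(q)\subseteq A$, the one your leaf cases use --- under the literal reading the theorem would be false (e.g.\ $p=a\cero$ and $q=ab\cero$ give $p\sqsubseteq_\mathit{fFS}q$ but $p\not\leq_I^{bf\supseteq}q$), so it is worth stating the corrected clause explicitly. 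Second, your preliminary claim that $\leq_I^{f\supseteq}$ forces the two trees to have ``the same arc-labeled shape'' is slightly too strong, since the indexed matching in its second clause may send distinct branches of $\textit{bgo}$ to a single branch of $\textit{bgo}'$; this is harmless, because all you actually use is a node map $\phi$ preserving arcs and actions, sending leaves to leaves, with $A_1\supseteq A_2$ at corresponding leaves, and that is exactly what the definition provides.
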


As previously noted, these are certainly bizarre semantics but we 
believe it is interesting to indicate their existence because, by analogy to the linear case, their definitions
in terms of branching observations look
quite natural.
However, it also seems that when dealing with branching observations the 
introduction of any kind of asymmetry in the treatment of local observations 
produces quite involved semantics.

\section{Relating the observational and equational frameworks}
\label{rnoef:sec}

In this section we tie up all loose ends and show how our unification
theory is fully self-contained. Namely, we prove the results on
axiomatic characterizations in 
Section~\ref{sec:EquationalSemantics} from the observational 
semantics developed in 
Section~\ref{observational-sem-sec}: we show how the equations are
deduced from the observations in a general way without resorting to 
the already existing axiomatizations.

One of the key points of this section is to illustrate how the 
particular proofs needed in Section~\ref{sec:EquationalSemantics} 
for every one of the semantics can be replaced by a generic 
proof that stands for a whole family of semantics. In fact, we will
show in Section~\ref{sec:RealDiamond} that the same proof is still
valid for the new semantics suggested in Roscoe's work.

\subsection{Semantics coarser than ready simulation}\label{rnoef:sec1}

Let us now see how, from this uniform definition of the linear 
semantics, the proofs of the correctness and completeness of the corresponding 
axiomatizations can be derived in a uniform way avoiding the case 
analyses of Sections~\ref{anamps:sec} and \ref{tcss:sec}.
Although this could be done generically, with $N\in\{U, C, I, T\}$, we prefer
to start with the particular case $N= I$, which corresponds to the most
popular semantics already studied in Section~\ref{anamps:sec}.

To start with, we show how the axiomatizations can be synthetized
from the observational characterizations.
Our general axiom $(\textit{ND})$ for the reduction of non-determinism
specifies the hypothesis $M(x,y,w)$ under which the process 
$ax+ a(y+w)$ can be (syntactically) expanded by adding a new summand $a(x+y)$
without changing its semantics.
Then, let us compare the two sides of our general axiom.
Since $I(ax + a(y+w)) = I(ax) = I(a(y+w)) = \{a\}$, we have
\[
\begin{array}{lcl}
\textit{LGO}_I(ax+a(y+w)) &= &\textit{LGO}_I(ax)\cup \textit{LGO}_I(a(y+w)),\\
\textit{LGO}_I(a(y+w)) &= &\{\langle\{a\}\rangle\}\cup {}\\
    &&         \{\langle\{a\},a, I(y+w)\rangle\circ S \mid
               \langle \{a\},a,I(y)\rangle\circ S\in \textit{LGO}_I(ay) \lor{} \\
    && \hphantom{\{\langle\{a\},a, I(y+w)\rangle\circ S \mid{}}
               \langle \{a\},a,I(w)\rangle\circ S\in \textit{LGO}_I(aw) \}.
\end{array}
\]
Notice then that the observations of $a(y+w)$ are exactly those of $ay+aw$ 
simply replacing $I(y)$ or $I(w)$, respectively, by $I(y+w) = I(y)\cup I(w)$.
Analogously,
\[
\begin{array}{lcl}
\textit{LGO}_I(a(x+y)) &= &\{\langle\{a\}\rangle\}\cup {}\\
    &&         \{\langle\{a\},a, I(x+y)\rangle\circ S \mid
               \langle \{a\},a,I(x)\rangle\circ S\in \textit{LGO}_I(ax) \lor{} \\
    && \hphantom{\{\langle\{a\},a, I(x+y)\rangle\circ S \mid{}}
               \langle \{a\},a,I(y)\rangle\circ S\in \textit{LGO}_I(ay) \}.
\end{array}
\]

Now, in order to get the adequate condition $M_Z(x,y,w)$ for each of the 
semantics, let us examine the formulas that define the preorders $\leq_I^{lY}$:
\begin{iteMize}{$\bullet$}
\item $\leq_I^l$. 
To have
$\textit{LGO}_I(a(x+y))\subseteq \textit{LGO}_I(ax)\cup 
\textit{LGO}_I(a(y+w))$ it is enough to require
$\{\langle\{a\},a, I(x)\cup I(y)\rangle \circ S
  \mid \langle I(x) \rangle\circ S\in \textit{LGO}_I(x)\}
\subseteq
\{\langle\{a\},a, I(x)\rangle \circ S
  \mid \langle I(x) \rangle\circ S\in \textit{LGO}_I(x)\}$
and
$\{\langle\{a\},a, I(x)\cup I(y)\rangle \circ S
  \mid \langle I(y) \rangle\circ S\in \textit{LGO}_I(y)\}
\subseteq
\{\langle\{a\},a, I(y)\cup I(w)\rangle \circ S
  \mid \langle I(y) \rangle\circ S\in \textit{LGO}_I(y)\}$.
Thus, a first proposal for $M_{RT}$ would be
\[
I(y)\subseteq I(x) \land I(x) = I(y) \cup I(w).
\]
However, due to the fact that this axiom will be used in combination with
$(RS)$, the following, more restrictive but simpler form, can be used instead:
\[
M_{RT}(x,y,w) \iff I(x) = I(y) \land I(w) \subseteq I(y).
\]
Clearly, this form is stronger than the condition synthetized above.
Reciprocally, $a(x+y)\preceq ax + a(y+w)$ can be proved from the assumptions  
$I(y)\subseteq I(x) $ and $I(x) = I(y) \cup I(w)$ using $(RS)$ first to get
$a(x+y)\preceq a(x+y+w)$, and then $(\textit{ND\/})$ instantiated with
$M_{RT}$ to obtain $a(x + y + w)\preceq ax + a(y+w)$.
\item $\leq_I^{l{\supseteq}}$.\label{stronger-versions:pag}
We need the inclusion
$\ol{\textit{LGO}_I(a(x+y))}^{{\supseteq}} \subseteq 
\ol{\textit{LGO}_I(ax+a(y+w))}^{{\supseteq}}$
to hold.
Since $I(x)\cup I(y)\supseteq I(x)$, the general observations in 
$a(x+y)$ that arise from $x$ will be also in 
$\ol{\textit{LGO}_I(ax)}^{l{\supseteq}}$.
For those that arise from $y$, it is required that
$I(x)\cup I(y)\supseteq I(y)\cup I(w)$.
Once again, $(RS)$ can be used to simplify this condition into the simpler
\[
M_{FT}(x,y,w) \iff I(w) \subseteq I(y).
\]
The less restrictive variant of the axiom can be derived from the stronger
one and $(RS)$ as follows.
Taking $w = \cero$, since $I(\cero)\subseteq I(y)$ we obtain
$a(x+y)\preceq ax + ay$ from $(\textit{ND\/}^\mathit{FT})$; in particular,
$a(x+y+w) \preceq ax + a(y+w)$.
Also, by $(RS)$, $x+y\preceq (x+y) + (x+y+w)$, from where it follows
$a(x+y) \preceq a(x+y+w)$. 

\item $\leq_I^{lf}$.
We consider the inclusion
$\ol{\textit{LGO}_I(a(x+y))}^{f} \subseteq \ol{\textit{LGO}_I(ax+a(y+w))}^{f}$.
We only have to consider the lgo
$\langle \{a\},a, I(x)\cup I(y)\rangle$ in $\ol{\textit{LGO}_I(a(x+y))}^f$ and
show that it also belongs to $\ol{\textit{LGO}_I(ax+a(y+w))}^f$, since all lgo's
of length greater than 1 start with the prefix $\langle\{a\}, a\rangle$.
For that, either $I(x)\cup I(y) = I(x)$ or $I(x)\cup I(y) = I(y)\cup I(w)$, that
is, $I(y)\subseteq I(x)$ or $I(x)\cup I(y)= I(y)\cup I(w)$.
Again, we can remove the second condition and define
\[
M_R(x,y,w) \iff I(y)\subseteq I(x)
\]
since, whenever $I(x)\cup I(y)= I(y)\cup I(w)$,
$a(x + y + w)\preceq ax + a(y+w)$ can be obtained by
taking $x = y+w$, $y=x$, and $w=\cero$, and then by applying 
$(RS)$ we conclude $a(x+y)\preceq ax+a(y+w)$.

\item $\leq_I^{lf{\supseteq}}$.
An argument analogous to the previous one leads us to check that
$I(x)\cup I(y)\supseteq I(x)$ or $I(x)\cup I(y)\supseteq I(y)\cup I(w)$,
and the first is certainly true.
\end{iteMize}

In order to prove the completeness of our axiomatizations we introduce the
following notions of head normal forms.

\begin{defi}
For $p=\sum_{a\in X_0}\sum_{i\in I_a} ap_a^i$ and $Z\in\{F, R, \textit{FT},
\textit{RT}\}$, its $Z$-head normal form $\hnfx{p}$ is:
\begin{iteMize}{$\bullet$}
\item For $a\in X_0$, $i\in I_a$, and $X_1\subseteq \bigcup_{i\in I_a} 
I(p_a^i)$ such that $I(p_a^i)\subseteq X_1$, we define
$\textit{hnf\/}^Z(p,a,i,X_1) = a(p_a^i+ \sum\{ p_a^j|_{X_1}\mid j\neq i,
M_Z(p_a^i,p_a^j|_{X_1},p_a^j|_{\ol{X_1}})\})$.
\item $\hnfx{p} = p+\sum_{a\in X_0}\sum_{i\in I_a}\sum_{X_1\subseteq \bigcup I(p_a^i)}
 \textit{hnf\/}^Z(p,a,i,X_1)$.
\end{iteMize}
\end{defi}

It is clear that several redundancies arise in this definition: for example, 
if $Z= RT$ then 
$\textit{hnf\/}^Z(p,a,i,X_1)=\textit{hnf\/}^Z(p,a,i,I(p_a^i))$, so that the
argument $X_1$ would not be needed in this case.
We prefer to maintain the generic definition in order to allow a homogeneous 
treatment of all the semantics.

\begin{prop}\label{hnfx-aux:prop}
For $Z\in \{RT, FT, R, F\}$,
$\{\textrm{$B_1$--$B_4$}, (RS), (\textit{ND\/}^Z)\}\vdash \hnfx{p}\preceq p$.
\end{prop}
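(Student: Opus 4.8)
The plan is to display $\hnfx{p}$ as $p$ enlarged only by summands that $(\textit{ND\/}^Z)$ allows us to absorb, with $B_1$--$B_4$ used for bookkeeping. We may assume $p\not\simeq\cero$, since otherwise $X_0=\emptyset$ and $\hnfx{p}=p$; write $p=\sum_{a\in X_0}\sum_{i\in I_a}ap_a^i$ with $X_0=I(p)$. The first step is to prove, for every $a\in X_0$, $i\in I_a$ and admissible $X_1$ (that is, $I(p_a^i)\subseteq X_1\subseteq\bigcup_{k\in I_a}I(p_a^k)$), that
\[
\{\textrm{$B_1$--$B_4$},(RS),(\textit{ND\/}^Z)\}\vdash\textit{hnf\/}^Z(p,a,i,X_1)\preceq ap_a^i+\textstyle\sum_{j\in J}ap_a^j,
\]
where $J=\{j\in I_a\setminus\{i\}\mid M_Z(p_a^i,p_a^j|_{X_1},p_a^j|_{\ol{X_1}})\}$. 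Granting this, every term on the right is a summand of $p$; summing these inequations over all triples $(a,i,X_1)$ and using that $\preceq$ is a precongruence gives $\hnfx{p}=p+\sum_{a,i,X_1}\textit{hnf\/}^Z(p,a,i,X_1)\preceq p+\widetilde p$, where every summand of $\widetilde p$ is a summand of $p$ and, conversely, $\widetilde p$ contains every summand $ap_a^i$ of $p$ (it is the leading term of the bound for the triple $(a,i,I(p_a^i))$). Hence $p+\widetilde p\simeq p$ by $B_1$--$B_3$, and the proposition follows.

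For the single inequation I would enumerate $J=\{j_1,\dots,j_m\}$, set $w_0=\cero$ and $w_k=w_{k-1}+p_a^{j_k}|_{X_1}$, so that $\textit{hnf\/}^Z(p,a,i,X_1)\simeq a(p_a^i+w_m)$, and prove by induction on $k$ that $a(p_a^i+w_k)\preceq ap_a^i+\sum_{l\le k}ap_a^{j_l}$. The base case $k=0$ is $a(p_a^i+\cero)\simeq ap_a^i$ by $B_4$. For the inductive step I instantiate $(\textit{ND\/}^Z)$ with $x=p_a^i+w_{k-1}$, $y=p_a^{j_k}|_{X_1}$, $w=p_a^{j_k}|_{\ol{X_1}}$ (the side condition is discharged below); since $p_a^{j_k}|_{X_1}+p_a^{j_k}|_{\ol{X_1}}\simeq p_a^{j_k}$, this yields $a(p_a^i+w_k)\preceq a(p_a^i+w_{k-1})+ap_a^{j_k}$, and combining with the induction hypothesis via precongruence closes the step; taking $k=m$ gives the inequation.

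The step needing real care is checking the side condition $M_Z(p_a^i+w_{k-1},p_a^{j_k}|_{X_1},p_a^{j_k}|_{\ol{X_1}})$ of $(\textit{ND\/}^Z)$, given only that $M_Z(p_a^i,p_a^{j_k}|_{X_1},p_a^{j_k}|_{\ol{X_1}})$ holds because $j_k\in J$. Here the four semantics differ slightly: for $Z=F$ the condition is vacuous, and for $Z=FT$ the constraint $M_{FT}$ does not depend on its first argument, so nothing changes; for $Z=R$ and $Z=RT$ one first deduces, from the defining condition of $J$, that $I(p_a^{j_l}|_{X_1})\subseteq I(p_a^i)$ (case $R$) or $I(p_a^{j_l}|_{X_1})=I(p_a^i)$ (case $RT$) for every $l\in J$, whence $I(p_a^i+w_{k-1})=I(p_a^i)$, so replacing $p_a^i$ by $p_a^i+w_{k-1}$ in the first argument does not change the truth value of $M_Z$. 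This small case analysis is essentially the only content; everything else is routine manipulation with $B_1$--$B_4$ and the precongruence rules, and in fact $(RS)$ is not used for this direction, being listed in the axiom set only because it is needed for the converse inequation $p\preceq\hnfx{p}$.
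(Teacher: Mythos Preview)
Your proof is correct and follows essentially the same route as the paper: enumerate the relevant $j$'s, prove by induction on the number of absorbed summands that $a(p_a^i+\sum_{l\le k}p_a^{j_l}|_{X_1})\preceq ap_a^i+\sum_{l\le k}ap_a^{j_l}$ using $(\textit{ND\/}^Z)$ at each step, then sum up and use idempotence. You spell out the check that $M_Z(p_a^i,\ldots)$ implies $M_Z(p_a^i+w_{k-1},\ldots)$ case by case, which the paper simply asserts; and your observation that $(RS)$ is not actually needed for this inequation (you absorb the resulting summands into $p$ via $B_1$--$B_3$ rather than first passing through $p|_a$) is a minor tidying.
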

\begin{proof}
Let $p=\sum_{a\in X_0}\sum_{i\in I_a} a p_a^i$.

Considering the definition of $\textit{hnf\/}^Z(p,a,i,X_1)$, let us consider an enumeration
of the set of j's contributing to it: If $J_i=\{j\neq i \mid M_Z(p_a^i,p_a^{j_1}|_{X_1},p_a^{j_1}|_{\ol{X_1}})\}$, we take
$\{j_k\mid k=1\dots |J_i|\}=J_i$.

Then we can prove by induction on $l$ that for all $0\leq l \leq |J_i|$ we have 
$\{\textrm{$B_1$--$B_4$}, (RS), (\textit{ND\/}^Z)\}\vdash 
 a(p_a^i+\sum_{h=1}^l p_a^{j_h}|_{X_1})\preceq ap_a^i+ \sum_{h=1}^l ap_a^{j_h}$.
 
 The case of $l=0$ is trivial. Assuming the result for $l$, we prove the result for $l+1$.
From $M_Z(p_a^i, p_a^{j_{l+1}}|_{X_1},p_a^{j_{l+1}}|_{\ol{X_1}})$ we can infer $M_Z(p_a^i +\sum_{h=1}^lp_a^{j_h}|_{X_1}, p_a^{j_{l+1}}|_{X_1},p_a^{j_{l+1}}|_{\ol{X_1}})$
so that we can derive $\vdash 
 a(p_a^i+\sum_{h=1}^{l+1} p_a^{j_h}|_{X_1}) \preceq 
a(p_a^i+\sum_{h=1}^l p_a^{j_h})+ ap_a^{j_{l+1}}$;
and applying the i.h. we conclude
$\{\textrm{$B_1$--$B_4$}, (RS), (\textit{ND\/}^Z)\}\vdash 
 a(p_a^i+\sum_{h=1}^{l+1}p_a^{j_h}|_{X_1}|_{X_1}) \preceq ap_a^i+\sum_{h=1}^{l+1}ap_a^{j_h}$. 

From this we immediately obtain
$\{\textrm{$B_1$--$B_4$}, (RS), (\textit{ND\/}^Z)\}\vdash 
 \textit{hnf\/}^Z(p,a,i,X_1)\preceq p|_a$.
Finally, adding all these inequalities we conclude
$\{\textrm{$B_1$--$B_4$}, (RS), (\textit{ND\/}^Z)\}\vdash \hnfx{p}\preceq p$.
\end{proof}

Let us define
$l(F) = lf{\supseteq}$, $l(FT) = l{\supseteq}$, $l(R) = lf$, and $l(RT) = l$.
In order to apply structural induction to prove the completeness of the 
axiomatizations we need to show that, whenever 
$p=\sum_{a\in X_0}\sum_{i\in I_a} a p_a^i$
and $p\leq_I^{l(Z)} q$, there is a summand $ah_a^k$ of $\hnfx{q}$ such that
$p_a^i\leq_I^{l(Z)} h_a^k$ for each $a\in \textit{Act}$, $i\in I_a$.

\begin{prop}\label{hnfx-aux2:prop}
Let $Z\in \{F, FT, R, RT\}$, and let
$p=\sum_{a\in X_0}\sum_{i\in I_a} a p_a^i$, and
$q=\sum_{a\in X_0}\sum_{j\in J_a} a q_a^j$.
If $p\leq_I^{l(Z)} q$ then there exists a summand $ah_a^k$ of
$\hnfx{q}$ such that $p_a^i\leq_I^{l(Z)} h_a^k$.
\end{prop}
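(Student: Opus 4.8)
The plan is to argue entirely at the observational level. Fix $a\in X_0$ and $i\in I_a$, and write $\delta=\delta(Z)$ for the closure attached to $l(Z)$: the empty closure when $Z=RT$ (so $\leq_I^l$ is plain inclusion of lgo-sets), and $\delta\in\{{\supseteq},f,f{\supseteq}\}$ when $Z\in\{FT,R,F\}$ respectively. By Definition~\ref{orden_lN} and Proposition~\ref{prop_contenido}, the hypothesis $p\leq_I^{l(Z)} q$ says exactly $\ol{\textit{LGO}_I(p)}^{\delta}\subseteq\ol{\textit{LGO}_I(q)}^{\delta}$. The two structural facts that link $p_a^i$ to the summands of $q|_a$ are: the lgo's of $p$ passing through $p\tran{a}p_a^i$ are exactly $\langle I(p),a\rangle\circ t$ with $t\in\textit{LGO}_I(p_a^i)$; and every lgo of $q$ of length $\geq 1$ beginning with $a$ has the form $\langle I(q),a\rangle\circ t'$ with $t'\in\textit{LGO}_I(q_a^j)$ for some $j\in J_a$.

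The first step is to read off the right summand index $j_0$ and projection set $X_1$. Applying the inclusion to the length-$0$ observation $\langle I(p),a,I(p_a^i)\rangle\in\textit{LGO}_I(p)$ produces some $j_0\in J_a$ whose root offer $I(q_a^{j_0})$ is $\delta$-related to $I(p_a^i)$, that is, equal when $Z\in\{RT,R\}$ and contained in $I(p_a^i)$ when $Z\in\{FT,F\}$ (in particular $J_a\neq\emptyset$, so $j_0$ exists, and one checks similarly $I(q)=X_0$). I would then take
\[
X_1 \;=\; I(p_a^i)\cap\textstyle\bigcup_{j\in J_a} I(q_a^j),
\]
so that $X_1\subseteq\bigcup_{j}I(q_a^j)$ and $I(q_a^{j_0})\subseteq X_1$, and set
\[
h_a^k \;=\; q_a^{j_0} + \sum\bigl\{\,q_a^{j}|_{X_1}\;\bigm|\;j\neq j_0,\ M_Z\bigl(q_a^{j_0},q_a^{j}|_{X_1},q_a^{j}|_{\ol{X_1}}\bigr)\,\bigr\}.
\]
This is exactly $\textit{hnf\/}^Z(q,a,j_0,X_1)$ stripped of its leading $a$, so $ah_a^k$ is a summand of $\hnfx{q}$. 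Moreover $I(h_a^k)=I(q_a^{j_0})\cup\bigcup\{I(q_a^{j})\cap X_1\mid\cdots\}\subseteq I(p_a^i)$, with equality when $Z\in\{RT,R\}$ since then $I(q_a^{j_0})=I(p_a^i)$.

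It remains to verify $p_a^i\leq_I^{l(Z)} h_a^k$, i.e. $\ol{\textit{LGO}_I(p_a^i)}^{\delta}\subseteq\ol{\textit{LGO}_I(h_a^k)}^{\delta}$. The length-$0$ observation of $p_a^i$ is matched because of the offer relation just noted. For $t=\langle Z_0,a_1,Z_1,\dots,a_n,Z_n\rangle\in\textit{LGO}_I(p_a^i)$ with $n\geq 1$ (so $Z_0=I(p_a^i)$ and $a_1\in I(p_a^i)$), the inclusion applied to $\langle I(p),a\rangle\circ t$ yields some $j\in J_a$ and $t'=\langle W_0,a_1,W_1,\dots,a_n,W_n\rangle\in\textit{LGO}_I(q_a^j)$, with the same action sequence and with the $Z_\ell$ and $W_\ell$ $\delta$-related in the appropriate positions (all positions for ${\supseteq}$, only the last for $f$, etc.). The key point is that $q_a^{j}|_{X_1}$ is already a summand of $h_a^k$, or $j=j_0$: indeed $a_1\in I(p_a^i)\cap I(q_a^j)\subseteq X_1$, so $q_a^j|_{X_1}\tran{a_1}r$ with $q_a^j\tran{a_1}r$; and a short case check gives $M_Z(q_a^{j_0},q_a^{j}|_{X_1},q_a^{j}|_{\ol{X_1}})$ — trivially for $Z=F$; for $Z=R$ because $I(q_a^j)\cap X_1\subseteq X_1\subseteq I(p_a^i)=I(q_a^{j_0})$; and for $Z\in\{FT,RT\}$ because the root $\delta$-match forces $I(q_a^j)\subseteq I(p_a^i)$ (resp.\ $=$), hence $I(q_a^j)\subseteq X_1$, so $q_a^j|_{\ol{X_1}}=\cero$ and the relevant offer condition holds. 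Then, since $q_a^j|_{X_1}$ (or $q_a^{j_0}$) is a summand of $h_a^k$ keeping the $a_1$-transition into $r$, and $\langle W_1,a_2,\dots,a_n,W_n\rangle$ is an lgo of $r$, the observation $\langle I(h_a^k),a_1,W_1,\dots,a_n,W_n\rangle$ belongs to $\textit{LGO}_I(h_a^k)$; it retains all the ready sets $W_1,\dots,W_n$ of $t'$, and its root component $I(h_a^k)$ is $\delta$-related to $Z_0$ by the offer relation above, so it $\delta$-matches $t$. Hence $t\in\ol{\textit{LGO}_I(h_a^k)}^{\delta}$, completing the verification; this lemma then feeds the structural induction announced after its statement.

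The genuinely delicate part is the choice $X_1=I(p_a^i)\cap\bigcup_j I(q_a^j)$ together with the four-way check that the merged summands permitted by $M_Z$ in $\textit{hnf\/}^Z(q,a,j_0,X_1)$ are precisely enough to cover every $q_a^j$ needed to match the observations of $p_a^i$ (no more, no less — smaller $X_1$ would lose needed branches for $FT$/$RT$, larger $X_1$ would enlarge $I(h_a^k)$ past $I(p_a^i)$ for $R$/$F$). Everything else — how projecting a summand and prefixing act on sets of linear observations — is routine bookkeeping already implicit in the compositionality statements of Section~\ref{linear:sec}.
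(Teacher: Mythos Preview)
Your argument is correct and follows essentially the same route as the paper: pick the anchor index $j_0$ from the length-one observation $\langle I(p),a,I(p_a^i)\rangle$, build the corresponding head-normal-form summand, and then show every lgo of $p_a^i$ is covered by checking, case by case on $Z$, that the witnessing $q_a^{j}$ satisfies $M_Z(q_a^{j_0},q_a^{j}|_{X_1},q_a^{j}|_{\ol{X_1}})$. The only cosmetic difference is your choice $X_1=I(p_a^i)\cap\bigcup_{j}I(q_a^j)$: since $p\leq_I^{l(Z)}q$ implies trace inclusion, every $c\in I(p_a^i)$ already lies in some $I(q_a^j)$, so in fact $X_1=I(p_a^i)$, which is exactly the paper's choice; your intersection is harmless but superfluous, and the ``delicate balance'' you describe in the final paragraph is less fragile than you suggest.
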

\begin{proof}
Using Definition \ref{obs:def} and Proposition  \ref{prop_contenido}, we need to show that there exists $\ol{\textit{LGO}_I(p_a^i)}^{l(Z)} \subseteq 
\ol{\textit{LGO}_I(h_a^k)}^{l(Z)}$ but, due to the fact that
 $\ol{(\_)}^{l(Z)}$ is a closure
operator (Proposition~\ref{closures:prop}), it is enough to prove that
$\textit{LGO}_I(p_a^i)\subseteq \ol{\textit{LGO}_I(h_a^k)}^{l(Z)}$.
For $\langle I(p_a^i)\rangle\in \textit{LGO}_I(p_a^i)$, since $p\leq_I^{l(Z)} q$ 
there is some $q_a^j$ such that $\langle I(p_a^i)\rangle\in 
{\textit{LGO}_I(q_a^j)}^{l(Z)}$; we then consider 
$\textit{hnf\/}^Z(q,a,j,I(p_a^i)) = ah^k_a$.

If $t\in \textit{LGO}_I(p_a^i)$ then $\langle I(p),a\rangle\circ t\in
\textit{LGO}_I(p)\subseteq\ol{\textit{LGO}_I(q)}^{l(Z)}$ and
there exists $j_t$ such that $t\in\ol{\textit{LGO}_I(q_a^{j_t})}^{l(Z)}$.
In addition, $M_Z(q_a^j, q_a^{j_t}|_{I(p_a^i)}, q_a^{j_t}|_{\ol{I(p_a^i)}})$:
\begin{iteMize}{$\bullet$}
\item If $Z=RT$, then $t\in \textit{LGO}_I(q_a^{j_t})$ and therefore 
$I(q_a^{j_t}) = I(p_a^i) = I(q_a^j)$.
Hence, condition $M_{RT}(q_a^j, q_a^{j_t}|_{I(p_a^i)}, \cero)$ holds and therefore 
$M_{RT}(q_a^j, q_a^{j_t}|_{I(p_a^i)}, q_a^{j_t}|_{\ol{I(p_a^i)}})$.
\item If $Z = FT$, from $t\in \ol{\textit{LGO}_I(q_a^{j_t})}^{\supseteq}$
it follows that $I(q_a^{j_t})\subseteq I(p_a^{j_t})$ and therefore
$I(q_a^{j_t}|_{\ol{I(p_a^i)}}) =\emptyset \subseteq I(q_a^{j_t}|_{I(p_a^i)})$. 
Hence, $M_{FT}(q_a^j, q_a^{j_t}|_{I(p_a^i)}, q_a^{j_t}|_{\ol{I(p_a^i)}})$.
\item If $Z=R$, from $\langle I(p_a^i)\rangle\in\ol{\textit{LGO}_I(q_a^{j})}^{f}$ 
we have that $I(p_a^i) = I(q_a^{j})$ and thus 
$I(q_a^{j_t}|_{I(p_a^i)}) \subseteq I(q_a^j)$ and
$M_{R}(q_a^j, q_a^{j_t}|_{I(p_a^i)}, q_a^{j_t}|_{\ol{I(p_a^i)}})$.
\item For $Z= F$ it is trivial since $M_F(x,y,w)$ is always true.
\end{iteMize}
Therefore $q_a^{j_t}$ is one of the summands of $h_a^k$ and, since
$t\in\ol{\textit{LGO}_I(q_a^{j_t})}^{l(Z)}$, we have $p_a^i\leq_I^{l(Z)} h_a^k$.
\end{proof}

\begin{thm}[Soundness and completeness]\label{sandc:thm}
For $Z\in \{RT, FT, R, F\}$:
\[
p\leq_I^{l(Z)} q \textrm{ iff } 
\{\textrm{$B_1$--$B_4$}, (RS), (\textit{ND\/}^\mathit{Z})\}\vdash p\preceq q.
\]
\end{thm}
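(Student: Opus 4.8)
The claim is an ``iff'' and I would establish the two implications separately, the left-to-right (completeness) one carrying the weight.

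\emph{Soundness} ($\Leftarrow$). Here the plan is to check that each axiom of $\{\textrm{$B_1$--$B_4$}, (RS), (\textit{ND}^Z)\}$ is sound with respect to $\leq_I^{l(Z)}$ and that $\leq_I^{l(Z)}$ is a precongruence for prefixing and choice; soundness of all derivable inequations then follows by a routine induction on derivations. Precongruence is immediate from the compositional formulas for $\textit{LGO}_I$ obtained in Section~\ref{observational-sem-sec} together with monotonicity of the closures (Proposition~\ref{closures:prop}). The axioms $B_1$--$B_4$ are sound because bisimilar processes have identical sets of branching observations, hence identical $\textit{LGO}_I$ sets, by Theorem~\ref{denotational-main:thm}. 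For $(RS)$, if $I(p)=I(q)$ then $\textit{LGO}_I(p)\subseteq\textit{LGO}_I(p+q)$ with matching roots, so $p\leq_I^l p+q$; and plain inclusion of lgo's is finer than each of $\leq_I^{l\supseteq}$, $\leq_I^{lf}$, $\leq_I^{lf\supseteq}$ (again by monotonicity of closures), so $p\leq_I^{l(Z)}p+q$. Finally, soundness of $(\textit{ND}^Z)$ for $\leq_I^{l(Z)}$ is precisely the case analysis carried out in the bulleted discussion preceding Proposition~\ref{hnfx-aux:prop}: there it is shown that, assuming $M_Z(x,y,w)$, the relevant closure of $\textit{LGO}_I(a(x+y))$ is contained in that of $\textit{LGO}_I(ax+a(y+w))$.

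\emph{Completeness} ($\Rightarrow$). I would prove, by induction on the depth of $p$, that for every $q$, $p\leq_I^{l(Z)}q$ implies $\{\textrm{$B_1$--$B_4$}, (RS), (\textit{ND}^Z)\}\vdash p\preceq q$. A first observation is that $p\leq_I^{l(Z)}q$ forces $I(p)=I(q)$ in all four cases, as seen by inspecting the lgo's of length $0$ and $1$; normalising with $B_1$--$B_4$ I may then write $p=\sum_{a\in X_0}\sum_{i\in I_a}ap_a^i$ and $q=\sum_{a\in X_0}\sum_{j\in J_a}aq_a^j$ with $X_0=I(p)=I(q)$. If $X_0=\emptyset$ then $p\equiv_B\cero$ and $q\equiv_B\cero$ and we are done. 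Otherwise, fix a summand $ap_a^i$ of $p$: by Proposition~\ref{hnfx-aux2:prop} there is a summand $ah_a^k$ of $\hnfx{q}$ with $p_a^i\leq_I^{l(Z)}h_a^k$, and since $p_a^i$ has strictly smaller depth than $p$, the induction hypothesis yields $\vdash p_a^i\preceq h_a^k$, hence $\vdash ap_a^i\preceq ah_a^k$ by the prefix rule. Combining over $i\in I_a$ gives $\vdash p|_a\preceq\sum_{i\in I_a}ah_a^{k(i)}$; since each $ah_a^{k(i)}$ is one of the $a$-summands of $\hnfx{q}$, a single application of $(RS)$ (whose $I$-equality side condition reduces here to $\{a\}=\{a\}$) together with $B_1$--$B_3$ gives $\vdash\sum_{i\in I_a}ah_a^{k(i)}\preceq\hnfx{q}|_a$, and so $\vdash p|_a\preceq\hnfx{q}|_a$. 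Adding these over $a\in X_0$ and using $I(\hnfx{q})=X_0$ we get $\vdash p\preceq\hnfx{q}$; composing with $\vdash\hnfx{q}\preceq q$ from Proposition~\ref{hnfx-aux:prop} closes the induction.

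The step I expect to be most delicate is this reassembly in the inductive case: Proposition~\ref{hnfx-aux2:prop} returns the matching data only one summand at a time, and turning those local matches into a single derivation of $p\preceq q$ must be done using $B_1$--$B_4$ and $(RS)$ alone, respecting the $I$-equality condition on $(RS)$ at every step; this is exactly why the detour through the head normal form $\hnfx{q}$, and the grouping by initial action into the blocks $\hnfx{q}|_a$, are needed rather than working with $q$ directly. Granting that, the argument is uniform in $Z\in\{RT,FT,R,F\}$, since $Z$ enters only through the constraints $M_Z$ built into $\hnfx{\cdot}$ and through the matching preorder $\leq_I^{l(Z)}$, both already isolated in Propositions~\ref{hnfx-aux:prop} and \ref{hnfx-aux2:prop}; combined with Propositions~\ref{traces-rtraces:prop} and \ref{failures-ftraces:prop} this also re-derives the axiomatisations of $\sqsubseteq_F$, $\sqsubseteq_R$, $\sqsubseteq_{FT}$ and $\sqsubseteq_{RT}$ in a purely observational way.
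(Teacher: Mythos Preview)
Your proposal is correct and follows essentially the same route as the paper: soundness by appealing to the synthesis of the conditions $M_Z$ carried out just before Proposition~\ref{hnfx-aux:prop}, and completeness by structural induction on $p$ using Propositions~\ref{hnfx-aux:prop} and~\ref{hnfx-aux2:prop} as the two key lemmas, with $(RS)$ and $I(p)=I(q)$ handling the reassembly into $\vdash p\preceq\hnfx{q}$. Your version is a bit more explicit than the paper's---you spell out why $I(p)=I(q)$ holds via length-$0$ and length-$1$ lgo's, and you organise the reassembly action-by-action via the blocks $\hnfx{q}|_a$ rather than invoking $(RS)$ in one stroke---but these are elaborations of the same argument, not a different one.
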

\begin{proof}
(Soundness)
The axiomatizations are sound because of the way they have been derived.

(Completeness)
By structural induction on $p$.
\begin{iteMize}{$\bullet$}
\item Let $p$ be $\cero$. 
  As usual, we can consider terms up to bisimulation since
  \textrm{$B_1$--$B_4$} are equations needed for all the semantics.
  If $p\leq_I^{l(Z)} q$, then $q$ must be $\cero$ (or bisimilar to $\cero$)
  because the set of local observations of $\cero$ is empty and cannot contain
  any observations (see Definition~\ref{obs:def}.)
\item If $p=\sum_{a\in X_0} \sum_{i\in I_a} a p_a^i$ then, by 
Proposition~\ref{hnfx-aux2:prop}, $p\leq_I^{l(Z)} q$
implies that there exists a summand $ah_a^k$ of $\hnfx{q}$ such that 
$p_a^i\leq_I^{l(Z)} h_a^k$.
By induction hypothesis, 
$\{\textrm{$B_1$--$B_4$}, (RS), (\textit{ND\/}^Z)\}\vdash p_a^i\preceq h_a^k$
and therefore
$\{\textrm{$B_1$--$B_4$}, (RS),$ 
$(\textit{ND\/}^Z)\}\vdash ap_a^i\preceq ah_a^k$;
adding all these inequalities and using $(RS)$, which is allowed because
$I(p) = I(q)$, it follows that
$\{\textrm{$B_1$--$B_4$}, (RS),$ 
$(\textit{ND\/}^Z)\}\vdash p\preceq \hnfx{q}$ and,
by Proposition~\ref{hnfx-aux:prop},
$\{\textrm{$B_1$--$B_4$}, (RS), (\textit{ND\/}^Z)\}\vdash p\preceq q$.\qedhere
\end{iteMize}
\end{proof}

\subsection{The semantics that are not coarser than ready simulation}

Once we have a clear picture of the semantics that are coarser than ready
simulation, it is time to consider the rest of the semantics in the spectrum.
Let us start with the possible futures and the impossible futures semantics.
Recall that we have shown that they can be described by $\textit{LGO}_T$
observations so that they are defined by $\leq_T^{lf{\supseteq}}$ and 
$\leq_T^{lf}$, respectively.

We introduce the $T$-versions of our $(\textit{ND\/}^Z)$ axioms: all of
them are instances of our general axiom for the reduction of non-determinism
and therefore are defined by the adequate constraint $M^T_Z(x,y,w)$.
As expected, they are obtained by substituting every occurrence of $I$
in $M_Z(x,y,w)$ by the observer $T$ defining the traces of processes.

\begin{defi}
The constraints $M^T_Z$ that characterize the semantics coarser than 
$T$-simulation semantics are:
\[
\begin{array}{l@{\quad}l}
(\textit{T-ND\/}^F) & M^T_F(x,y,w) \iff \textrm{true}\\
(\textit{T-ND\/}^R)& M^T_R(x,y,w) \iff T(x)\supseteq T(y)\\
(\textit{T-ND\/}^\mathit{FT})& M^T_{FT}(x,y,w) \iff T(w) \subseteq T(y) \\
(\textit{T-ND}^\mathit{RT})& M^T_{RT}(x,y,w) \iff  T(x) = T(y) \ \textrm{and}\ 
 T(w)\subseteq T(y) 
\end{array}
\]
\end{defi}

As indicated in Section~\ref{linear:sec} (Definition~\ref{df:PF_IF}), the semantics associated to the last two conditions do not appear in the ltbt 
spectrum and, as far as we know, they have not been previously studied nor  even
defined.
\comment{We ignore whether they will be of practical interest in the future but,
nonetheless, we have decided to include them for the sake of completeness.}

Using the same arguments as in Section~\ref{rnoef:sec1}, we can prove that 
$\leq^{l(Z)}_T$ satisfies the axiom $(\textit{T-ND\/}^Z)$ for $Z\in\{RT, FT\}$.
\begin{prop}
$M_Z^T(x,y,w)$ implies $T(a(x+y)) = T(ax + a(y+w))$ for 
$Z\in \{RT, FT\}$. However, this is not the case for $Z\in \{R, F\}$
\end{prop}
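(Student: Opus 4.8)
The plan is to reduce the statement to a direct computation of trace sets. First I would record that, by the operational rules for BCCSP together with the standard facts $T(p+q)=T(p)\cup T(q)$ and $T(ap)=\{\varepsilon\}\cup\{a\sigma\mid\sigma\in T(p)\}$ (with $\varepsilon$ the empty trace), one has
\[
T(a(x+y)) = \{\varepsilon\}\cup\{a\sigma\mid \sigma\in T(x)\cup T(y)\},\qquad
T(ax+a(y+w)) = \{\varepsilon\}\cup\{a\sigma\mid \sigma\in T(x)\cup T(y)\cup T(w)\}.
\]
From these two expressions the inclusion $T(a(x+y))\subseteq T(ax+a(y+w))$ is immediate, and the two sets are \emph{equal} precisely when $T(w)\subseteq T(x)\cup T(y)$. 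So the proposition becomes the question of whether the relevant constraint $M^T_Z(x,y,w)$ entails $T(w)\subseteq T(x)\cup T(y)$.

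For the positive half ($Z\in\{\mathit{FT},\mathit{RT}\}$) this is then trivial: $M^T_{FT}(x,y,w)$ is exactly $T(w)\subseteq T(y)$, and $M^T_{RT}(x,y,w)$ in particular also yields $T(w)\subseteq T(y)$; in both cases $T(w)\subseteq T(y)\subseteq T(x)\cup T(y)$, so the equality criterion is met and $T(a(x+y))=T(ax+a(y+w))$.

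For the negative half ($Z\in\{R,F\}$) I would exhibit a single counterexample serving both cases. Fix any action $b\in\act$ and take $x=y=\cero$ and $w=b\cero$. Then $M^T_R(x,y,w)$ holds because $T(x)=\{\varepsilon\}\supseteq\{\varepsilon\}=T(y)$, and $M^T_F(x,y,w)$ holds vacuously; yet $T(a(x+y))=T(a\cero)=\{\varepsilon,a\}$ while $T(ax+a(y+w))=T(a\cero+ab\cero)=\{\varepsilon,a,ab\}$, so the two trace sets differ.

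There is essentially no obstacle in this argument: the only point deserving care is the precise shape of $T(ap)$ and $T(p+q)$, which is routine and already implicit in Section~\ref{sec:Preliminaries}. It is worth closing with a sentence on why the identity matters here, namely that $T(a(x+y))=T(ax+a(y+w))$ is exactly what is needed for the length-$0$ observation $\langle T(a(x+y))\rangle$ to be matched inside $\textit{LGO}_T(ax+a(y+w))$; this is why the axiom $(\textit{T-ND}^Z)$ is sound for the $\textrm{lgo}_T$ semantics $\leq^{l(Z)}_T$ when $Z\in\{\mathit{FT},\mathit{RT}\}$, whereas for $Z\in\{R,F\}$ no auxiliary axiom (as $(RS)$ was available at the $I$-layer in Section~\ref{rnoef:sec1}) can repair the missing inclusion.
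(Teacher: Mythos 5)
Your argument is correct and takes essentially the same route as the paper's own proof: the positive half reduces to $M^T_{FT}$ (and hence $M^T_{RT}$) giving $T(w)\subseteq T(y)$, so the two trace sets coincide, while the negative half rests on the fact that $M^T_R$ and $M^T_F$ place no constraint on $w$. Your explicit counterexample $x=y=\cero$, $w=b\cero$ merely makes concrete what the paper states as ``in general,'' and the closing remark about the soundness of $(\textit{T-ND}^Z)$ is extra commentary not needed for the statement.
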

\begin{proof}
$M^T_{RT}$ implies $M^T_{FT}$, and therefore $T(y+w) = T(y)$, which leads to
$T(ax+a(y+w)) = T(a(x+y))$.
Neither $M_R^T$ nor $M_F^T$ refer to $w$ and therefore, in general,
$T(ax + a(y+w)) \neq T(a(x+y))$ in those cases.
\end{proof}

\vspace{0.2cm}
Note that when proving the correctness of the corresponding axiom 
$(\textit{ND\/}^Z)$ for $\leq_I^{l(Z)}$ we had $I(a(x+y)) = \{a\} =
I(ax+a(y+w))$ in all cases.
Now, $T(a(x+y)) = T(ax + a(y+w))$ only under the constraints 
corresponding to the finer semantics $\sqsubseteq_{FT}$ and $\sqsubseteq_{RT}$.
The properties of the prefixes appearing in all the terms in both
sides of the axiom $(\textit{ND})$ are not used anymore in the proofs in 
Section~\ref{rnoef:sec1}, so they can be transferred to the $T$-semantics, thus
proving the correctness of $(\textit{T-ND\/}^Z)$ for both
$\leq_T^{l(RT)}$ and $\leq_T^{l(FT)}$.

The introduction of the equational version $(\textit{ND}_\equiv)$
of the axiom $(\textit{ND})$ now becomes crucial in order to preserve
the generality of our unifying work.
We saw that under $(RS)$ these two axioms were equivalent.
However, when observing the set of traces $T(x)$ of any process,
instead of just the initial offer $I(x)$ we need to consider $T$-simulations,
that are 
constrained by the condition $T(x,y) \Longleftrightarrow T(x) = T(y)$;
under the corresponding axiom $(TS)$, things turn out to be different.

\begin{prop}\label{aux-traces:prop}
$T(a (x+y) + ax + a(y+w)) = T(ax)\cup T(ay)\cup T(aw) = 
T(ax+a(y+w))$.
\end{prop}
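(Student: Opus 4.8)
The plan is a direct computation relying only on the two standard compositional identities for the trace map on BCCSP terms (the ones invoked when it was noted earlier that ``the result for traces is well-known''): $T(p+q) = T(p)\cup T(q)$ and $T(ap) = \{\varepsilon\}\cup a\,T(p)$, where $\varepsilon$ denotes the empty trace and $a\,S$ abbreviates $\{a\alpha \mid \alpha\in S\}$. I would also use at the outset the obvious fact that $a\,(-)$ distributes over union, i.e.\ $a\,(S_1\cup S_2) = a\,S_1\cup a\,S_2$.

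First I would rewrite the left-hand side of the first equality by applying the identity for $+$ twice, obtaining $T(a(x+y))\cup T(ax)\cup T(a(y+w))$. Expanding each prefixed summand with the identity for $a\,(-)$ and using distributivity gives $\{\varepsilon\}\cup a\,(T(x)\cup T(y))\cup a\,T(x)\cup a\,(T(y)\cup T(w))$, which collapses to $\{\varepsilon\}\cup a\,T(x)\cup a\,T(y)\cup a\,T(w)$. Distributing the single $\varepsilon$ as one copy per summand then rewrites this as $(\{\varepsilon\}\cup a\,T(x))\cup(\{\varepsilon\}\cup a\,T(y))\cup(\{\varepsilon\}\cup a\,T(w)) = T(ax)\cup T(ay)\cup T(aw)$, which is the first equality. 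For the second equality I would expand in the same way: $T(ax+a(y+w)) = T(ax)\cup T(a(y+w)) = \{\varepsilon\}\cup a\,T(x)\cup a\,(T(y)\cup T(w)) = \{\varepsilon\}\cup a\,T(x)\cup a\,T(y)\cup a\,T(w)$, which is literally the same set, so the second equality follows as well.

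There is no genuine obstacle here; the statement is a routine set-theoretic manipulation, included only because the same identity will be reused in the ongoing argument about $T$-semantics. The only point that needs a little care is the bookkeeping of the empty trace $\varepsilon$, which lies in every $T(ap)$ but must be accounted for exactly once per summand on each side; once $a\,(-)$ is observed to distribute over union, both the displayed expressions reduce to the common union $\{\varepsilon\}\cup a\,T(x)\cup a\,T(y)\cup a\,T(w)$, and the proposition is immediate.
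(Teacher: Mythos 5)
Your computation is correct, and it is exactly the argument the paper has in mind: the paper states this proposition without proof, treating it as immediate from the standard compositional identities $T(p+q)=T(p)\cup T(q)$ and $T(ap)=\{\varepsilon\}\cup a\,T(p)$, which is precisely what you use. Both displayed expressions indeed reduce to $\{\varepsilon\}\cup a\,T(x)\cup a\,T(y)\cup a\,T(w)$, so nothing is missing.
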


As a consequence, for $(\textit{T-ND}^Z_+)$ and $(\textit{T-ND}_\equiv)$
we can apply the same arguments used in Section~\ref{rnoef:sec1}
to show that $(\textit{ND\/}^Z)$ was satisfied by $\leq_I^{l(Z)}$.

\begin{prop}
For $Z\in\{RT, FT, R, F\}$, the preorder $\leq_T^{l(Z)}$ satisfies the axiom
$(\textit{T-ND\/}^Z_+)$ and also $(\textit{T-ND\/}^Z_\equiv)$.
\end{prop}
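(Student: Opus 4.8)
The plan is to verify both axioms semantically, by computing with the sets $\textit{LGO}_T(\cdot)$ and reusing, almost verbatim, the case analysis of Section~\ref{rnoef:sec1} with the local observer $L_I$ replaced by $L_T$; the one genuinely new ingredient will be Proposition~\ref{aux-traces:prop}.

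For $(\textit{T-ND\/}^Z_+)$ I would fix a ground instance, write $p = ax + a(y+w) + a(x+y)$ and $q = ax + a(y+w)$ for its two sides, and assume $M_Z^T(x,y,w)$. First, Proposition~\ref{aux-traces:prop} gives $T(p) = T(q)$, so the root local observation is the same for every lgo of $p$ and of $q$; in particular $\langle T(p)\rangle = \langle T(q)\rangle\in\textit{LGO}_T(q)$. Any longer lgo of $p$ has the form $\langle T(p), a\rangle\circ t$ with $t\in\textit{LGO}_T(r)$ for some $a$-successor $r\in\{x,\,y+w,\,x+y\}$ of $p$. When $r\in\{x,y+w\}$, $r$ is an $a$-successor of $q$ as well and, because $T(p)=T(q)$, the same observation already lies in $\textit{LGO}_T(q)$. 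The remaining case $r=x+y$ is where the work is: after rewriting the root of $t$ as $T(x+y)=T(x)\cup T(y)$, the observation $t$ comes from a tail of an lgo of $x$ or of $y$, and I would invoke the four bulleted arguments on page~\pageref{stronger-versions:pag} (reading $T$ for $I$ throughout) to show that $t$, prefixed by $\langle T(q),a\rangle$, is $l(Z)$-dominated by an lgo of $q$ routed through $x$ or through $y+w$. The comparisons of merged local observations go through precisely because of $M_Z^T(x,y,w)$ --- namely $T(x)=T(y)$ and $T(w)\subseteq T(y)$ for $Z=RT$, $T(w)\subseteq T(y)$ for $Z=FT$, $T(x)\supseteq T(y)$ for $Z=R$, and nothing for $Z=F$. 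This yields $\ol{\textit{LGO}_T(p)}^{l(Z)}\subseteq\ol{\textit{LGO}_T(q)}^{l(Z)}$, hence $p\leq_T^{l(Z)} q$ by Proposition~\ref{prop_contenido}.

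For $(\textit{T-ND\/}^Z_\equiv)$ it then suffices to add the reverse inequality $q\leq_T^{l(Z)} p$. Since $p$ is just $q$ with the extra summand $a(x+y)$, and by Proposition~\ref{aux-traces:prop} this addition does not change $T(\cdot)$ at the root, every lgo of $q$ is literally an lgo of $p$; so $\textit{LGO}_T(q)\subseteq\textit{LGO}_T(p)$, and monotonicity of the closures (Proposition~\ref{closures:prop}) together with Proposition~\ref{prop_contenido} gives $q\leq_T^{l(Z)} p$. Combined with the previous step, both inequalities hold, which is exactly $(\textit{T-ND\/}^Z_\equiv)$.

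The hard part is the case $r=x+y$, i.e.\ redoing the four-way analysis of Section~\ref{rnoef:sec1}; this is not so much a calculation as an argument that nothing changes. The transfer is legitimate because the combinatorics of $\textit{LGO}_N$ and of ``$+$'' on it are identical for $N=I$ and $N=T$ (the root of $p_1+p_2$ carries $L_N(p_1)\cup L_N(p_2)$ in both cases), and because Proposition~\ref{aux-traces:prop} restores, for the ``$+$''-form of the axiom, the coincidence of root observations on the two sides that was automatic for $I$ (since $I(ax)=I(a(y+w))=I(a(x+y))=\{a\}$) but fails for the plain $T$-axiom unless $Z\in\{FT,RT\}$. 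As no prefix-specific fact of Section~\ref{rnoef:sec1} is actually used there, the arguments carry over mutatis mutandis.
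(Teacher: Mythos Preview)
Your proposal is correct and matches the paper's approach almost exactly: both use Proposition~\ref{aux-traces:prop} to align the root observations and then rerun the case analysis of Section~\ref{rnoef:sec1} with $T$ in place of $I$ to handle the $a(x+y)$ summand. The only cosmetic difference is in the reverse inequality for $(\textit{T-ND\/}^Z_\equiv)$: you argue semantically that $\textit{LGO}_T(q)\subseteq\textit{LGO}_T(p)$, whereas the paper phrases the same step algebraically via the axiom $(TS)$ (using $T(a(x+y))\subseteq T(ax+a(y+w))$, hence $T(q)=T(p)$, to derive $q\preceq q+a(x+y)=p$); your direct argument is precisely the semantic content underlying that use of $(TS)$.
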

\begin{proof}
To show that $\leq_T^{l(Z)}$ satisfies $(\textit{T-ND\/}^Z_+)$ we just need to
apply Proposition~\ref{aux-traces:prop} and follow the line of thought in the
second bullet on page~\pageref{stronger-versions:pag}, substituting
the observer $T$ for $I$.
For the other axiom, from $T(a(x+y))\subseteq T(ax+ a(y+w))$ it follows that 
$\{(TS)\} \vdash ax + a(y+w) \preceq (ax+a(y+w)) + a(x+y)$.
\end{proof}

Notice that for $Z\in \{RT, FT\}$ we can also obtain the correctness of 
$(\textit{T-ND}^Z_\equiv)$ from that of $(\textit{T-ND\/}^Z)$ and vice versa,
as a consequence of the following fact.

\begin{prop}
The axiomatization
$\{\textrm{$B_1$--$B_4$}, (TS), (\textit{T-ND\/}^Z)\}$ is equivalent to the
axiomatization
$\{\textrm{$B_1$--$B_4$}, (TS), (\textit{T-ND}^Z_\equiv)\}$ for $Z\in \{RT,FT\}$.
\end{prop}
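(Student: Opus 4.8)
The plan is to prove the two derivability inclusions separately, over the common base consisting of $B_1$--$B_4$ and the trace simulation axiom $(TS)$, which by Proposition~\ref{prop:ns}(3) reads $T(x)=T(y)\Rightarrow x\preceq x+y$. The pivot in both directions is the trace identity that holds precisely when $Z\in\{RT,FT\}$: if $M_Z^T(x,y,w)$ then $T(a(x+y))=T(ax+a(y+w))$, by the Proposition immediately preceding this statement; since $T$ distributes over $+$, this also gives $T(a(x+y))=T(a(x+y)+ax+a(y+w))$, which is essentially Proposition~\ref{aux-traces:prop}. Hence, whenever $M_Z^T(x,y,w)$ holds, the instance of $(TS)$ with $x:=a(x+y)$ and $y:=ax+a(y+w)$ is applicable and yields both $a(x+y)\preceq a(x+y)+ax+a(y+w)$ and, symmetrically, $ax+a(y+w)\preceq ax+a(y+w)+a(x+y)$.

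First I would show that $(\textit{T-ND}^Z_\equiv)$ is derivable from $B_1$--$B_4$, $(TS)$ and $(\textit{T-ND\/}^Z)$. Assuming $M_Z^T(x,y,w)$, axiom $(\textit{T-ND\/}^Z)$ gives $a(x+y)\preceq ax+a(y+w)$; adding $ax+a(y+w)$ to both sides (closure under $+$-contexts) and simplifying with $B_1$--$B_4$ yields $ax+a(y+w)+a(x+y)\preceq ax+a(y+w)$. The reverse inequality $ax+a(y+w)\preceq ax+a(y+w)+a(x+y)$ is exactly the instance of $(TS)$ noted above. Combining the two gives the equation $(\textit{T-ND}^Z_\equiv)$.

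Next I would show the converse: that $(\textit{T-ND\/}^Z)$ is derivable from $B_1$--$B_4$, $(TS)$ and $(\textit{T-ND}^Z_\equiv)$, following the same pattern used for $(RS)$ versus $(\textit{ND})$ in Proposition~\ref{rnd-eq:prop}. Assuming $M_Z^T(x,y,w)$, the instance of $(TS)$ gives $a(x+y)\preceq a(x+y)+ax+a(y+w)$; then $(\textit{T-ND}^Z_\equiv)$ rewrites the right-hand side as $ax+a(y+w)$, and transitivity yields $a(x+y)\preceq ax+a(y+w)$, that is, $(\textit{T-ND\/}^Z)$.

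The one delicate point — and the reason the statement is restricted to $Z\in\{RT,FT\}$ — is the availability of the trace identity $T(a(x+y))=T(ax+a(y+w))$. For $Z\in\{R,F\}$ the constraint $M_Z^T$ does not mention $w$, so in general $T(a(y+w))\neq T(ay)$, the identification fails, $(TS)$ cannot be invoked in this way, and the two axiomatizations need not coincide. Beyond citing the preceding propositions at those two spots, everything else is routine manipulation with $B_1$--$B_4$, so I do not expect a substantive obstacle.
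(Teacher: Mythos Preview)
Your proof is correct and follows essentially the same approach as the paper: both rely on the trace identity $T(a(x+y))=T(ax+a(y+w))$ (available precisely when $Z\in\{RT,FT\}$) to trigger the needed instances of $(TS)$, and then combine these with precongruence under $+$. The only cosmetic difference is that the paper routes both directions through the intermediate inequational form $(\textit{T-ND}^Z_+)$, first showing $(\textit{T-ND}^Z)\Leftrightarrow(\textit{T-ND}^Z_+)$ and then $(\textit{T-ND}^Z_+)\Leftrightarrow(\textit{T-ND}^Z_\equiv)$, whereas you go directly between $(\textit{T-ND}^Z)$ and $(\textit{T-ND}^Z_\equiv)$; the underlying steps are the same.
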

\begin{proof}
Let us first show that
$\{\textrm{$B_1$--$B_4$}, (TS), (\textit{T-ND\/}^Z)\}$ is equivalent to
$\{\textrm{$B_1$--$B_4$},$ $(TS), (\textit{T-ND\/}^Z_+)\}$.
This holds because $(\textit{T-ND\/}^Z)$ implies $(\textit{T-ND\/}^Z_+)$ and,
since $T(w)\subseteq T(y)$ implies $T(a(x+y)) = T(ax+a(y+w))$ and then
we have $\{\textrm{$B_1$--$B_4$}, (TS)\}\vdash a(x+y)\preceq
a(x+y) + (ax + a(y+w))$.

To prove 
$\{\textrm{$B_1$--$B_4$}, (TS), (\textit{T-ND\/}^Z_+)\}$ equivalent
to $\{\textrm{$B_1$--$B_4$},(TS),(\textit{T-ND\/}^Z_\equiv)\}$ we only need to
show that $\{\textrm{$B_1$--$B_4$}, (TS), (\textit{T-ND\/}^Z_+)\}\vdash
(M_Z^T(x,y,w) \Rightarrow ax + a(y+w)\preceq ax+ a(y+w)+ a(x+y))$, but we
have that for $Z\in \{RT, FT\}$, $M_Z^T(x,y,w)$ implies $T(w)\subseteq T(y)$,
so that $T(a(x+y)) =  T(ax+a(y+w))$ and therefore
$\{(TS)\}\vdash ax + a(y+w) \preceq (ax+a(y+w)) + a(x+y)$.
\end{proof}

The important fact about the obtained sets of correct axioms for the 
semantics $\leq_T^{l(Z)}$ is that, although our proofs of completeness for the
axiomatizations $\{\textrm{$B_1$--$B_4$}, (RS), (\textit{ND\/}^Z)\}$ considered
the inequational axioms $(\textit{ND\/}^Z)$, they are also valid for the
axiomatizations $\{\textrm{$B_1$--$B_4$}, (RS), (\textit{ND\/}^Z_\equiv)\}$.

The steps in the procedure that leads to the completeness of 
$\{\textrm{$B_1$--$B_4$}, (RS),$ $(\textit{ND\/}^Z)\}$ can be adapted by
substituting each reference to the observer $I$ by $T$, thus obtaining a
proof of the completeness of
$\{\textrm{$B_1$--$B_4$}, (RS), (\textit{T-ND\/}^Z_\equiv)\}$ for $\leq_T^{l(Z)}$.
However, the notion of head normal form for $N=I$ uses the fact that the 
summands $\textit{hnf\/}^Z(q,a,i,X_1)$ can be defined in 
terms of the offers $X_1\subseteq \calP(\textit{Act})$, which correspond to the 
values produced by the observer $I$.
For an arbitrary $N$, a more general definition of hnf's, valid for every 
observer, is needed.

\begin{defi}
For $p=\sum_{a\in X_0}\sum_{i\in I_a} ap_a^i$, its totally expanded $Z$-head normal
form $\tehnfx{p}$ is that given by:
\begin{iteMize}{$\bullet$}
\item For $a\in X_0$, $i\in I_a$, and $K_a\subseteq I_a$ we consider a
decomposition $p_a^k = p_a^{k_1} + p_a^{k_2}$ such that
$M_Z^N(p_a^i, p_a^{k_1}, p_a^{k_2})$. 
Then, 
$\textit{tehnf\/}^Z_N(p,a,i,\langle (p_a^{k_1},p_a^{k_2})\rangle_{k\in K_a})=
a(p_a^i +\sum_{k\in K_a}p_a^{k_1})$.
\item $\tehnfx{p} = \sum \textit{tehnf\/}^Z_N(p,a,i,\langle (p_a^{k_1},p_a^{k_2})
\rangle_{k\in K_a})$.
\end{iteMize}
\end{defi}

It is clear that for $K_a'\subseteq K_a$, or any decomposition 
$p_a^k= p_a^{k_3} + (p_a^{k_4} + p_a^{k_2})$ with $p_a^{k_1}=p_a^{k_3} +p_a^{k_4}$,
the corresponding $\textit{tehnf\/}^Z_N(\ldots)$ is a subterm of
$\textit{tehnf\/}^Z_N(p,a,i,$ $\langle p_a^{k_1},p_a^{k_2}\rangle_{k\in K_a})$ and
thus contributes nothing to the expanded normal form.
This is the reason why we preferred the more compact definition of 
$\hnfx{p}$ for semantics coarser than ready simulation.

\begin{thm}
For $Z\in\{RT, FT, R, F\}$,  
$\{\textrm{$B_1$--$B_4$}, (TS), (\textit{T-ND\/}^Z_\equiv)\}\vdash p\preceq q$
if and only if $p\leq_T^{l(Z)} q$.
\end{thm}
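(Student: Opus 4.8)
The plan is to replay the proof of Theorem~\ref{sandc:thm} (the case $N=I$), carrying out the substitutions already advertised in the preceding discussion: replace the observer $I$ by $T$, the axiom $(RS)$ by $(TS)$, the axiom $(\textit{ND\/}^Z)$ by its equational $T$-variant $(\textit{T-ND\/}^Z_\equiv)$, and the head normal form $\hnfx{\cdot}$ by the totally expanded head normal form $\tehnfx{\cdot}$. Soundness is already in hand: the propositions just established show that $\leq_T^{l(Z)}$ satisfies $(\textit{T-ND\/}^Z_\equiv)$, while $B_1$--$B_4$ and $(TS)$ are sound for $\leq_T^{l(Z)}$ because this preorder is coarser than bisimilarity and than $T$-similarity, respectively. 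Hence only the completeness direction needs work; it is proved by structural induction on $p$ and rests on two auxiliary facts, the $T$-analogs of Propositions~\ref{hnfx-aux:prop} and~\ref{hnfx-aux2:prop}.

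\emph{First fact:} $\{\textrm{$B_1$--$B_4$},(TS),(\textit{T-ND\/}^Z_\equiv)\}\vdash\tehnfx{q}\simeq q$, and in particular $\vdash\tehnfx{q}\preceq q$. Fixing $a$ and $i$, each summand $a(q_a^i+\sum_{k\in K_a}q_a^{k,1})$ of $\tehnfx{q}$ (where $q_a^k=q_a^{k,1}+q_a^{k,2}$ is the chosen decomposition, with $M_Z^T(q_a^i,q_a^{k,1},q_a^{k,2})$) is introduced by enumerating $K_a=\{m_1,\dots,m_r\}$ and adding the partial summands $a(q_a^i+\sum_{h\le l}q_a^{m_h,1})$ one at a time via $(\textit{T-ND\/}^Z_\equiv)$, exactly as in the induction on $l$ in Proposition~\ref{hnfx-aux:prop}; the only change is that the constraint to be propagated, $M_Z^T(q_a^i+\sum_{h\le l}q_a^{m_h,1},\,q_a^{m_{l+1},1},\,q_a^{m_{l+1},2})$, is verified from $M_Z^T(q_a^i,q_a^{m_{l+1},1},q_a^{m_{l+1},2})$ using the structural identities $T(p+r)=T(p)\cup T(r)$ and, for $Z=RT$, $T(q_a^{k,1})=T(q_a^i)$, in place of the analogous facts about $I$. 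Since there are finitely many choices of $a$, $i$, $K_a$ and of the decompositions, $\tehnfx{q}$ is, up to $B_1$--$B_4$, a finite term.

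\emph{Second fact:} if $p=\sum_a\sum_i ap_a^i\leq_T^{l(Z)} q=\sum_a\sum_j aq_a^j$, then for every $a$ and $i$ there is a summand $ah_a^k$ of $\tehnfx{q}$ with $p_a^i\leq_T^{l(Z)} h_a^k$. One fixes $j$ with $\langle T(p_a^i)\rangle\in\ol{\textit{LGO}_T(q_a^j)}^{l(Z)}$, and for each $t\in\textit{LGO}_T(p_a^i)$ an index $j_t$ and a decomposition $q_a^{j_t}=q_a^{j_t,1}+q_a^{j_t,2}$ with $t\in\ol{\textit{LGO}_T(q_a^{j_t,1})}^{l(Z)}$ and $M_Z^T(q_a^j,q_a^{j_t,1},q_a^{j_t,2})$; collecting all the $q_a^{j_t,1}$ as the summands of one $h_a^k=q_a^j+\sum q_a^{j_t,1}$ arising as some $\textit{tehnf\/}^Z_N(q,a,j,\dots)$ yields $p_a^i\leq_T^{l(Z)} h_a^k$. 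Here the freedom of the totally expanded form to range over \emph{all} decompositions (rather than only over restrictions to offer sets, as for $N=I$) is precisely what makes this argument uniform. That a decomposition witnessing $M_Z^T$ always exists is a short case analysis on $Z\in\{RT,FT,R,F\}$, word for word the one in Proposition~\ref{hnfx-aux2:prop} with $I$ replaced by $T$: for $Z=RT$ one has $T(q_a^j)=T(p_a^i)=T(q_a^{j_t})$ and the trivial split $q_a^{j_t}=q_a^{j_t}+\cero$ works; for $Z=F$ it is vacuous; the cases $FT$ and $R$ use $T(q_a^{j_t,2})=\emptyset$ and $T(q_a^{j_t,1})\subseteq T(q_a^j)$ respectively, just as before.

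Granting these, completeness follows by induction on $p$. If $p\equiv\cero$, then $p\leq_T^{l(Z)} q$ forces $T(q)=\{\varepsilon\}$ (the unique length-$0$ observation of $\cero$ must, up to the relevant closure, be matched by one of $q$, and $\varepsilon\in T(q)$ always), so $q\simeq\cero$. Otherwise write $p=\sum_a\sum_i ap_a^i$; by the second fact there are summands $ah_a^k$ of $\tehnfx{q}$ with $p_a^i\leq_T^{l(Z)} h_a^k$, whence $\vdash p_a^i\preceq h_a^k$ by the induction hypothesis and $\vdash ap_a^i\preceq ah_a^k$; adding these inequalities and applying $(TS)$ (legitimate since $T(p)=T(q)$) gives $\vdash p\preceq\tehnfx{q}$, and the first fact yields $\vdash p\preceq q$. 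The main obstacle is the case analysis inside the second fact; but since $M_Z^T$ is obtained from $M_Z$ by the purely formal replacement of $I$ by $T$, and $T$ enjoys the same structural equations $T(ap)=\{\varepsilon\}\cup aT(p)$ and $T(p+q)=T(p)\cup T(q)$ that drove the $I$-argument, every step transfers essentially verbatim.
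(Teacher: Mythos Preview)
Your approach is exactly what the paper intends: the paper gives no explicit proof of this theorem, merely stating that the completeness argument for Theorem~\ref{sandc:thm} ``can be adapted by substituting each reference to the observer $I$ by $T$'' and replacing the head normal form by the totally expanded head normal form $\tehnfx{\cdot}$. Your proposal carries out precisely this adaptation, organizing it around the $T$-analogs of Propositions~\ref{hnfx-aux:prop} and~\ref{hnfx-aux2:prop}, and in fact supplies more detail than the paper does.

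One minor slip: in your case analysis for $Z=FT$ you write $T(q_a^{j_t,2})=\emptyset$, but the trace set of $\cero$ is $\{\varepsilon\}$, not $\emptyset$; the inclusion $T(q_a^{j_t,2})\subseteq T(q_a^{j_t,1})$ still holds, so the conclusion is unaffected. More substantively, for $Z=R$ your phrase ``use $T(q_a^{j_t,1})\subseteq T(q_a^j)$ \dots\ just as before'' glosses over the point that in the $N=I$ proof this inclusion was obtained via the restriction $q_a^{j_t}|_{I(p_a^i)}$, an operation with no direct $T$-analog; one must instead exhibit an explicit decomposition of $q_a^{j_t}$ whose first component both carries $t$ and has trace set contained in $T(q_a^j)$. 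The paper is equally silent on this point, so you are not falling short of it---but the claim that the case analysis transfers ``word for word'' is optimistic, and a reader wanting a fully rigorous proof would need to supply this decomposition argument.
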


\section{Logical characterization of semantics}
\label{sec:LogicalCharacterization}
The third and a very natural alternative to associate a semantics to processes lies in the logical
framework.  
This is indeed quite a natural way to do it.
We have a language to express properties of processes and a way to check
whether a process satisfies a formula of the language:
then, two processes are equivalent with respect to this semantics if and only
if they satisfy the same set of formulas.  
In fact, the semantics can also be defined in terms of the induced preorder
that indicates whether a process satisfies more formulas than another one.

Each subset $\mathcal{L}$ of $\mathcal{L}_{HM}$ induces a semantics as stated in the following definition.
\begin{defi}
Any subset $\mathcal{L}$ of $\mathcal{L}_{HM}$ induces a logical semantics for processes, given by the preorder $\sqsubseteq_{\mathcal{L}}$: $p\sqsubseteq_{\mathcal{L}} q$ whenever for all $\varphi \in \mathcal{L}$, if $p\models \varphi$ then $q\models \varphi$. We say that $\mathcal{L}$ and $\mathcal{L}^{\prime}$ are equivalent, and we write $\mathcal{L} \sim \mathcal{L}^{\prime}$, if they induce the same semantics, that is ${\sqsubseteq_{\mathcal{L}}}={\sqsubseteq_{\mathcal{L}^{\prime}}}$.
\end{defi}
Let us start with a look at Table~\ref{logic_table}, which contains the logical
characterization of each of the semantics in van Glabbeek's spectrum. 
$\mathcal{L}_{Z}$ with $Z\in \{T, CT, F, FT, R, RT, PF, S, CS, RS,$ $2S, PW,
B\}$ denotes each of the logics; the dots indicate the clauses needed to obtain the corresponding languages; and the boxes marked with $\mathcal{\nu}$ correspond to rules that could be added to $\mathcal{L}_{Z}$, but would only introduce redundant formulas. 
The following constructs, which appear in the table but are not in
$\mathcal{L}_{HM}$, can be obtained as syntactic sugar: 
$$\widetilde{X}:=\bigwedge_{a\in X} \neg a \top \hspace{1.2cm} \widetilde{X}\varphi^{\prime}:=\widetilde{X} \wedge \varphi^{\prime} \hspace{1.2cm} 0:=\widetilde{Act}$$
$$\varphi_1 \wedge \varphi_2:=\bigwedge_{i\in \{1,2\}}\varphi_i \hspace{0.65cm} X:=\bigwedge_{a\in X}a\top \wedge \bigwedge_{a\not \in X} \neg a \top \hspace{0.7cm}X\varphi^{\prime}:=X \wedge \varphi^{\prime} \hspace{0.65cm} \widetilde{a}:=\neg a \top$$

\begin{table}[t]
\begin{center}
\scalebox{0.85}{\footnotesize
\begin{tabular}{|c|c|c|c|c|c|c|c|c|c|c|c|c||c|}
\hline
\backslashbox{Formulas}{Semantics ($\mathcal{Z}$)}& T & S & CT & CS & F & FT & R & RT & PW & RS & PF & 2S & B\\
\hline
$\top  \in \mathcal{L}_{\mathcal{Z}}$ & $\bullet$ & $\nu$ & $\bullet$ & $\nu$ & $\bullet$ & $\bullet$ & $\bullet$ & $\bullet$ & $\nu$ & $\nu$ & $\nu$ & $\nu$ & $\nu$\\
\hline
$\textbf{0}\in \mathcal{L}_{\mathcal{Z}}$ & & & $\bullet$ & $\bullet$ & $\nu$ & $\nu$ & $\nu$ & $\nu$ &$\nu$ & $\nu$ & $\nu$ & $\nu$ & $\nu$ \\
\hline
$\varphi \in \mathcal{L}_{\mathcal{Z}}, \hspace{0.075cm} a \in Act \Rightarrow$ &\multirow{2}{*}{$\bullet$} &\multirow{2}{*}{$\bullet$} &\multirow{2}{*}{$\bullet$} &\multirow{2}{*}{$\bullet$} &\multirow{2}{*}{$\bullet$} &\multirow{2}{*}{$\bullet$} &\multirow{2}{*}{$\bullet$} &\multirow{2}{*}{$\bullet$} &\multirow{2}{*}{$\nu$} &\multirow{2}{*}{$\bullet$} &\multirow{2}{*}{$\bullet$} & \multirow{2}{*}{$\bullet$} & \multirow{2}{*}{$\bullet$}\\
$a \varphi \in \mathcal{L}_{\mathcal{Z}}$& & & & & & & & & & & & &\\
\hline
$X \subseteq Act \Rightarrow$ & & & & & \multirow{2}{*}{$\bullet$} & \multirow{2}{*}{$\nu$} & \multirow{2}{*}{$\nu$} & \multirow{2}{*}{$\nu$} & \multirow{2}{*}{$\nu$} &\multirow{2}{*}{$\nu$} &\multirow{2}{*}{$\nu$} &\multirow{2}{*}{$\nu$} &\multirow{2}{*}{$\nu$} \\
$\widetilde{X} \in \mathcal{L}_{\mathcal{Z}}$ & & & & & & & & & & & & & \\
\hline
$X \subseteq Act \Rightarrow$ & & & & & & & \multirow{2}{*}{$\bullet$} &  \multirow{2}{*}{$\nu$} & \multirow{2}{*}{$\bullet$} &\multirow{2}{*}{$\bullet$}& \multirow{2}{*}{$\nu$} &\multirow{2}{*}{$\nu$} & \multirow{2}{*}{$\nu$}\\
$X \in \mathcal{L}_{\mathcal{Z}}$ & & & & & & & & & & & & &\\
\hline
$\varphi \in \mathcal{L}_{\mathcal{Z}}, \hspace{0.075cm} X \subseteq Act \Rightarrow$ & & & & & & \multirow{2}{*}{$\bullet$} & & \multirow{2}{*}{$\nu$} & \multirow{2}{*}{$\nu$}& \multirow{2}{*}{$\nu$} & & \multirow{2}{*}{$\nu$} & \multirow{2}{*}{$\nu$} \\
$\widetilde{X}\varphi \in \mathcal{L}_{\mathcal{Z}}$ & & & & & & & & & & & & &\\
\hline
$\varphi \in \mathcal{L}_{\mathcal{Z}}, \hspace{0.075cm} X \subseteq Act \Rightarrow$ & & & & & & & & \multirow{2}{*}{$\bullet$}  &\multirow{2}{*}{$\nu$} &\multirow{2}{*}{$\nu$} & &\multirow{2}{*}{$\nu$} & \multirow{2}{*}{$\nu$} \\
$X\varphi \in \mathcal{L}_{\mathcal{Z}}$ & & & & & & & & & & & & &\\
\hline
$\varphi_i \in \mathcal{L}_{\mathcal{Z}} \hspace{0.075cm} \forall i \in I \Rightarrow$ & & \multirow{2}{*}{$\bullet$} & &\multirow{2}{*}{$\bullet$} & & & & & &\multirow{2}{*}{$\bullet$} & & \multirow{2}{*}{$\bullet$} & \multirow{2}{*}{$\bullet$}\\
$\bigwedge_{i \in I} \varphi_i \in \mathcal{L}_{\mathcal{Z}}$ & & & & & & & & & & & & &\\
\hline
$X\subseteq Act, \hspace{0.075cm} \varphi_a \in \mathcal{L}_{PW} \hspace{0.075cm} \forall a \in X \Rightarrow$ & & & & & & & & & \multirow{2}{*}{$\bullet$}  &\multirow{2}{*}{$\nu$} & & \multirow{2}{*}{$\nu$} & \multirow{2}{*}{$\nu$}\\
$\bigwedge_{a \in X} a\varphi_a \in \mathcal{L}_{\mathcal{Z}}$ & & & & & & & & & & & & &\\
\hline
$\varphi_i, \varphi_j \in \mathcal{L}_{T} \hspace{0.075cm} \forall i \in I \hspace{0.05cm} \forall j \in J \Rightarrow$ & & & & & & & & & & &\multirow{2}{*}{$\bullet$} & \multirow{2}{*}{$\nu$}  & \multirow{2}{*}{$\nu$}\\
$\bigwedge_{i \in I} \varphi_i \wedge \bigwedge_{j \in J} \neg \varphi_j \in \mathcal{L}_{\mathcal{Z}}$ & & & & & & & & & & & & &\\
\hline
$\varphi \in \mathcal{L}_{S} \Rightarrow$ & & & & & & & & & & & & \multirow{2}{*}{$\bullet$} & \multirow{2}{*}{$\nu$}\\
$\neg \varphi \in \mathcal{L}_{\mathcal{Z}}$ & & & & & & & & & & & & &\\
\hline
$\varphi \in \mathcal{L}_{\mathcal{Z}} \Rightarrow$ & & & & & & & & & & & & &\multirow{2}{*}{$\bullet$}\\
$\neg \varphi \in \mathcal{L}_{\mathcal{Z}}$ & & & & & & & & & & & & &\\
\hline
\end{tabular}}

\vspace{1ex}
\caption{Van Glabbeek's logical characterizations for the semantics in the ltbt spectrum.} \label{logic_table}
\end{center}
\end{table}

Disjunction does not appear in $\mathcal{L}_{HM}$ and therefore neither in any of the logics $\mathcal{L}_{Z}$ characterizing the semantics in the ltbt spectrum. It is probably folklore that it can be added in all cases without affecting the expressive power of each of these logics, but since we have not found a clear statement in this direction in any of our references, next we establish the result and comment on its proof. 

\begin{prop}\label{disyuncion_teo}
Let us define $\mathcal{L}_Z^{\vee}$, with $Z\in \{T, CT, F, FT, R, RT, PF, S, CS,$ $RS, 2S, PW, B\}$, by adding the clause $\bigvee_{i\in I}\varphi_i \in \mathcal{L}_Z^{\vee}$ if $\varphi_i \in \mathcal{L}_Z^{\vee}$ for all $i \in I$ to the clauses that define $\mathcal{L}_Z$, replacing $\mathcal{L}_Z$ by $\mathcal{L}_Z^{\vee}$ in the other clauses, and making $p \models \bigvee \varphi_i$ iff there exists $i \in I$ with $p\models \sigma_i$. 
Then, $\mathcal{L}_Z^{\vee}\sim\mathcal{L}_Z$.
\end{prop}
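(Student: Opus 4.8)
The plan is to reduce everything to the observation that disjunction can always be ``pushed to the top'' of a formula. One inclusion is immediate: since $\mathcal{L}_Z \subseteq \mathcal{L}_Z^{\vee}$, every formula witnessing $p \not\sqsubseteq_{\mathcal{L}_Z^{\vee}} q$ already witnesses $p \not\sqsubseteq_{\mathcal{L}_Z} q$, so $\mathord{\sqsubseteq_{\mathcal{L}_Z^{\vee}}} \subseteq \mathord{\sqsubseteq_{\mathcal{L}_Z}}$. The content is the converse, for which I would prove the following normal-form lemma: \emph{for each $Z$, every $\psi \in \mathcal{L}_Z^{\vee}$ is logically equivalent, over all processes, to some formula $\bigvee_{j \in J} \varphi_j$ with $\varphi_j \in \mathcal{L}_Z$ for all $j \in J$.}

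The lemma is proved by induction on the structure of $\psi$, using three distributivity equivalences valid in $\mathcal{L}_{HM}$: $a\bigvee_{i} \varphi_i \equiv \bigvee_i a\varphi_i$; $\bigwedge_k \bigl(\bigvee_{i \in I_k} \varphi_{k,i}\bigr) \equiv \bigvee_{f \in \prod_k I_k} \bigwedge_k \varphi_{k,f(k)}$; and $\neg \bigvee_i \varphi_i \equiv \bigwedge_i \neg \varphi_i$. Each is checked directly from the definition of $\models$; the second is the usual (possibly infinitary) distributivity of conjunction over disjunction. The derived operators of Table~\ref{logic_table}, such as $\widetilde{X}\varphi$ and $X\varphi$, are conjunctions ($\widetilde{X} \wedge \varphi$ and $X \wedge \varphi$) and hence are covered by the second equivalence, e.g. $\widetilde{X}\bigvee_i\varphi_i \equiv \bigvee_i \widetilde{X}\varphi_i$. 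What must be verified at each step is that the disjuncts produced obey the grammar of $\mathcal{L}_Z$; this is a routine clause-by-clause inspection. For the linear logics $\mathcal{L}_T, \mathcal{L}_{CT}, \mathcal{L}_F, \mathcal{L}_{FT}, \mathcal{L}_R, \mathcal{L}_{RT}$ only the first equivalence (and the $\widetilde X,X$ instance of the second) is needed, since a formula can only grow by prefixing; for $\mathcal{L}_{PW}$ one uses the second equivalence restricted to conjunctions of shape $\bigwedge_{a \in X} a\varphi_a$; for $\mathcal{L}_S, \mathcal{L}_{CS}, \mathcal{L}_{RS}$ the unrestricted second equivalence; for $\mathcal{L}_{PF}$ one additionally invokes the lemma for the sublogic $\mathcal{L}_T$ and uses the third equivalence to turn $\neg\bigvee$ of traces into $\bigwedge$ of negated traces, keeping the conjunction in the form prescribed by the $PF$-clause; for $\mathcal{L}_{2S}$ the third equivalence turns $\neg\bigvee_i\psi_i$ with $\psi_i \in \mathcal{L}_S$ into $\bigwedge_i \neg\psi_i$, a legal $\mathcal{L}_{2S}$ formula; and for $\mathcal{L}_B$ the statement is trivial since $\mathcal{L}_B$ is already closed under all boolean connectives.

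Granting the lemma, the converse inclusion follows at once: suppose $p \sqsubseteq_{\mathcal{L}_Z} q$ and let $\psi \in \mathcal{L}_Z^{\vee}$ with $p \models \psi$. Writing $\psi \equiv \bigvee_{j \in J} \varphi_j$ with $\varphi_j \in \mathcal{L}_Z$, there is some $j_0 \in J$ with $p \models \varphi_{j_0}$; since $\varphi_{j_0} \in \mathcal{L}_Z$ and $p \sqsubseteq_{\mathcal{L}_Z} q$ we get $q \models \varphi_{j_0}$, hence $q \models \psi$. Thus $p \sqsubseteq_{\mathcal{L}_Z^{\vee}} q$, and combining the two inclusions yields $\mathord{\sqsubseteq_{\mathcal{L}_Z^{\vee}}} = \mathord{\sqsubseteq_{\mathcal{L}_Z}}$, i.e.\ $\mathcal{L}_Z^{\vee} \sim \mathcal{L}_Z$ (the induced equivalences then coincide as well, being the kernels of these preorders).

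I expect the main obstacle to be bookkeeping rather than conceptual: one must be careful in the $\prod_k I_k$ step that index sets may be infinite (the product is formed using choice, exactly as in classical infinitary logic), and one must confirm for each of the thirteen grammars that the disjuncts emitted by the rewriting are legal there — in particular for the ``hybrid'' logics $\mathcal{L}_{PF}$ and $\mathcal{L}_{2S}$, whose clauses refer to the fixed sublogics $\mathcal{L}_T$ and $\mathcal{L}_S$, so that the lemma must be applied to those sublogics first.
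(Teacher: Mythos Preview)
Your approach is essentially the same as the paper's: both argue that every formula in $\mathcal{L}_Z^{\vee}$ is equivalent to a disjunction of formulas in $\mathcal{L}_Z$ by floating disjunctions to the top via the distributivity of $\vee$ over prefix and conjunction, with De~Morgan handling the bisimulation case. The paper's proof is terser and stresses that (outside $B$) negation is only ever applied to formulas from the fixed sublogics $\mathcal{L}_T$ or $\mathcal{L}_S$ and hence never to a disjunction, so your extra care about the $PF$ and $2S$ cases is not strictly needed under the paper's reading of the definition, but your argument is correct either way.
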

\begin{proof}
It is interesting to observe that even if the result is valid for all the semantics, the reason behind is not the same as for bisimulation. In that case, we only need to apply the De Morgan laws to get the ``definition'' of $\vee$ as a combination of $\neg$ and  $\wedge$. However, for the rest of the semantics we do not have negation as ``constructor'', but $\vee$ distributes over $\wedge$ and the prefix operator (that is $\bigvee a \varphi_i= a \bigvee \varphi_i$), while negation is never applied to a formula $\varphi^{\prime} \in \mathcal{L}_Z^{\vee}$. Therefore, by floating to the top any $\vee$, using those distribution laws, a formula in $\mathcal{L}_Z^{\vee}$ becomes equivalent to a disjunction of formulas within the corresponding language $\mathcal{L}_Z$, and the equivalence of both logics follows.
\end{proof}

As we will see in this section, each of our logics is defined by a set of rules
and, as usual, only the formulas that can be obtained by finite application of
these rules belong to the logics. One important feature of our approach is that
instead of focusing on small sets of formulas characterizing each of the
semantics, we somewhat follow the opposite approach by including all the
formulas, from a certain family, that are preserved by each of the
semantics. This choice has many interesting side effects. In particular, we
will not need to look for adequate formulas reflecting the characteristics of
each of the semantics, but instead pick up from our ``repository'' of possible
formulas those that are preserved by the current semantics. Thus, we
characterize each of the semantics by means of the formulas that ``see'' the
kind of observations that define it. As a consequence, we know whether a
semantics is coarser than another by checking whether the logic characterizing
the former is included in the logic characterizing the latter. Moreover, by
using a larger logic we may find a formula expressing some property that
is preserved by the corresponding semantics, while if we settle on a smaller
logic we might need a collection of formulas to express a simple property.

Formally speaking, for each semantics defined by a preorder $\prec$ we have a
language $\mathcal{L} \subseteq \mathcal{L}_{HM}$ characterizing it: $\varphi
\in \mathcal{L}$ iff $((p \prec q \wedge p \models \varphi)$ implies $q \models
\varphi)$.  
However, it is not easy (nor specially illustrative) to capture the whole set
of formulas characterizing the semantics. 
Instead, we will consider sufficiently large families defined in a simple way
that provide natural characterizations of the different semantics and show the
relationship between them so that, as stated above, whenever a semantics is
finer than another, the logic characterizing the first will contain that for
the latter. 

As will become clear when we introduce our new logical characterizations,
Table~\ref{logic_table} readily presents the features that allow us to classify
the semantics in the spectrum in four categories: 
\begin{iteMize}{$\bullet$}
\item Bisimulation semantics, characterized by HML, that is closed under negation ($\neg$), so that the preorder defined is an equivalence (the bisimulation). The remaining semantics are defined by non-trivial preorders, i.e., the preorders are not equivalences and their logical characterizations are, of course, not closed under negation.
\item Simulation semantics (S, CS, RS, \ldots), characterized by branching observations, which will be reflected by the unrestricted use of the operator $\bigwedge$ in the formulas.
\item Linear semantics (T, F, R, \ldots), characterized by linear observations. We will get them by severely restricting the use of $\bigwedge$ and the use of the negation.
\item Deterministic branching semantics, corresponding to an intermediate class between branching and linear semantics, where determinism appears restricting the use of the operator $\bigwedge$ in combination with the prefix operator. The only semantics in this class in the classical spectrum is PW.
\end{iteMize} 
As already happened in Sections~\ref{sec:EquationalSemantics} and
\ref{observational-sem-sec}, our unified logical semantics will 
provide an enlarged spectrum---Figure~\ref{fig:extended-ltbts}.
In particular, we will show the logical characterization of revivals semantics,
introduced by Roscoe in \cite{Roscoe09} and already axiomatized in
\cite{FGP09mfps}.

\subsection{A new logical characterization of the most popular semantics} \label{logical_characterizations_popular_semantics}

Again, we start with the best known classical semantics, that is, those at the
layer of ready simulation in the spectrum. All of them use in some way the set
of formulas $\mathcal{L}_I=\{a\top \mid a \in Act\}$ that characterizes the
initial offers of a process. In
Section~\ref{logical_characterizations_coarsest_semantics} we will present the
logics for the rest of the semantics in a unified way, remarking how they are
obtained similarly to those in this section but working from the set
$\mathcal{L}_{N}$ of formulas associated to the corresponding constraint $N$.  

We will prove the equivalence between each of our logics and the corresponding
logical characterization defined by van Glabbeek, thus checking that our new
logical characterizations are indeed correct. But one of the intended goals of
our unification was to obtain direct and natural proofs. This will be
illustrated in Section~\ref{logic_observational_framework} by showing the
equivalence between each of our logical semantics and the corresponding
observational semantics of Section~\ref{observational-sem-sec}.
This will provide a new, single proof of their correctness without having to
resort to the characterizations defined by van Glabbeek.

\begin{defi}\label{def_logica}
\emph{Ready simulation semantics}. We define the set of formulas $\mathcal{L}^{\prime}_{RS}$ for ready simulation semantics by: \bran{\conjsimold{RS}{I}}{RS}
\emph{Ready trace semantics}. We define the set of formulas $\mathcal{L}^{\prime}_{RT}$ for ready trace semantics by: \lin{\conjyformparticular{RT}{I}}{RT}
\emph{Failure trace semantics}. We define the set of formulas $\mathcal{L}^{\prime}_{FT}$ for failure trace semantics by: \linfparticular{I}{FT}
\emph{Readiness semantics}. We define the set of formulas $\mathcal{L}^{\prime}_{R}$ for readiness semantics by: \lincparticular{I}{R} 
\emph{Failures semantics}. We define the set of formulas $\mathcal{L}^{\prime}_{F}$ for failures semantics by: \linfcparticular{I}{F}
\end{defi}

It is immediate that $\mathcal{L}^{\prime}_{RS}\subseteq \mathcal{L}_B$ and
hence ready simulation semantics is coarser than bisimulation equivalence. We
also have $\mathcal{L}^{\prime}_{F}\subseteq \mathcal{L}^{\prime}_{R}$,
$\mathcal{L}^{\prime}_{F}\subseteq \mathcal{L}^{\prime}_{FT}$,
$\mathcal{L}^{\prime}_{R}\subseteq \mathcal{L}^{\prime}_{RT}$,
$\mathcal{L}^{\prime}_{FT}\subseteq \mathcal{L}^{\prime}_{RT}$, and
$\mathcal{L}^{\prime}_{RT}\subseteq \mathcal{L}^{\prime}_{RS}$, which can be
interpreted in a similar way. Let us now focus our attention on the third rule
of the definition of $\mathcal{L}^{\prime}_{RS}$: the unrestricted use of
conjunction corresponds to the branching nature of the semantics. Moreover, the
two first rules allow to fix the set of offers of a process as $I$-simulations
impose. By contrast, the linear semantics only allow the use of conjunction to
join those simple formulas that fix the set of offers along a computation (in
the case of the readies-based semantics), or their over-approximations
(obtained by means of the negated formulas $\neg a\top$, in the case of the
failures-based semantics). Finally, notice how these simple formulas can only
be checked at the corresponding final state, for the two simpler coarser
semantics. 

Now, for $Z \in \{RS, RT, FT, R, F\}$, each of the logics $\mathcal{L}^{\prime}_{Z}$ is a superset of the corresponding logic $\mathcal{L}_{Z}$ defined in Table \ref{logic_table}. To be precise, for $FT$ and $F$ we need to remove the syntactic sugar used by van Glabbeek as stated below.

\begin{rem}
We have used in Section \ref{linear:sec} $X^c$ to denote the complementary of a set, because previously in Definition \ref{closures:def} we used the classic over line notation to refer to closures of sets $\calT\subseteq\textit{LGO}_N$. However, since we will not need those closure operators anymore we prefer to used the classic notation referring the complement of a set $X$ by $\overline{X}$.
\end{rem}

\begin{prop}
\hfill
\begin{enumerate}[\em(1)]
\item $\mathcal{L}_{RS}\varsubsetneq \mathcal{L}^{\prime}_{RS}$.
\item $\mathcal{L}_{RT}\subsetneq \mathcal{L}^{\prime}_{RT}$. 
\item $\mathcal{L}^{\prime}_{FT}\supseteq$ desugared$(\mathcal{L}_{FT})$, where the desugaring function removes the syntactic sugar used in $\mathcal{L}_{FT}$.
\item $\mathcal{L}_{R}\varsubsetneq \mathcal{L}^{\prime}_{R}$.
\item $\mathcal{L}^{\prime}_{F}\supseteq$ desugared($\mathcal{L}_{F}$), where the desugaring function removes the syntactic sugar used in $\mathcal{L}_F$.
\end{enumerate}
\end{prop}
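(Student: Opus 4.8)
The plan is to prove each inclusion by exhibiting, for every formula in van Glabbeek's logic $\mathcal{L}_Z$ (after desugaring where needed), a syntactically matching formula generated by the clauses defining $\mathcal{L}^{\prime}_Z$, and then to exhibit a witness formula for the strictness claims. The structure of the argument is a routine structural induction on the derivation of $\varphi \in \mathcal{L}_Z$, tracking the shape of each rule of Table~\ref{logic_table} against the corresponding clause in Definition~\ref{def_logica}.

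First I would fix notation for the desugaring: recall from the excerpt that $\widetilde{X} = \bigwedge_{a\in X}\neg a\top$, $\widetilde{X}\varphi = \widetilde{X}\wedge\varphi$, $X = \bigwedge_{a\in X}a\top \wedge \bigwedge_{a\notin X}\neg a\top$, $X\varphi = X\wedge\varphi$, and $\mathbf{0} = \widetilde{Act}$. For each $Z$ I would then walk through the rules marked ``$\bullet$'' in the column of Table~\ref{logic_table}: the clause $\top\in\mathcal{L}_Z$ is literally the first bullet of every $\mathcal{L}^{\prime}_Z$; the clause $a\varphi\in\mathcal{L}_Z$ when $\varphi\in\mathcal{L}_Z$ is exactly the ``\formyacc'' clause common to all five logics; the conjunction clauses of $\mathcal{L}^{\prime}_{RS}$ subsume the restricted conjunctions used in the linear logics. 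For $RS$, the extra rule of van Glabbeek, namely $X\in\mathcal{L}_{RS}$ for $X\subseteq Act$, desugars to $\bigwedge_{a\in X}a\top\wedge\bigwedge_{a\notin X}\neg a\top$; since $\mathcal{L}^{\prime}_{RS}$ has both $\sigma\in\mathcal{L}^{\prime}_{RS}\Rightarrow\sigma\in\mathcal{L}^{\prime}_{RS}$ and $\sigma\in\mathcal{L}^{\prime}_{RS}\Rightarrow\neg\sigma\in\mathcal{L}^{\prime}_{RS}$ for $\sigma\in\mathcal{L}^{\prime}_{I}=\{a\top\mid a\in Act\}$, plus unrestricted conjunction, this formula is generated. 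For $RT$ and $R$ I would check that van Glabbeek's rules $X\varphi\in\mathcal{L}_Z$ (resp.\ $X\in\mathcal{L}_Z$) desugar to the pattern $(\bigwedge_{a\in X_1}a\top\wedge\bigwedge_{b\in X_2}\neg b\top)\wedge\varphi$ with $X_1=X$, $X_2=Act\setminus X$, which is exactly the ``\conjyformparticular'' (resp.\ ``\conjparticular'') clause; similarly $\widetilde{X}\varphi$ and $\widetilde{X}$ desugar to the ``\conjyformnegparticular'' and ``\conjnegparticular'' patterns generating $\mathcal{L}^{\prime}_{FT}$ and $\mathcal{L}^{\prime}_F$. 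This establishes the $\supseteq$ half of each item.

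For the strict inclusions (items 1, 2, 4, where ``$\subsetneq$'' is asserted), I would exhibit a single separating formula. The natural candidate for $RS$ is a formula using nested or branching conjunction not available in $\mathcal{L}_{RS}$: e.g.\ $a(b\top\wedge c\top)$ or a doubly negated formula such as $\neg\neg a\top$ composed under a prefix --- more cleanly, $a(b\top \wedge \neg c\top)$, which lies in $\mathcal{L}^{\prime}_{RS}$ (since $b\top\wedge\neg c\top\in\mathcal{L}^{\prime}_{RS}$ by the conjunction and negation clauses, then prefix by $a$) but is not obtainable in $\mathcal{L}_{RS}$ because van Glabbeek's grammar only permits the ``partial ready set'' combination $X$ or $X\varphi$ under a prefix, not an arbitrary conjunction of positive and negative atoms followed by a further subformula. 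For $RT$ and $R$ the witnesses are analogous, using the fact that $\mathcal{L}^{\prime}_{RT}$ and $\mathcal{L}^{\prime}_R$ allow $X_1,X_2\subseteq\mathcal{L}^{\prime}_I$ to be arbitrary subsets (so, e.g., $X_1\cup X_2\subsetneq Act$ is permitted, i.e.\ one may leave some actions unconstrained), whereas van Glabbeek's $X$ forces a full partition of $Act$; a formula like $a\top\wedge\neg b\top$ (leaving all other actions unspecified) belongs to $\mathcal{L}^{\prime}_{RT}$ but not to $\mathcal{L}_{RT}$.

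The main obstacle I expect is not the containment direction --- that is a mechanical rule-by-rule check --- but making the strictness arguments genuinely rigorous: one must show that no $\mathcal{L}_Z$-derivation can produce the witness formula, which requires a precise normal-form or invariant on the syntactic shape of $\mathcal{L}_Z$-formulas (for instance, that every $\mathcal{L}_{RS}$-formula is, up to the fixed desugaring, a conjunction of atoms $a\top$, negated atoms $\neg a\top$, complete ready-set blocks $X$, and formulas $a\varphi$, with no ``mixed'' block $\bigwedge_{X_1}a\top\wedge\bigwedge_{X_2}\neg b\top$ appearing except when $X_1\cup X_2=Act$). I would state this invariant as a small lemma, prove it by induction on $\mathcal{L}_Z$-derivations, and then observe the witness violates it; the bulk of the genuine work is in formulating that invariant correctly for $RS$, $RT$, and $R$ so that the chosen witness provably falls outside. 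Items (3) and (5) only claim ``$\supseteq$'', so no strictness lemma is needed there and the desugaring check suffices.
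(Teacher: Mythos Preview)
Your inclusion arguments are essentially the paper's: a rule-by-rule check that each clause of Table~\ref{logic_table} desugars to an instance of the corresponding clause of Definition~\ref{def_logica}, taking $X_1 = X$ and $X_2 = Act\setminus X$ where needed. That part is correct and matches the paper.

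Your strictness argument for $RS$, however, contains a genuine error. You mischaracterize the grammar of $\mathcal{L}_{RS}$: it does \emph{not} restrict what may appear under a prefix to ``$X$ or $X\varphi$''; it has full ready-set blocks $X$, the prefix rule, and \emph{unrestricted} conjunction. In particular $\top\in\mathcal{L}_{RS}$ as the empty conjunction, hence $b\top\in\mathcal{L}_{RS}$ by prefix, hence your first candidate $a(b\top\wedge c\top)$ \emph{is} in $\mathcal{L}_{RS}$ and does not witness strictness. Your second candidate $a(b\top\wedge\neg c\top)$ does work provided $|Act|>2$ (if $Act=\{b,c\}$ then $b\top\wedge\neg c\top$ is literally the desugared ready-set block $\{b\}$), but the reason you give is wrong: the obstruction is that the single negated atom $\neg c\top$ cannot be generated in $\mathcal{L}_{RS}$, not anything about what is allowed ``under a prefix''.

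The paper sidesteps all of this by taking the simpler witness $\neg b\top$ itself: it lies in $\mathcal{L}^{\prime}_{RS}$ by the clause $\sigma\in\mathcal{L}_I\Rightarrow\neg\sigma\in\mathcal{L}^{\prime}_{RS}$, and (for $|Act|\geq 2$) it is visibly not a ready-set block, not a prefix formula, and not a nontrivial conjunction of $\mathcal{L}_{RS}$-formulas. With this choice no auxiliary invariant lemma is needed; the grammar inspection is a one-line observation. The paper uses the analogous direct witnesses $\neg b\top$ for $R$ and $(\neg b\top)\wedge\varphi$ for $RT$, which match the spirit of your proposals for those two cases.
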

\begin{proof}
Recall the definition of $\mathcal{L}_Z$ in Table~\ref{logic_table}.
\begin{enumerate}[(1)]
\item To prove that $\mathcal{L}_{RS}\subseteq \mathcal{L}^{\prime}_{RS}$, it is sufficient to show that each formula $\varphi_X= \bigwedge_{a\in X} a\top \wedge \bigwedge_{b \notin X} \neg b \top$ corresponding to $X\subseteq Act$ belongs to $\mathcal{L}^{\prime}_{RS}$. Both $a\top $ and $\neg b\top$ are in $\mathcal{L}^{\prime}_{RS}$ and the combination of these formulas with the operator $\wedge$ is also in the set $\mathcal{L}^{\prime}_{RS}$. For the inclusion to be proper, it is sufficient to notice that the formula $\neg b\top$ belongs to $\mathcal{L}^{\prime}_{RS}$ but not to the set $\mathcal{L}_{RS}$.
\item To prove that $\mathcal{L}_{RT}\subseteq \mathcal{L}^{\prime}_{RT}$ it is sufficient to show that for every $X \subseteq Act$ and any $\varphi \in \mathcal{L}_{RT}$, the formula $(\bigwedge_{a \in X} a\top \wedge \bigwedge_{b \notin X} \neg b\top) \wedge \varphi$ belongs to $\mathcal{L}^{\prime}_{RT}$. Note that $b \notin X$ is equivalent to $b \in \overline{X}$, so taking $X_1=X$ and $X_2=\overline{X}$ we have that the considered formula belongs to $\mathcal{L}^{\prime}_{RT}$. To prove that  $\mathcal{L}_{RT} \subset \mathcal{L}^{\prime}_{RT}$, it is sufficient to note that $(\neg b\top) \wedge \varphi$ belongs to $\mathcal{L}^{\prime}_{RT}\,$, by taking $X_1=\emptyset$ and $X_2=\{b\}$, but it does not belong to $\mathcal{L}_{RS}$.
\item In this case the result is trivial, since the definitions of $\mathcal{L}_{FT}$ and $\mathcal{L}_{FT}^{\prime}$ are almost the same, once the syntactic sugar is removed. The only difference is that $\perp \in \mathcal{L}_{FT}^{\prime}$, which obviously does not affect the inclusion.
\item To prove that $\mathcal{L}_{R}\subseteq \mathcal{L}^{\prime}_{R}$, it is sufficient to show that for every $X \subseteq Act$ the formula $\bigwedge_{a \in X} a\top \wedge \bigwedge_{b \notin X} \neg b\top$ belongs to $\mathcal{L}^{\prime}_{R}$. Note that the condition $b \notin X$ is equivalent to $b \in \overline{X}$, so taking $X_1=X$ and $X_2=\overline{X}$ we have that the considered formula belongs to $\mathcal{L}^{\prime}_{R}$. To check that $\mathcal{L}_{R} \varsubsetneq \mathcal{L}^{\prime}_{R}$, it is sufficient to note that the formula $\neg b\top$ belongs to $\mathcal{L}^{\prime}_{R}$ by taking $X_1=\emptyset$ and $X_2=\{b\}$, while it does not belong to $\mathcal{L}_{R}$.
\item Analogous to 3.\qedhere
\end{enumerate}    
\end{proof}

As stated earlier, in order to obtain more
natural characterizations, our logics typically contain large sets of
formulas. This is why in most cases our logics contain those proposed
by van Glabbeek. In order to prove the equivalence between ours and his, we
have to show that our additional formulas are in fact redundant and could be
safely removed. 
\begin{prop} \label{equiv_entre_logicas}
(1) $\mathcal{L}_{RS}\sim \mathcal{L}^{\prime}_{RS}$; (2) $\mathcal{L}_{RT}\sim \mathcal{L}^{\prime}_{RT}$;  (3) $\mathcal{L}_{FT}\sim \mathcal{L}^{\prime}_{FT}$; (4) $\mathcal{L}_{R}\sim \mathcal{L}^{\prime}_{R}$; and (5) $\mathcal{L}_{F}\sim \mathcal{L}^{\prime}_{F}$.
\end{prop}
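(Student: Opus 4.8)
The plan is to prove each equivalence $\mathcal{L}_Z\sim\mathcal{L}'_Z$ by establishing inclusions of the induced preorders in both directions. Since the previous proposition already shows $\mathcal{L}_Z$ (after desugaring) is a subset of $\mathcal{L}'_Z$, one direction is immediate: a larger logic induces a finer (smaller) preorder, so $\sqsubseteq_{\mathcal{L}'_Z}\subseteq\sqsubseteq_{\mathcal{L}_Z}$. Hence all the work is in the reverse direction, namely showing that every formula $\varphi\in\mathcal{L}'_Z$ is, semantically speaking, redundant: if $p\sqsubseteq_{\mathcal{L}_Z}q$ then for all $\varphi\in\mathcal{L}'_Z$, $p\models\varphi$ implies $q\models\varphi$. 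Equivalently, it suffices to show that every $\varphi\in\mathcal{L}'_Z$ is logically equivalent (over all BCCSP processes) to some formula, or finite boolean combination preserved by $\sqsubseteq_{\mathcal{L}_Z}$, that lives in the ``repository'' van Glabbeek uses.

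The key steps I would carry out, case by case, are structural inductions on the formulas of $\mathcal{L}'_Z$. For ready simulation (1): the only genuinely new atomic formulas in $\mathcal{L}'_{RS}$ compared to $\mathcal{L}_{RS}$ are the negated diamonds $\neg a\top$; but I would show $\neg a\top$ is equivalent to $\bigvee_{X\subseteq Act,\,a\notin X}\varphi_X$ where $\varphi_X=\bigwedge_{b\in X}b\top\wedge\bigwedge_{b\notin X}\neg b\top$ is exactly the $\mathcal{L}_{RS}$-formula for the offer set $X$; then invoke Proposition~\ref{disyuncion_teo} (disjunction is harmless) together with the fact that $\mathcal{L}_{RS}$ is closed under conjunction and under prefixing. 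Pushing this observation through the recursion — replacing each occurrence of a naked $\neg a\top$ by the corresponding disjunction of full offer-set formulas and floating the disjunctions to the top as in the proof of Proposition~\ref{disyuncion_teo} — turns an arbitrary $\varphi\in\mathcal{L}'_{RS}$ into a disjunction of $\mathcal{L}_{RS}$-formulas, which gives the claim. For the linear semantics $RT,R$ (cases (2),(4)) the same idea applies: the extra generality of $\mathcal{L}'_Z$ over $\mathcal{L}_Z$ is precisely the use of partial offer information $\bigwedge_{a\in X_1}a\top\wedge\bigwedge_{b\in X_2}\neg b\top$ with $X_1\cup X_2\subsetneq Act$; each such partial specification is a disjunction of the complete offer-set specifications $X$ with $X\supseteq X_1$ and $X\cap X_2=\emptyset$, so again everything reduces to a disjunction of van Glabbeek formulas and Proposition~\ref{disyuncion_teo} closes the gap. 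For $FT$ and $F$ (cases (3),(5)) the previous proposition already tells us the desugared logics essentially coincide, the only discrepancy being $\bot\in\mathcal{L}'_{FT}$ (and analogously for $F$); since no process satisfies $\bot$, adding it changes neither the induced preorder, so $\mathcal{L}_{FT}\sim\mathcal{L}'_{FT}$ and $\mathcal{L}_F\sim\mathcal{L}'_F$ follow directly.

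The main obstacle I anticipate is the bookkeeping in the inductive ``float the disjunctions to the top'' argument: one must check that replacing $\neg a\top$ by $\bigvee_X\varphi_X$ deep inside a formula, then using the distribution laws $a\bigvee_i\psi_i\equiv\bigvee_i a\psi_i$ and $\psi\wedge\bigvee_i\chi_i\equiv\bigvee_i(\psi\wedge\chi_i)$, indeed yields a formula of the form $\bigvee_j\varphi_j$ with each $\varphi_j$ genuinely in the target language $\mathcal{L}_Z$ — in particular that the restricted combination rules of the \emph{linear} logics (where conjunction is only allowed in the very specific shapes $X\varphi$, $\widetilde{X}\varphi$) are respected after the rewriting. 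This is routine but needs care, and it is exactly the sort of case analysis the authors say (in the introduction to Section~\ref{logic_observational_framework}) they would rather replace by the uniform observational argument; so I would present the $\neg a\top\equiv\bigvee_X\varphi_X$ reduction explicitly and then remark that the remaining manipulations are identical to those already spelled out in the proof of Proposition~\ref{disyuncion_teo}, thereby keeping the proof short.
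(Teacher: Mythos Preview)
Your proposal is correct and follows essentially the same approach as the paper: in each case you replace the partial offer information (the atoms $a\top$ and $\neg a\top$, or more generally $\bigwedge_{a\in X_1}a\top\wedge\bigwedge_{b\in X_2}\neg b\top$) by the disjunction of the complete offer-set formulas compatible with it, float the disjunctions to the top, and invoke Proposition~\ref{disyuncion_teo}, while for $FT$ and $F$ you observe that the only extra formula is $\bot$. The paper's proof does exactly this, with the same appeal to Proposition~\ref{disyuncion_teo} and the same ``float the disjunctions up'' mechanism; the only minor slip in your write-up is that for $RS$ the positive atoms $a\top$ are also not literally in $\mathcal{L}_{RS}$ (only the full offer formulas $X$ are), but your disjunction argument handles them identically via $a\top\sim\bigvee_{a\in X}X$.
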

\begin{proof}
\hfill
\begin{enumerate}[(1)]
\item[$(1)$] Any conjunction and negation of formulas in $\mathcal{L}_I$ can be obtained as the disjunction of the formulas $X$ describing all the ``compatible'' offers. These are those including the positive and negative information in the corresponding conjunction, i.e., $ a \top \sim \bigvee_{a \in X} X$; $\neg a \top \sim \bigvee_{a \notin X} X$.
Then, by applying Proposition~\ref{disyuncion_teo}, we obtain $ \mathcal{L}^{\prime}_{RS} \sim \mathcal{L}_{RS}$.
\item[$(2)$] We have shown that the formulas in $\mathcal{L}_{RT}$ are particular cases of the formulas in $\mathcal{L}^{\prime}_{RT}$: those that completely define the offers at the states along a computation (when we apply the second clause in the definition of $\mathcal{L}^{\prime}_{RT}$ with $X_2 = \overline{X_1}$).
In contrast, our more general formulas $(\bigwedge_{a\top \in X_1} a\top \wedge \bigwedge_{b\top \in X_2}\neg b\top) \wedge \varphi$, where $\varphi \in \mathcal{L}^{\prime}_{RT}$, could provide us with some partial information, combining both positive information $a\top \in X_1$ and negative information $b\top \in X_2$, which tells us that we are in an arbitrary state $X$ satisfying $X_1 \subseteq X \subseteq \overline{X_2}$. But we can replace these formulas by the disjunction of all the formulas describing any of these possible offers $X$. By repeating this procedure at each level of the formula, we finally obtain a disjunction of formulas in $\mathcal{L}_{RT}$. To conclude, it is enough to apply Proposition~\ref{disyuncion_teo}.
\item[$(3)$] We know $\perp = \neg \top=\bigvee_{i\in \emptyset}\varphi_i$, and applying Proposition~\ref{disyuncion_teo} we get the equivalence.
\item[$(4)$] Note that van Glabbeek allowed in $\mathcal{L}_{R}$ only ``normal form'' formulas from $\mathcal{L}^{\prime}_{R}$, which can give us information about the offers at the final state in a computation (when we apply the second clause in the definition of $\mathcal{L}^{\prime}_{R}$) or simply define these computations by means of the prefix operator (when we apply the third clause in the definition of $\mathcal{L}^{\prime}_{R}$).
However, our more general formulas $(\bigwedge_{a\top \in X_1} a\top \wedge \bigwedge_{b\top \in X_2}\neg b\top)$ can also provide us with some partial information about the final state, which could be both positive $a\top \in X_1$ and negative $b\top \in X_2$. In the (allowed) case $X_1 \bigcap X_2 \neq \emptyset$ we have that the formula is unsatisfiable. Otherwise, we are offering the actions $a$ corresponding to formulas $a\top$ in any $X\subseteq \mathcal{L}_{I}$ that satisfies $X_1 \subseteq X$ and  $X\subseteq \overline{X_2}$, and we can replace again the corresponding formula by a disjunction of formulas in $\mathcal{L}_{R}$.
\item[$(5)$] Analogous to 3.\qedhere
\end{enumerate}
\end{proof}

\noindent In the following, when we consider a logic $\mathcal{L}_{Z}$ and the index $Z$ refers to some concrete semantics, as is the case with $RS$, $RT$, $FT$, $R$, and $F$ above, by abuse of notation we will simply write $\sqsubseteq^{\prime}_{Z}$ instead of $\sqsubseteq_{\mathcal{L}^{\prime}_{Z}}$ for the preorder induced by the logic $\mathcal{L}^{\prime}_Z$.

\begin{thm}\label{teoequiv}
\hfill
\begin{enumerate}[\em(1)]
\item The logical semantics $\sqsubseteq^{\prime}_{RS}$ induced by the logic $\mathcal{L}^{\prime}_{RS}$ is equivalent to the observational branching semantics defined by $\leq_{I}^{b}$, generated by the set of branching general observations $BGO_I$.
\item For $Z\in \{F, FT, R, RT\}$, the logical semantics  $\sqsubseteq^{\prime}_{Z}$ induced by the logic $\mathcal{L}^{\prime}_{Z}$ is equivalent to the observational linear semantics $\leq_I^{l(Z)}$ in Definitions~\ref{orden_lN} and \ref{obs:def}.
\end{enumerate}
\end{thm}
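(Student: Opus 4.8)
The plan is to establish each claimed equivalence as an equality of the induced preorders, proving the two inclusions separately, and to keep the observational side expressed through the identifications already at hand: by Theorem~\ref{denotational-main:thm} (case $N=I$) one has ${\leq^b_I}={\sqsubseteq_{RS}}$, and by Proposition~\ref{traces-rtraces:prop}(2), Proposition~\ref{failures-ftraces:prop} and the observation preceding it, ${\leq^l_I}={\sqsubseteq_{RT}}$, ${\leq_I^{l\supseteq}}={\sqsubseteq_{FT}}$, ${\leq_I^{lf\supseteq}}={\sqsubseteq_F}$ and ${\leq_I^{lf}}={\sqsubseteq_R}$. Since $l(RT)=l$, $l(FT)=l{\supseteq}$, $l(R)=lf$ and $l(F)=lf{\supseteq}$, the theorem reduces to ${\sqsubseteq^{\prime}_Z}={\sqsubseteq_Z}$ for $Z\in\{RS,RT,FT,R,F\}$, where in the first case the relevant observational order is $\leq^b_I$ and in the other four it is $\leq_I^{l(Z)}$. (One could alternatively invoke Proposition~\ref{equiv_entre_logicas}, ${\sqsubseteq^{\prime}_Z}={\sqsubseteq_{\mathcal{L}_Z}}$, together with the classical fact that van Glabbeek's $\mathcal{L}_Z$ characterizes $\sqsubseteq_Z$; the direct argument below is preferable because it ties the logical and observational frameworks without detouring through the classical tables.)

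For the simulation case, ${\sqsubseteq_{RS}}\subseteq{\sqsubseteq^{\prime}_{RS}}$ is proved by structural induction on $\varphi\in\mathcal{L}^{\prime}_{RS}$: assuming $p\sqsubseteq_{RS}q$ and $p\models\varphi$, for $\varphi\in\{a\top,\neg a\top\}$ the conclusion $q\models\varphi$ holds because $\sqsubseteq_{RS}$ is itself a ready (i.e. $I$-constrained) simulation, so $I(p)=I(q)$; conjunctions are immediate; and for $\varphi=a\psi$ one transfers a witnessing transition $p\tran{a}p'$ with $p'\models\psi$ to some $q\tran{a}q'$ with $p'\sqsubseteq_{RS}q'$ (again because $\sqsubseteq_{RS}$ is a simulation) and applies the induction hypothesis to $\psi$. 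For the reverse inclusion, assign to each $\textit{bgo}\in\textit{BGO}_I$ the characteristic formula $\varphi_{\langle A,\emptyset\rangle}=\bigwedge_{a\in A}a\top\wedge\bigwedge_{b\notin A}\neg b\top$ and $\varphi_{\langle A,\{(a_i,\textit{bgo}_i)\}\rangle}=\bigwedge_{a\in A}a\top\wedge\bigwedge_{b\notin A}\neg b\top\wedge\bigwedge_i a_i\varphi_{\textit{bgo}_i}$; a routine induction on $\textit{bgo}$ shows $\varphi_{\textit{bgo}}\in\mathcal{L}^{\prime}_{RS}$ and, for every process $r$, $r\models\varphi_{\textit{bgo}}\iff\textit{bgo}\in\textit{BGO}_I(r)$ (using $\langle A,\emptyset\rangle\in\textit{BGO}_I(r)\iff I(r)=A$ and the defining clauses of $\textit{BGO}_I(r)$). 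Hence $p\sqsubseteq^{\prime}_{RS}q$ forces $\textit{BGO}_I(p)\subseteq\textit{BGO}_I(q)$, that is $p\leq^b_I q$.

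For the linear cases the scheme is the same, now with a normal-form observation and characteristic formulas for linear data. In the grammar of each $\mathcal{L}^{\prime}_Z$ ($Z\in\{RT,FT,R,F\}$) conjunction only combines an offer constraint with a single continuation, so merging consecutive such conjunctions shows that every formula is equivalent to one of the sequential shape $C_0\wedge a_1(C_1\wedge a_2(\cdots\wedge a_n C_n)\cdots)$, where each $C_i$ is a constraint of the type $Z$ permits: an arbitrary partial description $\bigwedge_{a\in X_1^i}a\top\wedge\bigwedge_{b\in X_2^i}\neg b\top$ at every level for $RT$; a purely negative $\bigwedge_{a\in X_1^i}\neg a\top$ at every level for $FT$; a partial description only at level $n$ (and trivial elsewhere) for $R$; a purely negative constraint only at level $n$ for $F$. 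Therefore $p\models\varphi$ iff $p$ admits a computation $p\tran{a_1}p_1\cdots\tran{a_n}p_n$ with $I(p_i)$ satisfying $C_i$ for all $i$, i.e. iff $p$ has a ready trace / failure trace / ready pair / failure pair (respectively) witnessed by that computation and compatible with the $C_i$'s. Then ${\sqsubseteq_Z}\subseteq{\sqsubseteq^{\prime}_Z}$ holds because such a datum of $p$ is, by the very meaning of ${\sqsubseteq_Z}={\leq_I^{l(Z)}}$, also a datum of $q$ up to the appropriate $\supseteq$-weakening, whence $q\models\varphi$; and ${\sqsubseteq^{\prime}_Z}\subseteq{\sqsubseteq_Z}$ holds because each datum $t$ of $p$ has a tight sequential characteristic formula $\varphi_t\in\mathcal{L}^{\prime}_Z$ (exact offers for $RT$ and $R$, complements of the offers for $FT$, complement of the final offer for $F$) with $p\models\varphi_t$, so $q\models\varphi_t$, which is exactly the assertion that $q$ has a datum subsuming $t$ in the sense of $\leq_I^{l(Z)}$; both steps are literal transcriptions of the computations in Propositions~\ref{failures-ftraces:prop} and \ref{traces-rtraces:prop}.

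The main obstacle lies in the prefix step $\varphi=a\psi$ of the simulation case: $\textit{BGO}_I(p)\subseteq\textit{BGO}_I(q)$ does not by itself supply a single $a$-successor $q'$ of $q$ with $\textit{BGO}_I(p')\subseteq\textit{BGO}_I(q')$, since distinct branching observations of $p'$ could be matched by distinct successors of $q$. This is precisely the point settled inside the proof of Theorem~\ref{denotational-main:thm} --- closure of $\textit{BGO}_I$-sets under unions of the children at the root, together with image-finiteness, shows that $\{(p,q)\mid\textit{BGO}_I(p)\subseteq\textit{BGO}_I(q)\}$ is an $I$-simulation --- so it is cleanest simply to appeal to that theorem and carry out the induction with $\sqsubseteq_{RS}$, which does transfer successors; as with Theorem~\ref{possible-worlds:thm}, this is also why, for infinite processes, the statement must be restricted to image-finite ones. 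The remaining work --- bookkeeping the four variants of the $C_i$ in the normal-form lemma and checking that the induced ``datum subsuming $t$'' relation coincides verbatim with the orders $\leq_I^{l\delta}$ of Definition~\ref{obs:def} --- is routine, exactly along the lines of Proposition~\ref{prop_contenido}.
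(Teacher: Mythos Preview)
Your proposal is correct and follows essentially the same strategy as the paper: characteristic formulas for branching observations in the $RS$ case (what the paper packages as the isomorphism between ``complete normal forms'' $\mathcal{CN(L_{RS})}$ and $BGO_I$), and a sequential normal form plus direct back-and-forth for the linear cases. The only organizational differences are that you bypass Proposition~\ref{equiv_entre_logicas} and van Glabbeek's tables entirely (the paper cites them first but then gives essentially your direct argument anyway), and that you handle $RT$ uniformly with $FT,R,F$ while the paper treats $RT$ via the normal-form/isomorphism route as for $RS$; neither difference is substantive. Your closing paragraph on the ``obstacle'' is accurate but superfluous given that you already work with $\sqsubseteq_{RS}$ rather than $\leq^b_I$ in the induction.
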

\begin{proof}
It is a consequence of  Proposition~\ref{equiv_entre_logicas} and the results by van Glabbeek collected in Table~\ref{logic_table}, Theorem~\ref{denotational-main:thm}, and Proposition~\ref{failures-ftraces:prop}.
\begin{enumerate}[(1)]
\item We have already checked that our formulas are equivalent to van Glabbeek's: $\mathcal{L}^{\prime}_{RS} \sim \mathcal{L}_{RS}$. It is easy to show that once we have eliminated the unsatisfiable formulas in $\mathcal{L}^{\prime}_{RS}$ (those that simultaneously make two different offers, or perform an action that was not included in the corresponding offer) the remaining formulas in $\mathcal{L}^{\prime}_{RS}$ admit a normal form in the language $\mathcal{N(L_{RS})}$, which we define as follows:
\begin{iteMize}{$\bullet$}
\item if $X\subseteq Act$, $\{a_i\mid i\in I\} \subseteq X$, and $\varphi_i \in \mathcal{N(L_{RS})}$, then $(\bigwedge_{b\in X}b \top \wedge \bigwedge_{b \notin X} \neg b \top) \wedge \bigwedge_{i\in I}a_i \varphi_i \in \mathcal{N(L_{RS})}$;
\item if $\{a_i\mid i\in I\} \subseteq Act$ and $\varphi_i \in \mathcal{N(L_{RS})}$ then $\bigwedge_{i\in I}a_i \varphi_i \in \mathcal{N(L_{RS})}$.
\end{iteMize}

Within this set, consider the subset of formulas $\mathcal{CN(L_{RS})}$ which can be generated using the first clause in the above definition. We can establish an isomorphism between $\mathcal{CN(L_{RS})}$ and the set of possible branching general observations $BGO_I$. Moreover, it is easy to prove that if for every formula $\varphi \in \mathcal{CN(L_{RS})}$ we define $bgo_\varphi$ as the corresponding observation, then $\varphi \models p$ iff $bgo_\varphi \in BGO_I(p)$, from which it immediately follows that $\mathcal{CN(L_{RS})}$ characterizes the ready simulation semantics defined via $BGO_I$.

Now, to conclude the proof it is sufficient to show that $\mathcal{N(L_{RS})}$
and $\mathcal{CN(L_{RS})}$ are equivalent. Note that whenever we use the second
clause in the definition of $\mathcal{N(L_{RS})}$, we are ignoring the
possibility of specifying the offer $X$ at the state we are. As a consequence,
the offer could be any satisfying $\{a_i\mid i \in I \} \subseteq X$, for the
corresponding set $\{a_i \mid i \in I \}$. Then we can complete the associated
formula $\bigwedge_{i \in I} a_i\varphi_i$ by adding the disjunction
$\bigvee_{\{a_i / i\in I\} \subseteq X}(\bigwedge_{b \in X} b\top \wedge
\bigwedge_{b \notin X} \neg b \top)$. Floating all the disjunctions away we
obtain a disjunction of formulas in $\mathcal{N(L_{RS})}$, which ends the proof. 

\item \begin{iteMize}{$\bullet$}
\item If $Z=RT$, we know that $\mathcal{L}^{\prime}_{RT} \sim \mathcal{L}_{RT}$. It is easy to show that eliminating all the unsatisfiable formulas (those that simultaneously offer two different sets of actions,  or perform an action $a$ that is not included in the corresponding offer $X$) the rest of the formulas in $\mathcal{L}^{\prime}_{RT}$ admit a normal form in the language $\mathcal{N(L_{RT})}$, which we define as follows:
\begin{iteMize}{$-$}
\item if $X\subseteq Act$ then $(\bigwedge_{b\in X}b \top \wedge \bigwedge_{b \notin X} \neg b \top) \in \mathcal{N(L_{RT})}$;
\item if $X\subseteq Act$, $a\in X$, and $\varphi \in \mathcal{N(L_{RT})}$ then $(\bigwedge_{b\in X}b \top \wedge \bigwedge_{b \notin X} \neg b \top) \wedge a\varphi \in \mathcal{N(L_{RT})}$;
\item $\top \in \mathcal{N(L_{RT})}$;
\item if $a \in Act$ and $\varphi \in \mathcal{N(L_{RT})}$ then $a\varphi \in \mathcal{N(L_{RT})}$.
\end{iteMize}

As we did for the case of ready simulation, we could define the corresponding language of complete formulas $\mathcal{CN(L_{RT})}$. The formulas in $\mathcal{L}^{\prime}_{RT}$ that we obtained in the proof of Proposition~\ref{equiv_entre_logicas}, for the case of $RT$, are indeed in $\mathcal{CN(L_{RT})}$ because any subformula gives us some partial information about the offers at the corresponding state, which in the worst case could be empty. Therefore, when we translate this information into the language $\mathcal{L}^{\prime}_{RT}$ we obtain a disjunction between complete formulas in $\mathcal{CN(L_{RT})}$. We can easily establish the isomorphism between $\mathcal{CN(L_{RT})}$ and the domain $LGO_I$, and then prove that for every formula $\varphi \in \mathcal{CN(L_{RT})}$, if we define $lgo_\varphi$ as the corresponding observation, we have $\varphi \models p$ iff $lgo_\varphi \in LGO_I(p)$. From here it follows that $\mathcal{CN(L_{RT})}$ characterizes the ready simulation semantics defined via $LGO_I$.
To conclude the proof we need to show that $\mathcal{N(L_{RT})}$ and
$\mathcal{CN(L_{RT})}$ are equivalent, which is analogous to $\mathcal{N(L_{RS})}$ and $\mathcal{CN(L_{RS})}$ above.

\item $Z=FT$. $(\Rightarrow)$ 
Let $p$ and $q$ be such that $p\sqsubseteq^{\prime}_{FT}q$: we will show that $p\leq_I^{l\supseteq}q$.
Given an observation $X_0a_1X_1 \ldots a_nX_n \in LGO_I(p)$, we have a failure
trace $\overline{X_0}a_1\overline{X_1} \ldots a_n\overline{X_n}$ for the process $p$. Now, we consider the formulas $\varphi_n= \bigwedge_{a \in \overline{X}} \neg a \top$, $\varphi_i= \bigwedge_{a \in \overline{X_i}}\neg a \top \wedge a_{i+1}\varphi_{i+1}$ with $i \in 0..n-1$, and we have that $p\models \varphi_0$. Therefore $q\models \varphi_0$, which means that $\overline{X_0}a_1\overline{X_1} \ldots a_n\overline{X_n}$ is a failure trace of $q$. Then, there is some $Y_0a_1Y_2 \ldots a_nY_n \in LGO_I(p)$ with $Y_i \bigcap \overline{X_i} = \emptyset$ for all $i=0..n$ or, equivalently, $X_i\supseteq Y_i$ for all $i=0..n$. As a result, $LGO_I(p) \leq_I^{l\supseteq} LGO_I(q)$, which means $p \leq_I^{l\supseteq} q$.

$(\Leftarrow)$ 
Let us suppose that for all $X_0a_1X_1\ldots a_nX_n \in LGO_I(p)$ there exists $Y_0a_1Y_1\ldots$ $a_nY_n \in LGO_I(q)$ such that $X_i\supseteq Y_i$ for all $i=0..n$; we want to show that if $p\models \varphi$ then $q\models \varphi$, for all $\varphi \in \mathcal{L}^{\prime}_{FT}$.
If $p\models \varphi$, we can decompose $\varphi$ by means of a sequence of formulas, taking $\varphi= \varphi_n$, $\varphi_i = \bigwedge_{a\in X_2^{i}} \neg a \top \wedge a_{i}\varphi_{i-1}$ for $i\in 1..n$ and $\varphi_0= \bigwedge_{a \in X_2^{0}} \neg a \top$ . Therefore, $X_na_nX_{n-1}\ldots a_1X_0$ is a failure trace for the process $p$, so there exists $Z_na_nZ_{n-1}\ldots a_1Z_0\in LGO_I(p)$ with $Z_i \bigcap X_i=\emptyset$, and using that  $p \leq_I^{l\supseteq} q$, there exists some $Y_na_nY_{n-1} \ldots a_1Y_0 \in LGO_I(q)$ with $Y_i\subseteq Z_i$, so that $Y_i \bigcap X_i=\emptyset$ and then we get $q\models \varphi_n$.

\item If $Z=R$, using the result in the proof of Proposition~\ref{equiv_entre_logicas} for the case of R it is enough to show the result for the set of ``normal form'' formulas $\mathcal{N(L_{R})}$ defined by:
\begin{iteMize}{$-$}
\item if $X\subseteq Act$ then $(\bigwedge_{b\in X}b \top \wedge \bigwedge_{b \notin X} \neg b \top) \in \mathcal{N(L_{R})}$;
\item $\top \in \mathcal{N(L_{R})}$;
\item $a \in Act$ and $\varphi \in \mathcal{N(L_{R})}$ then $\varphi \in \mathcal{N(L_{R})}$.
\end{iteMize}
$(\Rightarrow)$ 
Let $p$ and $q$ be such that $p\sqsubseteq^{\prime}_{R}q$: we will show $p\leq_I^{lf}q$.
Given an observation $X_0a_1X_1 \ldots a_nX_n \in LGO_I(p)$, it corresponds to the readiness information $a_1 \ldots a_nX_n$ of $p$. Now, we consider the formulas $\varphi_n= \bigwedge_{a \in X} a \top \wedge \bigwedge_{a \notin X} \neg a \top$; $\varphi_{i-1}= a_{i}\varphi_i$ with $i \in 1 \ldots n-1$, and we have that $p\models \varphi_0$. Therefore $q\models \varphi_0$, and $a_1 \ldots a_nX_n$ is a readiness information of $q$ and, as a consequence, there is an observation $Y_0a_1Y_2 \ldots a_nY_n \in LGO_I(q)$ with $Y_n=X_n$, proving $p\leq_I^{lf}q$.

$(\Leftarrow)$ 
Let us suppose that for all $X_0a_1X_1\ldots a_nX_n \in LGO_I(p)$ there exists some $Y_0a_1Y_1\ldots a_nY_n \in LGO_I(q)$ such that $X_n=Y_n$. We want to show that if $p\models \varphi$ then $q\models \varphi$ for all $\varphi \in \mathcal{CN(L_{R})}$.
If $p\models \varphi$, we can decompose $\varphi$ taking $\varphi= \varphi_n$, $\varphi_{i} = a_{i}\varphi_{i-1}$, for all $i \in 1..n$, and $\varphi_0= \bigwedge_{a \in X_0} a \top \wedge \bigwedge_{a \notin X_0} \neg a \top$. Then we have that $a_na_{n-1}\ldots a_1X_0$ is a readiness information of $p$, so there exists some $Z_na_nZ_{n-1}\ldots a_1X_0\in LGO_I(p)$, and some $Y_na_nY_{n-1} \ldots a_1Y_0 \in LGO_I(q)$ with $Y_0=X_0$, from which we conclude that $q\models \varphi_n$.

\item $Z=F$. $(\Rightarrow)$
Let $p$ and $q$ be such that $p\sqsubseteq^{\prime}_{F}q$: we will show $p\leq_I^{lf\supseteq}q$.
Given an observation $X_0a_1X_1 \ldots a_nX_n \in LGO_I(p)$, it generates a (maximal) failure $a_1 \ldots a_n\overline{X_n}$ of the process $p$. Now, we consider the formulas $\varphi_n= \bigwedge_{a \in \overline{X}} \neg a \top$; $\varphi_{i+1}=a_{i+1}\varphi_i$ with $i \in 0..n-1$, and we have that $p\models \varphi_0$. Therefore, $q\models \varphi_0$, so $a_1 \ldots a_n\overline{X_n}$ is a failure information of $q$, and there is some  $Y_0a_1Y_2 \ldots a_nY_n \in LGO_I(q)$ with $Y_n \bigcap \overline{X_n} = \emptyset$, or equivalently $X_n\supseteq Y_n$, proving that $p\leq_I^{lf\supseteq}q$.

$(\Leftarrow)$
Let us suppose that for all $X_0a_1X_1\ldots a_nX_n \in LGO_I(p)$ there exists some $Y_0a_1Y_1\ldots a_nY_n \in LGO_I(q)$ such that $X_n\supseteq Y_n$. We want to show that if $p\models \varphi$ then $q\models \varphi$ for all $\varphi \in \mathcal{L}^{\prime}_{F}$.
If $p\models \varphi$, we can decompose $\varphi$ taking $\varphi= \varphi_n$, $\varphi_{i} = a_{i}\varphi_{i-1}$, with $i \in 1..n$, and $\varphi_0= \bigwedge_{a \in X_0} \neg a \top$. From $p\models \varphi$ we infer that $a_na_{n-1}\ldots a_1X_0$ is a failure information of the process \textit{p}, so there exists $Z_na_nZ_{n-1}\ldots a_1Z_0\in LGO_I(p)$ with $Z_0 \bigcap X_0=\emptyset$, and then there is some $Y_na_nY_{n-1} \ldots a_1Y_0 \in LGO_I(q)$ with $Y_n\subseteq Z_n$, so that $Y_n \bigcap X_n=\emptyset$, obtaining $q\models \varphi_n$.
\end{iteMize}
\end{enumerate}
\end{proof}

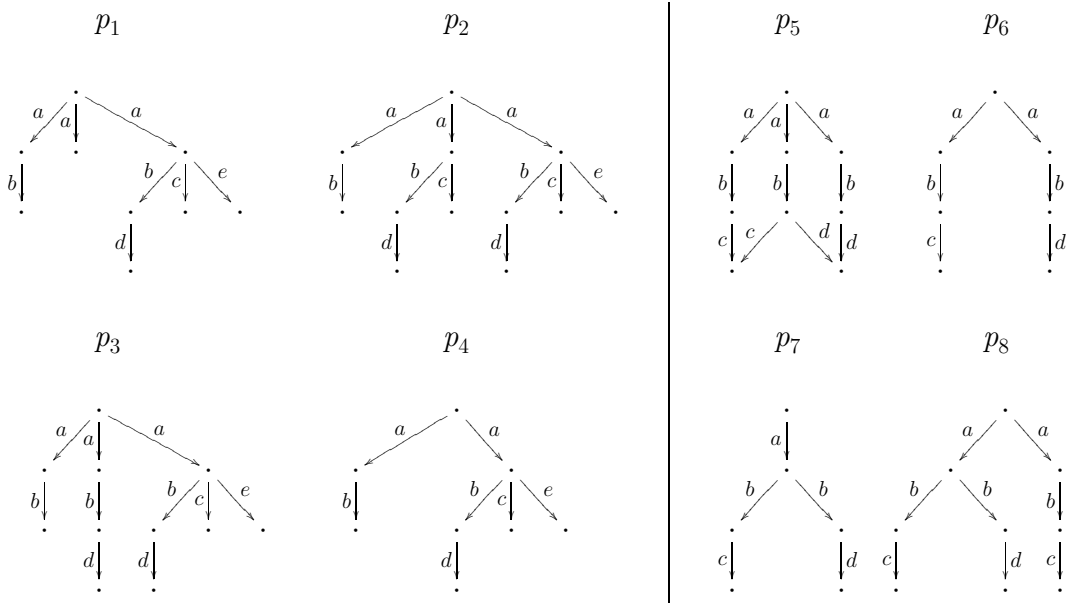
\begin{figure}[th]
\scalebox{0.57}{\huge
$\begin{array}{c c@{\hspace{3.25ex}\vline\hspace{3.25ex}} c c}
\textit{p}_1&\textit{p}_2 &\textit{p}_5 &\textit{p}_6\\ &&&\\
\xymatrix{
& &\cdot\ar[dl]_a \ar[d]_a\ar[drr]^a& & &\\
&\cdot\ar[d]_b  &\cdot & & \cdot \ar[dl]_b \ar[d]_c\ar[dr]^e &\\
& \cdot & &\cdot \ar[d]_d &\cdot &\cdot\\
& & & \cdot & &}
&
\xymatrix{
& & &\cdot\ar[dll]_a \ar[d]_a\ar[drr]^a& & &\\
&\cdot\ar[d]_b & &\cdot \ar[dl]_b \ar[d]_c & & \cdot \ar[dl]_b \ar[d]_c\ar[dr]^e &\\
&\cdot & \cdot \ar[d]_d&\cdot &\cdot \ar[d]_d &\cdot &\cdot\\
& & \cdot & & \cdot & &}
&
\xymatrix{
&\cdot \ar[dl]_a\ar[d]_a\ar[dr]^a&\\ 
\cdot \ar[d]_b& \cdot \ar[d]_b& \cdot \ar[d]^b\\ 
\cdot \ar[d]_c&\cdot \ar[dl]_c \ar[dr]^d& \cdot \ar[d]^d\\ 
\cdot &&\cdot }
&
\xymatrix{
&\cdot\ar[dl]_a\ar[dr]^a &\\
\cdot\ar[d]_b& & \cdot\ar[d]^b\\
\cdot\ar[d]_c& & \cdot \ar[d]^d\\
\cdot&&\cdot}
\\
&&&
\\
\textit{p}_3&\textit{p}_4 &\textit{p}_7 &\textit{p}_8 \\&&& \\
\xymatrix{
& & &\cdot\ar[dl]_a \ar[d]_a\ar[drr]^a& & &\\
& &\cdot\ar[d]_b &\cdot \ar[d]_b & & \cdot \ar[dl]_b \ar[d]_c\ar[dr]^e &\\
& & \cdot & \cdot \ar[d]_d &\cdot \ar[d]_d &\cdot &\cdot\\
& & &\cdot & \cdot & &}
&
\xymatrix{
& &\cdot\ar[dll]_a \ar[dr]^a&&\\
\cdot\ar[d]_b& & & \cdot \ar[dl]_b \ar[d]_c\ar[dr]^e &\\
\cdot & & \cdot  \ar[d]_d&\cdot & \cdot \\
& &  \cdot  & &}
&
\xymatrix{
&\cdot\ar[d]_a&\\
&\cdot\ar[dl]_b \ar[dr]^b&\\
\cdot\ar[d]_c& & \cdot \ar[d]^d\\
\cdot&&\cdot}
&
\xymatrix{
& &\cdot\ar[dl]_a\ar[dr]^a& &\\
&\cdot\ar[dl]_b \ar[dr]^b& &\cdot \ar[d]_b&\\
\cdot\ar[d]_c& & \cdot \ar[d]^d & \cdot \ar[d]_c&\\
\cdot&&\cdot &\cdot &}
\end{array}$}
\caption{A simple example to show the strength of the different logics}\label{example1} 
\end{figure}

\begin{exa}
Figure~\ref{example1} shows a collection of  examples to illustrate the differences between the semantics in the \textit{RS} layer of the spectrum. All the following equivalences can be checked by taking any arbitrary formula from the logic defining each of the semantics.
For readability, we omit the last $\top$ in all subformulas. Besides, $\sim_{X}$ (resp. $\nsim_{X}$), where \emph{X} is a set of indexes, represents any $\sim_{Z}$ (resp. $\nsim_{Z}$), with  $Z \in\emph{X}$. 
\begin{iteMize}{$\bullet$}
\item $p_1 \not\sqsubseteq^{\prime}_{F} p_2$ and $p_1\not\sqsubseteq^{\prime}_{\{R,\;FT,\;RT,\;RS\}} p_2$ because $p_1\models a(\neg b \wedge \neg c)$, but $p_2$ does not satisfy it.
\item $p_2\sim_{F}p_3$, but $p_2\not\sqsubseteq^{\prime}_{\{R,\;FT\}}p_3$ and thus $p_2\not\sqsubseteq^{\prime}_{\{RT,\;RS\}}p_3$, since $p_2$ satisfies $a(\neg e\wedge c)$ but $p_3$ does not.
\item $p_3\sim_{\{F,\;R\}}p_4$, but $p_3\not\sqsubseteq^{\prime}_{FT}p_4$ and thus $p_3\not\sqsubseteq^{\prime}_{\{RT,\;RS\}}p_4$, because $p_3$ satisfies $a(\neg c \wedge b(\neg e \wedge d))$ but $p_4$ does not.
\item $p_5\sim_{\{F,\;FT\}}p_6$, but $p_5\not\sqsubseteq^{\prime}_{R}p_6$ and thus $p_5\not\sqsubseteq^{\prime}_{\{RT,\;RS\}}p_6$, since $p_5$ satisfies $ab(c\wedge d)$ but $p_6$ does not.
\item $p_6\sim_{\{F,\;R,\;RT,\;FT\}}p_7$ but $p_7\not\sqsubseteq^{\prime}_{RS}p_6$, because $p_7$ satisfies $a(bc\wedge bd)$ but $p_6$ does not.
\item $p_7\sim_{\{F,\;R,\;RT,\;FT,\;RS\}}p_8$.
\end{iteMize}
\end{exa}

\subsection{Our new unified logical characterizations of the semantics}\label{logical_characterizations_coarsest_semantics}
Inspired by the semantics studied in
Section~\ref{logical_characterizations_popular_semantics}, next we define the
general format for the logics characterizing each of the semantics in the
spectrum. We start by enlarging the spectrum yet a bit more. 

\begin{defi}\hfill
\begin{enumerate}[(1)]
\item \emph{Universal semantics}. We define the set $\mathcal{L}^{\prime}_{U}$ of universal formulas that characterize the trivial semantics that identifies all the processes by $\mathcal{L}^{\prime}_{U} = \{ \top \}$.
\item \emph{Complete semantics}. We define the set $\mathcal{L}^{\prime}_{C}$
  of complete formulas  characterizing the semantics that only distinguishes
  the terminated processes from the non-terminated ones by
  $\mathcal{L}^{\prime}_{C} = \{ \top, \neg0\}$. 
\item \emph{Initial offer semantics}. 
  We define the set $\mathcal{L}^{\prime}_{I}$ of initial offer formulas
  characterizing the semantics that only observers the set of initial actions
  of a process by $\mathcal{L}^{\prime}_{I} = \{ \top, \neg0 \} \cup\{a\top
  \mid a \in \emph{Act}\}$. 
\end{enumerate}
\end{defi}
In the definition above the subformula $\neg 0$ is just syntactic sugar for the formula $\neg(\bigwedge_{a \in Act}\neg a \top)$. Therefore, once again all these new logics are sublogics of $\mathcal{L}_{HM}$ and, as a result, we do not need to define their semantics.

Note that $\mathcal{L}^{\prime}_I$ is a bit larger than the logic
$\mathcal{L}_I$ from
Section~\ref{logical_characterizations_popular_semantics}. Once again, this is
so in order to get a more uniform presentation of our logics: $\neg 0$ is
indeed redundant. By including it we immediately obtain that the complete semantics is coarser than the
initial offer semantics, because
$\mathcal{L}^{\prime}_{C}\subseteq\mathcal{L}^{\prime}_{I}$. Based on this
result we will also obtain that the complete simulation is coarser than
the ready simulation. 
Certainly, $\neg 0$ is redundant in $\mathcal{L}^\prime_{I}$ (but not in $\mathcal{L}^{\prime}_{C}$!), because by means of it we can only distinguish a process that cannot execute any action from any other that can execute someone. But using the corresponding $a\top$ formula we can also get that.

\subsubsection{The simulation semantics}
As repeatedly noted, the family of simulation semantics constitute the spine of the new
spectrum. All of them are defined in a homogeneous way thanks to the
notion of constrained simulation from \cite{FG08ifiptcs}. Next we present their
logical characterization. 

\begin{defi}\label{logica_sim}
Given a set of formulas $\mathcal{L}^{\prime}_{N}$ defining a semantics $N$, we define the set of formulas $\mathcal{L}^{\prime}_{NS}$ that characterizes the $N$-constrained simulation semantics by: \brannew{\conjsim{NS}{N}}{NS}
\end{defi}

Taking $N\in\{U, C, I\}$ we obtain $\mathcal{L}^{\prime}_{US}$,
$\mathcal{L}^{\prime}_{CS}$ and $\mathcal{L}^{\prime}_{IS}$, that we rewrite as
$\mathcal{L}^{\prime}_{S}$ and $\mathcal{L}^{\prime}_{RS}$ in
the first and last cases to emphasize the classic notation for simulation
semantics. From $\mathcal{L}^{\prime}_{S}$ we obtain
$\mathcal{L}^{\prime}_{SS}$, that we will denote as
$\mathcal{L}^{\prime}_{2S}$. To complete the collection of simulation semantics
in the spectrum we need $\mathcal{L}^{\prime}_{TS}$,
that will be based on $\mathcal{L}^{\prime}_{T}$, to be defined in the next
section. 

The definition above differs from the particular case of ready simulation in
Definition~\ref{def_logica} in the two first rules, by means of which we impose
that the process will traverse states which are in the corresponding
$N$-equivalence class all along the tree of computations checked by a formula
in $\mathcal{L}^{\prime}_{NS}$.  Note that the combination of positive and negated formulas allows us to shape each of these classes.
Next we state the equivalence between our logics for the simulation semantics and those by van Glabbeek's recalled in Table \ref{logic_table}. 

\begin{prop}
(1) $\mathcal{L}^{\prime}_{S} \sim \mathcal{L}_{S}$; (2) $\mathcal{L}^{\prime}_{CS} \sim \mathcal{L}_{CS}$; and (3) $\mathcal{L}^{\prime}_{2S} \sim \mathcal{L}_{2S}$.
\end{prop}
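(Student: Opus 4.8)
The plan is to follow closely the pattern of Proposition~\ref{equiv_entre_logicas}: for each $Z\in\{S,CS,2S\}$ we prove the two inclusions ${\sqsubseteq_{\mathcal{L}^{\prime}_{Z}}}\subseteq{\sqsubseteq_{\mathcal{L}_{Z}}}$ and ${\sqsubseteq_{\mathcal{L}_{Z}}}\subseteq{\sqsubseteq_{\mathcal{L}^{\prime}_{Z}}}$ separately. For the first, it suffices to check that, up to the evident equivalences $0\sim\neg\neg 0$ and $\top=\bigwedge_{i\in\emptyset}\varphi_{i}$, every generator of van Glabbeek's $\mathcal{L}_{Z}$ already lies in $\mathcal{L}^{\prime}_{Z}$ and that $\mathcal{L}^{\prime}_{Z}$ is closed under the constructors ($a(\cdot)$, $\bigwedge$, and---for $2S$---negation of formulas coming from the constraint logic) used to build $\mathcal{L}_{Z}$. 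Hence $\mathcal{L}_{Z}$ is contained in $\mathcal{L}^{\prime}_{Z}$ modulo logical equivalence, which gives ${\sqsubseteq_{\mathcal{L}^{\prime}_{Z}}}\subseteq{\sqsubseteq_{\mathcal{L}_{Z}}}$ at once. Concretely: for $S$ the only generator is $\top$, which is $\bigwedge_{i\in\emptyset}\varphi_{i}\in\mathcal{L}^{\prime}_{S}$; for $CS$ the generator $0$ is equivalent to $\neg\neg 0\in\mathcal{L}^{\prime}_{CS}$; and for $2S$ the extra generators $\neg\varphi$ with $\varphi\in\mathcal{L}_{S}\subseteq\mathcal{L}^{\prime}_{S}$ belong to $\mathcal{L}^{\prime}_{2S}$ by its defining clause.

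The core of the argument is the reverse inclusion, for which I would show that every formula of $\mathcal{L}^{\prime}_{Z}$ is logically equivalent to a disjunction of formulas of $\mathcal{L}_{Z}$; combined with Proposition~\ref{disyuncion_teo} (${\mathcal{L}_{Z}^{\vee}}\sim{\mathcal{L}_{Z}}$) this yields ${\sqsubseteq_{\mathcal{L}_{Z}}}={\sqsubseteq_{\mathcal{L}_{Z}^{\vee}}}\subseteq{\sqsubseteq_{\mathcal{L}^{\prime}_{Z}}}$. The proof is by structural induction on the formula. The base cases are the atoms of $\mathcal{L}^{\prime}_{Z}$ not already present in $\mathcal{L}_{Z}$: the formula $\bot=\neg\top$, which is the empty disjunction $\bigvee_{i\in\emptyset}\varphi_{i}$; for $CS$, the formula $\neg 0$, which is satisfied exactly by the processes with a nonempty initial offer and is therefore equivalent to $\bigvee_{a\in Act}a\top$ with each $a\top\in\mathcal{L}_{CS}$ (it is the empty-conjunction-prefixed formula $a\bigwedge_{i\in\emptyset}\varphi_{i}$); and for $2S$, a negated formula $\neg\sigma$ with $\sigma\in\mathcal{L}^{\prime}_{S}$, where the argument for~(1) lets us write $\sigma\sim\bigvee_{j}\chi_{j}$ with $\chi_{j}\in\mathcal{L}_{S}$, so that by De Morgan $\neg\sigma\sim\bigwedge_{j}\neg\chi_{j}$, which is a single formula of $\mathcal{L}_{2S}$ since $\mathcal{L}_{2S}$ is closed under negating $\mathcal{L}_{S}$-formulas and under $\bigwedge$. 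The inductive steps are exactly the distributive laws already used in the proof of Proposition~\ref{disyuncion_teo}: disjunction floats out through the prefix, $a\bigvee_{j}\chi_{j}\sim\bigvee_{j}a\chi_{j}$, and through conjunction, $\bigwedge_{i}\bigvee_{j\in J_{i}}\chi_{ij}\sim\bigvee_{f}\bigwedge_{i}\chi_{i,f(i)}$; in both cases the prefixed formulas and the conjuncts stay inside $\mathcal{L}_{Z}$ because $\mathcal{L}_{Z}$ is closed under $a(\cdot)$ and $\bigwedge$ (and, in the $2S$ case, because $\mathcal{L}_{S}\subseteq\mathcal{L}_{2S}$), and if some $J_{i}=\emptyset$ the whole conjunction collapses to the empty disjunction.

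The step I expect to be the main (though modest) obstacle is the bookkeeping of this induction: one has to make sure that after floating all disjunctions to the top the remaining subformulas really belong to the smaller logic $\mathcal{L}_{Z}$---this is why $\top$ and all $a\top$ must be recognised as members of $\mathcal{L}_{CS}$ (via the empty conjunction), and why the $2S$ case must invoke~(1) under a negation through De Morgan rather than directly---and to handle the degenerate cases of empty conjunctions, empty disjunctions, and unsatisfiable formulas. None of this is deep; it is the same machinery as in Propositions~\ref{disyuncion_teo} and~\ref{equiv_entre_logicas}, merely applied to three further logics.
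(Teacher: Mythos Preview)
Your proof is correct, but it is considerably more elaborate than the paper's. The paper observes that for each $Z\in\{S,CS,2S\}$ the two sets of formulas $\mathcal{L}^{\prime}_{Z}$ and $\mathcal{L}_{Z}$ are essentially \emph{identical}, not merely equivalent: the two extra clauses in Definition~\ref{logica_sim} (injecting $\sigma$ and $\neg\sigma$ for $\sigma\in\mathcal{L}^{\prime}_{N}$) add only finitely many atoms, and one checks each of them directly. For $N=U$ this yields just $\top$ and $\neg\top$; for $N=C$ it yields $\top,\neg\top,\neg 0,\neg\neg 0$, where $\neg\neg 0\equiv 0$ is already the generator of $\mathcal{L}_{CS}$ and only $\neg 0\equiv\bigvee_{a\in Act}a\top$ actually needs Proposition~\ref{disyuncion_teo}; for $N=S$ the positive clause is idle since $\mathcal{L}^{\prime}_{S}=\mathcal{L}_{S}\cup\{\neg\top\}\subseteq\mathcal{L}_{2S}\cup\{\neg\top\}$ by~(1), and the negation clause then coincides with van Glabbeek's. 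So no structural induction is needed---one simply inspects the handful of new atoms.

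Your approach instead builds a uniform machine (structural induction, float disjunctions to the top, invoke Proposition~\ref{disyuncion_teo}) that would work even if the extra clauses injected infinitely many genuinely new formulas. That generality is not needed here, but it has the virtue of being exactly the template used later for the linear and deterministic-branching logics, so it unifies the argument across the whole section. The De~Morgan step you use for the $2S$ base case $\neg\sigma$ is correct but, given~(1), almost vacuous: every $\sigma\in\mathcal{L}^{\prime}_{S}$ is already a single formula of $\mathcal{L}_{S}$ except when it contains $\neg\top$, in which case it is equivalent to $\bot$ and $\neg\sigma$ to $\top$.
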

\begin{proof}
\hfill
\begin{iteMize}{$\bullet$}
\item[$(1)$] The clauses defining $\mathcal{L}^{\prime}_{S}$ and
  $\mathcal{L}_{S}$ produce the same set of formulas. The first two clauses in
  $\mathcal{L}^{\prime}_{S}$ only add the two trivial formulas $\top$ and $\neg
  \top$ because in $\mathcal{L}^{\prime}_{U} = \{\top\}$.
\item[$(2)$] Again, the sets of formulas produced by
  $\mathcal{L}^{\prime}_{CS}$ and $\mathcal{L}_{CS}$ are the same because the
  two first clauses of ${L}^{\prime}_{CS}$ can only generate $\top$,
  $\neg\top$, $0$ and $\neg 0$ from $\mathcal{L}^{\prime}_C =\{\top,
  \neg\top\}$. 
  $0$ is needed to reflect the second clause in the definition of
  $\mathcal{L}_{CS}$, while $\neg 0\equiv \bigvee_{a\in Act}a\top$ 
  so that any formula containing $\neg 0$ can be rewritten into a disjunction
  of formulas in $\mathcal{L}_{CS}$. 
\item[$(3)$] Once again, the sets generated by $\mathcal{L}^{\prime}_{2S}$ and
  $\mathcal{L}_{2S}$ are the same. The clause ``if $\sigma \in
  \mathcal{L}^{\prime}_S$ then $\sigma \in \mathcal{L}^{\prime}_{2S}$'' in
  $\mathcal{L}^{\prime}_{2S}$  does not generate any new formulas because
  $\mathcal{L}_{S}\subseteq \mathcal{L}_{2S}$ (the formulas in
  $\mathcal{L}_{S}$ are exactly those that can be created using only the last
  two clauses in the definition of $\mathcal{L}_{2S}$). 
\end{iteMize}
\end{proof}

\begin{rem}
We can use both positive formulas in $\mathcal{L}^{\prime}_{C}$ and their
negations for defining $\mathcal{L}^{\prime}_{CS}$ due to the fact that
$C$-constrained simulation can be built from 
the equivalence relation defined by $C$ as constraint. 
However, we could also use $\sqsubseteq_C$ as a constraint and then remove
the clause ``if $\sigma \in \mathcal{L}^{\prime}_{C}$ then $\sigma \in
\mathcal{L}^{\prime}_{S_C}$'', which generates $\neg 0 \in \mathcal{L}_{S_C}$.
The other clause, which generates $0 \in \mathcal{L}^{\prime}_{S_C}$, is
crucial and cannot be removed from the definition. 
These two facts also concur in the definition of the other simulation semantics
in the extended spectrum, for which we also present a logical characterization
including the two clauses above.   
\end{rem}

\subsubsection{Logical characterization of the linear semantics} 
\label{logica_lineales}
We start by defining the closure operators by means of which we express the
extent to which conjunction and negation can be used in the logical
characterizations of each of the linear semantics. 

\begin{defi}
Given a logical set $\mathcal{L}^{\prime}_{N}$ with $N \in \{U,C,I,T,S\}$, we
define: 
\begin{enumerate}[(1)]
\item Its \emph{symmetric closure} $\mathcal{L}_{N}^{\equiv}$ by: if $\sigma \in \mathcal{L}^{\prime}_{N}$ then $\sigma \in \mathcal{L}_{N}^{\equiv}$ and $\neg \sigma \in \mathcal{L}_{N}^{\equiv}$;  if $\sigma_i \in \mathcal{L}_{N}^{\equiv}$ for all $i\in I$ then $\bigwedge_{i\in I}\sigma_i \in \mathcal{L}_{N}^{\equiv}$.
\item Its \emph{negative closure}  $\mathcal{L}_{N}^{\neg}$ by: if $\sigma \in \mathcal{L}^{\prime}_{N}$ then $\neg \sigma \in \mathcal{L}_{N}^{\neg}$; if $\sigma_i \in \mathcal{L}_{N}^{\neg}$ for all $i\in I$ then $\bigwedge_{i\in I}\sigma_i \in \mathcal{L}_{N}^{\neg}$.
\item Its \emph{positive closure} $\mathcal{L}_{N}^{\surd}$ by: if $\sigma \in \mathcal{L}^{\prime}_{N}$ then $\sigma \in \mathcal{L}_{N}^{\surd}$;  if $\sigma_i \in \mathcal{L}_{N}^{\surd}$ for all $i\in I$ then $\bigwedge_{i\in I}\sigma_i \in \mathcal{L}_{N}^{\surd}$.
\end{enumerate}
\end{defi}

\begin{rem}
Obviously these closures make sense for any given logic $\mathcal{L}$, but we
prefer to restrict our attention to $\mathcal{L}^{\prime}_{N}$ since it will be
enough for our goal and gives rise to a simpler notation. 
\end{rem}

Whenever we have a bag of ``good'' properties (such as
$\mathcal{L}^{\prime}_{N}$ above), to assert by means of a single formula which
is the subset of properties that a certain element satisfies it is not enough
to assert that it satisfies each of them: we also need to
assert that it does not satisfy any of the rest. This is why we need formulas
in the symmetric closure. By contrast, if the only available formulas belong to
the negative (resp.~positive) closure, we can only assert that the element has
at most (resp.~at least) the enumerated properties. Next we present the unified 
logics for all the linear semantics in the spectrum. 

\begin{defi}\label{linear_logica}
Inspired by the orders $\leq_N^{l}$, $\leq_N^{l\supseteq}$, $\leq_N^{lf}$, and
$\leq_N^{lf\supseteq}$, we define the set of formulas
$\mathcal{L}^{\prime}_{\leq_N^{l}}$,
$\mathcal{L}^{\prime}_{\leq_N^{l\supseteq}}$,
$\mathcal{L}^{\prime}_{\leq_N^{lf}}$, and
$\mathcal{L}^{\prime}_{\leq_N^{lf\supseteq}}$, respectively,  by means of the
rules: 
\begin{enumerate}[(1)]
\item \lin{\conjyform{{\leq_N^{l}}}{N}}{{\leq_N^{l}}}
\item \linf{N}{{\leq_N^{l\supseteq}}}
\item \linc{N}{{\leq_N^{lf}}}
\item \linfc{N}{{\leq_N^{lf\supseteq}}}
\end{enumerate}
\end{defi}

\noindent Note that for the coarsest semantics (i.e. those
corresponding to plain refusals and plain readiness when $N=I$) we
only check for $N$ at the ``end'' of the formula because there are no
conjunctions in the corresponding languages
$\mathcal{L}^{\prime}_{\leq_N^{lf}}$ and
$\mathcal{L}^{\prime}_{\leq_N^{lf\supseteq}}$, except for those
stemming from the corresponding closures $\mathcal{L}_{N}^{\equiv}$
and $\mathcal{L}_{N}^{\neg}$.  The other two logics do introduce
additional conjunctions that allow to observe $N$ along the
computations.

We have used the negative and symmetric closures for the ``failures-based'' 
and``readies-based'' semantics, and we can use the positive closure to define
two new semantics that have not been considered earlier in this paper, nor
elsewhere as far as we know. 
For that we need to observe partial offers along a computation, or just at its
end, where $X$ is a partial offer of $p$ if $X \subseteq I(p)$. It is clear the
duality with respect to the failures semantics, where $F$ is a failure of $p$
if $I(p) \subseteq \overline{F}$. We can introduce these two new semantics at
each layer of the spectrum through the corresponding partial offers for
each $N \in \{U, C, I, T, S\}$. 

\begin{defi}\hfill
\begin{enumerate}[(1)]
\item The semantics of \emph{partial offer traces} for the constraint $N$ is that defined by the logic $\mathcal{L}^{\prime}_{\leq_N^{l\subseteq}}$ with: \linfmo{N}{\leq_N^{l\subseteq}}
\vspace{0.15cm}
\item The semantics of \emph{partial offers} for the constraint $N$ is that defined by the logic $\mathcal{L}^{\prime}_{\leq_N^{lf\subseteq}}$ with: \linfcmo{N}{{\leq_N^{lf\subseteq}}}
\end{enumerate}
\end{defi}

\noindent Duality between failures and partial offers causes the
picture of the complete layer of linear semantics for each \textit{N}
to become two diamonds that share the side corresponding to the
readies-based semantics. Now, recalling Theorem~\ref{teoequiv}.

\begin{prop}
\hfill
\begin{enumerate}[\em(1)]
\item $\mathcal{L}^{\prime}_{F}$ and $\mathcal{L}^{\prime}_{\leq_I^{lf\subseteq}}$ are  incomparable: $p\leq_I^{lf\supseteq} q$ does not imply $p\leq_I^{lf\subseteq}q$ and $p\leq_I^{lf\subseteq}q$ does not imply $p\leq_I^{lf\supseteq}q$.
\item $\mathcal{L}^{\prime}_{FT}$ and $\mathcal{L}^{\prime}_{\leq_I^{l\subseteq}}$ are incomparable: $p\leq_I^{l\supseteq}q$ does not imply $p\leq_I^{l\subseteq}q$ and $p\leq_I^{l\subseteq}q$ does not imply $p\leq_I^{l\supseteq}q$.
\end{enumerate}
\end{prop}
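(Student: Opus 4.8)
The plan is to reduce both incomparability claims to the preorder‑level statements spelled out in the proposition, and then to settle those by a single pair of processes used in its two orientations. First I would recall that a larger logic induces a finer preorder: if $\mathcal{L}_1\subseteq\mathcal{L}_2$ then ${\sqsubseteq_{\mathcal{L}_2}}\subseteq{\sqsubseteq_{\mathcal{L}_1}}$. Combined with Theorem~\ref{teoequiv}, which gives $\sqsubseteq^{\prime}_{F}={\leq_I^{lf\supseteq}}$ and $\sqsubseteq^{\prime}_{FT}={\leq_I^{l\supseteq}}$, together with the analogous identification of the partial‑offer logics $\mathcal{L}^{\prime}_{\leq_I^{lf\subseteq}}$ and $\mathcal{L}^{\prime}_{\leq_I^{l\subseteq}}$ with the orders ${\leq_I^{lf\subseteq}}$ and ${\leq_I^{l\subseteq}}$, any comparability $\mathcal{L}^{\prime}_{F}\subseteq\mathcal{L}^{\prime}_{\leq_I^{lf\subseteq}}$ would force ${\leq_I^{lf\subseteq}}\subseteq{\leq_I^{lf\supseteq}}$, and the reverse inclusion of logics would force ${\leq_I^{lf\supseteq}}\subseteq{\leq_I^{lf\subseteq}}$; likewise for the $FT$ layer. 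Hence it suffices to exhibit, for each of the two layers, processes witnessing the two non‑implications — which is exactly what the parenthetical of the statement asks for.

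For the witnesses I would take the classic pair $p=a(b+c)$ and $q=ab+ac$. I would first write down $\textit{LGO}_I(p)$ and $\textit{LGO}_I(q)$ explicitly from Definition~\ref{lbo:def}: after the action $a$, every state reachable in $q$ has ready set $\{b\}$ or $\{c\}$, whereas the unique state reachable in $p$ after $a$ has ready set $\{b,c\}$. The two observations that do all the work are then: (i) matching $a$‑traces, every ready set occurring along a computation of $q$ is contained in the corresponding ready set of $p$; and (ii) $q$ reaches, after $a$, no state whose ready set contains $\{b,c\}$.

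From (i) it follows at once that $p\leq_I^{lf\supseteq}q$ and $p\leq_I^{l\supseteq}q$: for each $\textrm{lgo}_I$ of $p$ one matches the same trace in $q$ and the $\supseteq$‑conditions of Definition~\ref{obs:def} hold (pointwise, in the $l$‑case; only at the final ready set, in the $lf$‑case). From (ii), $p\not\leq_I^{lf\subseteq}q$ and $p\not\leq_I^{l\subseteq}q$, because the ready set $\{b,c\}$ of $p$ after $a$ is a subset of no ready set of $q$ after $a$. This gives the first half of both (1) and (2). For the second half I would simply swap $p$ and $q$: since $\{b\}\subseteq\{b,c\}$ and $\{c\}\subseteq\{b,c\}$, one checks $ab+ac\leq_I^{lf\subseteq}a(b+c)$ and $ab+ac\leq_I^{l\subseteq}a(b+c)$, while $ab+ac\not\leq_I^{lf\supseteq}a(b+c)$ and $ab+ac\not\leq_I^{l\supseteq}a(b+c)$, since $a(b+c)$ offers, after $a$, no ready set contained in $\{b\}$.

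The computations here are entirely routine; the only points requiring care are fixing the exact definitions of $\leq_I^{l\subseteq}$ and $\leq_I^{lf\subseteq}$ as the $\subseteq$‑duals of Definition~\ref{obs:def}, and invoking the already‑available identification of the partial‑offer logics with those preorders so that the logic‑level incomparability genuinely follows. Notably, the same pair $\{a(b+c),\,ab+ac\}$, read in its two orientations, simultaneously separates all four orders, so no further examples are needed.
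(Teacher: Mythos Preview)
Your argument is correct and, in fact, slightly more economical than the paper's for the proposition as stated. You use the single pair $p=a(b+c)$, $q=ab+ac$ in its two orientations to witness all four non-implications directly; this works exactly as you describe. The paper uses the same two processes but additionally introduces $r=p+q$ and establishes the stronger facts $p=^{l\supseteq}_I r$ while $r\nleq_I^{lf\subseteq}p$, and $q=^{l\subseteq}_I r$ while $r\nleq_I^{lf\supseteq}q$: that is, even \emph{equivalence} under one order fails to imply the mere preorder under the dual one. This stronger statement yields the proposition a fortiori and simultaneously handles both layers, since $=^{l\supseteq}_I$ implies $\leq_I^{lf\supseteq}$ and $=^{l\subseteq}_I$ implies $\leq_I^{lf\subseteq}$. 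Your route is more direct for the exact claim; the paper's buys the extra ``equivalence-does-not-imply-preorder'' observation at the cost of one more process.

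One small remark: your opening paragraph about deducing logic-level incomparability from preorder-level incomparability is harmless but unnecessary here, since the proposition's parenthetical already phrases the claim at the preorder level; the paper likewise works directly with the orders.
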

\begin{proof}
In fact, we have a stronger result by combining these two statements: if we
consider $p=ab+ac$, $q=a(b+c)$, and $r=p+q$, then $p=^{l\supseteq}_Ir$ but
$r\nleq_I^{lf\subseteq}p$, and $q=^{l\subseteq}_Ir$ but
$r\nleq_I^{lf\supseteq}q$.  
\end{proof}

Similar counterexamples exist for $N \in \{T,S\}$. However, for $N \in
\{U,C\}$, which produce the trace and the completed trace semantics,
respectively, it is easy  to prove that the six logics of the layer are 
equivalent. 

\begin{prop}
\hfill
\begin{enumerate}[\em(1)]
\item $\mathcal{L}^{\prime}_{\leq^{lf}_U} = \mathcal{L}^{\prime}_{\leq^{l}_U} = \mathcal{L}^{\prime}_{\leq^{l\supseteq}_U} = \mathcal{L}^{\prime}_{\leq^{l\subseteq}_U} = \mathcal{L}^{\prime}_{\leq^{lf\supseteq}_U} = \mathcal{L}^{\prime}_{\leq^{lf\subseteq}_U} = \mathcal{L}_{T}$
\item $\mathcal{L}^{\prime}_{\leq^{lf\supseteq}_C} = \mathcal{L}^{\prime}_{\leq^{lf\subseteq}_C} = \mathcal{L}^{\prime}_{\leq^{l\supseteq}_C} = \mathcal{L}^{\prime}_{\leq^{l\subseteq}_C} = \mathcal{L}^{\prime}_{\leq^{lf}_C} = \mathcal{L}^{\prime}_{\leq^{l}_C} = \mathcal{L}_{CT}$.
\end{enumerate}
\end{prop}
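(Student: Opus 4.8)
The plan is to exploit the extreme poverty of the two constraint logics $\mathcal{L}^{\prime}_{U}=\{\top\}$ and $\mathcal{L}^{\prime}_{C}=\{\top,\neg 0\}$; as in the preceding propositions, the displayed identities are to be read modulo the obvious logical equivalences (equivalently, up to the relation $\sim$, with the help of Proposition~\ref{disyuncion_teo}). Since each constraint logic is a one- or two-element set of trivial formulas, the three closures $\mathcal{L}^{\equiv}_{N}$, $\mathcal{L}^{\neg}_{N}$, $\mathcal{L}^{\surd}_{N}$ (for $N\in\{U,C\}$) consist solely of finite conjunctions of copies of $\top$, $\neg 0$, $\neg\top$ and $\neg\neg 0$, so that every one of their members is logically equivalent to one of the four trivial formulas $\top$, $\bot=\neg\top$, $0$, $\neg 0$. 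Feeding this into the defining rules of the six layer logics (Definition~\ref{linear_logica} and the two partial-offer rules), and using the rewritings $\neg 0\equiv\bigvee_{a\in\mathit{Act}}a\top$ and $\neg\neg 0\equiv 0$ together with Proposition~\ref{disyuncion_teo}, one sees that every formula in each of those logics is equivalent to a disjunction of ``pure'' prefixed formulas $a_1\cdots a_n\sigma$ with $\sigma$ trivial --- i.e.\ to a disjunction of formulas of $\mathcal{L}_{T}$ when $N=U$, and of $\mathcal{L}_{CT}$ when $N=C$; the unrestricted conjunctions supplied by the closures contribute nothing new.

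First I would settle $N=U$. Here $\mathcal{L}^{\equiv}_{U}$, $\mathcal{L}^{\neg}_{U}$ and $\mathcal{L}^{\surd}_{U}$ are made of conjunctions of $\top$'s and $\neg\top$'s only, so each member is equivalent to $\top$ or to $\bot$. Hence in every one of $\mathcal{L}^{\prime}_{\leq^{l}_U}$, $\mathcal{L}^{\prime}_{\leq^{l\supseteq}_U}$, $\mathcal{L}^{\prime}_{\leq^{lf}_U}$, $\mathcal{L}^{\prime}_{\leq^{lf\supseteq}_U}$, $\mathcal{L}^{\prime}_{\leq^{l\subseteq}_U}$, $\mathcal{L}^{\prime}_{\leq^{lf\subseteq}_U}$ every formula is equivalent either to a prefix formula $a_1\cdots a_n\top\in\mathcal{L}_{T}$ or to an unsatisfiable formula, which imposes no condition on the induced preorder; conversely $\mathcal{L}_{T}$ is literally contained in each of these logics, by the rule $\top\in\cdot$ together with prefixing. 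Thus all six induce exactly the trace preorder $\sqsubseteq_T$ --- using that ${\leq^{l}_U}={\sqsubseteq_T}$ (Proposition~\ref{traces-rtraces:prop}(1)) and that $\mathcal{L}_{T}$ characterises $\sqsubseteq_T$ (Table~\ref{logic_table}) --- and Proposition~\ref{disyuncion_teo} absorbs the evidently equivalent surplus formulas, giving part~(1).

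For $N=C$ the argument is structurally the same, with one extra point: $\mathcal{L}^{\prime}_{C}$ contains $\neg 0$, so its symmetric and negative closures also contain $\neg\neg 0\equiv 0$, the ``terminated'' formula, which is exactly what is needed to observe completed traces. For the four logics built on those closures ($\leq^{l}_C$, $\leq^{l\supseteq}_C$, $\leq^{lf}_C$, $\leq^{lf\supseteq}_C$) the normal forms are therefore $a_1\cdots a_n\sigma$ with $\sigma\in\{\top,\bot,0,\neg 0\}$ up to equivalence, which by the rewritings above turn into disjunctions of formulas of $\mathcal{L}_{CT}$; since in addition $\mathcal{L}_{CT}$ embeds into each of those four logics, they all coincide with $\mathcal{L}_{CT}$, and they induce the completed-trace preorder $\sqsubseteq_{CT}$ by the earlier observational characterisation ${\leq^{l}_C}={\leq^{l\supseteq}_C}={\leq^{lf}_C}={\leq^{lf\supseteq}_C}={\sqsubseteq_{CT}}$ together with Table~\ref{logic_table}. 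I expect the genuinely delicate step to be the remaining two logics $\mathcal{L}^{\prime}_{\leq^{l\subseteq}_C}$ and $\mathcal{L}^{\prime}_{\leq^{lf\subseteq}_C}$, which rest only on the positive closure $\mathcal{L}^{\surd}_{C}$: one must verify carefully that the partial-offer formulas available there still pin down exactly the completed-trace behaviour of a process, and not merely its traces, so that these two logics neither over- nor under-shoot $\mathcal{L}_{CT}$. That is the step I expect to require real work; with it in hand, Proposition~\ref{disyuncion_teo} once more upgrades the six equivalences to the stated identities of part~(2).
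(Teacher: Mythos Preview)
Your treatment of part~(1) and of the four non-partial-offer logics in part~(2) is correct and in fact more thorough than the paper's own proof, which checks only one representative from each chain ($\mathcal{L}^{\prime}_{\leq^{lf}_U}$ and $\mathcal{L}^{\prime}_{\leq^{lf\supseteq}_C}$) and leaves the remaining equalities implicit.

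However, the step you flag as ``requiring real work'' for $\mathcal{L}^{\prime}_{\leq^{l\subseteq}_C}$ and $\mathcal{L}^{\prime}_{\leq^{lf\subseteq}_C}$ is not merely delicate: it fails, and the stated equality (even read up to $\sim$) does not hold for these two logics. The positive closure $\mathcal{L}^{\surd}_C$ contains only conjunctions built from $\top$ and $\neg 0$, so every formula in $\mathcal{L}^{\prime}_{\leq^{lf\subseteq}_C}$ is equivalent to some $a_1\cdots a_n\top$ or $a_1\cdots a_n(\neg 0)$. But $p\models a_1\cdots a_n(\neg 0)$ holds iff some trace of $p$ properly extends $a_1\cdots a_n$, which is information already determined by the trace set of $p$; hence $\mathcal{L}^{\prime}_{\leq^{lf\subseteq}_C}\sim\mathcal{L}_T$, not $\mathcal{L}_{CT}$. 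Concretely, take $p=ab$ and $q=ab+a$: they are completed-trace inequivalent since $q\models a0$ while $p\not\models a0$, yet they satisfy exactly the same formulas of $\mathcal{L}^{\prime}_{\leq^{lf\subseteq}_C}$ (both satisfy $a(\neg 0)$, and their trace sets coincide). The same counterexample works for $\mathcal{L}^{\prime}_{\leq^{l\subseteq}_C}$. So your instinct that this is where the argument is fragile was exactly right; the paper's proof simply does not address these two cases, and the claim as written does not go through for them.
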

\begin{proof}
\hfill
\begin{enumerate}[(1)]
\item Trivial, since the sets of clauses defining
  $\mathcal{L}^{\prime}_{\leq^{lf}_U}$ and $\mathcal{L}_{T}$ are almost the
  same. Note that the clause ``if $\sigma \in \mathcal{L}_U^{\equiv}$ then
  $\sigma \in \mathcal{L}^{\prime}_{\leq^{lf}_U}$'' does not give rise to new
  formulas because $\mathcal{L}_U^{\equiv} = \{\top\}$. 
\item Note that the sets of clauses defining
  $\mathcal{L}^{\prime}_{\leq^{lf\supseteq}_C}$ and $\mathcal{L}_{CT}$ are
  the same but for the clause ``if $\sigma \in \mathcal{L}_C^{\neg}$ then
  $\sigma \in \mathcal{L}^{\prime}_{\leq^{lf\supseteq}_C}$''.
  On the one hand,
  this causes $\neg\top\in
  \mathcal{L}^{\prime}_{\leq^{lf\supseteq}_C}$ (which adds nothing) because 
  $\top\in\mathcal{L}^{\prime}_{C}$ and thus $\neg\top\in \mathcal{L}_C^{\neg}$. 
  On the other hand, 
  we also have $0\in \mathcal{L}^{\prime}_{\leq^{lf\supseteq}_C}$
  because $\neg0 \in \mathcal{L}^{\prime}_{C}$ and then $\neg\neg0 \in
  \mathcal{L}_C^{\neg}$.\qedhere
\end{enumerate}
\end{proof}

\begin{cor}
$\mathcal{L}^{\prime}_{\leq^{lf}_U}\sim \mathcal{L}_{T}$ and $\mathcal{L}^{\prime}_{\leq^{lf\supseteq}_C}\sim \mathcal{L}_{CT}$.
\end{cor}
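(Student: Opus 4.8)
The corollary follows immediately from the preceding proposition together with the equivalences already established for van Glabbeek's logics. The plan is as follows.

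First I would recall the statement of the immediately preceding proposition, which says that $\mathcal{L}^{\prime}_{\leq^{lf}_U} = \mathcal{L}_{T}$ and $\mathcal{L}^{\prime}_{\leq^{lf\supseteq}_C} = \mathcal{L}_{CT}$ as \emph{sets} of formulas (up to the redundant formulas $\neg\top$ and $0$, which do not change the induced preorder). Since equal logics trivially induce the same semantics, we get $\mathcal{L}^{\prime}_{\leq^{lf}_U}\sim \mathcal{L}_{T}$ and $\mathcal{L}^{\prime}_{\leq^{lf\supseteq}_C}\sim \mathcal{L}_{CT}$ directly from the definition of $\sim$; no further argument is needed. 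There is essentially no obstacle here: the work was already done in the proposition, and the corollary only restates the set-equality as an equivalence of induced semantics, which is the weaker (and more useful) fact. The one point worth a line of justification is that adding the redundant formulas $\neg\top$ (which is never satisfied) and $0$ (which is already equivalent to a disjunction of formulas in $\mathcal{L}_{CT}$, via $\neg 0 \equiv \bigvee_{a\in\mathit{Act}} a\top$ and De Morgan, combined with Proposition~\ref{disyuncion_teo}) does not alter the induced preorder.

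Concretely, I would write: ``This is immediate from the previous proposition, since logics that coincide as sets of formulas induce the same semantics; the additional formulas $\neg\top$ and $0$ present in $\mathcal{L}^{\prime}_{\leq^{lf\supseteq}_C}$ are redundant and do not affect the induced preorder, as $\neg\top$ is unsatisfiable and $0$ can be eliminated using Proposition~\ref{disyuncion_teo}.'' That is the entire proof. If a slightly longer version is wanted, one can additionally invoke the known fact (Table~\ref{logic_table}) that $\mathcal{L}_T$ and $\mathcal{L}_{CT}$ characterize the trace and completed trace semantics respectively, so that the corollary also records that $\leq^{lf}_U$ and $\leq^{lf\supseteq}_C$ are precisely the trace and completed trace preorders — but this is already covered by the earlier Proposition~\ref{traces-rtraces:prop} and the propositions preceding the corollary, so it need not be repeated.

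\begin{proof}
This is an immediate consequence of the preceding proposition: two logics that coincide as sets of formulas obviously induce the same semantics. The only point to note is that $\mathcal{L}^{\prime}_{\leq^{lf\supseteq}_C}$ contains, in addition to the formulas of $\mathcal{L}_{CT}$, the formulas $\neg\top$ and $0$; but $\neg\top$ is unsatisfiable, and any formula built using $0$ can be rewritten, via $\neg 0 \equiv \bigvee_{a\in\mathit{Act}} a\top$ and Proposition~\ref{disyuncion_teo}, into a disjunction of formulas in $\mathcal{L}_{CT}$. Hence these extra formulas do not change the induced preorder, and $\mathcal{L}^{\prime}_{\leq^{lf}_U}\sim \mathcal{L}_{T}$ and $\mathcal{L}^{\prime}_{\leq^{lf\supseteq}_C}\sim \mathcal{L}_{CT}$ follow.
\end{proof}
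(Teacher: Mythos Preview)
Your approach is correct and matches the paper's: the corollary is stated without proof there, since it is immediate from the preceding proposition establishing (near) set equality of the logics.

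One small point: your justification for the extra formula $0$ is unnecessary and slightly off. You argue that $0$ can be eliminated via $\neg 0 \equiv \bigvee_{a\in\mathit{Act}} a\top$ and Proposition~\ref{disyuncion_teo}, but in fact $0$ is \emph{already} a clause of $\mathcal{L}_{CT}$ (see Table~\ref{logic_table}), so no rewriting is needed. The only genuinely extra formula in $\mathcal{L}^{\prime}_{\leq^{lf\supseteq}_C}$ is $\neg\top$, which is unsatisfiable and hence harmless, exactly as you say. This does not affect the correctness of your argument, only its economy.
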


An interesting result illustrating the generality of our characterizations
concerns one of the finest semantics in the classic spectrum: possible
futures. Possible futures is located in Figure~\ref{fig:ltbtsAmpliado} below
2-nested simulation because the more accurate trace simulation semantics was
not yet included in the spectrum; this is corrected in the spectrum in
Figure~\ref{fig:extended-ltbts}. 
Indeed, for $N=T$ we have the following result. 

\begin{prop}
$\mathcal{L}^{\prime}_{\leq^{lf}_T} = \mathcal{L}_{PF}$.
\end{prop}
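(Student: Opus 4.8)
The plan is to prove the equality by directly matching the generation rules of the two logics, in the spirit of the two preceding propositions on the trace and completed-trace layers. First I would unfold both definitions. By Definition~\ref{linear_logica}(3) with $N=T$, the logic $\mathcal{L}^{\prime}_{\leq^{lf}_T}$ is the least set of formulas that contains $\top$, contains every formula of the symmetric closure $\mathcal{L}^{\equiv}_T$ of $\mathcal{L}^{\prime}_T$, and is closed under the prefix rule $\varphi\mapsto a\varphi$ for $a\in Act$. Reading off Table~\ref{logic_table}, $\mathcal{L}_{PF}$ is the least set closed under $\varphi\mapsto a\varphi$ and containing every formula $\bigwedge_{i\in I}\varphi_i\wedge\bigwedge_{j\in J}\neg\varphi_j$ with $\varphi_i,\varphi_j\in\mathcal{L}_T$ (with $\top=\bigwedge_{i\in\emptyset}\varphi_i$ the empty conjunction). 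Thus both logics are ``sequences of prefixes applied to an atom'', and the whole problem reduces to showing that the two families of atoms coincide.

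The central step is therefore the identification
\[
\mathcal{L}^{\equiv}_T \;=\; \Big\{\,\textstyle\bigwedge_{i\in I}\varphi_i\wedge\bigwedge_{j\in J}\neg\varphi_j \;\Big|\; \varphi_i,\varphi_j\in\mathcal{L}_T\,\Big\},
\]
up to the trivial redundancies already dealt with elsewhere. One inclusion is immediate from the definition of the symmetric closure: its elements are finite conjunctions whose conjuncts are each either a base formula $\sigma\in\mathcal{L}^{\prime}_T$ or a negation $\neg\sigma$ of such a base formula — negation is never applied to a conjunction — so every element has precisely the displayed shape; conversely every formula of that shape is generated by the closure rules. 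What remains is to reconcile $\mathcal{L}^{\prime}_T$ with $\mathcal{L}_T$: by its definition (analogous to that of $\mathcal{L}^{\prime}_{I}$) the base set $\mathcal{L}^{\prime}_T$ is, up to the redundant $\neg 0$, just the set of trace formulas $a_1\cdots a_n\top$, and one argues exactly as for $\mathcal{L}_{CS}$ and $\mathcal{L}_{RS}$ in Proposition~\ref{equiv_entre_logicas} that $\neg 0$ contributes nothing new: $0\equiv\bigwedge_{a\in Act}\neg a\top$ is already a conjunction of negated trace formulas, while $\neg 0\equiv\bigvee_{a\in Act}a\top$ can be floated out and eliminated by Proposition~\ref{disyuncion_teo}. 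Hence the atoms of $\mathcal{L}^{\prime}_{\leq^{lf}_T}$ and of $\mathcal{L}_{PF}$ agree.

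Finally I would close the argument by a routine induction on the number of applications of the prefix rule: an atom of $\mathcal{L}^{\prime}_{\leq^{lf}_T}$ is an atom of $\mathcal{L}_{PF}$ and vice versa by the previous step, and both logics extend their atoms under the identical operation $\varphi\mapsto a\varphi$, so $\mathcal{L}^{\prime}_{\leq^{lf}_T}=\mathcal{L}_{PF}$; combined with the earlier fact that $\leq^{lf}_T$ is the possible-futures preorder, this confirms that the new logic characterizes possible futures. The main obstacle is precisely the bookkeeping in the central step — verifying that the symmetric closure does not secretly generate formulas outside the conjunction clause of $\mathcal{L}_{PF}$ (it does not, since negated conjunctions are never formed) and that the spurious base formula $\neg 0$ is harmless; everything else is a transparent comparison of inductively defined sets, of the same flavour as the trace and completed-trace cases immediately above.
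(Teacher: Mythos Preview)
Your approach is essentially the paper's: both argue by directly matching the generating clauses of the two logics, observing that the only discrepancy is the explicit $\top$ clause, which in $\mathcal{L}_{PF}$ is already present as the empty conjunction. One minor correction: per Table~\ref{our_constraint_table}, $\mathcal{L}^{\prime}_T$ is generated only by $\top$ and the prefix rule (the $\neg 0$ entry is marked $\nu$, not $\bullet$), so $\mathcal{L}^{\prime}_T=\mathcal{L}_T$ on the nose and your $\neg 0$ digression is unnecessary---which is fortunate, since eliminating $\neg 0$ via Proposition~\ref{disyuncion_teo} would only give $\sim$ rather than the literal equality $=$ that the statement asserts.
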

\begin{proof}
Trivial, since the sets of clauses defining
$\mathcal{L}^{\prime}_{\leq^{lf}_T}$ and $\mathcal{L}_{PF}$ are almost the
same: our definition includes the clause ``$\top \in
\mathcal{L}^{\prime}_{\leq^{lf}_T}$'', which does not appear explicitly in
that of $\mathcal{L}_{PF}$ because it corresponds to the conjunction
of an empty set of formulas. 
\end{proof}

\begin{cor}
$\mathcal{L}^{\prime}_{\leq^{lf}_T}\sim \mathcal{L}_{PF}$.
\end{cor}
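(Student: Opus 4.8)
The plan is to obtain this corollary as an immediate consequence of the preceding Proposition, so essentially no new work is required. First I would unfold the definition of the equivalence $\sim$ on logics: $\mathcal{L}\sim\mathcal{L}'$ means that the two induced preorders coincide, ${\sqsubseteq_{\mathcal{L}}}={\sqsubseteq_{\mathcal{L}'}}$. Since the preorder induced by a logic $\mathcal{L}$ depends only on which pairs $(p,q)$ satisfy ``$p\models\varphi \Rightarrow q\models\varphi$ for all $\varphi\in\mathcal{L}$'', two logics that differ only by formulas that are either satisfied by every process or by no process must induce the same preorder. Hence it suffices to identify the symmetric difference of $\mathcal{L}^{\prime}_{\leq^{lf}_T}$ and $\mathcal{L}_{PF}$ and to check that it consists only of such harmless formulas.

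Next I would invoke the preceding Proposition, which tells us that the clauses defining $\mathcal{L}^{\prime}_{\leq^{lf}_T}$ and $\mathcal{L}_{PF}$ are the same except that our presentation lists the clause $\top\in\mathcal{L}^{\prime}_{\leq^{lf}_T}$ explicitly, whereas in $\mathcal{L}_{PF}$ the formula $\top$ is recovered as the empty conjunction $\bigwedge_{i\in\emptyset}\varphi_i$ and therefore already belongs to $\mathcal{L}_{PF}$. Consequently the two sets of formulas in fact coincide, and the corollary follows at once. Even if one preferred to read the Proposition as asserting only ``almost equal'', the single possibly missing formula is $\top$, and since $p\models\top$ holds for every process $p$ (recall $\top$ is syntactic sugar for $\bigwedge_{i\in\emptyset}\varphi_i$), the implication $p\models\top\Rightarrow q\models\top$ is vacuously true for every pair $(p,q)$, so $\top$ contributes nothing to the induced preorder.

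I do not anticipate any obstacle: the only thing to be careful about is not to conflate syntactic equality of logics with the semantic equivalence $\sim$, and to make explicit the trivial fact that a redundant formula such as $\top$ may be freely added to or removed from a logic without altering $\sqsubseteq_{\mathcal{L}}$. With that remark in place the proof reduces to one line: by the Proposition, $\mathcal{L}^{\prime}_{\leq^{lf}_T}$ and $\mathcal{L}_{PF}$ determine the same set of discriminating formulas, hence ${\sqsubseteq_{\mathcal{L}^{\prime}_{\leq^{lf}_T}}}={\sqsubseteq_{\mathcal{L}_{PF}}}$, that is, $\mathcal{L}^{\prime}_{\leq^{lf}_T}\sim \mathcal{L}_{PF}$.
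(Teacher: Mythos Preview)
Your proposal is correct and is essentially the paper's approach: the preceding Proposition establishes that $\mathcal{L}^{\prime}_{\leq^{lf}_T} = \mathcal{L}_{PF}$ as sets of formulas, so the equivalence $\sim$ of the induced preorders is immediate. Your extra care about $\top$ being harmless is not strictly needed, since the Proposition already asserts set equality (the point about $\top$ is made in the proof of the Proposition, not in the Corollary).
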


\subsubsection{Logical characterization of the deterministic branching semantics} \label{logica_ramificada}
Now we consider the deterministic branching semantics. In the classic spectrum
the only such semantics is possible worlds but, as we pointed out before, there
is one such semantics at each layer of the extended spectrum. 

In order to capture determinism we need to consider conjunctive formulas to express the desired branching, but only when it corresponds to a choice between different actions. This leads us to the following scheme: 
$$\textrm{if $X\subseteq \emph{Act}$ and $\varphi_a \in \mathcal{L}_{D_N}$ for all $a \in X$, then $\bigwedge_{a\in X} a\varphi_a \in \mathcal{L}_{D_N}$.}$$

\begin{defi}\label{logica_det}
For each $N \in \{U,C,I,T,S\}$, we define the formulas of $\mathcal{L}^{\prime}_{D_N}$ by: \dbran{\conjyform{D_N}{N}}{D_N}
\end{defi}

For $N=I$ we obtain the unified logical characterization of the possible worlds
semantics. 
\begin{prop}
$\mathcal{L}^{\prime}_{D_I}\supseteq \mathcal{L}_{PW}$.
\end{prop}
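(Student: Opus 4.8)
The plan is to establish the claim in the equivalent set-inclusion form $\mathcal{L}_{PW}\subseteq\mathcal{L}^{\prime}_{D_I}$, by structural induction on the way a formula of $\mathcal{L}_{PW}$ is generated. First I would unfold van Glabbeek's syntactic sugar so that, reading off the \textit{PW} column of Table~\ref{logic_table}, $\mathcal{L}_{PW}$ is seen to be generated by exactly three clauses: (a) $X\in\mathcal{L}_{PW}$ for every $X\subseteq\textit{Act}$, where $X$ abbreviates $\bigwedge_{a\in X}a\top\wedge\bigwedge_{a\notin X}\neg a\top$; (b) $X\wedge\varphi\in\mathcal{L}_{PW}$ whenever $X\subseteq\textit{Act}$ and $\varphi\in\mathcal{L}_{PW}$; and (c) $\bigwedge_{a\in X}a\varphi_a\in\mathcal{L}_{PW}$ whenever $X\subseteq\textit{Act}$ and $\varphi_a\in\mathcal{L}_{PW}$ for every $a\in X$. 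The rows marked with $\nu$ in that column contribute nothing new; in particular $\top$ is the instance of clause (c) with $X=\emptyset$, and $\mathbf{0}=\widetilde{\textit{Act}}$ is the instance of clause (a) with $X=\emptyset$. On the other side, instantiating Definition~\ref{logica_det} at $N=I$ shows that $\mathcal{L}^{\prime}_{D_I}$ is generated by: (i) $\top\in\mathcal{L}^{\prime}_{D_I}$; (ii) $\sigma\wedge\varphi\in\mathcal{L}^{\prime}_{D_I}$ whenever $\sigma\in\mathcal{L}^{\equiv}_I$ and $\varphi\in\mathcal{L}^{\prime}_{D_I}$; and (iii) $\bigwedge_{a\in X}a\varphi_a\in\mathcal{L}^{\prime}_{D_I}$ whenever $X\subseteq\textit{Act}$ and $\varphi_a\in\mathcal{L}^{\prime}_{D_I}$ for every $a\in X$.

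The single preparatory observation needed is that every complete-offer formula $X$ occurring in clauses (a) and (b) already belongs to the symmetric closure $\mathcal{L}^{\equiv}_I$. This is immediate: since $\mathcal{L}^{\prime}_I=\{\top,\neg\mathbf{0}\}\cup\{a\top\mid a\in\textit{Act}\}$, every $a\top$ lies in $\mathcal{L}^{\prime}_I\subseteq\mathcal{L}^{\equiv}_I$ and every $\neg a\top$ lies in $\mathcal{L}^{\equiv}_I$ by the first clause of the symmetric closure, so $X=\bigwedge_{a\in X}a\top\wedge\bigwedge_{a\notin X}\neg a\top\in\mathcal{L}^{\equiv}_I$ by closure of $\mathcal{L}^{\equiv}_I$ under conjunction. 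The induction then runs clause by clause. For clause (a): by (i), $\top\in\mathcal{L}^{\prime}_{D_I}$, and since $X\in\mathcal{L}^{\equiv}_I$, clause (ii) gives $X\wedge\top\in\mathcal{L}^{\prime}_{D_I}$, and conjoining with $\top$ (the empty conjunction) leaves the formula unchanged, so $X\in\mathcal{L}^{\prime}_{D_I}$. For clause (b): the induction hypothesis gives $\varphi\in\mathcal{L}^{\prime}_{D_I}$, and with $X\in\mathcal{L}^{\equiv}_I$ clause (ii) yields $X\wedge\varphi\in\mathcal{L}^{\prime}_{D_I}$. For clause (c): the induction hypothesis gives $\varphi_a\in\mathcal{L}^{\prime}_{D_I}$ for all $a\in X$, so clause (iii) applies verbatim. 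This exhausts the generators of $\mathcal{L}_{PW}$, and hence $\mathcal{L}_{PW}\subseteq\mathcal{L}^{\prime}_{D_I}$, that is, $\mathcal{L}^{\prime}_{D_I}\supseteq\mathcal{L}_{PW}$.

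There is no genuine obstacle here; the only point that requires care is the bookkeeping with the syntactic sugar of Table~\ref{logic_table}, namely recognising that $X\varphi$ denotes $X\wedge\varphi$ and, above all, that the atomic offer formula $X$ is itself a conjunction of formulas $a\top$ and $\neg a\top$ drawn from $\mathcal{L}^{\prime}_I$, so that it is a legitimate member of $\mathcal{L}^{\equiv}_I$ and can be fed into clause (ii) of Definition~\ref{logica_det}. I would also add a short remark on why only the inclusion, and not equality, is asserted: $\mathcal{L}^{\prime}_{D_I}$ moreover contains formulas in which the offer at a state is only partially specified---for instance a bare $a\top$, or a conjunction of some positive and some negated atoms that does not determine the whole offer---so it is strictly larger; the matching equivalence $\mathcal{L}^{\prime}_{D_I}\sim\mathcal{L}_{PW}$ would then follow, exactly as for the popular semantics in Proposition~\ref{equiv_entre_logicas}, by replacing each partial-offer conjunction by the disjunction of all complete offers compatible with it and floating the disjunctions out via Proposition~\ref{disyuncion_teo}.
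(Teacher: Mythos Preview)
Your proof of the inclusion itself is correct and is essentially what the paper does: the paper's proof is just the sentence ``Analogous to the case of ready simulation semantics,'' and the analogy is precisely your observation that the complete-offer formula $X=\bigwedge_{a\in X}a\top\wedge\bigwedge_{a\notin X}\neg a\top$ lies in $\mathcal{L}^{\equiv}_I$, after which the generating clauses of $\mathcal{L}_{PW}$ match those of $\mathcal{L}^{\prime}_{D_I}$. One small inaccuracy: in the PW column of Table~\ref{logic_table} only your clauses (a) and (c) carry a $\bullet$; the $X\varphi$ row is marked $\nu$, so your clause~(b) is not a generator of $\mathcal{L}_{PW}$. This is harmless for the inclusion, since you then prove a superset is contained in $\mathcal{L}^{\prime}_{D_I}$.

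Your closing remark, however, is wrong and worth flagging. The equivalence $\mathcal{L}^{\prime}_{D_I}\sim\mathcal{L}_{PW}$ does \emph{not} hold; the paper proves $\mathcal{L}^{\prime}_{D_I}\nsim\mathcal{L}_{PW}$ immediately after this proposition, using $p=abc+a(bc+d)+ab$ and $q=a(bc+d)+ab$: these are possible-worlds inequivalent, yet $\mathcal{L}_{PW}$ cannot separate them, while $a(\neg d\wedge bc)\in\mathcal{L}^{\prime}_{D_I}$ does. The disjunction-floating argument you sketch breaks down because in $\mathcal{L}_{PW}$ a formula is either a leaf offer $X$ or a deterministic branching $\bigwedge_{a\in X}a\varphi_a$; there is no way to attach an offer constraint at an \emph{intermediate} node together with a continuation, so a formula like $X\wedge bc$ has no equivalent inside $\mathcal{L}_{PW}$ to float the disjunction down to. In other words, van Glabbeek's $\mathcal{L}_{PW}$ is an incorrect logical characterisation of possible-worlds semantics, and the paper uses this very proposition as the first half of that observation.
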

\begin{proof}
Analogous to the case of ready simulation semantics.
\end{proof}

\begin{prop}
$\mathcal{L}^{\prime}_{D_I}\nsim \mathcal{L}_{PW}$.
\end{prop}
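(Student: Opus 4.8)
The plan is to exhibit two processes on which the preorders $\sqsubseteq_{\mathcal{L}^{\prime}_{D_I}}$ and $\sqsubseteq_{\mathcal{L}_{PW}}$ disagree. By the previous proposition $\mathcal{L}_{PW}\subseteq\mathcal{L}^{\prime}_{D_I}$, hence ${\sqsubseteq_{\mathcal{L}^{\prime}_{D_I}}}\subseteq{\sqsubseteq_{\mathcal{L}_{PW}}}$, so it is enough to find $p,q\in\textrm{BCCSP}$ and a formula $\psi\in\mathcal{L}^{\prime}_{D_I}$ with $p\sqsubseteq_{\mathcal{L}_{PW}}q$, $p\models\psi$ and $q\not\models\psi$ (the latter giving $p\not\sqsubseteq_{\mathcal{L}^{\prime}_{D_I}}q$).

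Why such a pair must exist: in Table~\ref{logic_table} the logic $\mathcal{L}_{PW}$ is generated by just two clauses, the exact-offer leaves ``$X\in\mathcal{L}_{PW}$'' and the deterministic conjunction ``$\bigwedge_{a\in X}a\varphi_a\in\mathcal{L}_{PW}$''; there is no clause conjoining an exact offer with a modal continuation at the same node, since ``$X\varphi\in\mathcal{L}_{PW}$'' is only marked $\nu$. Hence a formula of $\mathcal{L}_{PW}$ can pin down the exact offer of a state only at a \emph{leaf} of its underlying deterministic tree, while at internal nodes only the positive lower bound ``these actions are enabled'' is recorded. By contrast, the clause ``if $\varphi\in\mathcal{L}^{\prime}_{D_I}$ and $\sigma\in\mathcal{L}^{\equiv}_I$ then $\sigma\wedge\varphi\in\mathcal{L}^{\prime}_{D_I}$'' of Definition~\ref{logica_det} lets $\mathcal{L}^{\prime}_{D_I}$ attach to every node of a deterministic tree a formula $\sigma_X:=\bigwedge_{x\in X}x\top\wedge\bigwedge_{x\notin X}\neg x\top$ fixing its offer to be exactly $X$; this is precisely what the definition of $\sqsubseteq_{RS}$, and hence of a possible world, requires all along a computation. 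So $\mathcal{L}^{\prime}_{D_I}$ can express ``the deterministic process $d$ is a possible world'' while $\mathcal{L}_{PW}$ cannot.

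Concretely, I would take $Act=\{a,b,c\}$, $p=a(b(c\cero))$ the chain $a.b.c.\cero$, and $q=a(b(a\cero))+a(b(c\cero)+c\cero)$, with $\psi:=\sigma_{\{a\}}\wedge a\bigl(\sigma_{\{b\}}\wedge b\bigl(\sigma_{\{c\}}\wedge c(\sigma_{\emptyset}\wedge\top)\bigr)\bigr)$; since each $\sigma_X\in\mathcal{L}^{\equiv}_I$, a bottom-up application of the clauses of Definition~\ref{logica_det} gives $\psi\in\mathcal{L}^{\prime}_{D_I}$. Then $p\models\psi$, because the offers along the chain $p$ are $\{a\},\{b\},\{c\},\emptyset$. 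But $q\not\models\psi$: the only $a$-successor of $q$ whose offer equals $\{b\}$ is $b(a\cero)$, and its unique $b$-successor $a\cero$ has offer $\{a\}\neq\{c\}$, so it fails $\sigma_{\{c\}}$; the other $a$-successor $b(c\cero)+c\cero$ has offer $\{b,c\}$ and already fails $\sigma_{\{b\}}$. For the remaining inclusion, I would enumerate the $\mathcal{L}_{PW}$-formulas satisfied by the chain $p$ — up to $\top$-reductions these are $\top$, the exact-offer leaves $\{a\}$, $a\{b\}$, $ab\{c\}$, the empty-offer leaf $abc\,\widetilde{Act}$, and the purely modal $a\top$, $ab\top$, $abc\top$ — and check that $q$ satisfies each: $I(q)=\{a\}$ gives $\{a\}$; the $a$-successor $b(a\cero)$ of offer $\{b\}$ gives $a\{b\}$; the sub-path $q\tran{a}(b(c\cero)+c\cero)\tran{b}c\cero$ of offer $\{c\}$ and its continuation $\tran{c}\cero$ give $ab\{c\}$ and $abc\,\widetilde{Act}$; the same branch gives all the modal formulas. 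Hence $p\sqsubseteq_{\mathcal{L}_{PW}}q$, so ${\sqsubseteq_{\mathcal{L}^{\prime}_{D_I}}}\neq{\sqsubseteq_{\mathcal{L}_{PW}}}$, i.e. $\mathcal{L}^{\prime}_{D_I}\nsim\mathcal{L}_{PW}$.

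The routine but slightly delicate step is this last enumeration: one must use that in $\mathcal{L}_{PW}$ the ``leaf'' and ``modal'' conjuncts never mix — exactly the feature the counterexample exploits — and that $p$ being a finite chain keeps the list of its $\mathcal{L}_{PW}$-formulas finite. I would close with a remark interpreting the result: $\mathcal{L}_{PW}$ actually characterizes a semantics strictly coarser than possible worlds ($\sqsubseteq_{PW}={\leq_I^{db}}$ by Theorem~\ref{possible-worlds:thm}), namely the one in which exact offers are observed only at the ends of deterministic computations; thus the classical logical characterization of possible worlds is incomplete and must be repaired by allowing, as $\mathcal{L}^{\prime}_{D_I}$ does, exact offers at every node of a deterministic tree (equivalently, the clause ``$X\varphi\in\mathcal{L}_{PW}$'' in Table~\ref{logic_table} cannot in fact be dispensed with).
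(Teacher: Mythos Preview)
Your proof is correct and follows the same approach as the paper: exhibit processes that $\mathcal{L}_{PW}$ cannot distinguish but $\mathcal{L}^{\prime}_{D_I}$ can, exploiting precisely the observation you isolate---that in $\mathcal{L}_{PW}$ the exact-offer leaf $X$ and the deterministic modal conjunction $\bigwedge_{a\in X}a\varphi_a$ are separate, un-conjoinable productions, so an intermediate offer can never be pinned down together with its continuation.

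The only difference is the concrete witness. The paper takes $p=abc+a(bc+d)+ab$ and $q=a(bc+d)+ab$, notes $p\not\equiv_{PW}q$ (the world $abc$ lies in $PW(p)\setminus PW(q)$), checks $p\sim_{\mathcal{L}_{PW}}q$, and distinguishes them with the single short formula $a(\neg d\wedge bc)\in\mathcal{L}^{\prime}_{D_I}$. Your chain example with $q=a(ba)+a(bc+c)$ works by the same mechanism---one $a$-successor matches the intermediate offer $\{b\}$, the other matches the $b$-continuation---but requires a longer distinguishing formula and a slightly more careful enumeration of the $\mathcal{L}_{PW}$-formulas of $p$. The paper's example is more economical; yours has the minor virtue of using only three actions and making the ``split the obligations across two branches'' trick very explicit. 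Your closing remark, that the clause ``$X\varphi\in\mathcal{L}_{PW}$'' (marked $\nu$ in Table~\ref{logic_table}) is in fact indispensable, is exactly the paper's diagnosis.
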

\begin{proof}
This is a consequence of the fact that the original logical characterization of the 
possible worlds semantics, $\mathcal{L}_{PW}$, was wrong. 
For instance, taking $p=abc+a(bc+d)+ab$ and 
$q=a(bc+d)+ab$ then $p \not\equiv_{PW} q$ but
$p\sim_{\mathcal{L}_{PW}}q$, since $\mathcal{L}_{PW}$ cannot ``observe'' the
intermediate offer that makes the possible world $abc$ different from those of
$q$. 
By contrast, the formula $\varphi= a(\neg d \wedge bc) \in
\mathcal{L}^{\prime}_{D_I}$ is enough to distinguish $p$ and $q$, since 
$p\models \varphi$ and $q \not\models \varphi$.
\end{proof}

We postpone to Section \ref{logic_observational_framework} the proof of the equivalence between our observational and 
logical characterizations of the possible worlds semantics. As a consequence of this 
correspondence, we have that a logical characterization only works in the infinite case
if we restrict ourselves to image-finite processes.

\begin{table}[t]
\begin{center}
\scalebox{0.85}{\footnotesize
\begin{tabular}{|c|c|c|c|c|c||c|}
\hline
\backslashbox{Formulas}{Constraints ($\mathcal{N}$)}& U & C & I & T & S & B\\
\hline
$\top  \in \mathcal{L}^{\prime}_{\mathcal{N}}$ & $\bullet$ & $\bullet$ & $\bullet$ & $\bullet$ & $\nu$ & $\nu$\\
\hline
$\textbf{$\neg \top$ = $\perp$}\in \mathcal{L}^{\prime}_{\mathcal{N}}$& $\nu$ &$\nu$&$\nu$ & $\nu$ & $\nu$ & $\nu$\\
\hline
$\textbf{$\neg0$}\in \mathcal{L}^{\prime}_{\mathcal{N}}$& &$\bullet$ & $\bullet$ & $\nu$ & $\nu$ & $\nu$\\
\hline
$a \in Act \Rightarrow \hspace{0.075cm} a\top \in \mathcal{L}^{\prime}_{\mathcal{N}}$ & & &$\bullet$ &$\nu$ &$\nu$ &$\nu$\\
\hline
$\varphi \in \mathcal{L}^{\prime}_{\mathcal{N}}, \hspace{0.075cm} a \in Act \Rightarrow$ & & & &\multirow{2}{*}{$\bullet$} &\multirow{2}{*}{$\bullet$} &\multirow{2}{*}{$\bullet$} \\
$a \varphi \in \mathcal{L}^{\prime}_{\mathcal{N}}$& & & & & &\\
\hline
$\varphi_i \in \mathcal{L}^{\prime}_{\mathcal{N}} \hspace{0.075cm} \forall i \in I \Rightarrow$ & & & & &\multirow{2}{*}{$\bullet$} &\multirow{2}{*}{$\bullet$}\\
$\bigwedge_{i \in I} \varphi_i \in \mathcal{L}^{\prime}_{\mathcal{N}}$ & & & & & &\\
\hline
$\varphi \in \mathcal{L}^{\prime}_{\mathcal{N}} \Rightarrow$ & & & & & &\multirow{2}{*}{$\bullet$}\\
$\neg \varphi \in \mathcal{L}^{\prime}_{\mathcal{N}}$ & & & & & &\\
\hline
\end{tabular}}
\vspace{1ex}
\caption{Logical characterizations of the semantics used as constraints.} \label{our_constraint_table}
\vspace{-0.2cm}
\end{center}
\end{table}

\begin{table}[th]
\begin{center}
\scalebox{0.85}{\footnotesize
\begin{tabular}{|c|c|c|c|c|c|c||c}
\hline
\multirow{4}{*}{\backslashbox{Formulas}{Semantics ($\mathcal{Y_N}$)}}& \multirow{2}{*}{$\mathbf{\leq_\textit{N}^{lf\supseteq}}$} & \multirow{2}{*}{$\mathbf{\leq_\textit{N}^{lf}}$} & \multirow{2}{*}{$\mathbf{\leq_\textit{N}^{l\supseteq}}$} & \multirow{2}{*}{$\mathbf{\leq_\textit{N}^{l}}$} &\multirow{2}{*}{$D_N$} & \multirow{2}{*}{$NS$} & \multicolumn{1}{c|}{\multirow{2}{*}{$N \in \{U,C,I,T,S\}$}}\\ 
& & & & & & & \multicolumn{1}{c|}{}\\
\cline{2-8}
& \multirow{2}{*}{F} & \multirow{2}{*}{R} & \multirow{2}{*}{FT} & \multirow{2}{*}{RT} & \multirow{2}{*}{PW} & \multirow{2}{*}{RS} & \multicolumn{1}{c|}{\multirow{2}{*}{when $N=I$}}\\
& & & & & & & \multicolumn{1}{c|}{}\\
\hline
$\top  \in \mathcal{L}^{\prime}_{{\mathcal{Y_N}}}$ & $\bullet$  & $\bullet$ & $\bullet$ & $\bullet$ & $\bullet$ & $\nu$ &\\
\cline{1-7}
$\varphi \in \mathcal{L}^{\prime}_{{\mathcal{Y_N}}}, \hspace{0.075cm} a \in Act \Rightarrow$ & \multirow{2}{*}{$\bullet$}  & \multirow{2}{*}{$\bullet$} & \multirow{2}{*}{$\bullet$} & \multirow{2}{*}{$\bullet$} & \multirow{2}{*}{$\nu$} & \multirow{2}{*}{$\bullet$} &\\
$a \varphi \in \mathcal{L}^{\prime}_{{\mathcal{Y_N}}}$ &  &  &  &  &  &  &\\
\cline{1-7}
$\varphi \in \mathcal{L}_{N}^{\neg}\Rightarrow$ & \multirow{2}{*}{$\bullet$}  & \multirow{2}{*}{$\nu$} & \multirow{2}{*}{$\nu$}  & \multirow{2}{*}{$\nu$}  & \multirow{2}{*}{$\nu$}  & \multirow{2}{*}{$\nu$}  &\\
$\varphi \in \mathcal{L}^{\prime}_{\mathcal{Y_N}}$ &  &  &  &  &  &  &\\
\cline{1-7}
$\varphi \in \mathcal{L}_{N}^{\equiv}\Rightarrow$ &  & \multirow{2}{*}{$\bullet$}  &  & \multirow{2}{*}{$\nu$}  & \multirow{2}{*}{$\nu$}  & \multirow{2}{*}{$\nu$}  &\\
$\varphi \in \mathcal{L}^{\prime}_{\mathcal{Y_N}}$ &  &  &  &  &  &  &\\
\cline{1-7}
$\varphi \in \mathcal{L}^{\prime}_{{\mathcal{Y_N}}}, \hspace{0.075cm} \sigma \in \mathcal{L}_{N}^{\neg}\Rightarrow$ &  &  & \multirow{2}{*}{$\bullet$}  & \multirow{2}{*}{$\nu$}  & \multirow{2}{*}{$\nu$}  & \multirow{2}{*}{$\nu$}  &\\
$\sigma \wedge \varphi \in \mathcal{L}^{\prime}_{\mathcal{Y_N}}$ &  &  &  &  &  &  &\\
\cline{1-7}
$\varphi \in \mathcal{L}^{\prime}_{{\mathcal{Y_N}}}, \hspace{0.075cm} \sigma \in \mathcal{L}_{N}^{\equiv}\Rightarrow$ &  &  &  & \multirow{2}{*}{$\bullet$}  & \multirow{2}{*}{$\bullet$}  & \multirow{2}{*}{$\nu$}  &\\
$\sigma \wedge \varphi \in \mathcal{L}^{\prime}_{\mathcal{Y_N}}$ &  &  &  &  &  &  &\\
\cline{1-7}
$X\subseteq Act, \hspace{0.075cm} \varphi_a \in \mathcal{L}^{\prime}_{\mathcal{Y_N}} \hspace{0.075cm} \forall a \in X \Rightarrow$ &  &  &  &  & \multirow{2}{*}{$\bullet$} &\multirow{2}{*}{$\nu$} &\\
$\bigwedge_{a \in X} a\varphi_a \in \mathcal{L}^{\prime}_{\mathcal{Y_N}}$ &  &  &  &  &  &  &\\
\cline{1-7}
$\varphi_i \in \mathcal{L}^{\prime}_{{\mathcal{Y_N}}} \hspace{0.075cm} \forall i \in I \Rightarrow$ &  &  &  &  &  & \multirow{2}{*}{$\bullet$} &\\
$\bigwedge_{i \in I} \varphi_i \in \mathcal{L}^{\prime}_{{\mathcal{Y_N}}}$ &  &  &  &  &  &  &\\
\cline{1-7}
$\varphi \in \mathcal{L}_{N} \Rightarrow$ &  &  &  &  &  & \multirow{2}{*}{$\bullet$}   &\\
$\varphi \in \mathcal{L}^{\prime}_{\mathcal{Y_N}}$ &  &  &  &  &  &  &\\
\cline{1-7}
$\varphi \in \mathcal{L}_{N} \Rightarrow$ &  &  &  &  &  & \multirow{2}{*}{$\bullet$}  &\\
$\neg \varphi \in \mathcal{L}^{\prime}_{\mathcal{Y_N}}$ &  &  &  &  &  &  &\\
\cline{1-7}
\end{tabular}}
\vspace{1ex}
\caption{Our new logical characterizations for the semantics at each level of the spectrum.} \label{our_logic_table}
\vspace{-0.2cm}
\end{center}
\end{table}

In Tables~\ref{our_constraint_table} and \ref{our_logic_table} we present our
results in a three-dimensional way. Table~\ref{our_logic_table} shows the rules
defining the logics characterizing each of the semantics at each layer of the
spectrum. On top of it also appears, as example, the classic notation for 
the corresponding semantics represented when $N=I$.
Table~\ref{our_constraint_table} contains the logics that characterize the
constraint governing each of these layers.  There are 
two semantics that are included in both tables, in order to emphasize their double
role as ``main'' and ``auxiliary'' semantics. However they are disguised under
different names: this is the case of $T = {\leq_U^{l}}$ (in fact, it is also equal to the other
three linear $U$-semantics) and $S=US$. 

\section{Relating the unified logics and the unified observational model}\label{logic_observational_framework}

In this section we will relate our unified logical characterizations and the
unified observational semantics. As indicated in
Section~\ref{sec:Preliminaries}, we have to restrict ourselves to image-finite
processes; as a byproduct, the finite parts of each of the corresponding
languages, that are obtained by intersection with $\mathcal{L}^{f}_{HM}$,
provide us with a pure finite logical characterization of the semantics.
However, it is convenient in the first part of this Section to consider still the full (infinitary) logic characterizing each of the semantics.

\begin{defi}[\textbf{Normal formulas $\mathcal{N(L)}$}]\label{def:normal}
\hfill
\begin{enumerate}[(1)]
\item Given a set of formulas $\mathcal{L}$ whose outermost operator is not
  conjunction,  the set $\mathcal{N(L)}$ of induced \emph{normal formulas} is
  defined by: 
  \begin{iteMize}{$\bullet$}
  \item $\top\in \mathcal{N(L)}$;
  \item if $\Gamma_1, \Gamma_2\subseteq \mathcal{L}, \{a_i \mid i\in I\} \subseteq Act$, and $\varphi_i \in \mathcal{N(L)}$, then $(\bigwedge_{\sigma\in \Gamma_1}\sigma \wedge \bigwedge_{\sigma \in \Gamma_2} \neg \sigma) \wedge \bigwedge_{i\in I}a_i \varphi_i \in \mathcal{N(L)}$.
  \end{iteMize}
\item For each $N \in \{U,C,I,T,S\}$ and each $\mathcal{Y_N} \in \{NS, \leq_N^{l}, \leq_N^{l\supseteq}, \leq_N^{lf}, \leq_N^{lf\supseteq}, \leq_N^{l\subseteq}, \leq_N^{lf\subseteq}, D_N\}$ in the spectrum, we define the set of normal formulas $\mathcal{N_{Y_N}(L_N^{\prime \prime})} \subseteq \mathcal{L}^{\prime}_{\mathcal{Y_N}}$  as $\mathcal{N_{Y_N}(L_N^{\prime \prime})}=\mathcal{N(L_N^{\prime\prime})}\bigcap \mathcal{L}^{\prime}_{\mathcal{Y_N}}$, where $\mathcal{L}_{N}^{\prime \prime}$ is the set of formulas in $\mathcal{L}^{\prime}_{N}$ whose outermost operator is not conjunction.\end{enumerate}
\end{defi}

\begin{rem}
The clause in Definition~\ref{def:normal}.1 is more involved than it
appears. Initially, we can apply it with $I=\emptyset$ to obtain the first
(non-trivial) normal formulas and then recursively to obtain more
complex normal formulas; note that the two first subformulas
stem always from the original set $\mathcal{L}$. 
By abuse of notation, when some of the elements in our normal
formulas do not appear in the corresponding set
$\mathcal{L}^{\prime}_{\mathcal{Y_N}}$, we assume that these formulas have been
extended by conjunction with $\top$ using the fact that $\bigwedge_{\sigma \in
  \emptyset} \sigma$ is another syntactic form to express $\top$. 


Also note that infinite conjunction is allowed in the two first
subformulas. As a consequence, if we consider the tree-like form of these (possibly infinitary) 
formulas they could have infinite depth. However, if we define the normal depth of formulas in $\mathcal{N(L_N)}$
as that obtained by counting the recursive nesting in the application of
Definition~\ref{def:normal}, then any normal formula has finite normal depth,
and the set they form can be explored by structural induction. 
\end{rem}

\begin{thm}\label{equivalencia_logica_normal}
Each set of normal formulas $\mathcal{N_{Y_N}(L_N^{\prime \prime})}$ associated
to the semantics in the spectrum  is equivalent to the full set of formulas
$\mathcal{L}^{\prime}_{\mathcal{Y_N}}$. 
\end{thm}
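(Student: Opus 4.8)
## Proof proposal

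The plan is to prove, for each semantics $\mathcal{Y_N}$ in the spectrum, that $\mathcal{N_{Y_N}(L_N^{\prime \prime})} \sim \mathcal{L}^{\prime}_{\mathcal{Y_N}}$ by exhibiting, for every formula $\varphi \in \mathcal{L}^{\prime}_{\mathcal{Y_N}}$, an equivalent formula that is a \emph{disjunction} of normal formulas in $\mathcal{N_{Y_N}(L_N^{\prime \prime})}$, and then invoking Proposition~\ref{disyuncion_teo} to discharge the disjunction. Since $\mathcal{N_{Y_N}(L_N^{\prime \prime}) \subseteq \mathcal{L}^{\prime}_{\mathcal{Y_N}}$ by definition, one inclusion of the semantic equivalence is immediate; the content is the converse, i.e.\ that normal formulas already suffice to induce the full preorder $\sqsubseteq_{\mathcal{L}^{\prime}_{\mathcal{Y_N}}}$.

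The argument proceeds by a \emph{normalization procedure} carried out by induction on the structure of $\varphi \in \mathcal{L}^{\prime}_{\mathcal{Y_N}}$, using the \emph{normal depth} (the nesting of applications of Definition~\ref{def:normal}) as the induction measure, which is finite even in the infinitary case. The key rewriting steps are: (i) the distribution laws $\bigvee a\varphi_i \equiv a \bigvee \varphi_i$ and $\bigvee$ over $\bigwedge$, exactly as used in the proof of Proposition~\ref{disyuncion_teo}, which let us float every disjunction to the top; (ii) grouping: inside a conjunction, collect all the "local" subformulas (those built from $\mathcal{L}^{\prime}_{N}$, its closures $\mathcal{L}_{N}^{\neg}$, $\mathcal{L}_{N}^{\equiv}$, $\mathcal{L}_{N}^{\surd}$, or plain $a\top$'s) into a single block of the shape $\bigwedge_{\sigma\in\Gamma_1}\sigma \wedge \bigwedge_{\sigma\in\Gamma_2}\neg\sigma$, and all the modal subformulas $a_i\psi_i$ into the block $\bigwedge_{i\in I} a_i\varphi_i$; (iii) completing a partial local specification: a block such as $\bigwedge_{a\in X_1} a\top \wedge \bigwedge_{b\in X_2}\neg b\top$ that only constrains the offer to lie between $X_1$ and $\overline{X_2}$ is replaced by the disjunction $\bigvee_{X_1\subseteq X\subseteq \overline{X_2}}\big(\bigwedge_{a\in X}a\top \wedge \bigwedge_{b\notin X}\neg b\top\big)$, and similarly for the other constraints $N$; this is precisely the step already performed piecewise in Proposition~\ref{equiv_entre_logicas} and in the proof of Theorem~\ref{teoequiv}, and it is what turns an arbitrary formula into a disjunction of \emph{complete} normal formulas; (iv) discarding unsatisfiable conjuncts (e.g.\ $a\top \wedge \neg a\top$, or two incompatible offers), which simply drop out of the disjunction. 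One then checks that each resulting disjunct lies in $\mathcal{N_{Y_N}(L_N^{\prime \prime})}$: the shape matches Definition~\ref{def:normal}.1, and membership in $\mathcal{L}^{\prime}_{\mathcal{Y_N}}$ is preserved because the rewriting never introduces a use of conjunction, negation, or the modal prefix beyond what the defining clauses of $\mathcal{L}^{\prime}_{\mathcal{Y_N}}$ already allow — the local blocks stay within the permitted closure ($\mathcal{L}_{N}^{\neg}$ for the failures-based layer, $\mathcal{L}_{N}^{\equiv}$ for the readies-based layer, $\mathcal{L}_{N}^{\surd}$ for the partial-offer layer, unrestricted $\bigwedge$ for $NS$, and the determinism-restricted $\bigwedge_{a\in X}a\varphi_a$ for $D_N$), and the conjunction of modal subformulas respects the same restriction.

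The main obstacle is the \emph{case analysis over the eight families} $\mathcal{Y_N} \in \{NS, \leq_N^{l}, \leq_N^{l\supseteq}, \leq_N^{lf}, \leq_N^{lf\supseteq}, \leq_N^{l\subseteq}, \leq_N^{lf\subseteq}, D_N\}$: each has a slightly different constraint on where and how conjunction may be used (only at the end, only along a computation, freely, or only under a prefix), so the "grouping + completion" step must be checked to land back in the right sublogic in each case — in particular, for the linear semantics one must verify that floating disjunctions outward and completing partial offers does not create a conjunction at an illegal position, and for $D_N$ that the determinism restriction $a_i\neq a_j$ on the children of a node is maintained. I would handle this by first proving the generic normalization lemma for the branching case $NS$ (which subsumes the pattern of $\mathcal{N(L_{RS})}$ vs $\mathcal{CN(L_{RS})}$ from the proof of Theorem~\ref{teoequiv}), then treating $D_N$ and the linear families as progressively more restricted specializations, each time only pointing out why the restricted clauses are still respected, rather than redoing the combinatorics. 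The remaining routine verification — that for complete normal formulas $\varphi$ there is a corresponding observation $o_\varphi$ (a bgo, dbgo, or lgo) with $p\models\varphi \iff o_\varphi \in \mathcal{OBS}_N(p)$ — is already established in Theorem~\ref{teoequiv} for $N=I$ and transfers verbatim, so it need not be repeated here.
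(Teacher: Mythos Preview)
Your approach is essentially the paper's: structural induction on $\varphi \in \mathcal{L}^{\prime}_{\mathcal{Y_N}}$, gather the subformulas into the shape prescribed by Definition~\ref{def:normal}, and invoke Proposition~\ref{disyuncion_teo} to discharge the disjunctions that arise --- the paper's own proof is literally that one sentence.

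Two small points of over-elaboration are worth flagging. First, the induction measure should be the structure of $\varphi$ itself (as the paper says), not \emph{normal depth}, which is only defined for formulas already in $\mathcal{N(L)}$; you are normalizing an arbitrary $\varphi$, so normal depth is not yet available. Second, and more substantively, your step~(iii) --- completing a partial local block $\bigwedge_{a\in X_1}a\top \wedge \bigwedge_{b\in X_2}\neg b\top$ into a disjunction over all compatible offers --- is not needed for \emph{this} theorem: it produces \emph{complete} normal formulas, and that passage from $\mathcal{N_{Y_N}}$ to $\mathcal{CN_{Y_N}}$ is precisely the content of the later argument inside Theorem~7.6. Here only membership in $\mathcal{N_{Y_N}(L_N^{\prime\prime})}$ is required, so your steps (i) and (ii) already suffice; carrying (iii) along does no harm to correctness, but it conflates two distinct stages of the development.
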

\begin{proof}
By structural induction, all the formulas in
$\mathcal{L}^{\prime}_{\mathcal{Y_N}}$ admit a normal formula in the sense of
Definition~\ref{def:normal}, that is obtained by gathering the subformulas and
applying Proposition~\ref{disyuncion_teo}. 
\end{proof}

\begin{defi}
The set of \emph{complete normal} formulas $\mathcal{CN(L)}$ (resp., the set
of complete normal formulas associated to each semantics in the spectrum,
$\mathcal{CN_{Y_N}(L_N^{\prime \prime})}$) is the set of normal formulas
(resp., the set of normal formulas associated to each semantics in the
spectrum) for which the rule in Definition~\ref{def:normal} is applied with
$\Gamma_2= \overline{\Gamma_1}$.
\end{defi}

Now we prove that infinite conjunction in Definition~\ref{def:normal} can be approximated by finite conjunction.

\begin{thm}\label{aproximacion}
If we restrict ourselves to image-finite processes, for each denumerable set of formulas $\mathcal{L}$, any complete normal formula
$\varphi \in \mathcal{CN(L)}$ can be approximated by  a set of finite normal
formulas $\{ \varphi^{k} \mid k \in \nat\}$ that only use finite conjunction,
that is,  $p\models \varphi$ iff $p \models \varphi^{k}$ for all $k \in \nat$.
\end{thm}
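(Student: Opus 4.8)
The plan is to proceed by induction on the \emph{normal depth} of the complete normal formula $\varphi \in \mathcal{CN(L)}$, which is finite even when $\varphi$ itself uses infinite conjunction. Recall that a complete normal formula has the shape
\[
\varphi = \Big(\bigwedge_{\sigma \in \Gamma_1}\sigma \wedge \bigwedge_{\sigma \in \overline{\Gamma_1}}\neg \sigma\Big) \wedge \bigwedge_{i \in I} a_i \varphi_i,
\]
where each $\varphi_i \in \mathcal{CN(L)}$ has strictly smaller normal depth. The first block of conjunctions (the positive/negative literals from $\mathcal{L}$) poses no approximation problem in itself — it is the set $I$, which may be infinite, that must be cut down to finite size, together with the recursive approximations of the $\varphi_i$.

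First I would handle the base case: a formula of normal depth $0$, i.e. $\varphi = \bigwedge_{\sigma \in \Gamma_1}\sigma \wedge \bigwedge_{\sigma \in \overline{\Gamma_1}}\neg\sigma$ with $I = \emptyset$. Since $\mathcal{L}$ is denumerable, enumerate $\mathcal{L} = \{\sigma_1, \sigma_2, \dots\}$ and set $\varphi^{k} = \bigwedge_{\sigma_j \in \Gamma_1, j \le k}\sigma_j \wedge \bigwedge_{\sigma_j \in \overline{\Gamma_1}, j \le k}\neg\sigma_j$; clearly $p \models \varphi$ iff $p \models \varphi^{k}$ for every $k$. For the inductive step, by the induction hypothesis each $\varphi_i$ has an approximating family $\{\varphi_i^{k} \mid k \in \nat\}$. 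The key point where image-finiteness enters is that, for a fixed process $p$ and a fixed action $a$, there are only finitely many states reachable by $a$; this lets me bound, for each $k$, how many summands $a_i\varphi_i$ actually need to be tested before we have seen all the relevant transitions of $p$. Concretely I would define $\varphi^{k}$ by taking only the first $k$ indices from $I$ (under a fixed enumeration) and replacing each $\varphi_i$ by $\varphi_i^{k}$:
\[
\varphi^{k} = \Big(\bigwedge_{\sigma_j \in \Gamma_1, j \le k}\sigma_j \wedge \bigwedge_{\sigma_j \in \overline{\Gamma_1}, j \le k}\neg\sigma_j\Big) \wedge \bigwedge_{i \in I, i \le k} a_i \varphi_i^{k}.
\]
Then one direction ($p \models \varphi \Rightarrow p \models \varphi^{k}$ for all $k$) is immediate since $\varphi^{k}$ only weakens $\varphi$. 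For the converse, suppose $p \models \varphi^{k}$ for all $k$. The literal part forces the same $\mathcal{L}$-theory on $p$ as $\varphi$ demands, by the base-case argument. For the modal part: fix $i \in I$; for all $k \ge i$ we have $p \models a_i \varphi_i^{k}$, so there is some $p \tran{a_i} p_k'$ with $p_k' \models \varphi_i^{k}$. By image-finiteness of $p$ along $a_i$, infinitely many of the $p_k'$ coincide with a single state $p'$, and since the $\varphi_i^{k}$ are increasing in strength (a cofinal subsequence suffices), $p' \models \varphi_i^{k}$ for all $k$; by the induction hypothesis applied to $\varphi_i$, this gives $p' \models \varphi_i$, hence $p \models a_i\varphi_i$. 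As $i$ was arbitrary, $p \models \varphi$.

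The main obstacle I anticipate is the bookkeeping in the inductive step: one must be careful that the family $\{\varphi_i^{k}\}$ can be chosen monotone (or at least cofinally monotone) in $k$, so that the pigeonhole-with-image-finiteness argument on the transitions $p \tran{a_i} p_k'$ recovers a single target state satisfying \emph{all} the $\varphi_i^{k}$; a sloppy choice could yield different states for different $k$ with no common refinement. This is exactly where image-finiteness is essential and cannot be dropped — a classic König's-lemma-style phenomenon, the same one that forces the restriction to image-finite processes already noted for $\mathcal{L}^{f}_{HM}$ in Section~\ref{sec:Preliminaries}. A secondary, purely clerical point is that the finite formulas $\varphi^{k}$ must be verified to lie in the appropriate finite sublogic (the one obtained by intersecting with $\mathcal{L}^{f}_{HM}$), which holds because truncating conjunctions to finite ones is exactly the syntactic passage from $\mathcal{L}^{\prime}_{\mathcal{Y_N}}$ to its finite counterpart and does not introduce any new operators.
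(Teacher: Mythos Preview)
Your approach is essentially the same as the paper's: induction on normal depth, truncation of the literal block using an enumeration of $\mathcal{L}$, recursive approximation of the $\varphi_i$, and the image-finiteness/pigeonhole argument for the converse direction, relying on monotonicity of the approximants in $k$.

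There is one noteworthy difference. The paper defines
\[
\varphi^{k}= \bigwedge_{\sigma \in \Gamma_1 \cap \mathcal{L}^{\leqslant k}} \sigma \wedge \bigwedge_{\sigma \in \overline{\Gamma_1} \cap \mathcal{L}^{\leqslant k}} \neg \sigma \wedge \bigwedge_{i \in I} a_i\varphi^{k}_i,
\]
keeping the full index set $I$, whereas you additionally truncate $I$ to the first $k$ indices. The paper's choice is consistent with its remark that ``infinite conjunction is allowed in the two first subformulas'' (tacitly treating $I$ as finite), but if $I$ can be infinite then your truncation is what actually delivers formulas that use only finite conjunction, as the statement requires. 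Your version is therefore slightly more robust, at the cost of needing $I$ to be enumerable (which is unproblematic here). The monotonicity point you flag is exactly what the paper uses when it passes from ``$p_i^{j} \models \varphi_i^{k}$ for infinitely many $k$'' to ``for all $k$''; you are right that this needs the approximants to be decreasing in satisfiability, which they are by construction.
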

\begin{proof}
We define the sequence $\varphi^{k}$ by structural induction on the normal
depth of $\varphi$: 
\begin{iteMize}{$\bullet$}
\item $\varphi= (\bigwedge_{\sigma\in \Gamma_1}\sigma \wedge \bigwedge_{\sigma
    \in \overline{\Gamma_1}} \neg \sigma)$. We consider a fixed enumeration of
  the set $\mathcal{L}=\{\sigma_n \mid n \in \nat\}$, and define
  $\mathcal{L}^{\leqslant n}= \{\sigma_j \in \mathcal{L} \mid
  j\leqslant n\}$. Then, for each $k\in \nat$: 
  $$\varphi^{k}= \bigwedge_{\sigma \in \Gamma_1 \cap \mathcal{L}^{\leqslant k}} \sigma \wedge \bigwedge_{\sigma \in \overline{\Gamma_1}
    \cap \mathcal{L}^{\leqslant k}} \neg \sigma.$$
We have $p\models \varphi$ $\Leftrightarrow$ $(p\models \sigma\; \forall \sigma \in \Gamma_1$ and $p\not\models \sigma\;\forall \sigma \not\in \Gamma_1)$ and $p\models \varphi^k \; \forall k \in \nat$ $\Leftrightarrow$ $(p\models \sigma\; \forall \sigma \in \Gamma_1\cap\mathcal{L}^{\leqslant k}$ and $p\not\models \sigma\;\forall \sigma \in \overline{\Gamma_1}\cap\mathcal{L}^{\leqslant k})$ and the result follows from a the equality
$$\Gamma_1=\Gamma_1\cap\mathcal{L}=\Gamma_1\cap(\bigcup_{k\in\nat}\mathcal{L}^{\leqslant{n}}).$$

\item $\varphi= (\bigwedge_{\sigma\in \Gamma_1}\sigma \wedge \bigwedge_{\sigma
    \in \overline{\Gamma_1}} \neg \sigma) \wedge  \bigwedge_{i \in I} a_i
  \varphi_i$. By structural induction we can assume that the result is true for
  any subformula $\varphi_i$. Then we define $\varphi^{k}= \bigwedge_{\sigma
    \in \Gamma_1 \cap \mathcal{L}^{\leqslant k}} \sigma \wedge
  \bigwedge_{\sigma \in \overline{\Gamma_1} \cap \mathcal{L}^{\leqslant
      k}} \neg \sigma \wedge \bigwedge_{i \in I} a_i\varphi^{k}_i$. Now, if we
  decompose $\varphi$ as $\varphi_I \wedge \varphi_{II}$ (taking
  $\varphi_{II}=\bigwedge_{i\in I}a_i\varphi_i$, and analogously for the set of
  approximations) we have that $p\models \varphi^{k} $ iff $p\models
  \varphi^{k}_I$ and $p\models \varphi^{k}_{II}$. If $p\models \varphi^{k}$
  then $p\models \varphi^{k}_I$ for all $k \in \nat$ and arguing as in
  the base case above we conclude that $p\models \varphi_I$. Any image-finite
  process $p$ can be decomposed as $p=\sum_{a_i \in Act} \sum_{j=1}^{m_i}
  a_i^{j}p_i^{j}$, and we have $p\models \varphi^{k}_{II}$ iff for all $i$ there
  exists $j$ with $a_i=a_i^{j}$ and $p_i^{j} \models \varphi_i^{k}$. Then, if
  $p\models \varphi_{II}^{k}$ for all $k \in \nat$, for each $i$ there
  exists some $j \in 1..m_i$ such that  $p_i^{j} \models \varphi_i^{k}$
  for infinitely many $k$, but this means that $p_i^{j} \models
  \varphi_i^{k}$ for all $k \in \nat$ and then, by  the induction
  hypothesis, $p_i^{j} \models \varphi_i$ thus getting $p\models
  \varphi$. 
\end{iteMize}
\end{proof}

\begin{defi}
For each $N \in \{U,C,I,T,S\}$ and each $\mathcal{Y_N} \in \{NS, \leq_N^{l}, \leq_N^{l\supseteq}, \leq_N^{lf}, \leq_N^{lf\supseteq}, \leq_N^{l\subseteq}, \leq_N^{lf\subseteq},$ $D_N\}$ in the spectrum, we define the finite logic for the semantics $\mathcal{L}^f_{\mathcal{Y_N}}$ as $\mathcal{L}^\prime_{\mathcal{Y_N}}\cap\mathcal{L}^f_{HM}$.
\end{defi}

\begin{cor} \label{cor_aprox}
For each $N \in \{U,C,I,T,S\}$ and each $\mathcal{Y_N} \in \{NS, \leq_N^{l}, \leq_N^{l\supseteq}, \leq_N^{lf}, \leq_N^{lf\supseteq}, \leq_N^{l\subseteq}, \leq_N^{lf\subseteq},$ $D_N\}$ in the spectrum, if we restrict ourselves to the set of image-finite processes we have $\mathcal{L}^f_{\mathcal{Y_N}}\sim \mathcal{L}^\prime_{\mathcal{Y_N}}$.
\end{cor}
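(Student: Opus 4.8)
The plan is to sandwich $\mathcal{L}^f_{\mathcal{Y_N}}$ between $\mathcal{L}^\prime_{\mathcal{Y_N}}$ and its finite complete normal forms, and conclude that the three sets induce the same preorder. One inclusion is immediate: since $\mathcal{L}^f_{\mathcal{Y_N}}=\mathcal{L}^\prime_{\mathcal{Y_N}}\cap\mathcal{L}^f_{HM}\subseteq\mathcal{L}^\prime_{\mathcal{Y_N}}$, and a larger logic induces a finer preorder, we have ${\sqsubseteq_{\mathcal{L}^\prime_{\mathcal{Y_N}}}}\subseteq{\sqsubseteq_{\mathcal{L}^f_{\mathcal{Y_N}}}}$. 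All the content is in the converse inclusion, which I would prove only for image-finite processes.

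So I would start from image-finite $p,q$ with $p\sqsubseteq_{\mathcal{L}^f_{\mathcal{Y_N}}} q$, fix a formula $\varphi\in\mathcal{L}^\prime_{\mathcal{Y_N}}$ with $p\models\varphi$, and aim to show $q\models\varphi$. By Theorem~\ref{equivalencia_logica_normal}, $\varphi$ is logically equivalent to a disjunction of formulas of $\mathcal{N_{Y_N}(L_N^{\prime \prime})}$, and each such normal formula is, in turn, logically equivalent to a disjunction of complete normal formulas in $\mathcal{CN_{Y_N}(L_N^{\prime \prime})}$: this last step is precisely the move used in the proofs of Theorem~\ref{teoequiv}, replacing a partially specified local observation by the disjunction of all the complete observations compatible with it and floating the disjunctions to the top, which is sound by Proposition~\ref{disyuncion_teo}. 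Hence $\varphi\sim\bigvee_{j\in J}\psi_j$ with each $\psi_j\in\mathcal{CN_{Y_N}(L_N^{\prime \prime})}$, and from $p\models\varphi$ we get $p\models\psi_{j_0}$ for some $j_0\in J$. Now I would apply Theorem~\ref{aproximacion} to $\psi_{j_0}$: over image-finite processes, $r\models\psi_{j_0}$ iff $r\models\psi_{j_0}^{k}$ for every $k\in\nat$, where the $\psi_{j_0}^{k}$ are finite formulas using only finite conjunction, hence members of $\mathcal{L}^f_{HM}$. The key additional observation I need is that each $\psi_{j_0}^{k}$ still lies in $\mathcal{L}^\prime_{\mathcal{Y_N}}$, and therefore in $\mathcal{L}^f_{\mathcal{Y_N}}$. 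Granting this, the argument closes: $p\models\psi_{j_0}$ gives $p\models\psi_{j_0}^{k}$ for all $k$; since $p\sqsubseteq_{\mathcal{L}^f_{\mathcal{Y_N}}} q$ and $\psi_{j_0}^{k}\in\mathcal{L}^f_{\mathcal{Y_N}}$, also $q\models\psi_{j_0}^{k}$ for all $k$; by the converse direction of Theorem~\ref{aproximacion}, $q\models\psi_{j_0}$; hence $q\models\bigvee_{j}\psi_j$ and finally $q\models\varphi$. This yields ${\sqsubseteq_{\mathcal{L}^f_{\mathcal{Y_N}}}}\subseteq{\sqsubseteq_{\mathcal{L}^\prime_{\mathcal{Y_N}}}}$, and with the first inclusion the two preorders coincide, i.e. $\mathcal{L}^f_{\mathcal{Y_N}}\sim\mathcal{L}^\prime_{\mathcal{Y_N}}$.

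The hard part is exactly the bracketed claim: that the finite approximants $\psi_{j_0}^{k}$ of a complete normal formula remain inside the \emph{constrained} logic $\mathcal{L}^\prime_{\mathcal{Y_N}}$ and not merely inside $\mathcal{L}^f_{HM}$. Each $\psi_{j_0}^{k}$ is obtained from the normal formula $\psi_{j_0}$ by retaining only finitely many conjuncts at each level and by recursively replacing the subformulas below the prefixes with their own approximants, so I would establish this by induction on the normal depth, reading off from Tables~\ref{our_constraint_table} and \ref{our_logic_table} that every conjunction clause defining $\mathcal{L}^\prime_{\mathcal{Y_N}}$ quantifies over an arbitrary index set and is hence closed under passing to sub-conjunctions (finite ones in particular), and that the prefixing clause is insensitive to which equivalent formula sits underneath the prefix. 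Once this closure property is secured the remainder is a routine chaining of Theorem~\ref{equivalencia_logica_normal}, Proposition~\ref{disyuncion_teo} and Theorem~\ref{aproximacion}, so the only real risk is in keeping the normal-form bookkeeping straight across the several column families of Table~\ref{our_logic_table}.
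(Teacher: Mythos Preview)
Your approach is essentially the paper's: reduce to complete normal formulas and invoke Theorem~\ref{aproximacion}. The paper's proof is just a one-liner to that effect. However, you overlook the one subtlety the paper singles out explicitly.

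The gap is in the step ``each $\psi_{j_0}^{k}$ still lies in $\mathcal{L}^\prime_{\mathcal{Y_N}}$, and therefore in $\mathcal{L}^f_{\mathcal{Y_N}}$''. Theorem~\ref{aproximacion} guarantees only that the \emph{normal-form skeleton} of $\psi_{j_0}^{k}$ uses finite conjunctions: the atoms $\sigma$ it retains are drawn unchanged from $\mathcal{L}^{\prime\prime}_N$. For $N\in\{U,C,I,T\}$ those atoms are finite formulas, so $\psi_{j_0}^{k}\in\mathcal{L}^f_{HM}$ and your argument goes through. But for $N=S$ the constraint logic $\mathcal{L}^{\prime}_S$ is the full simulation logic, closed under arbitrary conjunction; a single $\sigma\in\mathcal{L}^{\prime\prime}_S$ can itself be infinitary, and then $\psi_{j_0}^{k}\notin\mathcal{L}^f_{HM}$, hence $\psi_{j_0}^{k}\notin\mathcal{L}^f_{\mathcal{Y}_S}$. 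Your closure-under-subconjunction argument only controls the outer shape and does not touch this.

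The paper's fix is to apply Theorem~\ref{aproximacion} \emph{twice}: first at the $U$-layer to obtain $\mathcal{L}^{\prime}_S\sim\mathcal{L}^f_S$ (hence $\mathcal{CN}(\mathcal{L}^{\prime\prime}_S)\sim\mathcal{CN}(\mathcal{L}^f_S)$), so that one may take the constraint atoms from $\mathcal{L}^f_S$; then a second application at the $S$-layer yields approximants that are genuinely in $\mathcal{L}^f_{HM}$. You should add this recursive step for $N=S$; the rest of your write-up is fine.
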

\begin{proof}
We only need to apply Theorem \ref{aproximacion}. The only non trivial case is when $N=S$, where we have to apply twice the Theorem, using also the fact that $CN(\mathcal{L}^{\prime\prime}_S)\sim CN(\mathcal{L}^f_S)$, because $\mathcal{L}^{\prime\prime}_S\sim\mathcal{L}^f_S$.
\end{proof}

\begin{thm}\label{isomorfia_observaciones}
For each $N \in \{U,C,I,T,S\}$ and each $\mathcal{Y_N} \in \{NS, \leq_N^{l}, \leq_N^{l\supseteq}, \leq_N^{lf}, \leq_N^{lf\supseteq}, \leq_N^{l\subseteq}, \leq_N^{lf\subseteq}, D_N\}$ in the spectrum there exists a correspondence between the set of complete normal formulas $\mathcal{CN_{Y_N}(L_N^{\prime \prime})}$ and the corresponding domain of observations $\Omega GO_N$ with $\Omega \in \{B,L\}$. This correspondence $\leftrightarrow$ satisfies that $\varphi \leftrightarrow \theta$ implies that $(p \models \varphi$ iff $\theta \in \Omega GO_N(p))$.
Moreover:
\begin{enumerate}[\em(1)]
\item The set of complete normal formulas $\mathcal{CN_{NS}(L_N^{\prime \prime})}$ (resp. $\mathcal{CN_{D_N}(L_N^{\prime \prime})}$) and the domain of branching general observations $BGO_N$ (resp. $dBGO_N$) are isomorphic, that is, $\leftrightarrow$ is one to one.
\item The set of complete normal formulas $\mathcal{CN}_{\leq_N^{l}}\mathcal{(L_N^{\prime \prime})}$, $\mathcal{CN}_{\leq_N^{l\supseteq}}\mathcal{(L_N^{\prime \prime})}$ and the domain of linear general observations $LGO_N$ are isomorphic, that is, $\leftrightarrow$ is one to one.
\item The set of complete normal formulas $\mathcal{CN}_{\leq_N^{lf}}\mathcal{(L_N^{\prime \prime})}$ (resp. $\mathcal{CN}_{\leq_N^{lf\supseteq}}\mathcal{(L_N^{\prime \prime})}$) and the quotient domain $LGO_N/ _{\simeq_{N}^{lf}}$ (resp. $LGO_N/ _{\simeq_{N}^{lf\supseteq}}$) are isomorphic, that is, $\leftrightarrow^{-1}$ is injective and $\varphi \leftrightarrow \theta$ iff $\theta \simeq_N^{lf\supseteq} \theta_{\varphi}$, for some adequate $\theta_{\varphi}$.
\end{enumerate}
\end{thm}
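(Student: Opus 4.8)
The plan is to generalize the argument in the proof of Theorem~\ref{teoequiv}(1): build the correspondence $\leftrightarrow$ by recursion on the normal depth of complete normal formulas and then verify its properties by structural induction. Fix $N$ and $\mathcal{Y_N}$. A formula $\varphi\in\mathcal{CN_{Y_N}(L_N^{\prime\prime})}$ has the shape $(\bigwedge_{\sigma\in\Gamma_1}\sigma \wedge \bigwedge_{\sigma\in\ol{\Gamma_1}}\neg\sigma)\wedge \bigwedge_{i\in I}a_i\varphi_i$, with each $\varphi_i$ again complete normal (and, in the linear cases, $|I|\le 1$; in the $D_N$ case, $a_i\neq a_j$ for $i\neq j$; in the positive-closure cases the negated block is replaced by the corresponding positive block). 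First I would use that $\mathcal{L}^{\prime}_N$ characterizes the constraint, i.e. $L_N(p)=L_N(q)\iff pNq$ as noted after Definition~\ref{local-observations:def}, so that the local conjunction $\bigwedge_{\sigma\in\Gamma_1}\sigma \wedge \bigwedge_{\sigma\in\ol{\Gamma_1}}\neg\sigma$ is satisfied exactly by those $p$ with one fixed value $l=L_N(p)\in L_N$ (for satisfiable $\Gamma_1$). This lets me set $\theta_\varphi = \langle l,\{(a_i,\theta_{\varphi_i})\mid i\in I\}\rangle$, which lies in $BGO_N$ (resp.\ $dBGO_N$; resp., for the coarser linear semantics, is read up to $\simeq_N^{lf}$ or $\simeq_N^{lf\supseteq}$, since there the local block only constrains the terminal node).

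Next I would prove the defining implication $\varphi \leftrightarrow \theta \Rightarrow (p\models\varphi \iff \theta\in\Omega GO_N(p))$ by induction on the normal depth. In the base case $\varphi$ is just the local conjunction and $\theta=\langle l,\emptyset\rangle$; then $p\models\varphi \iff L_N(p)=l \iff \langle l,\emptyset\rangle\in\Omega GO_N(p)$ by the base clause of the definition of $\Omega GO_N(p)$ (Definitions~\ref{bgo:def}, \ref{lbo:def}, \ref{branch_obs_def}). In the inductive step, $p\models\varphi$ iff $L_N(p)=l$ and for every $i\in I$ there is $p\tran{a_i}p_i$ with $p_i\models\varphi_i$; by the induction hypothesis this is $\theta_{\varphi_i}\in\Omega GO_N(p_i)$, and collecting these labelled children is exactly the recursive clause defining $\theta\in\Omega GO_N(p)$. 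For the readiness- and failures-style semantics I would instead phrase membership through the closures of Definition~\ref{closures:def}: $p\models\varphi \iff \theta_\varphi\in \ol{\textit{LGO}_N(p)}^{f}$ (resp.\ $\ol{\textit{LGO}_N(p)}^{f\supseteq}$), which is the correct reading of ``$\theta\in\Omega GO_N(p)$'' in the quotient $LGO_N/{\simeq_N^{lf}}$ (resp.\ $/{\simeq_N^{lf\supseteq}}$); here I would reuse Propositions~\ref{closures:prop}, \ref{prop_contenido} and \ref{failures-ftraces:prop}.

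For the bijectivity statements (1) and (2) I would verify injectivity and surjectivity of $\leftrightarrow$ directly. Injectivity: distinct complete normal formulas yield trees differing either in a node label (the local conjunction is uniquely recoverable from $l$, since $\Gamma_2=\ol{\Gamma_1}$ leaves no freedom) or in the set of labelled children, so $\theta_\varphi$ determines $\varphi$. Surjectivity: given $\theta\in BGO_N$ (resp.\ $dBGO_N$, $LGO_N$), its root label $l$ equals $L_N(r)$ for some process $r$, hence the set of formulas of $\mathcal{L}^{\prime}_N$ true at such an $r$ yields a satisfiable complete local conjunction; recursing on the children of $\theta$ (which in the $D_N$ case carry distinct actions, matching the logic's restricted conjunction, and in the linear case form a chain) produces a complete normal formula mapping to $\theta$. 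For (3) the map is onto $LGO_N/{\simeq_N^{lf}}$ (resp.\ $/{\simeq_N^{lf\supseteq}}$) and $\leftrightarrow^{-1}$ is injective by the same argument applied to representatives, with ``$\varphi\leftrightarrow\theta$ iff $\theta\simeq_N^{lf\supseteq}\theta_\varphi$'' being built into the construction.

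I expect the main obstacle to be case (3), the quotient cases: one must pin down precisely what ``$\theta\in\Omega GO_N(p)$'' means when $\theta$ is a $\simeq_N^{lf}$- or $\simeq_N^{lf\supseteq}$-class. The clean resolution is to use $\ol{\,\cdot\,}^{f}$ and $\ol{\,\cdot\,}^{f\supseteq}$ so that membership of a class amounts to membership of $\theta_\varphi$ in the closed set $\textit{LGO}_N^{f}(p)$ (resp.\ $\textit{LGO}_N^{f\supseteq}(p)$), and then to show that a complete normal formula of the readiness/failures shape, which fixes only the terminal local observation (exactly, resp.\ up to $\supseteq$), sees exactly that closed set. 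The cases $N=U,C$ are degenerate (the local information collapses, so the linear logics of the layer coincide and the correspondence reduces to the classical traces / completed traces one), the case $N=S$ needs the construction applied twice, exactly as in Corollary~\ref{cor_aprox}, and the partial-offer semantics $\leq_N^{l\subseteq},\leq_N^{lf\subseteq}$ are handled symmetrically, replacing the negative/symmetric closures and ``$\supseteq$'' by the positive closure and partial offers ``$\subseteq$''.
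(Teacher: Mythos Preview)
Your approach is essentially the paper's: define $\theta_\varphi$ by recursion on normal depth, using that $\mathcal{L}^{\prime}_N$ pins down a unique $l\in L_N$ at each node, and verify the satisfaction equivalence by structural induction. For cases (1), (3), and the $\leq_N^{l}$ half of (2) this is fine and matches the paper almost verbatim.

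There is, however, a real gap in your treatment of $\leq_N^{l\supseteq}$ (and, dually, $\leq_N^{l\subseteq}$) in case~(2). You write that the local block $\bigwedge_{\sigma\in\Gamma_1}\sigma\wedge\bigwedge_{\sigma\in\ol{\Gamma_1}}\neg\sigma$ ``is satisfied exactly by those $p$ with one fixed value $l=L_N(p)$''. But in $\mathcal{L}^{\prime}_{\leq_N^{l\supseteq}}$ only the \emph{negative} closure $\mathcal{L}_N^{\neg}$ is available, so a complete normal formula there has the shape $(\top\wedge\bigwedge_{\sigma\notin\Gamma}\neg\sigma)\wedge a\varphi'$: there is no positive block, and the set of $p$ satisfying this local part is not a single $L_N$-class but an upward-closed family. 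Your construction of $\theta_\varphi$ therefore does not produce a well-defined $l$, and the claimed equivalence $p\models\varphi\iff\theta_\varphi\in LGO_N(p)$ fails as stated (you only get one implication). The paper resolves this by first closing the set $\Gamma$ under derivability to obtain $\Gamma'$, and then assigning to $\varphi$ the \emph{unique} local observation $l$ that satisfies every formula in $\ol{\Gamma'}$ and none in $\Gamma'$; the bijection with $LGO_N$ then goes through, and soundness is recovered because for the failure-trace order one may close $LGO_N(p)$ upwards with respect to $\subseteq$ without altering $\leq_N^{l\supseteq}$ (this is exactly the point flagged in the Remark following the theorem). You need to insert this step; the symmetric fix handles $\leq_N^{l\subseteq}$ using the positive closure.
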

\begin{proof}
\hfill
\begin{enumerate}[(1)]
\item
As can be seen in Figure~\ref{labeledtree-in-proof},
a branching observation is a labeled tree whose nodes are local observations
and whose arcs are labeled by actions.  

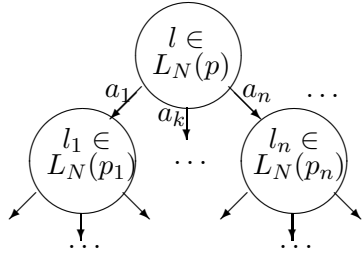
\begin{figure}
\begin{picture}(110,90)
\put(46,78){$l \in$} \put(40,68){$L_N(p)$} \put(84,40){$l_n \in$} \put(78,30){$L_N(p_n)$} \put(6,40){$l_1 \in$} \put(0,30){$L_N(p_1)$}
\put(22,59){$a_1$} \put(42,50){$a_k$}  \put(74,59){$a_n$}
\put(8,2){$\ldots$} \put(48,34){$\ldots$}  \put(88,2){$\ldots$}   \put(98,59){\normalsize{$\ldots$}}
\put(50,75){ \circle{40}}
\put(36,63){\vector(-1,-1){12}}  \put(53,55){\vector(0,-1){12}}  \put(69,63){\vector(1,-1){12}}
\put(90,35){ \circle{40}}
\put(76,23){\vector(-1,-1){10}}   \put(93,15){\vector(0,-1){10}}   \put(109,23){\vector(1,-1){10}}
\put(10,35){ \circle{40}}
\put(-4,23){\vector(-1,-1){10}}   \put(13,15){\vector(0,-1){10}}   \put(29,23){\vector(1,-1){10}}
\end{picture}
\caption{A branching observation.}\label{labeledtree-in-proof} 
\end{figure}

The general form of any complete normal formula in $\mathcal{CN_{NS}(L_N)}$ is
$(\bigwedge_{\sigma\in \Gamma}\sigma \wedge \bigwedge_{\sigma \notin \Gamma}
\neg \sigma) \wedge \bigwedge_{i\in I}a_i \varphi_i$, with $\varphi_i \in
\mathcal{CN_{NS}(L_N)}$ for all $i \in I$.  
Since the language $\mathcal{L}^{\prime}_N$ characterizes the semantics used to
get the local observations, we can associate to each complete formula
$(\bigwedge_{\sigma\in \Gamma}\sigma  \wedge \bigwedge_{\sigma \notin \Gamma}
\neg \sigma)$ the corresponding local observation $l \in L_N$. 
Then, by structural induction, we obtain the observation associated to
each formula $\varphi_i \in \mathcal{CN_{NS}(L_N)}$, thus getting the branching
general observation $BGO_N$ associated to the given formula. It is easy to see
that this correspondence is indeed a bijection. 

The case for $\mathcal{CN_{D_N}(L_N)}$ is analogous, but now it is not allowed to have repeated actions in the arcs leaving any node of an observation; this is obviously reflected in the form of the formulas in the corresponding language.

\item The case for $\mathcal{CN}_{\leq_N^{l}}\mathcal{(L_N)}$ is similar to the
  previous one, but now the obtained (degenerated) tree is just a single
  branch corresponding to a $lgo$ in $LGO_N$.  

For $\mathcal{CN}_{\leq_N^{l\supseteq}}\mathcal{(L_N)}$ the general form of  a
complete normal formula is $\varphi=(\top \wedge \bigwedge_{\sigma \notin \Gamma}
\neg \sigma) \wedge a\varphi^{\prime}$, with $\varphi^{\prime}\in
\mathcal{CN}_{\leq_N^{l\supseteq}}\mathcal{(L_N)}$. If we close the set
$\Gamma$ by derivability obtaining $\Gamma^{\prime}$ and then consider its
complement $\overline{\Gamma^{\prime}}$, we can consider the local observation
\textit{l} that satisfies all the formulas in $\overline{\Gamma^{\prime}}$ and
none in $\Gamma^{\prime}$. The linear general observation \textit{lgo}
corresponding to $\varphi$ is then recursively defined as $\langle
l,\{(a,lgo^{\prime})\}\rangle$ where $lgo^{\prime}$ is the linear general
observation corresponding to $\varphi^{\prime}$. 
 
To proceed in the opposite direction, we just need to take as $\Gamma$ the
complement of the set of formulas in $\mathcal{L}^{\prime}_N$ satisfied by the
local observation \textit{l} at the root of the given $LGO_N$, and then 
proceed in a recursive way. 

\item In this case, the general form of a complete normal formula in
  $\mathcal{CN}_{\leq_N^{lf}}\mathcal{(L_N)}$ is $\varphi=\top \wedge a_1(\ldots
  (\top \wedge a_{n-1}(\top \wedge a_n(\bigwedge_{\sigma\in \Gamma}\sigma
  \wedge \bigwedge_{\sigma \notin \Gamma} \neg \sigma)\ldots)$. Now we
  establish a correspondence between the set of local observations $L_N$ and
  the sets $\Gamma \subseteq \mathcal{L} _N$ as done in cases (1) and (2) above,
  and then define the correspondence $\leftrightarrow$ by ignoring the
  values of all the intermediate local observations in the considered $lgo$,
  keeping only the local observation at the end. 

For $\mathcal{CN}_{\leq_N^{lf\supseteq}}\mathcal{(L_N)}$ we just need to apply the same procedure above combined with the ideas along the proof for $\mathcal{CN}_{\leq_N^{l\supseteq}}\mathcal{(L_N)}$.\qedhere
\end{enumerate}
\end{proof}

\begin{rem}
It came as a surprise to notice that the $lgo's$ in $LGO_N$ are in a
bijective relation both with the complete normal formulas in
$\mathcal{N}_{\leq_N^{l}}\mathcal{(L_N)}$ and those in
$\mathcal{N}_{\leq_N^{l\supseteq}}\mathcal{(L_N)}$, so let us consider the case
$N=I$ to explain this fact. A $\textit{cnf}$ in
$\mathcal{N}_{\leq_I^{l}}\mathcal{(L_I)}$ specifies the corresponding local
observation $I(p)\subseteq \mathcal{P}(Act)$ by means of a formula
$(\bigwedge_{\sigma \in \Gamma}\sigma \wedge \bigwedge_{\sigma \notin \Gamma}
\neg \sigma$), where the formulas in $\Gamma$ are just the elements of the
corresponding set $I(p)$ while those in $\overline{\Gamma}$ correspond to its
complement. 
When considering the failure trace semantics, the formulas in
$\mathcal{N}_{\leq_I^{l\supseteq}}\mathcal{(L_I)}$ only contain the 
part $\bigwedge_{\sigma \notin \Gamma}\neg \sigma$ corresponding to the complement
$\overline{I(p)}$. Since in this case the
sets of \textit{lgo}'s could be assumed to be closed with respect
to the order $\leq_N^{l\supseteq}$ in Definition~\ref{obs:def}, soundness is
retained after ``assuming'' that any formula $\bigwedge_{\sigma \notin
  \Gamma}\neg \sigma$ ``generates'' the observation associated to $\Gamma$,
even though some of the formulas $\sigma \in \Gamma$ may not be
satisfied when the corresponding $I(p)$ is smaller. But
for the failures and failure trace semantics we can proceed by closing the set
of offers upwards with respect to $\subseteq$ and no new failure is introduced. 
\end{rem}

\begin{thm}
For each $N \in \{U,C,I,T,S\}$ and each $\mathcal{Y_N} \in \{NS, \leq_N^{l}, \leq_N^{l\supseteq}, \leq_N^{lf}, \leq_N^{lf\supseteq}, \leq_N^{l\subseteq}, \leq_N^{lf\subseteq}, D_N\}$ in the spectrum, if we restrict ourselves to image-finite processes,
the logical semantics   $\sqsubseteq^{f}_{\mathcal{Y_N}}$ induced by the
logic $\mathcal{L}^{f}_{\mathcal{Y_N}}$, is equivalent to the corresponding observational semantics in
Definitions~\ref{bgo:def}, \ref{lbo:def} and \ref{branch_obs_def}.  
In order to unify our notation, here we will denote by $GO_N$ the corresponding
semantic domain. 
\end{thm}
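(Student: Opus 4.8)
The plan is to reduce this statement to the results already assembled in the paper, rather than prove anything from scratch. The chain of equivalences we want is: the finite logic $\mathcal{L}^{f}_{\mathcal{Y_N}}$ characterizes the same semantics as the full logic $\mathcal{L}^{\prime}_{\mathcal{Y_N}}$ (Corollary~\ref{cor_aprox}), which in turn is characterized by the complete normal formulas $\mathcal{CN_{Y_N}(L_N^{\prime\prime})}$, and these are in the observation-preserving correspondence with the domain $GO_N$ established in Theorem~\ref{isomorfia_observaciones}. So first I would invoke Corollary~\ref{cor_aprox} to replace $\mathcal{L}^{f}_{\mathcal{Y_N}}$ by $\mathcal{L}^{\prime}_{\mathcal{Y_N}}$; this is where image-finiteness is genuinely used, via Theorem~\ref{aproximacion}, and the only subtlety (already flagged in the proof of the corollary) is the case $N=S$, where the approximation must be applied twice because the local observations are themselves simulation-equivalence classes. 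Second, I would pass from $\mathcal{L}^{\prime}_{\mathcal{Y_N}}$ to the normal formulas via Theorem~\ref{equivalencia_logica_normal}, and then from normal to complete normal formulas by the same disjunction-floating argument (Proposition~\ref{disyuncion_teo}): a normal formula that specifies only partial local information at a node can be rewritten as a finite disjunction of complete normal formulas that fill in all the compatible local observations, exactly as in the proofs of Proposition~\ref{equiv_entre_logicas} and Theorem~\ref{teoequiv}.

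Third, with $\mathcal{CN_{Y_N}(L_N^{\prime\prime})}$ in hand I would use the correspondence $\leftrightarrow$ of Theorem~\ref{isomorfia_observaciones}, whose defining property is precisely $\varphi\leftrightarrow\theta \Rightarrow (p\models\varphi \iff \theta\in GO_N(p))$. From this it follows directly that $p\sqsubseteq_{\mathcal{CN_{Y_N}}} q$ iff $GO_N(p)\preceq GO_N(q)$ in the appropriate order: in the branching and $D_N$ cases and the two $\leq_N^{l}$-type cases the correspondence is a bijection, so $p\models\varphi\Rightarrow q\models\varphi$ for all $\varphi$ translates into $GO_N(p)\subseteq GO_N(q)$; in the $\leq_N^{lf}$ and $\leq_N^{lf\supseteq}$ cases (and their $\subseteq$-duals) $\leftrightarrow$ factors through the quotient $GO_N/_{\simeq}$, so logical implication translates into inclusion of the closed/quotiented observation sets, which is exactly $\leq_N^{l\delta}$ by Proposition~\ref{prop_contenido}. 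I would treat the branching case ($\mathcal{Y_N}=NS$, domain $BGO_N$), the deterministic case ($D_N$, domain $dBGO_N$), and the four linear cases uniformly, remarking that the two new partial-offer semantics $\leq_N^{l\subseteq}, \leq_N^{lf\subseteq}$ are handled by the same argument with the positive closure $\mathcal{L}_N^{\surd}$ in place of the negative/symmetric one.

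Finally I would chain the three equivalences: $\mathcal{L}^{f}_{\mathcal{Y_N}}\sim\mathcal{L}^{\prime}_{\mathcal{Y_N}}\sim\mathcal{CN_{Y_N}(L_N^{\prime\prime})}$ and the latter induces $GO_N$-inclusion, hence $\sqsubseteq^{f}_{\mathcal{Y_N}}$ coincides with $\leq_N^{b}$, $\leq_N^{db}$, $\leq_N^{l}$, $\leq_N^{l\delta}$ respectively, as in Definitions~\ref{bgo:def}, \ref{lbo:def}, \ref{branch_obs_def} and \ref{obs:def}. The main obstacle, and the place where real care is needed, is the normal-to-complete-normal step: one must check that floating disjunctions outward never leaves the syntactic fragment $\mathcal{L}^{\prime}_{\mathcal{Y_N}}$ — i.e. that the distributivity of $\bigvee$ over $\bigwedge$ and over the prefix, together with the fact that negation is only ever applied to the constraint subformulas, keeps every intermediate formula of the allowed shape — and that completing the local observations is sound, which is where the hypothesis $p N q \iff L_N(p)=L_N(q)$ (used throughout Section~\ref{observational-sem-sec}) is silently invoked. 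Everything else is bookkeeping over the eight shapes of $\mathcal{Y_N}$ and the five constraints $N$, and I would state it once generically rather than case by case.
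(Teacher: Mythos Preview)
Your proposal is correct and follows essentially the same route as the paper's own proof: chain Corollary~\ref{cor_aprox}, Theorem~\ref{equivalencia_logica_normal}, the normal-to-complete-normal reduction via disjunction floating (Proposition~\ref{disyuncion_teo}), and Theorem~\ref{isomorfia_observaciones}. The only difference is cosmetic ordering (the paper applies Corollary~\ref{cor_aprox} last rather than first), and you spell out a few details---the quotient handling for the $lf$ and $lf{\supseteq}$ cases via Proposition~\ref{prop_contenido}, and the dual treatment of the partial-offer semantics---that the paper leaves implicit.
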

\begin{proof}
By Theorem~\ref{equivalencia_logica_normal},
$\mathcal{L}^{\prime}_{\mathcal{Y_N}} \sim \mathcal{N_{Y_N}(L_N)}$, and from
Theorem~\ref{isomorfia_observaciones} we get the isomorphism between the set
$\mathcal{CN_{Y_N}(L_N)}$ and the corresponding set of general observations
$GO_N$. 

To finish the proof, we just need to show that $\mathcal{N_{Y_N}(L_N)}$ and
$\mathcal{CN_{Y_N}(L_N)}$ are equivalent.  Any consistent formula  in
$\mathcal{N_{Y_N}(L_N)}$ ($\Gamma_1 \bigcap \Gamma_2= \emptyset$) provides only
some partial information about the states in a computation, so that the
concrete values of these states are characterized by a set $\Gamma$ with
$\Gamma_1\subseteq \Gamma \subseteq \overline{\Gamma_2}$. Therefore, we can
replace $\Gamma_1$ and $\Gamma_2$ by $\Gamma$ and $\overline{\Gamma}$,
respectively, adding the disjunction over all the possible values of $\Gamma$,
to characterize the set of processes specified by the formula. Now it is enough
to float the disjunction up to obtain a disjunction of formulas in
$\mathcal{CN_{Y_N}(L_N)}$, and applying Proposition~\ref{disyuncion_teo} we get the equivalence between the two sets of
formulas.  Finally, we only need to apply Corollary \ref{cor_aprox} to conclude.
\end{proof}

\begin{cor}
\hfill
\begin{enumerate}[\em(1)]
\item The unified logical semantics in Definition~\ref{logica_sim} is equivalent to the $N$-simulation semantics.
\item The unified logical semantics in Definition~\ref{linear_logica}.1 is equivalent to the $N$-ready trace semantics.
\item The unified logical semantics in Definition~\ref{linear_logica}.2 is equivalent to the $N$-failure trace semantics.
\item The unified logical semantics in Definition~\ref{linear_logica}.3 is equivalent to the $N$-readiness semantics.
\item The unified logical semantics in Definition~\ref{linear_logica}.4 is equivalent to the $N$-failure semantics.
\item The unified logical semantics in Definition~\ref{logica_det} is equivalent to the $N$-deterministic branched semantics.
\end{enumerate}
Moreover, if we restrict ourselves to image-finite processes we have also an equivalence with the corresponding finite logical semantics.
\end{cor}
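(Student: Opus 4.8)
The plan is to obtain each item by combining two facts that are already in place: first, that the semantics induced by the infinitary logic $\mathcal{L}^{\prime}_{\mathcal{Y_N}}$ coincides with the observational semantics living on the domain $GO_N$; and second, the identifications of Sections~\ref{observational-sem-sec} and \ref{linear:sec} of those observational orders with the classically named semantics. For the first fact I would reuse the ingredients of the preceding theorem's proof: by Theorem~\ref{equivalencia_logica_normal} we have $\mathcal{L}^{\prime}_{\mathcal{Y_N}}\sim\mathcal{N_{Y_N}(L_N^{\prime \prime})}$; then $\mathcal{N_{Y_N}(L_N^{\prime \prime})}\sim\mathcal{CN_{Y_N}(L_N^{\prime \prime})}$ by replacing each consistent pair $\Gamma_1,\Gamma_2$ by the disjunction over all $\Gamma$ with $\Gamma_1\subseteq\Gamma\subseteq\overline{\Gamma_2}$ and floating disjunctions up (Proposition~\ref{disyuncion_teo}); and finally the satisfaction-preserving correspondence $\varphi\leftrightarrow\theta$ of Theorem~\ref{isomorfia_observaciones} turns $\mathcal{CN_{Y_N}(L_N^{\prime \prime})}$ into $GO_N$. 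Hence $\sqsubseteq_{\mathcal{L}^{\prime}_{\mathcal{Y_N}}}$ is precisely $\leq^b_N$, $\leq^l_N$, $\leq^{l\supseteq}_N$, $\leq^{lf}_N$, $\leq^{lf\supseteq}_N$ or $\leq^{db}_N$ according to the choice of $\mathcal{Y_N}$, the $lf$ and $lf{\supseteq}$ cases using the quotient clause of Theorem~\ref{isomorfia_observaciones}.

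Then, for each item I would plug in the corresponding earlier result: (1) $\leq^b_N=\sqsubseteq_{\mathit{NS}}$ by Theorem~\ref{denotational-main:thm}; (2) $\leq^l_N$ is the $N$-ready-trace preorder (Proposition~\ref{traces-rtraces:prop}(2) in the classic case $N=I$, and by the very definition of the $N$-ready-trace semantics for the other constraints); (3) $\leq^{l\supseteq}_N$ is the $N$-failure-trace preorder and (5) $\leq^{lf\supseteq}_N$ is the $N$-failures preorder, both by Proposition~\ref{failures-ftraces:prop} for $N=I$ and by Definition~\ref{obs:def} for general $N$; (4) $\leq^{lf}_N$ is the $N$-readiness preorder, as observed just before Proposition~\ref{failures-ftraces:prop}; and (6) $\leq^{db}_N$ is the $N$-deterministic branching semantics, which for $N=I$ equals $\sqsubseteq_{PW}$ by Theorem~\ref{possible-worlds:thm}. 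For the \emph{moreover} clause I would invoke Corollary~\ref{cor_aprox}: on image-finite processes $\mathcal{L}^{f}_{\mathcal{Y_N}}\sim\mathcal{L}^{\prime}_{\mathcal{Y_N}}$, so the finite logic $\mathcal{L}^{f}_{\mathcal{Y_N}}=\mathcal{L}^{\prime}_{\mathcal{Y_N}}\cap\mathcal{L}^{f}_{HM}$ induces the same semantics there.

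I do not expect a substantial obstacle; this is genuinely a corollary and the work is bookkeeping. The two points that need care are, first, that for constraints $N\neq I$ the names ``$N$-ready trace'', ``$N$-failure trace'', ``$N$-readiness'', ``$N$-failures'' and ``$N$-deterministic branching'' are by convention the observational semantics of Section~\ref{observational-sem-sec}, so for those constraints nothing beyond the first fact above is needed, whereas for the classical instance $N=I$ one must cite the specific propositions; and second, that the deterministic branching item, and hence the possible-worlds case, together with the \emph{moreover} clause on finite logics, genuinely requires the restriction to image-finite processes even at the observational level, exactly as flagged in the remark following Theorem~\ref{possible-worlds:thm}, so those parts of the statement are to be read under that restriction (which is automatic in the default finite-BCCSP setting of Section~\ref{sec:Preliminaries}).
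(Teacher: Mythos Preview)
Your proposal is correct and matches the paper's own proof, which is a one-sentence argument: the preceding theorem's proof already established that each $\mathcal{L}^{\prime}_{\mathcal{Y_N}}$ coincides with the corresponding observational semantics on $GO_N$, and Section~\ref{observational-sem-sec} identifies each observational semantics with the named classical semantics, so the corollary follows. You have simply unpacked the citations the paper leaves implicit, and your two cautionary remarks (about the $N\neq I$ cases being definitional and about the image-finite restriction in the deterministic branching and \emph{moreover} clauses) are appropriate.
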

\begin{proof}
Since it was proved in Section~\ref{observational-sem-sec} that any
observational semantics characterizes the corresponding (classical) semantics
in the (extended) ltbt spectrum, the desired equivalence between our (unified)
logical characterizations and the classical semantics is an immediate corollary. 
\end{proof}

\section{On the real diamond structure}
\label{sec:RealDiamond}

This section is a practical proof of the suitability of our unification
work. Some recently proposed semantics that were not in the original
ltbt spectrum are nicely included in our extended spectrum, which shows why and
how the old spectrum has to be expanded. Our unified approach immediately
absorbs these new semantics and the results about the different
characterizations are easily extended to cover them. 
We warmly thank Roscoe for pointing out to us his work on the stable
revivals semantics \cite{ReedEtAl07,Roscoe09}, where an
endeavor for an adequate presentation of the notion of responsiveness for a 
CSP-like language is made.
(Responsiveness had been previously studied by Fournet et al. in 
\cite{FournetEtAl04} for CCS, under the name of stuck-freeness.)

When faced with the diamond shape of the collection of linear semantics that are associated to each simulation semantics in the
extended spectrum, it would be natural to expect it to reflect the structure of
a lattice.
Then, failure semantics would be the greatest lower bound of the 
readiness and failure
trace semantics, while ready trace semantics would be the corresponding lowest
upper bound.
However, both intuitions turn out to be wrong and a new semantics finer than failures
and another one coarser than ready trace can be found: together with readiness and failure trace, they do constitute a lattice.

Let us first consider the case of the lowest upper bound.
We postulate that the axiomatization of the associated semantics is obtained by instantiating our general axiom with
the conjunction of the two conditions $M_R$ and $M_{FT}$:
\[
M_{R\land FT}(x,y,w) \iff I(x) \supseteq I(y) \ \textrm{and}\ I(w)\subseteq I(y).
\]
We denote with $\sqsubseteq_{R\land FT}$ the order axiomatized by the 
corresponding axiom $(\textit{ND}^{R\land FT})$.
\begin{defi}
The \emph{readiness and failure trace semantics}, or \emph{join semantics $R\wedge FT$}, is that defined by the order
$\sqsubseteq_{R\land FT}$ generated by the set of
axioms $\{ \textrm{$B_1$--$B_4$}, (RS), (\textit{ND}^{R\land FT})\}$.
\end{defi}

\begin{prop}
The ready trace semantics is strictly finer than the readiness and failure
trace semantics.
\end{prop}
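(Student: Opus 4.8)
The plan is to prove the two halves of the claim separately, leaning on the axiomatic descriptions of the two preorders. For the inclusion $\sqsubseteq_{RT}\subseteq\sqsubseteq_{R\land FT}$ I would simply compare the side conditions of the two relevant instances of the generic axiom $(\textit{ND})$. By definition $M_{R\land FT}(x,y,w)$ is the conjunction of $M_R(x,y,w)$ and $M_{FT}(x,y,w)$, i.e.\ it asks for $I(x)\supseteq I(y)$ together with $I(w)\subseteq I(y)$; whereas $M_{RT}(x,y,w)$ asks for $I(x)=I(y)$ together with $I(w)\subseteq I(y)$, which visibly entails $M_{R\land FT}(x,y,w)$. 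Hence every instance of the conditional axiom $(\textit{ND}^\mathit{RT})$ is also an instance of $(\textit{ND}^{R\land FT})$, so $\{\textrm{$B_1$--$B_4$},(RS),(\textit{ND}^\mathit{RT})\}\vdash p\preceq q$ implies $\{\textrm{$B_1$--$B_4$},(RS),(\textit{ND}^{R\land FT})\}\vdash p\preceq q$. Combining this with the completeness of the first system for $\sqsubseteq_{RT}$ (Proposition~\ref{static-1:prop}(3)) and with the very definition of $\sqsubseteq_{R\land FT}$, we obtain $p\sqsubseteq_{RT}q\Rightarrow p\sqsubseteq_{R\land FT}q$.

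For strictness I would exhibit an explicit separating pair. Fix pairwise distinct actions $a,b,c,d,e\in\act$ and take $p=a(bd+be+c)$ and $q=a(bd+c)+a(be)$. On the one hand, the instance of $(\textit{ND}^{R\land FT})$ obtained with $x=bd+c$, $y=be$ and $w=\cero$ is legitimate, since $I(x)=\{b,c\}\supseteq\{b\}=I(y)$ and $I(w)=\emptyset\subseteq I(y)$; it yields $a((bd+c)+be)\preceq a(bd+c)+a(be+\cero)$, which after the usual rewriting with $B_1,B_2,B_4$ is exactly $p\preceq q$, so $p\sqsubseteq_{R\land FT}q$. On the other hand, using the observational characterization $\sqsubseteq_{RT}={\leq^l_I}$ of Proposition~\ref{traces-rtraces:prop}(2), the ready trace $\{a\}\,a\,\{b,c\}\,b\,\{e\}$ lies in $\textit{LGO}_I(p)$ but not in $\textit{LGO}_I(q)$: in $q$ the only state reachable by $a$ with offer $\{b,c\}$ is $bd+c$, whose unique $b$-successor has offer $\{d\}\neq\{e\}$, while the other $a$-branch of $q$ leads to a state with offer $\{b\}\neq\{b,c\}$. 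Hence $p\not\sqsubseteq_{RT}q$, and the inclusion is strict.

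The only slightly delicate point is the ready trace computation for $q$, where one must remember that a ready trace records a single transition at each step and that $\leq^l_I$ forces the intermediate offers to coincide \emph{exactly} (it is the relaxation to componentwise inclusion that would yield $\sqsubseteq_{FT}$ instead). With that kept straight the verification is routine; note also that, by the first half of the proof, such a witness is automatically below $q$ both for readiness and for failure traces, so the real content is simply finding a pair that is still separated by ready traces. I therefore do not expect a genuine obstacle beyond choosing a witness small enough to check by hand and unfolding the definitions of $M_{RT}$ and $M_{R\land FT}$.
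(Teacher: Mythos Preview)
Your proof is correct and follows essentially the same approach as the paper. For the inclusion you compare the side conditions and use completeness of the $RT$ axiomatization, exactly as the paper does; for strictness your concrete witness $p=a(bd+be+c)$, $q=a(bd+c)+a(be)$ is precisely the instance $B=e$, $B'=d$ of the paper's parameterized example $p=a(bB+bB'+c)$, $q=a(bB'+c)+abB$ with $I(B)\neq I(B')$, and you separate them by the same ready trace $\{a\}a\{b,c\}b\{e\}$.
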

\begin{proof}
${\sqsubseteq_{RT}} \subseteq {\sqsubseteq_{R\land FT}}$ is an immediate 
consequence of Proposition~\ref{static-1:prop} and the fact that condition
$M_{RT}$ implies both $M_R$ and $M_{FT}$, and hence also $M_{R\land FT}$.
To show that ${\sqsubseteq_{RT}} \not\subseteq {\sqsubseteq_{R\land FT}}$,
let us take $w = \cero$, $y = b$, and $x = bB' + c$; then we have:
\[
\underbrace{a(bB + bB' + c)}_p \sqsubseteq_{R\land FT}
                   \underbrace{a(bB' +c) + abB}_q
\]
but, if $I(B) \neq I(B')$,
\[
a(bB + bB' + c) \not\sqsubseteq_{RT} a(bB' +c) + abB
\]
because $\{a\}a\{b,c\} b I(B) \in 
\textit{ReadyTraces}(p)\setminus\textit{ReadyTraces}(q)$.
\end{proof}

It is clear that the readiness and failure trace semantics is finer than 
both the readiness and the failure trace semantics; to show that it is
actually the coarsest upper bound we need to prove that 
${\sqsubseteq_{R\land FT}}= {\sqsubseteq_R\cap \sqsubseteq_{FT}}$. Even if the axiom $(\textit{ND}^{R\land FT})$ was created with this goal in mind, this cannot be
easily shown using only algebraic arguments.
Instead, it is trivial to obtain the observational characterization of the 
desired semantics by gathering together the failure trace and the
ready observations.
Based on Definition~\ref{obs:def}, we can define the corresponding
order $\leq_N^{l{\supseteq} \land f}$ by taking 
\[
\calT \leq_N^{l{\supseteq} \land f} \calT' \iff
\calT \leq_N^{l{\supseteq}} \calT' \ \textrm{and}\ \calT\leq_N^{lf}\calT'.
\]

A direct characterization can be obtained as follows.
We combine both kinds of observations into a single family of decorated traces that we 
call \emph{failure trace with final ready sets}, by considering failure sets
all along the trace except at the end of it, where we introduce the corresponding
ready set. 

\begin{defi}
We define the order $\leq_N^{l{\supseteq}\land f}$ by
\[
\begin{array}{lll}
\calT\leq_N^{l{\supseteq}\land f} \calT'& \Longleftrightarrow&
\textrm{for all $X_0a_1\dots X_n\in\calT$
        there is some $Y_0a_1\dots Y_n\in\calT'$}\\
&&\textrm{with $X_n=Y_n$
          and $X_i\supseteq Y_i$, for $i\in 0..n-1$}.
\end{array}
\]
\end{defi}

\begin{prop}
The semantics defined by the order $\sqsubseteq_{R\land FT}$ coincides with that
defined by $\leq_I^{l{\supseteq}\land f}$ and is thus the lowest upper bound of the readiness
and failure trace semantics.
\end{prop}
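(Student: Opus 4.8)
The plan is to split the statement into two claims and handle them separately: (i) the coincidence $\sqsubseteq_{R\land FT} = {\leq_I^{l{\supseteq}\land f}}$, and (ii) that $\leq_I^{l{\supseteq}\land f}$ is the lowest upper bound of $\sqsubseteq_R$ and $\sqsubseteq_{FT}$. Claim (ii) is essentially immediate: by Proposition~\ref{failures-ftraces:prop} (and the discussion there) ${\leq_I^{l{\supseteq}}} = {\sqsubseteq_{FT}}$ and ${\leq_I^{lf}} = {\sqsubseteq_R}$, and $\leq_N^{l{\supseteq}\land f}$ was introduced precisely as ${\leq_N^{l{\supseteq}}}\cap{\leq_N^{lf}}$; hence ${\leq_I^{l{\supseteq}\land f}} = {\sqsubseteq_{FT}}\cap{\sqsubseteq_R}$ is by construction the largest preorder contained in both, i.e.\ the coarsest behaviour preorder refining both $\sqsubseteq_R$ and $\sqsubseteq_{FT}$, which is exactly the join in the refinement ordering. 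So all the real work is in claim (i), and the idea is to transport the soundness/completeness argument of Theorem~\ref{sandc:thm} to the constraint $Z = R\land FT$, with $M_{R\land FT} = M_R\wedge M_{FT}$, using the decorated traces (``failure traces with final ready sets'') in place of the closures $\ol{(\_)}^{l(Z)}$ employed there.

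For \emph{soundness} I would argue exactly as in the bulleted derivations of Section~\ref{rnoef:sec1} around page~\pageref{stronger-versions:pag}. The axioms $B_1$--$B_4$ and $(RS)$ hold in every semantics coarser than ready simulation, hence in both $\leq_I^{l{\supseteq}}$ and $\leq_I^{lf}$ and therefore in their intersection $\leq_I^{l{\supseteq}\land f}$ (and the single $(RS)$-witness serves both components at once). For $(\textit{ND}^{R\land FT})$, comparing $\textit{LGO}_I(a(x+y))$ with $\textit{LGO}_I(ax+a(y+w))$ as done there: the condition $I(x)\supseteq I(y)$ makes the observations of $a(x+y)$ coming from $x$ be matched, \emph{with the same final ready set}, by observations of $ax$; the condition $I(w)\subseteq I(y)$ makes those coming from $y$ be matched, \emph{with smaller intermediate offers}, by observations of $a(y+w)$; combined with $(RS)$ (used to pass from $a(x+y)$ to $a(x+y+w)$) this gives $a(x+y)\preceq ax+a(y+w)$ in the model, so the axiom is valid for $\leq_I^{l{\supseteq}\land f}$.

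For \emph{completeness} I would instantiate the head normal form $\hnfx{p}$ of the definition preceding Proposition~\ref{hnfx-aux:prop} at $Z = R\land FT$, $M_Z = M_{R\land FT}$, and prove the analogues of Propositions~\ref{hnfx-aux:prop} and \ref{hnfx-aux2:prop} verbatim. The constraint is only used in the case analysis inside the proof of Proposition~\ref{hnfx-aux2:prop}, and there one simply superimposes the $Z=R$ and $Z=FT$ subcases already carried out in the excerpt: from $\langle I(p_a^i)\rangle\in\ol{\textit{LGO}_I(q_a^j)}^{\,lf}$ one obtains $I(q_a^j)=I(p_a^i)$ and $I(q_a^{j_t}|_{I(p_a^i)})\subseteq I(q_a^j)$, i.e.\ $M_R(q_a^j,q_a^{j_t}|_{I(p_a^i)},q_a^{j_t}|_{\ol{I(p_a^i)}})$; and from $t\in\ol{\textit{LGO}_I(q_a^{j_t})}^{\,l{\supseteq}}$ one obtains $I(q_a^{j_t}|_{\ol{I(p_a^i)}})=\emptyset$, i.e.\ $M_{FT}(q_a^j,q_a^{j_t}|_{I(p_a^i)},q_a^{j_t}|_{\ol{I(p_a^i)}})$; together these give $M_{R\land FT}$ for the \emph{same} witness $q_a^{j_t}$. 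That $\hnfx{p}\preceq p$ is derivable, and the structural induction of Theorem~\ref{sandc:thm}, then go through unchanged and yield: $p\leq_I^{l{\supseteq}\land f} q$ iff $\{B_1\text{--}B_4,(RS),(\textit{ND}^{R\land FT})\}\vdash p\preceq q$, i.e.\ $\sqsubseteq_{R\land FT} = {\leq_I^{l{\supseteq}\land f}}$.

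I expect the main obstacle to be precisely the bookkeeping in the completeness half: one must make sure the head normal form produces a \emph{single} decorated trace of $q$ whose final ready set is exactly the one prescribed and whose intermediate offers have been shrunk, rather than two unrelated $\leq_I^{l{\supseteq}}$- and $\leq_I^{lf}$-witnesses, so that the induction hypothesis is applied to the same residual $q_a^{j_t}$ in both subcases above. This is exactly the role of the conjunctive constraint $M_{R\land FT}$ in $\hnfx{p}$, and it is why the ``direct characterization'' of $\leq_N^{l{\supseteq}\land f}$ in terms of failure traces with final ready sets is the convenient form to use throughout the proof. Once claim (i) is established, composing it with the observational characterization of Section~\ref{observational-sem-sec} (and with claim (ii) above) shows that $\sqsubseteq_{R\land FT}$ coincides with $\sqsubseteq_R\cap\sqsubseteq_{FT}$ and is therefore the lowest upper bound of the readiness and failure trace semantics.
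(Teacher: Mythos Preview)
Your overall plan mirrors the paper's (whose proof is just ``Similar to that of Theorem~\ref{sandc:thm}''), and the completeness half you sketch does establish that $\sqsubseteq_{R\land FT}$ coincides with the \emph{direct} characterization (single witness with $X_n=Y_n$ and $X_i\supseteq Y_i$ for $i<n$). The gap is in claim~(ii) and in how you tie the two halves together: you treat the intersection definition $\leq_I^{l\supseteq}\cap\leq_I^{lf}$ and the direct ``failure trace with final ready set'' definition as interchangeable, but they are not. Take $p=a(b(c+d))$ and $q=a\big((b+e)(c+d)\big)+a(bc+bd)$. Then $p\sqsubseteq_R q$ (the ready pair $(ab,\{c,d\})$ is witnessed through the $(b+e)(c+d)$ branch) and $p\sqsubseteq_{FT} q$ (every failure trace of $p$ is matched through the $bc+bd$ branch, where the intermediate offer is $\{b\}$), so the intersection holds. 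Yet the lgo $\{a\}a\{b\}b\{c,d\}$ of $p$ has no \emph{single} witness in $q$: the only $ab$-path of $q$ ending in offer $\{c,d\}$ goes through the state with offer $\{b,e\}\not\subseteq\{b\}$, while the $ab$-paths with intermediate offer $\{b\}$ end in $\{c\}$ or $\{d\}$. Hence the direct order fails, and since the axioms $B_1$--$B_4$, $(RS)$, $(\textit{ND}^{R\land FT})$ are sound for the direct order, $p\not\sqsubseteq_{R\land FT} q$.

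So the step where you say ``together these give $M_{R\land FT}$ for the \emph{same} witness $q_a^{j_t}$'' is exactly where the argument breaks: the two subcases of Proposition~\ref{hnfx-aux2:prop} produce, in general, different $j_t$'s, and neither satisfies both halves of the conjunctive constraint. Your completeness argument only proves $(\text{direct})\subseteq\sqsubseteq_{R\land FT}$, while your claim~(ii) only gives $(\text{intersection})=\sqsubseteq_R\cap\sqsubseteq_{FT}$; the bridge between direct and intersection that you need is precisely what the counterexample refutes. Consequently, the first clause of the proposition ($\sqsubseteq_{R\land FT}={\leq_I^{l\supseteq\land f}}$ for the direct order) is what the Theorem~\ref{sandc:thm}-style argument actually yields, but the ``is thus the lowest upper bound'' conclusion does not follow from it.
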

\begin{proof}
Similar to that of Theorem~\ref{sandc:thm}.
\end{proof}

Let us finally consider the logical characterization of this semantics. It is
clear that the conjunction of two semantics should be characterized in a
logical way by simply considering the union of the logics that characterize
both semantics (although there could possibly be a more compact presentation). 

\begin{defi}\label{wedge_vee_logica}
We define the set of formulas $\mathcal{L}^{\prime}_{\leq_I^{l\supseteq\wedge
    f}}$ as that generated by the clauses:
\lin{\conjyformneg{\leq_I^{l\supseteq\wedge f}}{I}; \item
  \conjbis{\leq_I^{l\supseteq\wedge f}}{I}}{\leq_I^{l\supseteq\wedge f}} 
\end{defi}

\begin{prop}
The logical semantics  $\sqsubseteq^{\prime}_{\leq_I^{l\supseteq\wedge f}}$ induced by the logic $\mathcal{L}^{\prime}_{\leq_I^{l\supseteq\wedge f}}$ is equivalent to the observational semantics defined by ${\leq_I^{l\supseteq\wedge f}}$.
\end{prop}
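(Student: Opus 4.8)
The plan is to establish the two inclusions between the preorders directly, following the blueprint of the $FT$ and $R$ cases of Theorem~\ref{teoequiv}. The starting point is the observation that $\mathcal{L}^{\prime}_{\leq_I^{l\supseteq\wedge f}}$ admits no proper conjunction of modal formulas: the only conjunction-forming clause in Definition~\ref{wedge_vee_logica} requires the left conjunct to lie in $\mathcal{L}^{\neg}_I$, i.e.\ to be a conjunction of formulas $\neg a\top$. Consequently every $\varphi\in\mathcal{L}^{\prime}_{\leq_I^{l\supseteq\wedge f}}$ is logically equivalent to a \emph{linear chain} $\psi_0\wedge a_1(\psi_1\wedge a_2(\cdots\wedge a_n\psi_n))$ with $\psi_i\in\mathcal{L}^{\neg}_I$ for $i<n$ and $\psi_n\in\mathcal{L}^{\equiv}_I$. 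This is exactly the shape of the ``failure trace with final ready sets'' traces $X_0a_1X_1\cdots a_nX_n$ that define $\leq_I^{l\supseteq\wedge f}$: an upper bound on each intermediate ready set, and a completely determined (or, in the non-complete case, merely constrained) final ready set.

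I would then prove $\leq_I^{l\supseteq\wedge f}\ \subseteq\ \sqsubseteq^{\prime}_{\leq_I^{l\supseteq\wedge f}}$ as follows. Assume $p\leq_I^{l\supseteq\wedge f}q$ and take $\varphi\in\mathcal{L}^{\prime}_{\leq_I^{l\supseteq\wedge f}}$ in the chain form above with $p\models\varphi$. Linearity of $\varphi$ yields a path $p=p_0\tran{a_1}p_1\tran{a_2}\cdots\tran{a_n}p_n$ with $p_i\models\psi_i$ for all $i$; put $X_i=I(p_i)$, so that $X_0a_1X_1\cdots a_nX_n\in\textit{LGO}_I(p)$. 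By hypothesis there is $Y_0a_1Y_1\cdots a_nY_n\in\textit{LGO}_I(q)$ with $X_n=Y_n$ and $X_i\supseteq Y_i$ for $i<n$, hence a path $q=q_0\tran{a_1}\cdots\tran{a_n}q_n$ with $I(q_i)=Y_i$. For $i<n$ we have $I(q_i)=Y_i\subseteq X_i=I(p_i)$ and $\psi_i\in\mathcal{L}^{\neg}_I$ is downward closed in the offer of its state, so $q_i\models\psi_i$; for $i=n$ we have $I(q_n)=X_n=I(p_n)$ and $\psi_n\in\mathcal{L}^{\equiv}_I$ only constrains that offer, so $q_n\models\psi_n$. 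Therefore $q\models\varphi$, as wanted.

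For the reverse inclusion, assume $p\sqsubseteq^{\prime}_{\leq_I^{l\supseteq\wedge f}}q$ and take $X_0a_1X_1\cdots a_nX_n\in\textit{LGO}_I(p)$ realised by $p=p_0\tran{a_1}\cdots\tran{a_n}p_n$ with $I(p_i)=X_i$. I would synthesise the chain $\varphi_n=\bigwedge_{a\in X_n}a\top\wedge\bigwedge_{a\notin X_n}\neg a\top$ and $\varphi_i=(\bigwedge_{a\in\overline{X_i}}\neg a\top)\wedge a_{i+1}\varphi_{i+1}$ for $i=n-1,\dots,0$; using in turn the clause adding $\mathcal{L}^{\equiv}_I$-formulas, the prefix clause, and the conjunction clause of Definition~\ref{wedge_vee_logica}, each $\varphi_i$ belongs to $\mathcal{L}^{\prime}_{\leq_I^{l\supseteq\wedge f}}$. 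Since $I(p_i)=X_i$ we have $p\models\varphi_0$, hence $q\models\varphi_0$, which produces a path $q=q_0\tran{a_1}\cdots\tran{a_n}q_n$ with $I(q_i)\subseteq X_i$ for $i<n$ and $I(q_n)=X_n$; then $Y_i:=I(q_i)$ gives the required witness $Y_0a_1Y_1\cdots a_nY_n\in\textit{LGO}_I(q)$ for $p\leq_I^{l\supseteq\wedge f}q$. One half of this direction can also be read off for free, since $\mathcal{L}^{\prime}_{FT}\cup\mathcal{L}^{\prime}_R\subseteq\mathcal{L}^{\prime}_{\leq_I^{l\supseteq\wedge f}}$ gives $\sqsubseteq^{\prime}_{\leq_I^{l\supseteq\wedge f}}\subseteq{\sqsubseteq^{\prime}_{FT}}\cap{\sqsubseteq^{\prime}_R}$ by Theorem~\ref{teoequiv}(2) and Proposition~\ref{failures-ftraces:prop}.

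As for where the effort sits: there is no deep obstacle once the linear normal form of $\mathcal{L}^{\prime}_{\leq_I^{l\supseteq\wedge f}}$ is isolated; both inclusions are then bookkeeping, mirroring the $FT$ and $R$ arguments of Theorem~\ref{teoequiv}. The only point that needs a moment's care is that the final conjunct $\psi_n\in\mathcal{L}^{\equiv}_I$ may be a \emph{proper} partial specification of the offer rather than a complete one; the transfer from $p$ to $q$ nevertheless goes through, precisely because $\leq_I^{l\supseteq\wedge f}$ demands equality $X_n=Y_n$ of the final ready sets, so $q_n$ satisfies $\psi_n$ iff $p_n$ does. The other routine check is that every $\varphi_i$ synthesised in the second direction really falls under the clauses of Definition~\ref{wedge_vee_logica}.
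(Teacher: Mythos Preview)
Your proof is correct, but it takes a substantially more laborious route than the paper's. The paper's argument is essentially a one-liner: since $\leq_I^{l\supseteq\wedge f}$ is \emph{by definition} the intersection ${\leq_I^{l\supseteq}}\cap{\leq_I^{lf}}$, it suffices to observe that $\mathcal{L}^{\prime}_{\leq_I^{l\supseteq\wedge f}}$ is (equivalent to) the union $\mathcal{L}^{\prime}_{\leq_I^{l\supseteq}}\cup\mathcal{L}^{\prime}_{\leq_I^{lf}}$, and then invoke the already established characterizations of the failure-trace and readiness semantics (Theorem~\ref{teoequiv}). A logic that is a union of two logics induces precisely the intersection of the two preorders, so the result drops out. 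You actually notice this shortcut in your final paragraph, but only use it for a fragment of one direction rather than as the whole argument.

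What your approach buys is self-containment: you work directly with the ``failure trace with final ready set'' characterization and the chain normal form of the logic, without appealing to the decomposition into FT and R. This is closer in spirit to the proofs of Theorem~\ref{teoequiv} itself and would be the right template if one wanted to treat the join semantics as a first-class citizen rather than as a derived combination. The paper's approach, by contrast, exploits the structural fact that this semantics was \emph{constructed} as a meet of two known semantics, so its logical characterization is obtained for free by taking the corresponding join of logics. Both are valid; the paper's is shorter, yours is more explicit about why the individual formulas transfer.
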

\begin{proof}
We just need to check that $\mathcal{L}^{\prime}_{\leq_I^{l\supseteq\wedge f}}$ = $\mathcal{L}^{\prime}_{\leq^{l\supseteq}_I} \cup \mathcal{L}^{\prime}_{\leq^{lf}_{I}}$, which is immediate.
\end{proof}
By replacing the \textit{I} above by the generic \textit{N}, we get the definitions and results for the general case.

The axiomatic characterization of the greatest lower bound of the readiness and failure trace
semantics is much simpler: we simply put together the axioms for the orders 
defining both semantics. 

\begin{defi}
The \emph{meet semantics $R\vee FT$} is that defined by the order $\sqsubseteq_{R\lor FT}$ generated by the set of axioms
$\{ \textrm{$B_1$--$B_4$}, (RS),$ $(\textit{ND}^R), (\textit{ND}^{FT})\}$.
\end{defi}

If we define $M_{R\lor FT}$ as $M_R\lor M_{FT}$, that is,
$M_{R\lor FT}(x,y,w)$ holds if $I(x) \supseteq I(y)$ or $I(w) \subseteq I(y)$, 
we have the following characterization of $\sqsubseteq_{R\lor FT}$.
\begin{prop}
The order $\sqsubseteq_{R\lor FT}$ is that generated by the set of axioms
$\{ \textrm{$B_1$--$B_4$}, (RS),$ $(\textit{ND}^{R\lor FT})\}$, where
$(\textit{ND}^{R\lor FT})$ is the instantiation of the generic axiom 
$(\textit{ND})$ with the condition $M_{R\lor FT}$.
\end{prop}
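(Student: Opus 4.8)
The plan is to prove the claim by showing that the two axiom systems
$E_1=\{\textrm{$B_1$--$B_4$}, (RS), (\textit{ND}^R), (\textit{ND}^{FT})\}$ (which, by definition, generates $\sqsubseteq_{R\lor FT}$) and
$E_2=\{\textrm{$B_1$--$B_4$}, (RS), (\textit{ND}^{R\lor FT})\}$
prove exactly the same inequations, and hence define the same order. Since $B_1$--$B_4$ and $(RS)$ are common to both systems, it suffices to check that $E_1$ proves (every instance of) $(\textit{ND}^{R\lor FT})$ and that $E_2$ proves (every instance of) both $(\textit{ND}^R)$ and $(\textit{ND}^{FT})$. The one observation driving everything is that, by its very definition, the hypothesis $M_{R\lor FT}$ is the disjunction $M_R\lor M_{FT}$, so a single conditional axiom with a disjunctive side condition is interderivable with the pair of conditional axioms carrying each disjunct.

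First I would treat the direction $E_2\vdash (\textit{ND}^R),(\textit{ND}^{FT})$. Consider any (closed) instantiation of $(\textit{ND}^R)$, i.e.\ terms $x,y,w$ with $M_R(x,y,w)$, that is $I(x)\supseteq I(y)$. Then the first disjunct of $M_{R\lor FT}(x,y,w)$ holds, so the corresponding instance of $(\textit{ND}^{R\lor FT})$ is applicable and yields $a(x+y)\preceq ax+a(y+w)$; this is precisely the required instance of $(\textit{ND}^R)$. The argument for $(\textit{ND}^{FT})$ is identical, now discharging the side condition via the second disjunct $I(w)\subseteq I(y)$ of $M_{R\lor FT}$. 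Hence $E_2$ proves everything $E_1$ proves.

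Conversely, for $E_1\vdash(\textit{ND}^{R\lor FT})$, take terms $x,y,w$ with $M_{R\lor FT}(x,y,w)$; then at least one of $I(x)\supseteq I(y)$ or $I(w)\subseteq I(y)$ holds. In the first case $M_R(x,y,w)$ holds, so the relevant instance of $(\textit{ND}^R)$ gives $a(x+y)\preceq ax+a(y+w)$; in the second case $M_{FT}(x,y,w)$ holds and the relevant instance of $(\textit{ND}^{FT})$ gives the same inequation. Either way $E_1\vdash a(x+y)\preceq ax+a(y+w)$, the desired instance of $(\textit{ND}^{R\lor FT})$. Combining both directions, $E_1$ and $E_2$ have the same theorems, so they axiomatize the same order, which is $\sqsubseteq_{R\lor FT}$.

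There is essentially no hard step here; the only point that deserves a word of care is the reading of a \emph{conditional} axiom as a proof/rewrite rule whose side condition is evaluated on the (closed) substituted terms, which is what legitimises discharging a disjunctive side condition by a case split on which disjunct actually holds. If one preferred an argument within the observational framework instead, the same equality follows from the fact that the set of $\textrm{lgo}_I$-identifications generated by $E_1$ is exactly the union of those corresponding to $\sqsubseteq_R$ and to $\sqsubseteq_{FT}$, which is again captured by the disjunctive condition $M_{R\lor FT}$; but the purely syntactic argument above is shorter.
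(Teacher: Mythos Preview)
Your proof is correct and matches the paper's approach: the paper leaves this proposition without an explicit proof (treating it as immediate), and the key observation---that $M_{R\lor FT}$ is by definition $M_R\lor M_{FT}$, so the single disjunctive conditional axiom is interderivable with the two separate ones---is exactly what the paper invokes in the proof of the very next proposition.
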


\begin{prop}
The semantics defined by the order $\sqsubseteq_{R\lor FT}$ is the finest
semantics that is coarser than both the readiness and the failure trace
semantics.
\end{prop}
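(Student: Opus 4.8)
The plan is to run the purely algebraic \emph{union of axioms} argument that is available here because, by the very definition of $\sqsubseteq_{R\lor FT}$, its generating axiom set $\{\textrm{$B_1$--$B_4$}, (RS), (\textit{ND}^R), (\textit{ND}^{FT})\}$ is literally the union of the generating axiom sets of $\sqsubseteq_R$ and of $\sqsubseteq_{FT}$. First I would observe that $\sqsubseteq_{R\lor FT}$ is coarser than both: since $\{\textrm{$B_1$--$B_4$}, (RS), (\textit{ND}^R)\}\subseteq\{\textrm{$B_1$--$B_4$}, (RS), (\textit{ND}^R), (\textit{ND}^{FT})\}$, every inequation derivable from the smaller system is derivable from the larger one, and as $\sqsubseteq_R$ is ground-complete for the smaller system (Proposition~\ref{static-1:prop} and its corollaries) we get $p\sqsubseteq_R q\Rightarrow p\sqsubseteq_{R\lor FT}q$ for all closed terms; the symmetric argument with $(\textit{ND}^{FT})$ gives $\sqsubseteq_{FT}\subseteq\sqsubseteq_{R\lor FT}$. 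Note also that $\sqsubseteq_{R\lor FT}$ is itself a behavior preorder in the sense of Definition~\ref{def:ProcessPreorder}: it is weaker than bisimilarity because its axioms are layered on top of $B_1$--$B_4$, and it is a precongruence with respect to prefix and choice because the rules of inequational logic are part of its derivation system.

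The core of the statement is then the opposite inclusion. I would take an arbitrary behavior preorder $\sqsubseteq$ with $\sqsubseteq_R\subseteq{\sqsubseteq}$ and $\sqsubseteq_{FT}\subseteq{\sqsubseteq}$ and show $\sqsubseteq_{R\lor FT}\subseteq{\sqsubseteq}$. It suffices to prove that each axiom in $\{\textrm{$B_1$--$B_4$}, (RS), (\textit{ND}^R), (\textit{ND}^{FT})\}$ is sound for $\sqsubseteq$, and then conclude by a routine induction on the length of a derivation, using that $\sqsubseteq$, being a precongruence that is reflexive and transitive, is closed under all the rules of inequational logic (and recalling the standard normalization from Section~\ref{sec:Preliminaries} by which substitutions may be assumed to act only on the axioms). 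The four equations $B_1$--$B_4$ are sound for $\sqsubseteq$ because they hold in bisimilarity and $\sqsubseteq$ is weaker than it. The (conditional) axioms $(RS)$ and $(\textit{ND}^R)$ belong to a sound axiomatization of $\sqsubseteq_R$, so for every closed instance satisfying the side condition the resulting inequation holds in $\sqsubseteq_R$, hence in $\sqsubseteq$; likewise $(\textit{ND}^{FT})$ is sound for $\sqsubseteq_{FT}$, hence for $\sqsubseteq$. This yields $\sqsubseteq_{R\lor FT}\subseteq{\sqsubseteq}$, and combined with the first part it shows that $\sqsubseteq_{R\lor FT}$ is the smallest behavior preorder coarser than both $\sqsubseteq_R$ and $\sqsubseteq_{FT}$, i.e.\ the finest such semantics.

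I do not expect a real obstacle. In contrast with the dual statement---that $\sqsubseteq_{R\land FT}$ is the greatest lower bound, i.e.\ ${\sqsubseteq_{R\land FT}}={\sqsubseteq_R}\cap{\sqsubseteq_{FT}}$, which (as remarked in the paper) could not be obtained by algebraic arguments alone and needed the observational model---the meet direction goes through purely by the bookkeeping above, precisely because taking a least upper bound in the lattice of behavior preorders corresponds to taking a union of axiom systems. The only points deserving care are: (i) spelling out that $\sqsubseteq_{R\lor FT}$ is a behavior preorder, so that the family of behavior preorders coarser than both really has a finest element and our two inclusions pin it down; and (ii) reading ``soundness of a conditional axiom for $\sqsubseteq$'' in the correct sense, namely that all of its closed instances whose side condition holds are valid in $\sqsubseteq$---which is exactly what $\sqsubseteq_R\subseteq{\sqsubseteq}$ and $\sqsubseteq_{FT}\subseteq{\sqsubseteq}$ provide. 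If desired, an independent check can be given along the lines of Theorem~\ref{sandc:thm} by working with the disjunctive constraint $M_{R\lor FT}$ directly, but this is not needed for the statement.
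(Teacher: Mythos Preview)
Your proposal is correct and follows essentially the same approach as the paper: both argue that since $\sqsubseteq_{R\lor FT}$ is generated by the union of the axiom sets for $\sqsubseteq_R$ and $\sqsubseteq_{FT}$, any behavior preorder coarser than both must validate all those axioms and hence contain $\sqsubseteq_{R\lor FT}$. The paper's proof is a one-liner relying on the same observation (plus the remark that $M_{R\lor FT}\Leftrightarrow M_R\lor M_{FT}$), whereas you spell out the soundness/completeness bookkeeping more carefully.
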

\begin{proof}
Obvious since any semantics coarser than the readiness semantics has to
satisfy $\{ \textrm{$B_1$--$B_4$}, (RS), (\textit{ND}^R)\}$, any one coarser 
than failure trace must satisfy 
$\{ \textrm{$B_1$--$B_4$}, (RS), (\textit{ND}^{FT})\}$,
and $M_{R\lor FT}$ is equivalent to $M_R\lor M_{FT}$.
\end{proof}

Once again, the semantics defined by $\sqsubseteq^{R\lor FT}$ is not included
in the ltbt spectrum and neither in our extended one; in particular, it is
different from the failures semantics.
To prove this fact we make essential use of the notion of \emph{revival},
as defined by Reed, Roscoe, and Sinclair \cite{ReedEtAl07}.
Revivals are sequences $a_1,\dots,a_n (X,a)$ where $a_1,\dots,a_n$ is a trace
of the corresponding process after which the action $a$ is
offered, but the set of actions $X$ is refused.

\begin{prop}
The meet semantics $R\vee FT$ is strictly finer than failure semantics.
\end{prop}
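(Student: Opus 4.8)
The plan is to prove the two halves of ``strictly finer'' separately: first that $\sqsubseteq_{R\lor FT}$ is at least as fine as $\sqsubseteq_F$, and then that the inclusion is proper. For the first half, observe that the constraint $M_{R\lor FT}(x,y,w)$ (it holds iff $I(x)\supseteq I(y)$ or $I(w)\subseteq I(y)$) implies $M_F(x,y,w)$, since $M_F$ is the universal relation on triples of processes. Hence every instance of the axiom $(\textit{ND}^{R\lor FT})$ is also an instance of $(\textit{ND}^F)$, so anything derivable from $\{\textrm{$B_1$--$B_4$},(RS),(\textit{ND}^{R\lor FT})\}$ is derivable from $\{\textrm{$B_1$--$B_4$},(RS),(\textit{ND}^F)\}$. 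Since these systems are ground-complete for $\sqsubseteq_{R\lor FT}$ and $\sqsubseteq_F$ respectively, it follows that ${\sqsubseteq_{R\lor FT}}\subseteq{\sqsubseteq_F}$ on closed BCCSP terms.

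For strictness I would exhibit two closed terms that are failure equivalent but separated by $\sqsubseteq_{R\lor FT}$. Assuming $\act$ contains four distinct actions $a,b,c,d$, take $p = ab + a(c+d)$ and $q = p + a(b+d)$. A routine computation of traces and refusal sets gives $\textit{Failures}(p)=\textit{Failures}(q)$: the only new state reached by $q$ but not by $p$ is $b+d$, after the trace $a$, and its initial offer $\{b,d\}$ produces only refusals already contributed by the state $b$. Hence $p\equiv_F q$, so in particular $q\sqsubseteq_F p$. On the other hand, the revival $\langle a,\{c\},d\rangle$ belongs to $q$ but not to $p$: the state $b+d$ of $q$ reached after $a$ offers $d$ and refuses $\{c\}$, whereas the only state of $p$ reachable after $a$ that offers $d$ is $c+d$, which does not refuse $\{c\}$.

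The heart of the argument is then to show $q\not\sqsubseteq_{R\lor FT}p$, for which it suffices to prove that $\sqsubseteq_{R\lor FT}$ preserves revivals, i.e.\ that ${\sqsubseteq_{R\lor FT}}\subseteq{\sqsubseteq_{\mathrm{rev}}}$, where $p'\sqsubseteq_{\mathrm{rev}}q'$ iff every revival of $p'$ is a revival of $q'$. Because $\{\textrm{$B_1$--$B_4$},(RS),(\textit{ND}^{R\lor FT})\}$ is ground-complete for $\sqsubseteq_{R\lor FT}$, it is enough to check that $\sqsubseteq_{\mathrm{rev}}$ is a behavior preorder validating each of these axioms. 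Soundness of $B_1$--$B_4$ is clear since bisimilar processes have identical revival sets; the precongruence and $(RS)$ follow easily from the fact that $p'\sqsubseteq_{\mathrm{rev}}q'$ forces $I(p')\subseteq I(q')$ (witnessed by the revivals $\langle\epsilon,\emptyset,b\rangle$) and that $I(x)=I(y)$ gives $I(x+y)=I(x)$. The real work is the axiom $(\textit{ND}^{R\lor FT})$: one reduces the required inclusion $\textit{Revivals}(a(x+y))\subseteq\textit{Revivals}(ax+a(y+w))$ to $\textit{Revivals}(x+y)\subseteq\textit{Revivals}(x)\cup\textit{Revivals}(y+w)$, then observes that revivals of $x+y$ along a non-empty trace, or along the empty trace with first action in $I(x)$, already lie in $\textit{Revivals}(x)$ (those along a non-empty trace coming from the $y$-side lying in $\textit{Revivals}(y)\subseteq\textit{Revivals}(y+w)$), and finally handles the empty-trace revivals with first action in $I(y)\setminus I(x)$ by the case split defining $M_{R\lor FT}$: if $I(x)\supseteq I(y)$ there are none, and if $I(w)\subseteq I(y)$ then $I(y)\cup I(w)=I(y)$, so those revivals belong to $\textit{Revivals}(y+w)$.

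Putting the pieces together: $q\sqsubseteq_F p$ while $q\not\sqsubseteq_{\mathrm{rev}}p$, hence $q\not\sqsubseteq_{R\lor FT}p$ by the last point, so the pair $(q,p)$ witnesses ${\sqsubseteq_{R\lor FT}}\subsetneq{\sqsubseteq_F}$, which is exactly the assertion that the meet semantics $R\lor FT$ is strictly finer than failure semantics. I expect the soundness of $(\textit{ND}^{R\lor FT})$ for $\sqsubseteq_{\mathrm{rev}}$ to be the only delicate step; the ground-completeness of the axiomatization (needed to turn validity of the axioms into inclusion of preorders) is inherited from the earlier sections, and the verification on the concrete pair $p,q$ is a mechanical check of traces, refusals and revivals.
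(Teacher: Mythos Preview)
Your proposal is correct and follows essentially the same approach as the paper: both prove the inclusion directly from the axioms and establish strictness by showing that all the axioms in $\{\textrm{$B_1$--$B_4$},(RS),(\textit{ND}^{R\lor FT})\}$ preserve revivals, then exhibiting a failure-related pair distinguished by a revival. The paper's counterexample is a bit simpler---$ab\sqsubseteq_F a+a(b+c)$ with the revival $\langle a,\{c\},b\rangle$ in the first but not the second---and it checks $(\textit{ND}^R)$ and $(\textit{ND}^{FT})$ separately rather than the combined axiom, but the arguments are otherwise the same.
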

\begin{proof}
The inclusion ${\sqsubseteq}^{R\lor FT} \varsubsetneq {\sqsubseteq^F}$ is obvious since failures semantics is coarser than 
both the readiness and the failure trace semantics.
To show that the inclusion is strict, note that any two processes related 
by $\sqsubseteq^{R\lor FT}$ do not only have the same failures but also 
the same revivals.
This is indeed the case since all the axioms $u\preceq v$ in 
$\{ \textrm{$B_1$--$B_4$}, (RS), (\textit{ND}^R), (\textit{ND}^{FT})\}$
preserve the revivals, which means 
$\textit{Revivals}(\sigma(u)) \subseteq \textit{Revivals}(\sigma(v))$ for
every ground substitution $\sigma$, and the revivals order is a precongruence
for the operators in BCCSP.
For instance, for $(\textit{ND}^{FT})$ we need to prove that 
$\textit{Revivals}(\sigma(a(x+y))) \subseteq \textit{Revivals}(\sigma(ax))\cup
\textit{Revivals}(\sigma(a(y+w)))$ whenever $I(\sigma(w))\subseteq 
I(\sigma(x))$.
It is clear that the only non-trivial case occurs when 
$a(X,b)\in \textit{Revivals}(\sigma(a(x+y)))$; then we have
$(X,b)\in \textit{Revivals}(\sigma(x+y))$ so that 
$X \in\textit{Failures}(\sigma(x)) \cap \textit{Failures}(\sigma(y))$ and
$b\in I(\sigma(X))$ or $b\in I(\sigma(y))$.
In the first case, $a(X,b)\in \textit{Revivals}(\sigma(ax))$ whereas, in
the second, $X\in\textit{Failures}(\sigma(x+y))$ and therefore 
$a(X,b)\in \textit{Revivals}(\sigma(a(x+y)))$.
The case for $(\textit{ND}^R)$ is simpler.
Once we know that $\sqsubseteq^{R\lor FT}$ preserves the revivals we only need
to observe that the revivals cannot be obtained from the failures of a process.
In particular, we have $ab\sqsubseteq^F a+a(b+c)$, but 
$a(\{c\},b) \in  \textit{Revivals}(ab)\setminus\textit{Revivals}(a+a(b+c))$.
\end{proof}

\comment{The semantics $\sqsubseteq^{R\lor FT}$, though not in the ltbt spectrum, is not
completely new since it coincides with the revivals semantics (at least
for BCCSP).
To prove this, we first give a}
Next we present the characterization of the revivals semantics
in terms of our observational framework.

\begin{defi}
We define the order $\leq_N^{l{\supseteq}\lor f}$ by
\[
\begin{array}{lll}
\calT \leq_N^{l{\supseteq}\lor f} \calT' &\Longleftrightarrow &
\textrm{for all $X_0a_1\dots X_n\in\calT$}\\
&&\textrm{there is $\{Y_0a_1Y_1\dots Y_n^j\mid j\in J\}\subseteq \calT'$
        such that $X_n = \bigcup_{j\in J} Y_n^j$}.
\end{array}
\]
\end{defi}

\begin{prop}
For all $p,q \in \textit{BCCSP}$, $\textit{Revivals}(p)\subseteq
\textit{Revivals}(q)$ if and only if 
$\textit{LGO}_I(p) \leq_I^{l{\supseteq}\lor f} \textit{LGO}_I(q)$.
\end{prop}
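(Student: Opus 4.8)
The plan is to unfold both sides of the equivalence into statements about the states reachable from $p$ and $q$ along a common trace, and then match them term by term. First I would recall two basic reformulations already available: that for $N=I$ the set $\textit{LGO}_I(r)$ is exactly the set of ready traces of $r$, so $X_0a_1X_1\dots a_nX_n\in\textit{LGO}_I(r)$ iff there is a transition sequence $r\tran{a_1}r_1\tran{a_2}\cdots\tran{a_n}r_n$ with $X_i=I(r_i)$; and that, by the definition of revivals, $a_1\dots a_n(X,a)\in\textit{Revivals}(r)$ iff there is some $r'$ with $r\Tran{a_1\dots a_n}r'$, $a\in I(r')$ and $X\cap I(r')=\emptyset$. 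These two reformulations turn the statement into a purely combinatorial claim about the families of final ready sets reached along a fixed trace.

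For the implication from left to right I would fix a ready trace $t=X_0a_1X_1\dots a_nX_n\in\textit{LGO}_I(p)$, reached at a state $p'$ with $I(p')=X_n$, and construct the family $\{Y_0a_1Y_1\dots Y_n^j\mid j\in J\}\subseteq\textit{LGO}_I(q)$ with $\bigcup_{j\in J}Y_n^j=X_n$ required by $\leq_I^{l{\supseteq}\lor f}$. If $X_n=\emptyset$ the empty family ($J=\emptyset$) already works. Otherwise, for each $a\in X_n$ the revival $a_1\dots a_n(\overline{X_n},a)$ belongs to $\textit{Revivals}(p)$, hence to $\textit{Revivals}(q)$, which yields a state $q_a$ with $q\Tran{a_1\dots a_n}q_a$, $a\in I(q_a)$ and $\overline{X_n}\cap I(q_a)=\emptyset$, i.e.\ $I(q_a)\subseteq X_n$; the ready trace of $q$ reaching $q_a$ has final set $I(q_a)$, so taking $J=X_n$ with $Y_n^a=I(q_a)$ gives $X_n\subseteq\bigcup_{a\in X_n}I(q_a)\subseteq X_n$, as needed.

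For the converse I would fix a revival $a_1\dots a_n(X,a)\in\textit{Revivals}(p)$ witnessed by a state $p'$ with $a\in I(p')$ and $X\cap I(p')=\emptyset$. The ready trace of $p$ leading to $p'$ has final set $I(p')$; applying $\textit{LGO}_I(p)\leq_I^{l{\supseteq}\lor f}\textit{LGO}_I(q)$ to it produces a family of ready traces of $q$ along the same trace $a_1\dots a_n$ whose final sets union to $I(p')$, and this family is nonempty since $I(p')\ni a$. Pick one member, reaching a state $q'$, with $a\in I(q')$; since $I(q')$ is contained in that union, $I(q')\subseteq I(p')$, whence $X\cap I(q')\subseteq X\cap I(p')=\emptyset$. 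Thus $a_1\dots a_n(X,a)\in\textit{Revivals}(q)$.

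I do not expect a genuine difficulty in either direction; the only points demanding care are the boundary case of a deadlocked final state (covered by allowing $J=\emptyset$ exactly when $X_n=\emptyset$, which is sound because such ready traces generate no revival), the harmless role of the intermediate ready sets (the order $\leq_I^{l{\supseteq}\lor f}$ deliberately leaves them unconstrained, so they may simply be ignored), and the extremal choice $X=\overline{X_n}$ in the forward direction, which is precisely the refusal set that forces $I(q_a)\subseteq X_n$. Finally I would observe that, combined with the axiomatic and observational facts already established for $\sqsubseteq_{R\lor FT}$ earlier in this section, this identifies the revivals preorder on BCCSP with the meet semantics $R\vee FT$.
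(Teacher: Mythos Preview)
Your proof is correct and follows essentially the same idea as the paper's. The paper's version is more compact: it first records the equivalent element-wise reformulation of $\leq_I^{l{\supseteq}\lor f}$ (for every $X_0a_1\dots X_n\in\calT$ and every $a\in X_n$ there is $Y_0a_1\dots Y_n\in\calT'$ with $a\in Y_n\subseteq X_n$), then observes that $a_1\dots a_n(X,a)\in\textit{Revivals}(p)$ iff some $X_0a_1\dots X_n\in\textit{LGO}_I(p)$ has $a\in X_n$ and $X_n\cap X=\emptyset$, and declares the result immediate. Your two directions are exactly the unfolding of these two observations, with the extremal choice $X=\overline{X_n}$ in the forward direction being precisely what forces $Y_n\subseteq X_n$; your explicit treatment of the deadlocked case $X_n=\emptyset$ via $J=\emptyset$ is a detail the paper leaves implicit.
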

\begin{proof}
Note that $\leq_I^{l{\supseteq}\lor f}$ can be equivalently defined as
\[
\begin{array}{lll}
\calT\leq_I^{l{\supseteq}\lor f}\calT' &\Longleftrightarrow &
\textrm{for all $X_0a_1\dots X_n\in\calT$ and
        for all $a\in X_n$}\\
&&\textrm{there is $Y_0a_1\dots Y_n\in\calT'$
        with $a\in Y_n$ and $Y_n\subseteq X_n$}.
\end{array}
\]
Now, since $a_1\dots a_n(X,a)\in\textit{Revivals}(p)$ if and only if
there exists $X_0a_1\dots X_n\in \textit{LGO}_I(p)$ such that
$a\in X_n$ and $X_n\cap X = \emptyset$, we obtain the desired
characterization.
\end{proof}

\begin{defi}
Given $\calT\subseteq \textit{LGO}_N$, $\ol{\calT}^{{\supseteq}\lor f}$ is defined
as
\[
\ol{\calT}^{{\supseteq}\lor f} = 
\{ X_0a_1\dots X_n\mid 
\textrm{there is $\{ Y_0a_1\dots Y_n^j \mid j\in J\}\subseteq \calT$
        with $X_n= \bigcup_{j\in J}Y_n^j\}$}.
\]
\end{defi}
This clearly indicates that $\leq_I^{l{\supseteq}\lor f}$ is in between 
$\leq_I^{lf\supseteq}$, defining the failures semantics, and 
$\leq_I^{lf}$, defining readiness semantics. 
This is useful for the proof of the axiomatic characterization of the
revivals semantics.

\begin{thm}
The revivals semantics defined by $\sqsubseteq_I^{l{\supseteq}\lor f}$ is
axiomatized by $\{ \textrm{$B_1$--$B_4$}, (RS), 
(\textit{ND}^{R\lor FT})\}$.
\end{thm}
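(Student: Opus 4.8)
\emph{The plan} is to imitate the proof of Theorem~\ref{sandc:thm}. By the Proposition above, $\{\textrm{$B_1$--$B_4$}, (RS), (\textit{ND}^{R\lor FT})\}$ is the instantiation of $\{\textrm{$B_1$--$B_4$}, (RS), (\textit{ND})\}$ by the constraint $M_{R\lor FT}(x,y,w)\iff I(x)\supseteq I(y)\lor I(w)\subseteq I(y)$, so it suffices to prove this system sound and complete for $\leq_I^{l{\supseteq}\lor f}$. As a preparatory step I would record that $\leq_I^{l{\supseteq}\lor f}$, like the orders of Definition~\ref{obs:def}, admits the closure-operator presentation $\ol{(\_)}^{{\supseteq}\lor f}$ introduced above (satisfying the analogues of Propositions~\ref{closures:prop} and \ref{prop_contenido}, using that, like failures and readiness, $\leq_I^{l{\supseteq}\lor f}$ ignores all the intermediate ready sets), and I would adopt the pointwise reformulation of the order from the proof of the preceding Proposition: for every $X_0a_1\dots X_n\in\calT$ and every $a\in X_n$ there is $Y_0a_1\dots Y_n\in\calT'$ with $a\in Y_n\subseteq X_n$.

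\emph{Soundness.} As in Theorem~\ref{sandc:thm} the axiom will be sound ``by the way it is derived'': I would compare the two sides of $(\textit{ND})$ as on page~\pageref{stronger-versions:pag}. The only lgo's of $a(x+y)$ that are not manifestly present in $ax+a(y+w)$ are those beginning with $\langle\{a\},a,I(x)\cup I(y)\rangle$, and for these one has to cover $I(x)\cup I(y)$ action by action by final ready sets of $ax$ or of $a(y+w)$. This is exactly what $M_{R\lor FT}$ grants: if $I(y)\subseteq I(x)$ then $I(x)\cup I(y)=I(x)$, the final offer of the $x$-branch, whereas if $I(w)\subseteq I(y)$ then $I(x)\cup I(y)=I(x)\cup(I(y)\cup I(w))$, the union of the final offers of the two right-hand branches; the ``$\supseteq$'' component needs nothing since $I(x)\cup I(y)$ contains both $I(x)$ and $I(y)\cup I(w)$. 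As for the other layer-$I$ semantics, $(RS)$ is used to bridge the gap between the condition synthesised in this comparison and the simpler $M_{R\lor FT}$ actually postulated.

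\emph{Completeness.} I would introduce the $R\lor FT$-head normal form by instantiating $\hnfx{\cdot}$ with $M_Z:=M_{R\lor FT}$ and re-prove the two lemmas behind Theorem~\ref{sandc:thm}. The analogue of Proposition~\ref{hnfx-aux:prop}, $\{\textrm{$B_1$--$B_4$}, (RS), (\textit{ND}^{R\lor FT})\}\vdash \textit{hnf}^{R\lor FT}(p)\preceq p$, goes through verbatim, each merge of a summand being one application of $(\textit{ND}^{R\lor FT})$ preceded by the customary strengthening of the constraint via $(RS)$. The crucial step is the analogue of Proposition~\ref{hnfx-aux2:prop}: for $p=\sum_{a\in X_0}\sum_{i\in I_a} ap_a^i$, $q=\sum_{a\in X_0}\sum_{j\in J_a} aq_a^j$ with $p\leq_I^{l{\supseteq}\lor f} q$, some summand $ah_a^k$ of $\textit{hnf}^{R\lor FT}(q)$ satisfies $p_a^i\leq_I^{l{\supseteq}\lor f} h_a^k$. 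Given $t=X_0a_1\dots X_n\in\textit{LGO}_I(p_a^i)$ and $a'\in X_n$, the pointwise form of the order yields some $q_a^{j_t}$ realising an lgo $Y_0a_1\dots Y_n$ with $a'\in Y_n\subseteq X_n$, and one checks that the relevant projection of $q_a^{j_t}$ is among the summands collected into $h_a^k$, i.e.\ that $M_{R\lor FT}(q_a^j,q_a^{j_t}|_{I(p_a^i)},q_a^{j_t}|_{\ol{I(p_a^i)}})$ holds: if $I(q_a^{j_t})\subseteq I(p_a^i)$ the $M_{FT}$-disjunct applies, and otherwise one arranges that the base of $h_a^k$ has offer $I(p_a^i)$, so that the $M_R$-disjunct applies. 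The outer structural induction on $p$ then closes the proof exactly as in Theorem~\ref{sandc:thm}.

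\emph{The main obstacle} is this last matching lemma. Unlike for failures or readiness, the revivals order demands that the final offer $X_n$ of a computation of $p$ be recovered as a \emph{union} $\bigcup_{j} Y_n^j$ of several final offers of $q$, so a head normal form organised around a single ``base'' summand $q_a^j$ need not work; one must instead pass to the totally expanded head normal form $\tehnfx{q}$ used later in the paper, or to an ad hoc variant whose base is a sum $a\bigl(\sum_j q_a^{j}|_{I(p_a^i)}+\cdots\bigr)$ realising the offer $\bigcup_j I(q_a^{j})=I(p_a^i)$, and then verify action by action that each collected summand witnesses one of the two disjuncts of $M_{R\lor FT}$. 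Everything else --- the soundness computation, the closure-operator properties, and the structural induction --- is routine once this set-up is fixed.
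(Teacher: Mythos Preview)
Your proposal is correct and follows essentially the same approach as the paper, which only gives a proof sketch. The paper too points to Theorem~\ref{sandc:thm} (specifically the failures case) as the template and singles out exactly the obstacle you identify: the final offer of a computation of $p$ must be recovered as a \emph{union} of offers of $q$, so the head-normal-form argument needs a base summand assembled from several $q_a^{j}$'s---the paper phrases this as the ``closure of the set of offers with respect to union'' familiar from acceptance trees, and observes that $(\textit{ND}^{FT})$ is what allows joining states after the same trace while $(\textit{ND}^R)$ then lets one absorb further continuations. Your more explicit plan---build the base so that $I(h_a^k)=I(p_a^i)$ via the $M_{FT}$-disjunct, then collect every other relevant $q_a^{j_t}|_{I(p_a^i)}$ via the $M_R$-disjunct (which, once the base has offer $I(p_a^i)$, always applies)---is precisely a spelling-out of that hint, and the rest (closure properties, structural induction) is indeed routine.
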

\begin{proof}[Proof sketch]
It is quite similar to that of Theorem~\ref{sandc:thm} for the case of failures
semantics and, hence, also similar to the characterization of that semantics
by means of acceptance trees \cite{Hen88ATP} (and where the closure of the
set of offers with respect to both union and convex closure is a critical
argument), and this is why we only sketch it.
In connection to that, recall that the application of the particular case of 
$(\textit{ND})$ corresponding to $(\textit{ND}^{FT})$ allowed us to join
arbitrary states after the same trace, while that corresponding to 
$(\textit{ND}^R)$ allowed us to obtain a common continuation after the
same action at any state reachable by the same trace.
All this can be done now using $(\textit{ND}^{R\lor FT})$; however, we cannot
add to an arbitrary state an action offered at another state reachable by the 
same trace since to do that we needed the unlimited strength of axiom
$(\textit{ND})$.
\end{proof}

Note that for the join semantics $R\wedge FT$ the logical approach was the most direct way of defining it, whereas its equational characterization needed more care.
For the meet semantics $R \vee FT$, the situation is just the opposite. As we
have seen, $R\vee FT$ is axiomatized by putting together the axioms for $R$ and
those for $FT$; in contrast, the logic characterizing $R\vee FT$ is obtained by
cleverly selecting the common part of the logics characterizing both $R$ and
$FT$. 
If we had defined the logical semantics by considering all the formulas from
HML that are preserved by each semantics, then we could take the intersection
of these sets as the logical semantics of any meet semantics. Since we defined
our logical semantics by considering only a ``basis'' that generates the
corresponding full set, we cannot simply take their intersection. 

\begin{defi}
We define the set of formulas $\mathcal{L}^{\prime}_{\leq_I^{l\supseteq\vee f}}$ as that generated by the clauses: \lin{\conjU{\leq_I^{l\supseteq\vee f}}{I}}{\leq_I^{l\supseteq\vee f}}
\end{defi}

Note that in the second clause of this definition we have relaxed the condition in the definition of $\mathcal{L}^{\prime}_{R}$ by considering an arbitrary failure (that defined by the set $J$), but only a positive offer (the action appearing in $\sigma$). This is how the revivals semantics becomes slightly finer than the failures semantics.

\begin{prop}
The logical semantics  $\sqsubseteq^{\prime}_{\leq_I^{l\supseteq\vee f}}$ induced by the logic $\mathcal{L}^{\prime}_{\leq_I^{l\supseteq\vee f}}$ is equivalent to the observational semantics defined by ${\leq_I^{l\supseteq\vee f}}$.
\end{prop}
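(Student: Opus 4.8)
The plan is to mimic, for the revivals semantics, exactly the argument that proves the analogous equivalence for the failures semantics in Theorem~\ref{teoequiv}(2), case $Z=F$, adapting the decomposition of formulas so that a single positive offer is tracked at the end of the computation while negated offers (i.e.\ failure information) are carried elsewhere. Concretely, I would first establish the two inclusions $\mathcal{L}^{\prime}_{\leq_I^{l\supseteq\vee f}} \subseteq \mathcal{L}_{HM}$ and note, as in the other cases, that it suffices to check the preorder inclusions in both directions: $p\leq_I^{l{\supseteq}\lor f} q$ implies $\sqsubseteq^{\prime}_{\leq_I^{l\supseteq\vee f}}$ and conversely.

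For the direction $(\Rightarrow)$ (from the observational order to the logical one), I would take $\varphi \in \mathcal{L}^{\prime}_{\leq_I^{l\supseteq\vee f}}$ with $p\models\varphi$ and, using the structure of the defining clauses, decompose $\varphi$ along a linear chain $\varphi=\varphi_n$, $\varphi_i = \sigma_i \wedge a_{i+1}\varphi_{i+1}$, where each $\sigma_i$ is a formula of the shape $\sigma\wedge\bigwedge_{j\in J_i}\neg\sigma_j\top$ coming from the second clause (with $\sigma\in\mathcal{L}^\prime_I$ a single positive offer $a\top$, possibly absent, and the $\sigma_j$ describing rejected actions). This witnesses an lgo $X_0a_1X_1\dots a_nX_n\in\textit{LGO}_I(p)$ where $X_n$ contains the action named by $\sigma$ and is disjoint from $\{b \mid \neg b\top \text{ among the } \sigma_j\}$; applying $p\leq_I^{l{\supseteq}\lor f} q$ in the equivalent ``for all $a\in X_n$ there is $Y_n\subseteq X_n$ with $a\in Y_n$'' form from the proposition above, I obtain a matching lgo in $\textit{LGO}_I(q)$ whose final set still contains the required positive action and still avoids the negated ones, whence $q\models\varphi$. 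The direction $(\Leftarrow)$ is the usual dual: from $p\not\leq_I^{l{\supseteq}\lor f} q$ I would exhibit, via an lgo of $p$ with a final positive action $a\in X_n$ admitting no suitably small $Y_n\subseteq X_n$ in any matching lgo of $q$, a formula $\varphi = a_1(\dots a_n(a\top \wedge \bigwedge_{b\in\overline{X_n}}\neg b\top)\dots)$ that $p$ satisfies but $q$ does not.

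Alternatively — and this is probably the cleaner route to actually include in the paper — I would observe that $\mathcal{L}^{\prime}_{\leq_I^{l\supseteq\vee f}}$ sits between $\mathcal{L}^{\prime}_F$ and $\mathcal{L}^{\prime}_R$ in the same way that $\leq_I^{l{\supseteq}\lor f}$ sits between $\leq_I^{lf\supseteq}$ and $\leq_I^{lf}$ (noted just above), and then reuse Theorem~\ref{teoequiv}(2) for $F$ and $R$ together with the bridging machinery of Section~\ref{logic_observational_framework}: the normal-form / complete-normal-form correspondence of Theorem~\ref{isomorfia_observaciones} applies verbatim once one observes that the complete normal formulas for this logic correspond to lgo's in $\textit{LGO}_I$ closed under $\ol{(\cdot)}^{{\supseteq}\lor f}$, exactly as the failures case corresponds to closure under $\ol{(\cdot)}^{f\supseteq}$. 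This makes the proof a two-line appeal, as is done for the other $R\vee FT$-style statements, and sidesteps re-deriving the decomposition lemmas.

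The main obstacle I anticipate is bookkeeping in the first approach: the second clause of $\mathcal{L}^{\prime}_{\leq_I^{l\supseteq\vee f}}$ mixes a single positive conjunct $\sigma$ with a finite bundle of negated offers $\bigwedge_{j\in J}\neg\sigma_j\top$, so the decomposition of an arbitrary formula into a linear chain is slightly subtler than in the pure-failures case (where only negated conjuncts appear) — one has to be careful that the positive conjunct is only ever checked at the \emph{final} state of each branch and not in the middle, which is precisely the syntactic restriction that yields revivals rather than ready traces. If I take the shortcut via $\ol{(\cdot)}^{{\supseteq}\lor f}$, the only thing to check carefully is that this operator is indeed a closure in the sense of Proposition~\ref{closures:prop} (it is: the union-over-$J$ in Definition's clause is idempotent and monotone), after which $\calT\leq_I^{l{\supseteq}\lor f}\calT'$ iff $\ol{\calT}^{{\supseteq}\lor f}\subseteq\ol{\calT'}^{{\supseteq}\lor f}$ goes through as in Proposition~\ref{prop_contenido}, and the logical side matches term by term.

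\begin{proof}
Since $\leq_I^{l{\supseteq}\lor f}$ lies between $\leq_I^{lf\supseteq}$ and $\leq_I^{lf}$ (as noted after the definition of $\ol{\calT}^{{\supseteq}\lor f}$), and the corresponding logic $\mathcal{L}^{\prime}_{\leq_I^{l\supseteq\vee f}}$ lies between $\mathcal{L}^{\prime}_{F}$ and $\mathcal{L}^{\prime}_{R}$, the argument is a minor variant of the case $Z=F$ in the proof of Theorem~\ref{teoequiv}. We only indicate the differences.

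$(\Rightarrow)$ Assume $p\leq_I^{l{\supseteq}\lor f} q$ and let $\varphi\in\mathcal{L}^{\prime}_{\leq_I^{l\supseteq\vee f}}$ with $p\models\varphi$. By the defining clauses we may decompose $\varphi$ along a chain $\varphi=\varphi_n$, $\varphi_i=a_{i+1}\varphi_{i+1}$ for $i\in 0..n-1$, and $\varphi_n=\sigma\wedge\bigwedge_{j\in J}\neg\sigma_j\top$, where $\sigma\in\mathcal{L}^{\prime}_I$ is a single positive offer $a\top$ (or $\top$, in which case the claim reduces to a trace argument) and the $\sigma_j\top$ name rejected actions. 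Then there is $X_0a_1X_1\dots a_nX_n\in\textit{LGO}_I(p)$ with $a\in X_n$ and $X_n\cap\{b\mid \sigma_j = b \text{ for some } j\in J\}=\emptyset$. Using the equivalent form of $\leq_I^{l{\supseteq}\lor f}$ from the proposition above, for $a\in X_n$ there is $Y_0a_1\dots Y_n\in\textit{LGO}_I(q)$ with $a\in Y_n$ and $Y_n\subseteq X_n$; hence $Y_n$ still avoids all the $\sigma_j$'s and still offers $a$, so $q\models\varphi$.

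$(\Leftarrow)$ Assume $p\not\leq_I^{l{\supseteq}\lor f} q$: there is $X_0a_1\dots X_n\in\textit{LGO}_I(p)$ and some $a\in X_n$ such that no $Y_0a_1\dots Y_n\in\textit{LGO}_I(q)$ satisfies $a\in Y_n$ and $Y_n\subseteq X_n$. Put $\varphi = a_1(\dots a_n(a\top\wedge\bigwedge_{b\in\overline{X_n}}\neg b\top)\dots)\in\mathcal{L}^{\prime}_{\leq_I^{l\supseteq\vee f}}$. Then $p\models\varphi$ via the chosen lgo, while any lgo of $q$ realizing the trace $a_1\dots a_n$ with final set $Y_n$ either fails to contain $a$ or meets $\overline{X_n}$, so $q\not\models\varphi$. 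Therefore $\sqsubseteq^{\prime}_{\leq_I^{l\supseteq\vee f}}$ and $\leq_I^{l{\supseteq}\lor f}$ coincide.
\end{proof}
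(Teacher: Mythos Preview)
Your proof is correct and is essentially a fully worked-out version of what the paper only sketches. The paper's own proof is extremely terse: it records the structural observation that $\mathcal{L}^{\prime}_{\leq_I^{l\supseteq\vee f}} = \mathcal{L}^{\prime}_{\leq^{l\supseteq}_I} \cap \mathcal{L}^{\prime}_{\leq^{lf}_{I}}$ (the intersection of the failure-trace and readiness logics, not merely ``between $\mathcal{L}^{\prime}_F$ and $\mathcal{L}^{\prime}_R$'' as you write in your opening sentence), and then declares that exhibiting a distinguishing formula when $p \not\leq_I^{l{\supseteq}\lor f} q$ is ``almost immediate''. Your $(\Leftarrow)$ direction is precisely that almost-immediate step made explicit, via the same witness formula $a_1\ldots a_n(a\top\wedge\bigwedge_{b\in\overline{X_n}}\neg b\top)$; your $(\Rightarrow)$ direction supplies the detail the paper omits entirely. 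So the two arguments coincide in substance, with yours being the honest expansion.

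Two small cosmetic points. First, the decomposition indices are reversed: you write $\varphi=\varphi_n$ and $\varphi_i=a_{i+1}\varphi_{i+1}$, which unwinds the wrong way; you mean $\varphi=\varphi_0$, $\varphi_i=a_{i+1}\varphi_{i+1}$ for $i<n$, and $\varphi_n=\sigma\wedge\bigwedge_j\neg\sigma_j\top$. Second, when $\sigma=\top$ (no positive offer) the residual formula is a pure failures formula rather than a trace formula, and you should also note that $\sigma=\neg 0$ reduces to a disjunction of the $a\top$ cases via Proposition~\ref{disyuncion_teo}. Neither affects correctness.
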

\begin{proof}
In this case we have taken $\mathcal{L}^{\prime}_{\leq_I^{l\supseteq\vee f}} = \mathcal{L}^{\prime}_{\leq^{l\supseteq}_I} \cap \mathcal{L}^{\prime}_{\leq^{lf}_{I}}$.
Then, to prove that it defines $R \vee FT$ it is enough to check that $p \not\sqsubseteq^{l\supseteq \vee f}_{I} q$ implies that there exists $\varphi$ in $\mathcal{L}^{\prime}_{\leq^{l\supseteq \vee f}_{I}}$ such that $p\models \varphi$ and $q \not\models \varphi$, which is almost immediate.
\end{proof}
Again, by replacing the \textit{I} above by the generic \textit{N}, we get the definitions and results for the general case.

We can generalize most of the results obtained for the refusal semantics when
$N=I$ to any reasonable local observation function such as $T$ or $S$, once we
interpret  
$\subseteq$ as the corresponding order and $=$ as the induced equivalence.
However, in order to define the adequate observational characterization of
the revivals semantics for a local observation (or constraint) $N$, we should
look for the adequate ``elements'' of the universe of observations. 
This leads us to traces when $N$ is $T$, but it is not so clear
how to define those ``elements'' for a non-extensional semantics such as
that obtained when $N$ is $S$.

Let us conclude this section with a look at the picture in 
Figure~\ref{fig:diamond} showing the real structure of the full (bidimensional!)
diamond, that should be included in all the upper levels of the extended ltbt 
spectrum.

\begin{figure}[tbp]
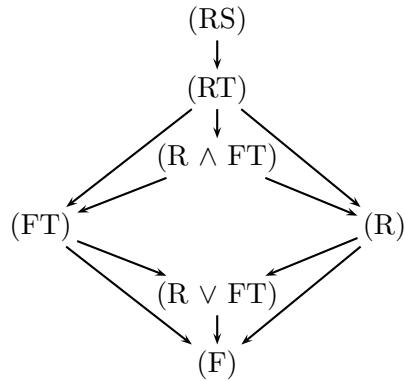

\begin{center}
{
\psset{nodesep=2pt}
\begin{tabular}{c@{\qquad}ccc@{\qquad}c}
  & & & \\ & & \rnode{rs}{(RS)} & \\
  & & & \\  & & \rnode{rt}{(RT)} & \\
  & & & \\  & & \rnode{raft}{(\textrm{R $\land$ FT})} & \\
  & & & \\ \rnode{ft}{(FT)} & & & & \rnode{r}{(R)}\\
  & & & \\  & & \rnode{roft}{(\textrm{R $\lor$ FT})} & \\
  & & & \\ & & \rnode{f}{(F)} & 
  \ncline{->}{rs}{rt}
  \ncline{->}{rt}{ft}
  \ncline{->}{rt}{r}
  \ncline{->}{ft}{f}
  \ncline{->}{r}{f}
  \ncline{->}{rt}{raft}
  \ncline{->}{raft}{ft}
  \ncline{->}{raft}{r}
  \ncline{->}{ft}{roft}
  \ncline{->}{r}{roft}
  \ncline{->}{roft}{f}
\end{tabular} 
}
\end{center}
 \caption{The real diamond below ready simulation.}
 \label{fig:diamond}
\end{figure}

\section{Operational semantics}
\label{sec:OperationalSemantics}
In this section we explain how to develop the semantics in the spectrum
in an operational way. Certainly, this presentation could be argued to
be ad-hoc at times since some
``high-level'' conditions are required in the SOS-like rules for some of the
semantics. Moreover, the style of presentation at this Section is certainly less precise and detailed than in the previous ones.
However, we believe it still provides some additional insight on 
the common properties of the semantics
and also establishes a connection with our previous work on (bi)simulations
up-to \cite{FG07sos,FG09ic} as a way to get coinductive characterizations
of any ``reasonable'' process semantics.

Structural operational semantics was introduced by G.~Plotkin in 1981, even
though his seminal work was not published in a journal until 2004 \cite{Plo04}. 
In Section~\ref{sec:Preliminaries} we already presented a basic operational
semantics for our processes as a starting point for the definition of all the
semantics in the spectrum: a small-step semantics that collects the
(atomic) actions executed by the processes into the corresponding
transition system.  
By contrast, all the operational semantics in this section will be
big-step semantics which directly return the adequate semantic values 
defining each of the semantics. They are generated by means of SOS-like rules that
obtain these values in a compositional way.  
An extensive presentation of structural operational semantics covering all its 
variants can be found in \cite{AFV01}.

\subsection{Local simulations up-to}
\label{sec:LocalSimulations}

In order to characterize all the reasonable behavior preorders in
a coinductive way we need to generalize constrained $N$-simulations
(Definition~\ref{def:constrainedsimulation}) with $N$-simulations up-to an
order $\ltp$.

\begin{defi}
  \label{def:ConstrainedSimulationUpto}
   Let $\ltp$ be a behavior preorder and
   $N$ a relation over processes.
   We say that a binary relation
   $S$ over processes
   is an \emph{$N$-simulation up-to $\ltp$} if $S\subseteq N$
   and  $S$ is a simulation
   up-to $\ltp$. Or equivalently, in a coinductive way,
   whenever we have $pSq$ we also have:
   \begin{iteMize}{$\bullet$}
   \item for every $a$, if $p\tran{a}p'_a$ then there exist $q',q'_a$ such that
     $q\gtp q'\tran{a} q'_a$ and $p'_a S q'_a$;
   \item $pNq$.
   \end{iteMize}
   We say that process $p$ is $N$-simulated up-to $\ltp$
   by process $q$, or that process $q$ $N$-simulates
   process $p$ up-to $\ltp$, written
   $p\ltlbi^N_{\ltp}q$, if there exists an $N$-simulation up-to $\ltp$, $S$,
   such that $pSq$.
\end{defi}

We often just write $\ltlbi^N$,
instead of $\ltlbi^N_{\ltp}$, when the behavior preorder
is clear from the context.

We proved in~\cite{FG07sos} that all the preorders  defining the
semantics in the ltbt spectrum can be
characterized as $N$-simulations up-to the corresponding
equivalence relation $\eip$, where $N$ is the constraint defining
the coarsest simulation semantics finer than the given semantics.
For instance, the result for the semantics between failures semantics and
ready simulation was the following.
\begin{thm}[\cite{FG07sos}]
\label{the:LtlbiThenPreorder} For every behavior preorder $\ltp$
satisfying the axiom ($RS$) and ${\ltp} \subseteq I$, we have
$p\ltp q$ if and only if $p\ltlbi^I_\ltp q$.
\end{thm}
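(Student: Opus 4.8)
The plan is to prove the two implications separately, exploiting the fact that $\ltlbi^I_\ltp$ is defined by a coinductive clause while $\ltp$ is an abstract behavior preorder constrained by $(RS)$ and ${\ltp}\subseteq I$.

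For the forward direction, $p\ltp q \Rightarrow p\ltlbi^I_\ltp q$, I would simply exhibit $\ltp$ itself (or rather the relation ${\ltp}$ viewed as a set of pairs) as an $I$-simulation up-to $\ltp$. Since ${\ltp}\subseteq I$ by hypothesis, the constraint clause $pNq$ with $N=I$ is immediate. For the transfer clause, suppose $p\ltp q$ and $p\tran{a}p'_a$. Write $p = \sum_a\sum_i ap^i_a$; then $ap'_a$ is a summand of $p$, and using that $\ltp$ is a precongruence and that $x\ltp x+y$-type reasoning is available, one can manipulate $q$. Concretely, one wants $q',q'_a$ with $q\gtp q'\tran{a}q'_a$ and $p'_a\ltp q'_a$. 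The key algebraic tool is axiom $(RS)$: $I(x)=I(y)\Rightarrow x\preceq x+y$, which lets us ``grow'' $q$ while staying above it. Because $p\ltp q$ forces $I(p)\subseteq I(q)$ (indeed $p\ltp q$ and ${\ltp}\subseteq I$ give $pIq$, so $I(p)=I(q)$), the action $a$ is available in $q$, and one extracts the matching derivative. This is essentially the argument already used in~\cite{FG07sos}, so I would cite the structure of that proof and just re-verify the clauses in the present notation.

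For the converse, $p\ltlbi^I_\ltp q \Rightarrow p\ltp q$, the idea is to show that the relation $R = {\ltlbi^I_\ltp}$ is contained in $\ltp$. Here I would use a head-normal-form / structural induction argument on $p$ (working up to bisimilarity via $B_1$--$B_4$, which all behavior preorders satisfy). Given $pRq$, by definition there is an $I$-simulation up-to $\ltp$, call it $S$, with $pSq$. From $pSq$ we get $pIq$, hence $I(p)=I(q)$. For each summand $ap'$ of $p$ (so $p\tran{a}p'$), the up-to clause yields $q\gtp q'\tran{a}q'_a$ with $p'Sq'_a$; by the induction hypothesis $p'\ltp q'_a$, whence $ap'\ltp aq'_a$ by precongruence, and $aq'_a$ is a summand of $q'$. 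Summing over all summands of $p$ and using $B_1$--$B_4$ plus $(RS)$ (applicable since $I(p)=I(q)$, and more generally since we can match initial offers along the way), one derives $p\ltp q' \ltp q$, the last step because $q\gtp q'$ means $q'\ltp q$; transitivity of $\ltp$ closes the argument. One has to be slightly careful that the various $q'$ obtained for different summands can be combined: here the monotonicity of $\ltp$ under $+$ and the idempotence/commutativity axioms do the bookkeeping, exactly as in the completeness proofs of Section~\ref{rnoef:sec1}.

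The main obstacle, as usual with up-to techniques, is the converse direction: verifying that the up-to relation is \emph{sound}, i.e. that the ``$q\gtp q'\tran{a}q'_a$'' steps really can be absorbed into a genuine proof of $p\ltp q$ without circularity. The delicate point is that each use of the up-to clause introduces a fresh intermediate process $q'$ with $q'\ltp q$, and one must ensure these do not accumulate uncontrollably; the finiteness of BCCSP terms (bounded depth) is what makes the structural induction go through, and the hypotheses $(RS)$ and ${\ltp}\subseteq I$ are exactly what is needed to match the initial offers at each step so that $(RS)$ is applicable. For infinite processes one would instead need an approximation/continuity argument, but since the statement is about BCCSP this is not required here. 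I would therefore organize the proof as: (1) the easy forward direction by exhibiting $\ltp$ as a witness; (2) the converse by structural induction on $p$, with the inductive step powered by precongruence, $B_1$--$B_4$, $(RS)$, transitivity, and the observation $I(p)=I(q)$; and (3) a remark that this is the specialization to $N=I$ of the general soundness result of~\cite{FG07sos}.
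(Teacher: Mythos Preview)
The paper does not give its own proof of this theorem: it is stated with a citation to~\cite{FG07sos} and used as a black box. So there is nothing to compare against directly; what follows is an assessment of your argument on its own merits.

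Your backward direction (from $p\ltlbi^I_\ltp q$ to $p\ltp q$) by structural induction on $p$ is essentially correct. The bookkeeping you flag as ``slightly careful'' is the real content: for each summand $ap^i_a$ of $p$ you obtain $q^{a,i}\ltp q$ with $q^{a,i}\tran{a}r^{a,i}$ and, by the induction hypothesis, $ap^i_a\ltp ar^{a,i}$. Summing, $p\ltp\sum ar^{a,i}$; since each $ar^{a,i}$ is a summand of $q^{a,i}$ and $I(\sum ar^{a,i})=I(p)=I(q)=I(\sum q^{a,i})$, one application of $(RS)$ together with idempotence gives $\sum ar^{a,i}\ltp\sum q^{a,i}$; finally $\sum q^{a,i}\ltp q$ by precongruence and $B_3$. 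This is exactly the shape of the completeness arguments in Section~\ref{rnoef:sec1}, as you note.

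Your forward direction, however, is muddled and contains a gap. You propose to show that $\ltp$ itself is an $I$-simulation up-to $\ltp$, which is the right idea, but then you try to ``extract the matching derivative'' from $q$ and claim $p'_a\ltp q'_a$ for it. Nothing in the hypotheses guarantees such a derivative exists: knowing $a\in I(q)$ tells you $q\tran{a}q'_a$ for \emph{some} $q'_a$, but not that $p'_a\ltp q'_a$; that is precisely what the theorem is about. Your talk of using $(RS)$ to ``grow $q$ while staying above it'' also has the direction backwards, since the clause requires $q\gtp q'$, i.e.\ $q'\ltp q$. The fix is much simpler than what you attempt: take $q'=p$. Then $q\gtp p$ is just $p\ltp q$, $p\tran{a}p'_a$, and $p'_a\ltp p'_a$ by reflexivity. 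That is the whole argument; no use of $(RS)$ is needed for this direction.
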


Table~\ref{table:constraints} shows the constraints defining
the adequate constrained simulation order finer than each of the
semantics in the linear time-branching time spectrum. Obviously, they coincide with the layer of the extended spectrum at which each semantics appear.

\begin{table}[th]
    \centering
    {\small
    \begin{tabular}[h]{|c|c|c|c|c|c|c|c|c|c|c|c|c|}
      \cline{2-13}
      \multicolumn{1}{c|}{}&\enspace T\enspace&\enspace S\enspace &\thinspace CT\thinspace&\thinspace CS\thinspace&\enspace F\enspace &\enspace R\enspace &\thinspace FT\thinspace&\thinspace RT\thinspace&\thinspace PW\thinspace&\thinspace RS\thinspace&\thinspace PF\thinspace&\thinspace 2N\thinspace\\
      \hline
      \thinspace$C_O$\thinspace&$U$&$U$&$C$&$C$&$I$&$I$&$I$&$I$&$I$&$I$&$V$&$W$\\
      \hline
    \end{tabular}
}
    $$
    \begin{array}{cclcccl}
     pUq & \Longleftrightarrow & true& & pVq & \Longleftrightarrow & p \eip_T q\\
     pCq & \Longleftrightarrow & (p=\cero \textrm{ iff } q=\cero)&\mbox{\qquad}&
     pWq & \Longleftrightarrow & p \eip_S q\\
     pIq & \Longleftrightarrow & I(p)=I(q)
    \end{array}
    $$
    \caption{Constraints for the semantics in the ltbt spectrum.}
    \label{table:constraints}
\end{table}

Note that Theorem~\ref{the:LtlbiThenPreorder} is more subtle that it could appear: it characterizes a given preorder with a constrained simulation upto the preorder itself (Definition~\ref{def:ConstrainedSimulationUpto}). Therefore, there are several semantics that share the same constraint. This characterization is indeed rather technical and the key point is that it allows to express any behavior preorder in a simulation-like fashion. We have used this characterization to prove many useful statements in our previous work\footnote{For instance in \cite{FG08ifiptcs} (Theorem 10) we provided an axiomatization for any behavior preorder starting from the equations of the corresponding equivalence.} and we will use it again several times in the current Section.

In our proof of the completeness of the axiomatizations for the
linear semantics in the spectrum in Section~\ref{rnoef:sec} we
used a notion of normal form which, roughly, was defined by applying repeatedly 
to any term $p$ the axiom $(\textit{ND}_\eip)$ from right to left, for as
long as possible.
Propositions~\ref{hnfx-aux:prop} and \ref{hnfx-aux2:prop} were then 
the key results to complete the proof, and also lie behind the intuition for 
introducing now the notion of \emph{local I-simulation up-to}.

\begin{defi}
 For $Z\in\{F,R,FT,RT\}$ and $p =\sum_a\sum_i ap^i_a$, whenever we have
 a pair of indices $i$, $j$ and a decomposition $ p^j_a = r^j_a + s^j_a $
 with $M_Z(p^j_a, r^j_a , s^j_a)$ we say that $p$ is 1-locally
 $Z$-equivalent to $q=p+a(p^i_a + r^j_a)$, and we write $p \eip^{l1}_Z q$. 
 We say  that $p$ and $q$ are locally $Z$-equivalent when
 they are related by the reflexive and transitive closure of
 $\eip^{l1}_Z$, and then we write $p \eip^{l}_Z q$.

 For $Z\in\{F,R,FT,RT\}$ we refer to the $I$-simulations up-to $\eip^{l}_Z$ as local $I$-simulations up-to $\eip_Z$. 
 We say that process $p$ is locally
 $I$-simulated up-to $\eip_Z$ by process $q$, or that process $q$ locally 
 $I$-simulates process $p$ up-to $\eip_Z$, written
 $p \ltlbi^I_{\eip^{l}_Z}q$, if there exists a local $I$-simulation up-to 
 $\eip_Z$, $S$, such that $pSq$.
\end{defi}

Local $I$-simulations up-to are enough to characterize the 
linear semantics in $\{F,R,FT,RT\}$.
Note that we cannot get a local notion of bisimulation up-to 
equivalent to our unrestricted notion of bisimulation up-to.

\begin{prop}
\label{prop:localisis}
 For $Z\in\{F,R,FT,RT\}$ we have $p\ltp_Z q$ if and only if 
 $p\ltlbi^I_{\eip^{l}_Z} q$.
\end{prop}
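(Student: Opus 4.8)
The statement to prove is Proposition~\ref{prop:localisis}: for $Z\in\{F,R,FT,RT\}$, $p\ltp_Z q$ iff $p\ltlbi^I_{\eip^{l}_Z} q$. The plan is to reduce this to the already-established machinery, namely Theorem~\ref{the:LtlbiThenPreorder} (which says $p\ltp q$ iff $p\ltlbi^I_\ltp q$ for any behavior preorder $\ltp$ satisfying $(RS)$ with ${\ltp}\subseteq I$), together with the axiomatic characterization $\{B_1\text{--}B_4,(RS),(\textit{ND}^Z)\}$ of $\ltp_Z$ from Section~\ref{anamps:sec} and its soundness/completeness (Theorem~\ref{sandc:thm}) and the normal-form results Propositions~\ref{hnfx-aux:prop}--\ref{hnfx-aux2:prop}. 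The key observation to isolate first is that the ``one-step local $Z$-equivalence'' $\eip^{l1}_Z$ is exactly the relation that adds a summand $a(p^i_a+r^j_a)$ to $p$ using one application of $(\textit{ND}_\eip^Z)$ (in the form $(\textit{ND}_+)$ / $(\textit{ND}_\equiv)$ of Proposition~\ref{rnd-eq:prop}): $p \eip^{l1}_Z q$ means precisely $q$ is obtained from $p = \sum_a\sum_i ap^i_a$ by a right-to-left instance of the non-determinism axiom at the top level. Hence $\eip^l_Z$ (its reflexive–transitive closure) is contained in the semantic equivalence $\eip_Z$, and in fact $p\eip^l_Z q$ implies $p\ltp_Z q$ and $q\ltp_Z p$ by soundness of the axioms.

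\textbf{Step 1 (soundness direction, ``$\Leftarrow$'').} I would show that if $S$ is a local $I$-simulation up-to $\eip_Z$ with $pSq$, then $p\ltp_Z q$. Since $\eip^l_Z\subseteq\eip_Z\subseteq\ltp_Z$, every local $I$-simulation up-to $\eip^l_Z$ is in particular an $I$-simulation up-to $\eip_Z$ in the sense of Definition~\ref{def:ConstrainedSimulationUpto}; so $p\ltlbi^I_{\eip^l_Z}q$ implies $p\ltlbi^I_{\eip_Z}q$, and then Theorem~\ref{the:LtlbiThenPreorder} (applicable because $\ltp_Z$ satisfies $(RS)$ and ${\ltp_Z}\subseteq I$, both of which follow from the axiomatization in Section~\ref{anamps:sec}) gives $p\ltp_Z q$. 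The only thing requiring a genuine argument here is the inclusion $\eip^l_Z\subseteq\ltp_Z\cap\ltp_Z^{-1}$, i.e.\ that a single right-to-left application of $M_Z(p^j_a,r^j_a,s^j_a)\Rightarrow a(p^j_a)\preceq a r^j_a + \dots$ is sound — but this is immediate from Theorem~\ref{sandc:thm} and the equational version via Proposition~\ref{rnd-eq:prop}, since adding a redundant summand with the $\eip$ form of the axiom changes nothing semantically.

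\textbf{Step 2 (completeness direction, ``$\Rightarrow$'').} Assume $p\ltp_Z q$; I want a local $I$-simulation up-to $\eip^l_Z$ relating them. The natural candidate is
\[
R=\{(p,q)\mid p\ltp_Z q\}.
\]
It is contained in $I$ since ${\ltp_Z}\subseteq I$. To check the simulation-up-to condition, take $(p,q)\in R$ and $p\tran{a}p'_a$, i.e.\ $p'_a$ is one of the summands $p^i_a$. Here is where Propositions~\ref{hnfx-aux:prop} and \ref{hnfx-aux2:prop} do the work: by Proposition~\ref{hnfx-aux2:prop} there is a summand $ah^k_a$ of the $Z$-head normal form $\hnfx{q}$ with $p^i_a\leq_I^{l(Z)}h^k_a$, and $h^k_a$ has exactly the shape $q^j_a+\sum_{l}q^{j_l}_a|_{X_1}$ arising from a chain of decompositions $q^{j_l}_a = q^{j_l}_a|_{X_1}+q^{j_l}_a|_{\overline{X_1}}$ each satisfying the constraint $M_Z$ — that is, $q\eip^l_Z q''$ where $q''$ contains $ah^k_a$ as a summand, via repeated $\eip^{l1}_Z$ steps. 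So taking $q'=q''$ we have $q\eip^l_Z q'\tran{a}h^k_a$ and, by Theorem~\ref{denotational-main:thm}/Theorem~\ref{sandc:thm} translating $\leq_I^{l(Z)}$ back into $\ltp_Z$, $(p^i_a,h^k_a)\in R$. Also $pIq$ holds. Hence $R$ is a local $I$-simulation up-to $\eip^l_Z$, so $p\ltlbi^I_{\eip^l_Z}q$.

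\textbf{Main obstacle.} The delicate point is Step 2: matching the notion of ``local $I$-simulation up-to'' — which allows replacing $q$ by any $q'$ with $q\eip^l_Z q'$ before doing the $a$-transition — with the head-normal-form construction, which performs a whole batch of decompositions at once. I expect to need a small lemma stating that $q\eip^l_Z \hnfx q$ (or at least that every summand of $\hnfx q$ is reachable from $q$ by finitely many $\eip^{l1}_Z$-steps together with $B_1$–$B_4$ rearrangements), which is essentially a re-reading of the inductive argument inside the proof of Proposition~\ref{hnfx-aux:prop}: there the inequality $a(p_a^i+\sum_{h=1}^{l}p_a^{j_h}|_{X_1})\preceq ap_a^i+\sum ap_a^{j_h}$ is proved by induction on $l$ using precisely one $M_Z$-constrained decomposition at each step, which is exactly one $\eip^{l1}_Z$-step. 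Once that correspondence is spelled out, the rest is bookkeeping. A secondary subtlety is that for infinite (or merely non-image-finite) processes one must be careful that the transitive closure $\eip^l_Z$ still suffices; but since we are working within BCCSP with finite terms (as fixed in Section~\ref{sec:Preliminaries}), finitely many $\eip^{l1}_Z$-steps always suffice and no such issue arises.
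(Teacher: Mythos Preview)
Your proposal is correct and follows essentially the same approach as the paper's own proof: the paper also derives $(\Leftarrow)$ as an immediate consequence of Theorem~\ref{the:LtlbiThenPreorder} (via the monotonicity chain $\eip^l_Z\subseteq\eip_Z\subseteq\ltp_Z$ you spell out), and for $(\Rightarrow)$ also takes $R=\{(p,q)\mid p\ltp_Z q\}$ and shows it is a local $I$-simulation up-to by invoking $q\eip^l_Z\hnfx{q}$ together with Proposition~\ref{hnfx-aux2:prop}. Your identification of the ``main obstacle'' --- that $q\eip^l_Z\hnfx{q}$ must be obtained by iterating $\eip^{l1}_Z$-steps, reusing the inductive mechanism inside the proof of Proposition~\ref{hnfx-aux:prop} (in particular the monotonicity $M_Z(p_a^i,\cdot,\cdot)\Rightarrow M_Z(p_a^i+\sum p_a^{j_h}|_{X_1},\cdot,\cdot)$) --- is exactly right, and is in fact more explicit than what the paper writes.
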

\begin{proof}
The implication from right to left is an immediate consequence of
Theorem~\ref{the:LtlbiThenPreorder}. For the other, note
that $\{ (p,q) \mid p\ltp_Z q \} $ is a local $I$-simulation up-to
$\eip_Z$. Indeed, for any $p \tran{a} p^i_a $ we have $q
\eip^{l}_Z \textit{hnf\/}^Z(q)$ and taking 
$\textit{hnf\/}^Z(q) = \sum_a\sum_i ah^j_a$
there exists some $j$ such that $\textit{hnf\/}^Z(q) \tran{a}h^j_a$ and
$p^i_a \ltp_Zh^j_a$.
\end{proof}

\begin{exa}
Let us consider the processes $p = abc+abd$ and $q=a(bc+bd)$.
We have $p\equiv_F q$ and we can check that $p\ltlbi^I_{\equiv^l_F} q$ since
$p\sqsubseteq_{RS} q$.
In order to prove that we also have $q\ltlbi^I_{\equiv^l_F} p$, we apply 
$\equiv^l_F$ to $p$ to obtain $p\equiv^l_F p+q$ and then we obtain
$q\sqsubseteq_{RS} p$.

By contrast, if we wanted to apply our bisimulation up-to characterization
to prove directly that $p\equiv_F q$ then we would have to turn $q$ into $q+p$
in order to simulate the transition $p\tran{a}bc$. 
This would correspond to the local application of $(\textit{ND\/}^F_\equiv)$
combined with that of 
\[
(RS_\equiv)\quad I(x) = I(y) \Longrightarrow a(x+y) \simeq a(x+y) + ax.
\]
But if we replace the action $a$ by a larger prefix $a_1\dots a_n$ then 
we should also modify the process $q'=a_1\dots a_n(bc+bd)$ in a non-local
way in order to obtain $q''= q'+p'$, so that we could suitably simulate
the transition
$p'=a_1\dots a_nbc+a_1\dots a_nbd\tran{a_1} a_2\dots a_nbc$.
Certainly, this is not necessary when checking $p'\equiv_F q'$ by means
of local simulations up-to.
\end{exa}

The coinductive characterization of the semantics by means of
simulations up-to has at least two important advantages over that of using
bisimulations up-to.
First, we can characterize the orders defining the semantics and not just
the induced equivalences; and second, we can use a local variant of the
up-to mechanism so that we only need to rely on the equivalence relation
$\equiv^l_Z$ for the up-to part.

\subsection{Operational rules for the linear semantics of processes}
\label{sec:sos}

In Section~\ref{sec:LocalSimulations} we have introduced and
proved some results that establish the framework using which we achieve our 
goal: to define for each of the classic linear
semantics an operational semantics over BCCSP terms in such a way that we can use
constrained simulations to characterize the considered semantics. For instance, if we consider the case of the failures preorder
$\ltp_F$, we are going to define a new operational semantics for
BCCSP terms $(\proc,\act,\Rightarrow_F)$ such that $p\ltp_F q$ if
and only if $q$ ready simulates $p$ in
$(\proc,\act,\Rightarrow_F)$.

Next we will concentrate first on the diamond of linear semantics
coarser than ready simulation. All these semantics are based on
the observation of the initial set of actions of each process,
that can be obtained by application of the SOS-like rules in
Figure~\ref{fig:initials}.

\begin{figure}[tbp]
\[
\begin{array}{ccccc}
  \cero\flee_I \emptyset &\mbox{\qquad} & 
  ap\flee_I \{a\} &\mbox{\qquad} & 
 \frac{\displaystyle p\flee_I A\quad q\flee_I B}{\displaystyle p+q\flee_I A\cup B}
\end{array}
\]
  \caption{Rules that compute the set of initial actions of a process.}
  \label{fig:initials}
\end{figure}


The rules in Figure~\ref{fig:sos} define the transition relation
$\Longrightarrow_Z$ that induces the operational semantics to
characterize each of the $Z$-semantics.  The transition relation
$\longleftrightarrow_Z$ is an auxiliary relation that captures the
iterated application of the axiom $(\textit{ND\/}^Z_\eip)$. Rules
(RF) and (TR) define reflexivity and transitivity of the relation
$\longleftrightarrow_Z$. Finally, the rule (CL) combines the
auxiliary relation $\longleftrightarrow_Z$ and the original
operational transition relation $\longrightarrow$ (see
Figure~\ref{fig:OperationalSemanticsBCCSP}), to define the new labeled
transitions $\Longrightarrow_Z$.

\begin{figure}[t]
\[
\begin{array}{lr}
\multicolumn{2}{c}{(\mbox{ND})\;\; \frac{\displaystyle p\flee_I A_p\quad  
  q\flee_I A_q\quad  r\flee_I A_r\quad  
  M_Z(A_p,A_q,A_r)}{\displaystyle 
                   ap+a(q+r)+s\longleftrightarrow_Z ap+a(q+r)+a(p+q)+s}} \\ \\
(\mbox{RF})\;\;  p\longleftrightarrow_Z p&
  (\mbox{TR})\;\; \frac{\displaystyle p\longleftrightarrow_Z q\quad  
  q\longleftrightarrow_Z r}{\displaystyle p\longleftrightarrow_Z r}\\ \\
(\mbox{CL})\;\;\frac{\displaystyle 
                     p\longleftrightarrow_Z p'\quad 
                     p'\tran{a}q}{\displaystyle p\Tran{a}_Z q  }\\
\end{array}
\]
  \caption{Operational semantics characterizing the linear semantics.}
  \label{fig:sos}
\end{figure}

\begin{defi}\label{def:sos}
For $Z\in\{F,R,FT,RT\}$,
the operational semantics for BCCSP terms is given by
the labeled transition system $(\proc,\act,\Longrightarrow_Z)$ where the
transition 
relation $\Longrightarrow_Z$ is defined by the rules in Figure~\ref{fig:sos}.
\end{defi}

By abuse of notation, we have written $M_Z(A_p,A_q,A_r)$ to express that
we check $M_Z(p,q,r)$ using the initials computed by $\longrightarrow_I$.

The relation $\Longrightarrow_Z$ has some interesting properties. 
First, it is an extension of the original transition system.

\begin{prop}
  \label{pro:preservetraces}
  For $Z\in\{F,R,FT,RT\}$, $p$ and $q$ BCCSP processes,
  and $\alpha$ a sequence of actions in $\act$,
  we have that
  $p\Tran{\alpha}q$ implies $p\stackrel{\alpha}{\Longrightarrow_Z} q$.
\end{prop}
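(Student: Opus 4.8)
The statement to prove is Proposition~\ref{pro:preservetraces}: for $Z\in\{F,R,FT,RT\}$ and BCCSP processes $p,q$, if $p\Tran{\alpha}q$ then $p\stackrel{\alpha}{\Longrightarrow_Z}q$. Here $\Tran{\alpha}$ is the iterated original transition relation from Figure~\ref{fig:OperationalSemanticsBCCSP} and $\stackrel{\alpha}{\Longrightarrow_Z}$ is the iterated version of the relation $\Longrightarrow_Z$ defined by the rules in Figure~\ref{fig:sos}. The plan is to reduce the multi-step claim to a single-step claim by an easy induction on the length of $\alpha$, and then to establish the single-step claim directly from the rules (RF) and (CL).

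\textbf{First step: reduce to one action.} I would first prove the base case: for a single action $a\in\act$, $p\tran{a}q$ implies $p\Tran{a}_Z q$. This is immediate from the rules in Figure~\ref{fig:sos}: by rule (RF) we have $p\longleftrightarrow_Z p$ (reflexivity of the auxiliary relation), and then rule (CL), applied with $p'=p$ and the premise $p\tran{a}q$, yields $p\Tran{a}_Z q$. Note this uses no properties of $M_Z$ and is uniform in $Z$, so it works for all $Z\in\{F,R,FT,RT\}$ at once.

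\textbf{Second step: induction on $|\alpha|$.} For the general case, write $\alpha=a_1\cdots a_n$ and argue by induction on $n$. For $n=0$, $\alpha$ is the empty sequence, $p\Tran{\alpha}q$ means $p=q$ (or $p$ and $q$ are identical), and $p\stackrel{\alpha}{\Longrightarrow_Z}q$ holds trivially with zero steps. For the inductive step, suppose $p\Tran{a_1\cdots a_n}q$. By definition of $\Tran{\cdot}$ there is an intermediate process $p'$ with $p\tran{a_1}p'$ and $p'\Tran{a_2\cdots a_n}q$. By the base case (first step), $p\Tran{a_1}_Z p'$; by the induction hypothesis, $p'\stackrel{a_2\cdots a_n}{\Longrightarrow_Z}q$. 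Concatenating these gives $p\stackrel{a_1\cdots a_n}{\Longrightarrow_Z}q$, as required. Here I am using that $\stackrel{\alpha}{\Longrightarrow_Z}$ denotes the obvious lifting of $\Longrightarrow_Z$ to sequences (the notational convention analogous to $p\Tran{\alpha}q$ introduced in Section~\ref{sec:Preliminaries}).

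\textbf{Main obstacle.} Frankly, there is no serious obstacle: the result says essentially that $\Longrightarrow_Z$ extends $\longrightarrow$, and this is built into rule (CL) via the reflexivity rule (RF)—every original transition is already a $\Longrightarrow_Z$ transition, obtained by doing nothing in the auxiliary relation $\longleftrightarrow_Z$ and then taking one ordinary step. The only point that requires any care is making the single-action base case explicit and then bookkeeping the concatenation of derivations in the induction; the constraint predicates $M_Z$ play no role whatsoever in this direction. So the proof is short and purely structural.
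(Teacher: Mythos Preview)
Your proposal is correct and matches the paper's treatment: the paper states this proposition without proof, merely noting that $\Longrightarrow_Z$ ``is an extension of the original transition system,'' and your argument via (RF) and (CL) followed by induction on $|\alpha|$ is exactly the obvious justification the paper leaves implicit.
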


Although usually some new transitions appear, the set of
initial actions of any process always remains the same.
\begin{cor}
  \label{cor:preserveinitials}
  For $Z\in\{F,R,FT,RT\}$ and for any BCCSP process $p$, we have
  $I^{{}^{\rightarrow}}(p) = I^{{}^{\Rightarrow_{Z}}}(p)$.
\end{cor}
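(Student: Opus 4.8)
The statement to prove is Corollary~\ref{cor:preserveinitials}: for $Z\in\{F,R,FT,RT\}$ and any BCCSP process $p$, $I^{{}^{\rightarrow}}(p) = I^{{}^{\Rightarrow_{Z}}}(p)$.

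The plan is to derive this as a direct consequence of Proposition~\ref{pro:preservetraces} together with an analysis of the three rules (ND), (RF), (TR), (CL) that generate $\Longrightarrow_Z$. First I would observe that the inclusion $I^{{}^{\rightarrow}}(p) \subseteq I^{{}^{\Rightarrow_{Z}}}(p)$ is immediate from Proposition~\ref{pro:preservetraces}: taking $\alpha = a$ a single action, $p\tran{a}q$ implies $p\Tran{a}_Z q$, so every initial action of $p$ under $\longrightarrow$ is also an initial action under $\Longrightarrow_Z$.

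For the reverse inclusion $I^{{}^{\Rightarrow_{Z}}}(p) \subseteq I^{{}^{\rightarrow}}(p)$, the key lemma I would prove is that the auxiliary relation $\longleftrightarrow_Z$ preserves the set of initial actions: if $p\longleftrightarrow_Z p'$ then $I(p) = I(p')$, where $I$ is computed by the rules $\longrightarrow_I$ of Figure~\ref{fig:initials} (which, as noted in Section~\ref{sec:Preliminaries} and Figure~\ref{fig:initials}, agree with $\init(\cdot)$ and satisfy $I(ap)=\{a\}$, $I(p+q)=I(p)\cup I(q)$). This lemma goes by induction on the derivation of $p\longleftrightarrow_Z p'$. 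The base case (RF) is trivial. The case (TR) follows by transitivity of equality. The crucial case is (ND): here $p = ap_0 + a(q_0+r_0) + s$ and $p' = ap_0 + a(q_0+r_0) + a(p_0+q_0) + s$, so $I(p') = \{a\}\cup\{a\}\cup\{a\}\cup I(s) = \{a\}\cup I(s) = I(p)$ since the new summand $a(p_0+q_0)$ has initial set $\{a\}$, which is already contributed by $ap_0$. Hence no new initial action appears.

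Given that lemma, the reverse inclusion follows: if $a\in I^{{}^{\Rightarrow_{Z}}}(p)$ then $p\Tran{a}_Z q$ for some $q$, which by rule (CL) means there is $p'$ with $p\longleftrightarrow_Z p'$ and $p'\tran{a}q$; then $a\in I^{{}^{\rightarrow}}(p') = I^{{}^{\rightarrow}}(p)$ by the lemma. Combining both inclusions yields the equality. The main (minor) obstacle is simply making precise that the $I$ appearing in the side condition $M_Z(A_p,A_q,A_r)$ of rule (ND) and the $I$ in the statement are the same function computed by $\longrightarrow_I$, and checking that the new summand introduced by (ND) contributes nothing new to the initial set — both of which are routine once the definitions are unfolded. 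No genuine difficulty is expected; this is a structural-induction bookkeeping argument riding on Proposition~\ref{pro:preservetraces}.
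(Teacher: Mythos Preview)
Your proposal is correct and matches the paper's approach: the paper states this as an immediate corollary (with no explicit proof), and your argument spells out exactly the details the paper leaves implicit --- one inclusion from Proposition~\ref{pro:preservetraces}, the other from the observation that the (ND) rule only adds an $a$-prefixed summand where $a$ is already an initial action of the left-hand side, so $\longleftrightarrow_Z$ preserves the set of initials.
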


It is also clear that, for any $Z\in\{F,R,FT,RT\}$, the auxiliary
relation $\longleftrightarrow_Z$ preserves the equivalence
$\eip_Z$ because the rule $(ND)$ corresponds to the application of
axiom $(\textit{$I$-ND}^Z_{\eip})$, which is sound with respect to
$\eip_Z^I$.

\begin{prop}
  \label{pro:sosequivalence}
   For $Z\in\{F,R,FT,RT\}$ and any two BCCSP processes $p$ and $q$, we have
   $p\longleftrightarrow_Z q$ implies $p\eip_Zq$.
\end{prop}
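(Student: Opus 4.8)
The plan is to prove the inclusion ${\longleftrightarrow_Z}\subseteq{\eip_Z}$ by rule induction on the derivation of $p\longleftrightarrow_Z q$ from the three rules (RF), (TR), (ND) of Figure~\ref{fig:sos}. The two ingredients I would invoke from earlier in the paper are: (i) $\eip_Z$ is a behavior equivalence (Definition~\ref{def:ProcessPreorder}), hence reflexive, transitive, and a congruence with respect to $+$ and the prefixes $a\cdot$; and (ii) the equational axiom $(\textit{ND}^Z_\equiv)$ is sound with respect to $\eip_Z$, which is part of the axiomatization result in Proposition~\ref{prop:AltAxEquiv}.

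Before the induction I would record a preliminary observation: the rules in Figure~\ref{fig:initials} compute the genuine initial-offer function, i.e.\ $p\flee_I A$ holds iff $A = I(p)$; this is an immediate structural induction on $p$. Consequently, in any instance of (ND) the side condition $M_Z(A_p,A_q,A_r)$ is literally the condition $M_Z(p_0,q_0,r_0)$ appearing in the constraint of $(\textit{ND}^Z_\equiv)$, with $A_p=I(p_0)$, $A_q=I(q_0)$, $A_r=I(r_0)$.

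Now for the induction itself. The case (RF) is handled by reflexivity of $\eip_Z$, and the case (TR) by the induction hypothesis together with transitivity of $\eip_Z$. The case (ND) is the heart of the matter, but it too is short: here $p=ap_0+a(q_0+r_0)+s$ and $q=ap_0+a(q_0+r_0)+a(p_0+q_0)+s$, with the side condition guaranteeing $M_Z(p_0,q_0,r_0)$. By the preliminary observation the constraint of $(\textit{ND}^Z_\equiv)$ is met, so its soundness yields $ap_0+a(q_0+r_0)+a(p_0+q_0)\eip_Z ap_0+a(q_0+r_0)$; adding the summand $s$ to both sides is legitimate because $\eip_Z$ is a congruence for $+$, and freely rearranging summands is legitimate because $\eip_Z$ is weaker than bisimilarity and so validates $B_1$--$B_4$. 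This gives $p\eip_Z q$ and completes the induction.

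I do not expect a genuine obstacle here, since the argument is essentially routine; the only point requiring a moment's care is matching the ``operational'' side condition of rule (ND)---phrased in terms of the computed sets $A_p,A_q,A_r$---with the ``equational'' side condition $M_Z(x,y,w)$ of the axiom, and recognising the syntactic rearrangement in (ND)'s conclusion as an instance of $(\textit{ND}^Z_\equiv)$ placed in the context $[\,\cdot\,]+s$. Both are dispatched by the preliminary observation about $\flee_I$ together with the congruence property of $\eip_Z$.
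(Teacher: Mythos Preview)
Your proposal is correct and follows exactly the approach the paper has in mind: the paper does not give a formal proof block here but justifies the proposition in the sentence preceding it by saying that rule (ND) corresponds to an application of the sound axiom $(\textit{ND}^Z_\equiv)$, which is precisely the heart of your (ND) case. Your rule induction over (RF), (TR), (ND), together with the preliminary observation that $\flee_I$ computes $I(\cdot)$ and the use of congruence to absorb the context $[\,\cdot\,]+s$, simply fleshes out that one-line justification.
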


Now we prove the main theorem in this section, that asserts
that for each of the semantics in the considered diamond we can define the 
corresponding operational semantics as stated in Figure~\ref{fig:sos}. 

\begin{thm}
\label{thm:operationalsimulation}
 For $Z\in\{F,R,FT,RT\}$ and any two BCCSP processes $p$ and $q$,
 we have
 \[
 p\ltp_Z q \iff p\ltp^{\Rightarrow_Z}_{RS} q.
 \]
\end{thm}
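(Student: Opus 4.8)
The plan is to deduce the equivalence from the coinductive characterization of Proposition~\ref{prop:localisis}, ${\ltlbi^I_{\eip^l_Z}}={\ltp_Z}$, where a local $I$-simulation up-to $\eip_Z$ is a relation $S\subseteq I$ such that $pSq$ and $p\tran{a}p'$ imply $q\longleftrightarrow_Z q''\tran{a}q'$ with $p'Sq'$, for some $q'',q'$ (cf.\ the proof of Proposition~\ref{prop:localisis}, where one takes $q''=\hnfx{q}$). The one structural fact I would establish first is that, by rule (CL), this last condition is exactly ``$q\Tran{a}_Z q'$''; equivalently, the $\Tran{a}_Z$-transitions of a term $q$ are precisely the ordinary $a$-transitions of the terms $q''$ obtainable from $q$ by the auxiliary relation $\longleftrightarrow_Z$ (which, by construction of the rules (ND), (RF), (TR), realizes the iterated application of $\eip^{l1}_Z$, i.e.\ the relation $\eip^l_Z$). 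I shall also use that for $Z\in\{F,R,FT,RT\}$ the relation $\ltp_Z$ is a behavior preorder satisfying $(RS)$ with $\ltp_Z\subseteq I$, and that by Corollary~\ref{cor:preserveinitials} the initial offers computed in $(\proc,\act,\Longrightarrow_Z)$ coincide with the original ones, so that the ``ready'' requirement of a ready simulation in $\Longrightarrow_Z$ reduces to $I(p)=I(q)$.

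For $(\Rightarrow)$ I would show that $\ltp_Z$ is itself a ready simulation in $(\proc,\act,\Longrightarrow_Z)$. Let $p\ltp_Z q$: the ready requirement holds since $\ltp_Z\subseteq I$ and initials are preserved. For the transfer property, suppose $p\Tran{a}_Z p'$, i.e.\ (rule (CL)) $p\longleftrightarrow_Z \hat p\tran{a}p'$ for some $\hat p$. By Proposition~\ref{pro:sosequivalence}, $\hat p\eip_Z p$, hence $\hat p\ltp_Z q$, and Proposition~\ref{prop:localisis} provides a local $I$-simulation up-to $\eip_Z$, say $T$, with $\hat p\,T\,q$. Applying the transfer clause of $T$ to the ordinary transition $\hat p\tran{a}p'$ yields $q'',q'$ with $q\longleftrightarrow_Z q''\tran{a}q'$ and $p'\,T\,q'$; by (CL) this means $q\Tran{a}_Z q'$, and since $T\subseteq{\ltlbi^I_{\eip^l_Z}}={\ltp_Z}$ we get $p'\ltp_Z q'$, as required.

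For $(\Leftarrow)$ I would dually show that $\ltp^{\Rightarrow_Z}_{RS}$ is a local $I$-simulation up-to $\eip_Z$, whence $\ltp^{\Rightarrow_Z}_{RS}\subseteq{\ltlbi^I_{\eip^l_Z}}={\ltp_Z}$ by Proposition~\ref{prop:localisis}. Its containment in $I$ follows from Corollary~\ref{cor:preserveinitials} and the ready requirement of ready simulations in $\Longrightarrow_Z$. For the transfer clause, let $p\ltp^{\Rightarrow_Z}_{RS}q$ and $p\tran{a}p'$; by Proposition~\ref{pro:preservetraces} we have $p\Tran{a}_Z p'$, so there is $q'$ with $q\Tran{a}_Z q'$ and $p'\ltp^{\Rightarrow_Z}_{RS}q'$, and $q\Tran{a}_Z q'$ unfolds via (CL) into $q\longleftrightarrow_Z q''\tran{a}q'$ for some $q''$, which is exactly the ``expand, then move'' shape demanded of a local $I$-simulation up-to $\eip_Z$.

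The hard part will be the bookkeeping behind the identification of $\longleftrightarrow_Z$ with $\eip^l_Z$ and the ensuing observation that the $\Tran{a}_Z$-transitions of $p$ are the ordinary $a$-transitions of the $\longleftrightarrow_Z$-expansions of $p$; this includes working modulo the structural congruence $B_1$--$B_4$ in the (ND)-rule and matching its side condition $M_Z(A_p,A_q,A_r)$ with the decomposition used in the definition of $\eip^{l1}_Z$. Once this is in place --- together with the $\eip_Z$-soundness of $\longleftrightarrow_Z$ from Proposition~\ref{pro:sosequivalence}, which is precisely what lets one replace an arbitrary expanded transition $p\Tran{a}_Z p'$ by an ordinary transition of a term $\hat p$ with $\hat p\eip_Z p$ --- both inclusions follow routinely from Propositions~\ref{pro:preservetraces}, \ref{pro:sosequivalence}, \ref{prop:localisis} and Corollary~\ref{cor:preserveinitials}, and no further case analysis on $Z$ is needed.
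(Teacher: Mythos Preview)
Your proposal is correct and follows essentially the same route as the paper's own proof: both directions go through Proposition~\ref{prop:localisis} by showing that ready simulations over $\Longrightarrow_Z$ and local $I$-simulations up-to $\eip^l_Z$ coincide, using the identification of $\longleftrightarrow_Z$ with $\eip^l_Z$, Proposition~\ref{pro:preservetraces} to embed ordinary transitions into $\Longrightarrow_Z$, and Proposition~\ref{pro:sosequivalence} to absorb the expansion step on the left. Your presentation is in fact slightly more explicit than the paper's in spelling out the ready condition via Corollary~\ref{cor:preserveinitials} and in flagging the $\longleftrightarrow_Z\,{=}\,\eip^l_Z$ bookkeeping, but the argument is the same.
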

\begin{proof}
We will apply our characterization of the orders $\ltp_Z$ by means
of local $I$-simulations up-to at Proposition~\ref{prop:localisis} to
 show that $p\ltp^{\Rightarrow_Z}_{RS} q$ implies
$p\ltlbi^I_{\eip^{l}_Z} q$. This is because any ready simulation
over the transition system $\Longrightarrow_Z$ is also a local
$I$-simulation up-to $\eip_Z$. Indeed, if $R$ is a ready simulation
over the transition system $\Longrightarrow_Z$, and $pRq$, then whenever we
have
 $p \tran{a} p' $ we also have  $p \Tran{a}_Z p' $, and therefore
 there is some $q \Tran{a}_Z q'$ with $p'Rq'$. By definition of the
 transition system $\Longrightarrow_Z$, there is some process $q''$ such
 that $q\longleftrightarrow_Z q''$ and $q''\tran{a}q'$. Then we
 also have $q \eip^{l}_Z q''$, and thus $R$ is indeed a local
 $I$-simulation up-to $\eip_Z$.

To prove that $p\ltlbi^I_{\eip^{l}_Z} q$ implies
$p\ltp^{\Rightarrow_Z}_{RS} q$, we will check that the relation 
$\ltlbi^N_{\eip^{l}_Z}$ is a ready simulation over the transition
relation $\Longrightarrow_Z$. If $p \ltlbi^N_{\eip^{l}_Z}q$,
whenever $ p \Tran{a}_Z p'$ we have some process $p''$ such that
$p\longleftrightarrow_Zp''$ and $p''\tran{a}p'$. Then we also
have $p \eip_Z p''$, and so $p'' \ltlbi^N_{\eip^{l}_Z} q$. From
$p''\tran{a}p'$ we now obtain that there are processes $q'$ and $q''$ such that $q \eip^{l}_Z
q''$, $q''\tran{a}q'$, and therefore we also have
$q\longleftrightarrow_Z q''$, thus concluding the proof.
\end{proof}

As a consequence of our negative results at the end of 
Section~\ref{sec:LocalSimulations},
it is not possible to obtain an operational semantics locally defined from
that which characterizes the linear semantics by means of bisimilarity.
However, this can be done if we use mutual similarity instead of bisimulation.

Certainly, the fact that the characterizations in terms of bisimilarity 
cannot be defined in a local way is related to the fact that the 
transition systems generated by application of the algorithm in
\cite{CH93} are larger than those generated by our local 
transformation here.
Unfortunately, it is true that our presentation does not magically lead (at 
least at the theoretical
level) to more efficient algorithms to decide the equivalences with respect
to the linear semantics (which are known to be quite hard to decide).
Obviously, this is related to the fact that simulation is harder than
bisimulation \cite{KM02concur}.
Even so, these are just theoretical worst case bounds, and it is nice to know
that in practice we can apply a local transformation to generate the
transition systems characterizing those semantics by means of 
the simulation orders, that in many concrete cases will not be too difficult 
to decide.

\subsection{Characterizing the semantics corresponding to other 
constraints}
\label{cscoc:sec}

Let us start by considering the case of the universal constraint $U$.
As discussed in Section~\ref{tcss:sec}, if we use $U$ in 
the condition $M_Z$ it is clear that all the semantics in the corresponding
diamond collapse into a single one: trace semantics.
It is immediate to realize that the transition system to characterize
it in terms of plain simulations is the same transition system 
$\Longrightarrow_F$ that we use to characterize the failures semantics
by means of ready simulations.

\begin{thm}
The trace preorder $\sqsubseteq_T$ coincides with the simulation order on the
transition system $\Longrightarrow_F$, that is, $p\sqsubseteq_T q$ iff
$p\sqsubseteq^{\Rightarrow_F}_S q$.
\end{thm}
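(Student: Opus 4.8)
The plan is to reduce this claim to the failures result Theorem~\ref{thm:operationalsimulation} (case $Z=F$), combined with the axiomatic observation from Section~\ref{tcss:sec} that under the universal constraint the condition $M_F$ is the constantly-true predicate, so $\longleftrightarrow_F$ already captures exactly the identifications needed for trace semantics. First I would recall that $\sqsubseteq_T$ is axiomatized by $\{\textrm{$B_1$--$B_4$}, (S), (\textit{ND}^F)\}$ (proved in Section~\ref{tcss:sec}), and that, by the general theory of $N$-simulations up-to (Theorem~\ref{the:LtlbiThenPreorder} with $N=U$, using that $\sqsubseteq_T$ satisfies $(S) = (US)$ and ${\sqsubseteq_T}\subseteq U$ trivially), we have $p\sqsubseteq_T q$ iff $p\ltlbi^U_{\eip_T} q$. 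In fact the cleaner route is the \emph{local} one: I would introduce, exactly as for the diamond, the notion of $p\eip^{l1}_T q$ meaning $q = p + a(p^i_a+r^j_a)$ for some decomposition $p^j_a = r^j_a+s^j_a$ with $M_F(\cdot,\cdot,\cdot)$ (which is vacuous, so the only constraint is the shape), let $\eip^l_T$ be its reflexive-transitive closure, and prove the analogue of Proposition~\ref{prop:localisis}: $p\sqsubseteq_T q$ iff $p\ltlbi^U_{\eip^l_T} q$. This goes through verbatim as the proof of Proposition~\ref{prop:localisis}, since the only ingredients used there were Propositions~\ref{hnfx-aux:prop} and \ref{hnfx-aux2:prop} for $Z=F$ together with Theorem~\ref{the:LtlbiThenPreorder}, and those hold here with $I$ replaced by the useless observer, i.e. with no constraint at all on the states.

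Then I would run the argument of Theorem~\ref{thm:operationalsimulation} with $N=U$ instead of $N=I$. Concretely: any plain simulation over $\Longrightarrow_F$ is a local $U$-simulation up-to $\eip_T$, because if $pRq$ and $p\tran{a}p'$ then $p\Tran{a}_F p'$ (Proposition~\ref{pro:preservetraces}), so there is $q\Tran{a}_F q'$ with $p'Rq'$, and by the rule (CL) this means $q\longleftrightarrow_F q''\tran{a}q'$; since $(ND)$ for $Z=F$ is an instance of the $(\textit{ND}^F_\eip)$ axiom — which is sound for $\eip_T$ as well, because $M_F$ is true and $a(x+y)\eip_T a(x+y)+ax$ holds under $(S)$ — we get $q\eip^l_T q''$, hence $R$ is a local $U$-simulation up-to $\eip_T$. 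Conversely, $\ltlbi^U_{\eip^l_T}$ is a plain simulation over $\Longrightarrow_F$: if $p\ltlbi^U_{\eip^l_T}q$ and $p\Tran{a}_F p'$, then $p\longleftrightarrow_F p''\tran{a}p'$, so $p\eip^l_T p''$ and thus $p''\ltlbi^U_{\eip^l_T}q$; from $p''\tran{a}p'$ the up-to definition yields $q''$ with $q\eip^l_T q''$ (equivalently $q\longleftrightarrow_F q''$) and $q''\tran{a}q'$, whence $q\Tran{a}_F q'$ with $p'\ltlbi^U_{\eip^l_T}q'$. Combining the two inclusions with the local characterization $p\sqsubseteq_T q \iff p\ltlbi^U_{\eip^l_T}q$ gives $p\sqsubseteq_T q \iff p\sqsubseteq^{\Rightarrow_F}_S q$.

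The only delicate point — and the one I would state carefully rather than wave away — is the soundness step $a(x+y)\eip_T a(x+y)+ax$, i.e. that the identifications performed by $\longleftrightarrow_F$ (which were designed to be sound for $\eip_F$) are \emph{also} sound for the coarser $\eip_T$. This is immediate because $\eip_F$ refines $\eip_T$, so any $\eip_F$-identification is an $\eip_T$-identification; equivalently, one checks directly that $T(a(x+y)) = T(a(x+y)) \cup T(ax)$ since $T(ax) = \{a\alpha \mid \alpha\in T(x)\}\subseteq \{a\alpha\mid \alpha\in T(x)\cup T(y)\} = T(a(x+y))$. Hence $\longleftrightarrow_F$ is a sound rewriting for $\eip_T$, which is all that is needed to recycle the proof of Theorem~\ref{thm:operationalsimulation}. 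I expect no genuine obstacle beyond bookkeeping; the substance is entirely inherited from the failures case and from Section~\ref{tcss:sec}, and the statement is essentially the observation that the same transition system $\Longrightarrow_F$ serves double duty — as a ready-simulation presentation of $\sqsubseteq_F$ and as a plain-simulation presentation of $\sqsubseteq_T$ — because dropping the ``$I$'' from ready simulation is exactly dropping the ``$I$''-constraint from the up-to relation, which in turn collapses $M_F$-style conditions to triviality.
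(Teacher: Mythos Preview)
Your proposal is correct and matches the paper's intended approach. The paper in fact gives no explicit proof of this theorem: it is stated bare, followed only by the informal remark that ``failures semantics is just traces semantics enriched by the observation of initials, so that the plain simulation order that implies the trace order becomes the ready simulation order.'' Your argument is precisely the detailed unpacking of this idea---rerunning the proof of Theorem~\ref{thm:operationalsimulation} with the universal constraint $U$ in place of $I$, relying on the fact that the identifications performed by $\longleftrightarrow_F$ are sound for $\eip_T$ (cleanest via ${\eip_F}\subseteq{\eip_T}$, as you note).

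One small slip worth fixing: the explicit soundness check you write, $a(x+y)\eip_T a(x+y)+ax$, is not quite the shape of rule (ND) for $Z=F$; that rule adds the summand $a(p+q)$ to $ap+a(q+r)+s$, so the equality to verify is $T(ap+a(q+r)+s)=T(ap+a(q+r)+a(p+q)+s)$. Your alternative route through ${\eip_F}\subseteq{\eip_T}$ together with Proposition~\ref{pro:sosequivalence} handles this cleanly and makes the direct check unnecessary.
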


Even if this coincidence is a simple fact that reflects the relation between 
traces and failures semantics, it contributes to clarify it.
In plain words, failures
semantics is just traces semantics enriched by the observation of initials,
so that the plain simulation order that implies the trace order becomes the 
ready simulation order.

For other, finer observers such as $T$ we can also characterize the 
corresponding semantic orders, such as possible and impossible futures, 
in terms of local simulations up-to.
We can use that result to justify that the corresponding
transition systems $\Longrightarrow^T_Z$ would characterize the
semantic orders $\sqsubseteq^T_Z$ in terms of $T$-simulations that
preserve the set of traces of the simulated process.
In this case the corresponding operational characterization has to include rules for the 
computation of the set of traces $T(p)$ and this cannot certainly be done
for infinite processes.
But out of the computation of these sets, the rest of the rules for the
generation of the corresponding transition systems $\Longrightarrow^T_Z$
are also valid, and their local character is still present.

\subsection{Application: trace deterministic normal forms}
\label{atdnf:sec}

As a simple application we present the example used by Klin
in~\cite{Kli04}, that we already used in \cite{FG05concur} to
illustrate our coinductive characterization of the behavior
preorders by means of our bisimulations up-to.

\begin{defi}
For any process $p=\sum_a\sum_i ap^i_{a}$ the \emph{deterministic
form} of $p$ is defined as $\deter{p}=\sum_a a\deter{\sum_i
p^i_{a}}$.
\end{defi}

We wish to prove that $p$ and $\deter{p}$ are trace equivalent. We
will do it by proving that they are simulation equivalent over the
transition system $\Longrightarrow_F$.

\begin{prop}
\label{prop:menordetTra}
  For any process $p$ we have  $p\ltpF \deter{p}$.
\end{prop}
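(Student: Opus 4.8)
The plan is to reduce the claim to a statement about plain simulations over the enriched transition system $\Longrightarrow_F$ and then invoke the theorem of Section~\ref{cscoc:sec} stating that the trace preorder coincides with the simulation order on $\Longrightarrow_F$ (i.e.\ $p\sqsubseteq_T q$ iff $p\sqsubseteq^{\Rightarrow_F}_S q$). Thus it suffices to exhibit a simulation $R$ over $(\proc,\act,\Longrightarrow_F)$ with $p\mathrel{R}\deter{p}$; this gives $p\sqsubseteq_T\deter{p}$, i.e.\ one half of the trace equivalence of $p$ and $\deter{p}$ announced in the text. It is worth noting that a \emph{plain} simulation is all we can aim for here: $\deter{}$ collapses the $a$-branches of a process into a single one and thereby may enlarge the initial offers at the intermediate states (already $\deter{ab+ac}=a(b+c)$), so no \emph{ready} simulation relates $p$ and $\deter{p}$ over $\Longrightarrow_F$ --- which is precisely why the inequality being proved is the simulation (hence trace), not the ready simulation (hence failures), one.

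The candidate invariant is $R=\{(q,\deter{s})\mid \textit{Traces}(q)\subseteq \textit{Traces}(s)\}$, which contains $(p,\deter{p})$ trivially. Two observations about the right-hand components are used repeatedly. Writing $s=\sum_a\sum_i as^i_a$, the process $\deter{s}=\sum_{a\in I(s)}a\deter{\sum_i s^i_a}$ is deterministic and, for each $a\in I(s)$, has the single $a$-derivative $\deter{\sum_i s^i_a}$, whose trace set is $\textit{Traces}(\sum_i s^i_a)=\{w\mid aw\in \textit{Traces}(s)\}$; and by Proposition~\ref{pro:preservetraces} the move $\deter{s}\Tran{a}_F\deter{\sum_i s^i_a}$ is also available in the enriched system.

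To verify that $R$ is a simulation over $\Longrightarrow_F$, take $(q,\deter{s})\in R$ and a transition $q\Tran{a}_F q'$. By rule (CL) there is $q''$ with $q\longleftrightarrow_F q''$ and $q''\tran{a}q'$. By Proposition~\ref{pro:sosequivalence} we have $q\equiv_F q''$, hence $q\equiv_T q''$ (failures equivalence refines trace equivalence), so $\textit{Traces}(q'')=\textit{Traces}(q)$. From $q''\tran{a}q'$ we get $a\cdot \textit{Traces}(q')\subseteq \textit{Traces}(q'')=\textit{Traces}(q)\subseteq \textit{Traces}(s)$; in particular $a\in \textit{Traces}(s)$, i.e.\ $a\in I(s)$. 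Hence $\deter{s}\Tran{a}_F\deter{\sum_i s^i_a}$, and the same inclusion gives $\textit{Traces}(q')\subseteq\{w\mid aw\in \textit{Traces}(s)\}=\textit{Traces}(\sum_i s^i_a)$, i.e.\ $(q',\deter{\sum_i s^i_a})\in R$. So $R$ is a simulation over $\Longrightarrow_F$ containing $(p,\deter{p})$, yielding $p\sqsubseteq^{\Rightarrow_F}_S\deter{p}$ and, by the cited theorem, the claim.

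The only delicate point --- and the one I would write out most carefully --- is that an $a$-successor of $q$ in $\Longrightarrow_F$ need not be a syntactic $a$-summand of $q$: it is reached only after an $\longleftrightarrow_F$-rearrangement, i.e.\ iterated applications of rule (ND) instantiated with the (always true) condition $M_F$, which merge $a$-branches. One therefore has to know that such rearrangements leave the trace set unchanged, and this is exactly Proposition~\ref{pro:sosequivalence} together with ${\equiv_F}\subseteq{\equiv_T}$; everything else is routine unfolding of the definitions of $\deter{}$, of $\Tran{a}_F$ via (CL), and of trace sets. (The converse inequality $\deter{p}\sqsubseteq_T p$, which completes the trace equivalence, is obtained symmetrically, the extra ingredient being that over $\Longrightarrow_F$ any process $q$ with $a\in I(q)$ reaches the merged derivative $\sum_i q^i_a$ by iterating (ND), so that the $q$-side can match the unique $a$-move of $\deter{s}$.)
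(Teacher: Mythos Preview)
Your proof is correct and, like the paper's, reduces to exhibiting a plain simulation over $\Longrightarrow_F$ and then invoking the characterization of the trace preorder as simulation on that system. Where you diverge is in the choice of invariant: the paper takes the purely syntactic relation $R=\{(p,\deter{p+q})\mid p,q\ \textrm{processes}\}$, whereas you take the semantic relation $R=\{(q,\deter{s})\mid \textit{Traces}(q)\subseteq\textit{Traces}(s)\}$. The paper's verification then rests on a structural claim about the shape of $\Tran{a}_F$-derivatives --- namely that any such $p'$ is of the form $p^i_a+\sum_k r^k_a$ with each $r^k_a$ a summand-fragment of some original $p^k_a$ --- so that $\sum_i p^i_a+\sum_j q^j_a$ can be rewritten as $p'+(\textrm{rest})$ and the pair $(p',\deter{p'+\textrm{rest}})$ lands back in $R$. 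Your approach sidesteps this structural analysis entirely: Proposition~\ref{pro:sosequivalence} together with ${\equiv_F}\subseteq{\equiv_T}$ is all you need to propagate the trace-inclusion hypothesis through an $\longleftrightarrow_F$-step. The trade-off is that the paper's invariant is syntactic and illustrates the operational machinery without leaning on the trace semantics it is meant to recover, whereas your invariant already carries trace-theoretic content --- one might object that if you are willing to reason with trace sets anyway, a direct induction showing $\textit{Traces}(p)=\textit{Traces}(\deter{p})$ is shorter. Still, both routes are sound and yours is arguably tidier to check. Your remark that only a \emph{plain} (not a ready) simulation can work here, because $\deter{}$ may enlarge intermediate offers, is well placed and matches what the paper's proof actually establishes.
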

 \begin{proof}
     We will prove that $p\ltp^{\Rightarrow_F}_{S} \deter{p}$ by showing
     that $R=\{(p,\deter{p+q}) \mid \textrm{$p$, $q$ processes}\}$ is a 
     simulation for the transition system $\Longrightarrow_F$.
     For $q=\sum_a\sum_j aq^j_{a}$ we have $\deter{p+q}= \sum_a\deter{\sum_i
      p^i_{a} + \sum_j q^j_{a}}$. Then, for any $p \Tran{a}_F p'$ we have
     $p=p^i_{a} + \sum_k r^k_{a}$, for some index $i$ and $p^k_{a} = r^k_{a}
      + s^k_{a}$ a decomposition of any of the rest of the summands of $p$.
      We have $ \deter{p+q} \tran{a} \deter{\sum_i ap^i_{a} +\sum_j
      aq^j_{a}} =  \deter{(p^i_{a} + \sum_k r^k_{a})+ (\sum_k
      r^k_{a} + \sum_j q^j_{a})}$, so that we also have $
      \deter{p+q} \Tran{a}_F \deter{(p^i_{a} + \sum_k r^k_{a})+ (\sum_k
      r^k_{a} + \sum_j q^j_{a})}$, with $(p^i_{a} + \sum_k
      r^k_{a},\deter{(p^i_{a} + \sum_k r^k_{a})+ (\sum_k
      r^k_{a} + \sum_j q^j_{a})}) \in R$.
      \end{proof}

\begin{prop}
  For any process $p$ we have  $\deter{p} \ltpF p$.
\end{prop}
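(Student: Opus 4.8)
The plan is to mirror, in the opposite direction, the proof of Proposition~\ref{prop:menordetTra}: we show $\deter{p}\ltp^{\Rightarrow_F}_{S}p$ by exhibiting a simulation over the transition system $\Longrightarrow_F$ that contains the pair $(\deter{p},p)$. The natural candidate --- simpler than the relation $\{(p,\deter{p+q})\}$ used for the other inclusion, precisely because $\deter{q}$ is already deterministic --- is
\[
R=\{(\deter{q},q)\mid q\textrm{ a BCCSP process}\}.
\]

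To check that $R$ is a simulation over $\Longrightarrow_F$, fix $(\deter{q},q)\in R$ and write $q=\sum_a\sum_i aq^i_a$, so that $\deter{q}=\sum_a a\deter{\sum_i q^i_a}$. First I would describe the transitions $\deter{q}\Tran{a}_F Q'$: since $\deter{q}$ has at most one summand headed by each action, rule $(ND)$ cannot fire on it non-trivially --- it requires two distinct summands with the same head action --- so $\deter{q}\longleftrightarrow_F Q''$ forces $Q''$ to have exactly the $\tran{a}$-successors of $\deter{q}$; hence every $\deter{q}\Tran{a}_F Q'$ is an ordinary step and $Q'=\deter{\sum_i q^i_a}$. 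Next I would produce the matching move of $q$: because $M_F$ is identically true, $(ND)$ applies with no side condition, and iterating it over the $a$-summands $aq^1_a,\dots,aq^n_a$ of $q$ --- each step taking $r:=\cero$ and grouping together the alternatives already collected --- yields a term $q'$ with $q\longleftrightarrow_F q'$ having $a\big(\sum_i q^i_a\big)$ among its summands, so that $q\Tran{a}_F\sum_i q^i_a$ by rules $(TR)$ and $(CL)$. Since $(\deter{\sum_i q^i_a},\sum_i q^i_a)\in R$, the pair closes; as $(\deter{p},p)\in R$, this proves $\deter{p}\ltp^{\Rightarrow_F}_{S}p$.

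The single delicate point --- and thus the main obstacle --- is the claim that $\longleftrightarrow_F$ creates no new $a$-branch out of $\deter{q}$. If the operational rules are read modulo $B_1$--$B_4$, one must still note that an application of $(ND)$ to the deterministic term $\deter{q}$ is forced to use one of its summands twice and hence, up to $B_3$, merely reproduces that summand, leaving the set of $\tran{a}$-successors unchanged; the rest is routine bookkeeping with the $\sum$-notation. Combining this proposition with Proposition~\ref{prop:menordetTra} and the coincidence of $\sqsubseteq_T$ with plain simulation over $\Longrightarrow_F$, we obtain $p\equiv_T\deter{p}$; note how the two inclusions stem from the two complementary features of $\Longrightarrow_F$ --- that it extends $\longrightarrow$, and that $\longleftrightarrow_F$ can collect non-deterministic alternatives --- with the triviality of $M_F$ carrying the whole argument in the second case.
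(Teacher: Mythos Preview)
Your proposal is correct and follows essentially the same approach as the paper: both use the relation $R=\{(\deter{q},q)\}$ (the paper writes it as $\{(\deter{p},p)\}$ but clearly means the set over all processes) and check it is a simulation over $\Longrightarrow_F$ by observing that determinism makes the $\Longrightarrow_F$-steps from $\deter{q}$ coincide with its ordinary $a$-steps, while $q$ can match them via the $\Longrightarrow_F$-transition $q\Tran{a}_F\sum_i q^i_a$. Your write-up is simply more explicit than the paper's two-line argument about why $(ND)$ is inert on deterministic terms and how iterating $(ND)$ with $r=\cero$ builds the summand $a\bigl(\sum_i q^i_a\bigr)$.
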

  \begin{proof}
   We will prove that $\deter{p}\ltp^{\Rightarrow_F}_{S}p$ by showing that
   $R=\{(\deter{p},p)\}$ is a simulation for the transition
    system $\Longrightarrow_F$. Since $\deter{p}$ is deterministic for
    each $a\in Act$ there is a unique transition $\deter{p} \Tran{a}_F
    \deter{\sum_i p^i_{a}}$. By applying the definition of
    $\Tran{a}_F$ we have $p \Tran{a}_F \sum_i p^i_{a}$, and
    clearly we have $( \deter{\sum_i p^i_{a}},\sum_i p^i_{a}) \in
    R$.
    \end{proof}

Although this is a very simple example, it is
interesting to compare the proof above with that in
\cite{FG05concur}. This proof is simpler and more
natural, mainly because the proof obligations to check
bisimulations forced us to remove the sub-terms that were not in
the chosen transition when we had to simulate it. This is
not necessary for any of the two simulations that
are needed to check mutual simulation, as done above. Obviously, this is
also related to the impossibility to obtain a notion of local
bisimulation up-to characterizing the equivalence under any of the
linear semantics.

\section{Conclusions and some future work}
\label{sec:conclusions}

Throughout this paper we have provided a global outline of process 
semantics from different points of view, each of which reveals some of the 
key ingredients for a more uniform comprehension of those semantics.
We have noted that the family consisting of the simulation 
semantics---constrained simulations, in its generalized version---plays 
an essential role in the class of process semantics, becoming the cornerstone
for sorting and classifying the remaining semantics.

From a framework in which, based on observational trees, denotational semantics
are assigned ---Section~\ref{observational-sem-sec}---we 
have been able to prove that the spectrum of process semantics can
be structured by means of layers that are induced by the simulations.
Each layer is dominated by a simulation semantics that determines the
finest distinction available for that layer.
The remaining semantic families are also described by abstracting
or simplifying the observations needed for the corresponding layer.
In particular, below each constrained simulation there appear the corresponding
versions for each of the  
classic linear semantics---failures, readiness, failure trace, and ready
trace---and, as we saw in Section~\ref{sec:RealDiamond}, other semantics
are also explained within our framework.

This observational characterization allowed us to offer a new insight into
the axiomatic characterization of the
semantics---Sections~\ref{sec:EquationalSemantics}
and~\ref{rnoef:sec}---revealing a uniformity lacking in all previous studies. 
To characterize any of the orders that define a process semantics,
we have proved that it is enough to use two parametric axioms:
one of the required axioms is that for the generalized simulation of the
corresponding layer while the other, when it is present, has to do with 
the reduction of non-determinism that is carried out in each semantics.

Analogously, in Sections~\ref{sec:LogicalCharacterization} 
and~\ref{logic_observational_framework} we showed how to characterize
process semantics by means of sets of Hennesy-Milner logic formulas out
of their observational characterization, and finally we have also discussed a
unified operational presentation of the semantics in the extended spectrum.

One of the more obvious lines for future work would be to consider those
semantics that allow for an inner, non-visible action, known as \emph{weak
  semantics}.
Actually, some promising results have already been obtained that make clear
the regularity and generality present in the domain of weak semantics.
In particular, in \cite{CFG08} it is proved that it is possible to apply
to weak semantics the algorithm to obtain axiomatic characterizations 
of semantic equivalences from the axioms for corresponding order
\cite{DeFrutosEtAl08b}.
And \cite{AFGI11ictac,AFGI12} provides a detailed study of the axiomatization of 
weak simulation semantics.

Let us also cite here the recent work by Anti Valmari \cite{valmari2012all},
where he presents the full catalogue of (weak) linear-time congruences for
finite state systems. Certainly, it is interesting to limit somehow the class
of ``reasonable'' semantics for processes, but this has not been so much the 
intention of our work in this paper. In fact, it is interesting to note that
the results in the paper referenced above limit the set of semantics to 
explore in a quite personal way: for instance, the semantics of failure traces
and that of ready traces are not included in the category, because Valmari
(implicitly) considers that they are not ``linear-time enough''. 

Another interesting approach consists in the use of coalgebras---following the
work, among others, of Jesse Hughes and Bart Jacobs~\cite{HJ04}---where
powerful categorical techniques allow to connect the idea of simulation with
that of bisimulation, which is central in the coalgebraic setting.
More concretely, these techniques were successfully used in \cite{FPF08} to
relate classic and probabilistic bisimulation.

\bibliographystyle{plain}

\end{document}